\DeclareMathOperator*{\E}{\mathbb{E}}
\DeclarePairedDelimiter\ip{\langle}{\rangle}
\newcommand\skipi{{\vskip 10pt}}
\def \F {{\mathbb F}}
\def \N {\mathbb{N}}
\def \cC {\mathcal{C}}
\def \cD {\mathcal{D}}
\def \cF {\mathcal{F}}
\def \cK {\mathcal{K}}
\def \cL {\mathcal{L}}
\def \cR {\mathcal{R}}
\def \cT {\mathcal{T}}
\def \poly {\mathrm{poly}}
\def \eps {{\varepsilon}}
\def \val {\mathrm{val}}
\def \dist {\Delta}
\def \maj {\text{Maj}}
\def \agr {\text{agr}}
\def \Agr {\text{Agr}}
\def\ggg{\gtrsim}
\def\lll{\lesssim}
\renewcommand{\leq}{\leqslant}
\renewcommand{\geq}{\geqslant}
\newtheorem{theorem}{Theorem}[section]
\newtheorem{corollary}[theorem]{Corollary}
\newtheorem{lemma}[theorem]{Lemma}
\newtheorem*{lemma*}{Lemma}
\newtheorem*{theorem*}{Theorem}
\newtheorem{claim}[theorem]{Claim}
\newtheorem*{claim*}{Claim}
\newtheorem{remark}[theorem]{Remark}
\newtheorem{definition}[theorem]{Definition}
\theoremstyle{definition}
\newtheorem{algorithm}{Algorithm}
\newtheorem{agr-test}{Agreement-Test}
\newtheorem{list-agr-test}{List-Agreement-Test}
\let\c@fconjecture\c@conjecture
\let\c@fconj\c@conj
\title{Characterizing Direct Product Testing via Coboundary Expansion}
\author{Mitali Bafna \thanks{Carnegie Mellon University. Supported in part by the Computer Science Department, CMU and a gift from CYLAB, CMU.}
 \and
 	Dor Minzer\thanks{Department of Mathematics, Massachusetts Institute of Technology. Supported by a Sloan Research Fellowship, NSF CCF award 2227876 and NSF CAREER award 2239160.}}
\date{\vspace{-5ex}}
\begin{document}
\maketitle
\begin{abstract}
A $d$-dimensional simplicial complex $X$ is said to support a direct product tester if any locally consistent function defined on its $k$-faces (where $k\ll d$) necessarily come from a function over its vertices. More precisely, a direct product tester has a distribution $\mu$ over pairs of $k$-faces $(A,A')$, and given query access to $F\colon X(k)\to\{0,1\}^k$ it samples $(A,A')\sim \mu$ and
checks that $F[A]|_{A\cap A'} = F[A']|_{A\cap A'}$. The tester should have (1) the ``completeness property'', meaning that any assignment $F$ which is a direct product assignment passes the test with
probability $1$, and (2) the ``soundness property'', meaning that if $F$ passes the test with probability $s$, then $F$ must be correlated with a direct product function. 

Dinur and Kaufman showed that a sufficiently good 
spectral expanding complex $X$ admits a direct product tester in the ``high soundness'' regime where $s$ is close to $1$. They asked whether there are high dimensional expanders that support direct product tests in the ``low soundness'', when $s$ is close to $0$.

We give a characterization of high-dimensional expanders that support a direct product tester in the low soundness regime. We show that spectral expansion is insufficient, and the complex must 
additionally satisfy a variant of coboundary expansion, which we refer to as \emph{Unique-Games coboundary expanders}. Conversely, we show that this property is also sufficient to get direct product testers. This property can be seen as a high-dimensional generalization of the standard notion of coboundary expansion over non-Abelian groups for 2-dimensional complexes. It asserts that any locally consistent Unique-Games instance obtained using the low-level faces of the complex, must admit a good global solution.
\end{abstract}

\section{Introduction}
The problem of testing direct product 
functions lies at the intersection of 
many areas within theoretical computer 
science, such as error correcting codes, 
probabilistically checkable proofs (PCPs), 
hardness amplification and property
testing. In its purest form, one wishes 
to encode a function $f\colon [n]\to \{0,1\}$ using local views in a 
way that admits local testability/ local correction. More precisely, given a parameter $1\leq k<n$, the encoding of
$f$ using subsets of size $k$ can be 
viewed as $F\colon \binom{[n]}{k}\to \{0,1\}^k$ that to each subset $A\subseteq [n]$ of size $k$ assigns a vector of length
$k$ describing the restriction of $f$ to $A$.\footnote{To be more specific, one
identifies $A = \{a_1,\ldots,a_k\}$ with 
the ordered tuple $(a_1,\ldots,a_k)$ where 
$a_1<a_2<\ldots<a_k$ and defines 
$F[A] = (f(a_1),\ldots,f(a_k))$.} 
We refer to this encoding as 
the direct product encoding of $f$ according to the Johnson graph (for reasons that will become apparent shortly).
The obvious downside of this encoding 
scheme is, of course, that its length is much larger than the
description of $f$ (roughly $n^k$ vs $\Theta(n)$). However, as this 
encoding contains many redundancies, one hopes that 
it more robustly stores
the information in the function $f$, 
thereby being more resilient against 
corruptions.

\subsection{Direct Product Testing with $2$ Queries}
Indeed, one of the primary benefits of the above direct product 
encoding is that it admits 
local testers using a few queries. 
These testing algorithms also go 
by the name ``agreements testers'' or ``direct
product testers'', and are often very natural to design. 
A direct product tester for the above 
encoding, which we parameterize by a 
natural number $1\leq t\leq k$ and 
denote by $\mathcal{T}_{t}$, proceeds as follows:
\begin{enumerate}
    \item Choose two subsets $A,A'\subseteq [n]$ uniformly 
    at random conditioned on $|A\cap A'|=t$.
    \item Query $F[A]$, $F[A']$
    and check that $F[A]$ and $F[A']$
    agree on $A\cap A'$.
\end{enumerate}
These type of testers have been first considered and used 
by Goldreich and Safra~\cite{GoldreichSafra} 
in the context of the PCP theorem. 
They later have been identified by
Dinur and Reingold~\cite{DinurReingold} 
as a central component in gap amplification. To get some intuition to
this test, note that a direct product
function clearly passes the test with 
probability $1$. 
Thus, we say that the tester has perfect completeness.
The soundness of the test -- namely the probability that a table $F$ which is far from a direct product encoding passes the test -- is more difficult to analyze. Intuitively, 
querying $F$ at a single location gives the value of 
a (supposed) $f$ on $k$ inputs.
Thus, if $F$ is far from any direct product
function, the chance this will be detected should grow with $k$. Formalizing
this intuition is more challenging however, and works in the literature 
are divided into two regimes: the so-called 
$99\%$ regime, and the 
$1\%$ regime. To be more precise, suppose
the table $F$ passes the direct product tester $\mathcal{T}_t$ with probability 
at least $s>0$; what can be said 
about its structure?

In the $99\%$ regime, namely the case where
$s = 1-\eps$ is thought of as close to $1$, results in the literature~\cite{DinurReingold,DinurSteurer} show that $F$ has to be close to a direct product function. More specifically, for $t = \Theta(k)$ the result of~\cite{DinurSteurer} asserts that there exists $f\colon [n]\to\{0,1\}$ such that 
$F[A] = f|_{A}$ for $1-O(\eps)$ fraction 
of the $k$-sets $A$. A structural result of this form is a useful building block in several applications. It can be used to
construct constant 
query PCPs
with constant soundness; it also serves 
as a building block in other results 
within complexity theory; see for instance~\cite{DDGKS,DFH}. 

The $1\%$ regime, namely the case where 
$s = \delta$ is thought of as a small 
constant, is more challenging. 
In this case, the works~\cite{DinurG08,ImpagliazzoKW09} 
show that $F$ has to be correlated with 
a direct product function. More specifically, these works show that 
for (say) $t = \sqrt{k}$ if $\delta \geq 1/k^{\Omega(1)}$, then there exists $f\colon [n]\to\{0,1\}$ such that for at least $\delta^{O(1)}$ fraction of the $k$-sets $A$, we have that
\[
\Delta(F[A], f|_{A})\leq k^{-\Omega(1)},
\]
where for two strings $x,y\in\{0,1\}^k$, $\Delta(x,y)=\frac{\#\{i\in [k]~|~x_i\neq y_i\}}{k}$ denotes the fractional 
Hamming distance between them.\footnote{In contrast to the
$99\%$ regime, in this case one has to 
settle with agreement with $f$ only on 
a small portion of the $k$-sets, and 
furthermore this agreement is not perfect; it is on $(1-o(1))$ fraction of the elements in the $k$-sets. 
As discussed in~\cite{DinurG08,ImpagliazzoKW09}, qualitatively speaking (namely, up to 
the precise parameters) this is the 
best type of results possible.} 
The motivation for studying this
more challenging regime of parameters
stems mainly from the perspective 
of hardness amplification (where 
one wishes to show that if a given 
task is somewhat hard, then repeating
this task $k$-times in parallel 
gets exponentially harder) as well as 
from the study of PCPs with small soundness. 
Indeed, in~\cite{ImpagliazzoKW09} 
the authors show that direct product
testers similar to the above facilitate 
soundness amplification schemes 
for PCPs with similar performance 
to parallel repetition theorems~\cite{Raz,Holenstein,Rao,BravermanGarg,DinurSteurerAnalytical}. Direct product
testers in the low soundness regime have additional applications in property 
testing, as well as in the study of the 
complexity of satisfiable 
constraint satisfaction 
problems
~\cite{BKM1,BKM2,BKM3,BKM4}. 

\subsection{Size Efficient $2$-Query Direct Product Testing}
In the context of PCPs and hardness 
amplification, one typically thinks of the parameter $n$ as
very large, and $k$ as a large constant 
number. With this in mind, representing 
an assignment $f\colon [n]\to\{0,1\}$ 
using its direct product encoding incurs
a polynomial blow-up in size. Indeed, 
this type of step is often the only 
step in the PCP reduction that introduces 
a polynomial (as opposed to just linear)
blow-up in the instance size. In 
this light, a natural question is whether
it is possible to perform  hardness 
amplification with a significantly 
smaller blow up in the encoding/instance size. Efficient 
schemes of this type are often referred to as 
``derandomized direct product tests'', 
``derandomized hardness amplification''
 or ``derandomized parallel repetition 
 theorems''.

In~\cite{ImpagliazzoKW09} a more efficient 
hardness amplification procedure is proposed. Therein, instead of considering 
all $k$-sets inside $[n]$, the domain 
$[n]$ is thought of as a vector space 
$\mathbb{F}_q^{d}$ and one considers all
subspaces of dimension $\log_q(k)$. It is easy to see
that the encoding size then becomes $n^{\Theta(\log k)}$, making it more
efficient. The paper~\cite{ImpagliazzoKW09}
shows that direct product testers analogous
to the tester above work in this setting
as well; they essentially 
match all of the results achieved by the 
Johnson scheme. Building upon~\cite{ImpagliazzoKW09}, Dinur and Meir~\cite{DinurMeir} show how to establish
parallel repetition theorems using the
more efficient direct product encoding
via subspaces. This parallel repetition
theorem works for structured instances, 
which the authors show to still capture
the entire class NP. 

High dimensional expanders (HDX), which have recently surged in popularity, can be 
seen as sparse models of the Johnson graph. This leads us to 
the main problem considered in this paper, 
due to Dinur and Kaufman~\cite{DinurK17}:
\begin{tcolorbox}
Do high dimensional expanders facilitate 
direct product testers in the low soundness regime?
\end{tcolorbox}
The main goal of this paper is to 
investigate the type of expansion properties that
are necessary and sufficient  for direct product 
testing with low soundness. 
It is known that there are HDXs of size $O_k(n)$ and 
$O_k(1)$ degree,
and if any of these objects facilitates a 
direct product tester with small soundness,
they would essentially be the ultimate form of derandomized direct product 
testers.\footnote{We remark that to be useful, 
it seems that a derandomized direct
product tester would need to roughly have 
equal degrees. This is because in applications, each one of the values of the 
encoded function $f\colon [n]\to\{0,1\}$ 
is ``equally important''. In that case, the 
requirement of being a $O(n)$ sized direct product
tester is equivalent to having $O(1)$ degree.} To state our 
results, we first define the usual notion
of spectral high dimensional expansion, followed 
by our variant of the well-known
 notion of coboundary expansion.

\subsubsection{High Dimensional Local Spectral Expanders}
A $d$-dimensional complex is 
composed of $X(0) = \{\emptyset\}$, a set of vertices $X(1)$, 
which is often identified with $[n]$ 
and a set of $i$-uniform hyperedges, $X(i)\subseteq \binom{X(1)}{i}$, for each $i=2,\ldots,d$. A $d$-dimensional complex $X = (X(0),X(1),\ldots,X(d))$ is called simplicial if it is downwards closed. Namely, if for every $1\leq i\leq j\leq d$, and every $J\in X(j)$, if $I\subseteq J$ has size $i$, then $I\in X(i)$.
The size of a complex is the total number 
of hyperedges in $X$. The degree of a vertex $v\in X(1)$ is the number of faces in $X(d)$ containing it, and the degree of a complex $X$ is the maximum of the degree over all the vertices in $X(1)$. 

We need a few basic notions regarding simplicial complex, and we start by presenting the notion of links and spectral expansion.
\begin{definition}
For a $d$-dimensional simplicial complex $X = (X(0),X(1),\ldots,X(d))$, $0\leq i\leq d-2$
and $I\in X(i)$, the link of $I$ is the $(d-i)$-dimensional complex $X_I$ whose faces are given as
\[
X_I(j-i) = \{J\setminus I~|~J\in X(j), J\supseteq I\}.
\]
\end{definition}
For a $d$-dimensional complex $X=(X(0),X(1),\ldots,X(d))$ 
and $I\in X$ of size at most $d-2$, the 
graph underlying the link of $I$ is 
the graph whose vertices are $X_I(1)$ 
and whose vertices are $X_I(2)$.

\paragraph{Distributions over the complex.} It is convenient to equip
a complex $X$ with a measure $\mu_k$
for each one of its levels $X(k)$. 
For $k = d$ we consider the measure $\mu_d$ which 
is uniform over $X(d)$; for each $k<d$,
the measure $\mu_k$ is the push down
measure of $\mu_d$: to generate a sample 
according to $\mu_k$, sample $D\sim \mu_d$ 
and then $K\subseteq D$ of size $k$ uniformly. Abusing notation, we will
refer to all of the measures $\mu_k$ 
simply as $\mu$, as the cardinality of 
the sets in discussion will always be 
clear from context. The set of measures in the link of $I$ is the natural set of measures we get by conditioning $\mu$ on containing $I$. 

Equipped with measures over complexes, we may now define 
the notion of spectral HDX.

\begin{definition}
\label{def:gamma-hdx}
    A $d$-dimensional simplicial complex 
    $X$ is called 
    a $\gamma$ one-sided (two-sided) local spectral expander if for every $I\in X$ of size at most $d-2$,
    the second eigenvalue (singular value)
    of the normalized adjacency matrix of the graph underlying the link of $I$
    is at most $\gamma$. 
\end{definition}
In this work, we will only be concerned 
with simplicial complexes that are very
strong spectral expanders. With this
regard, following the works of~\cite{LSV1,LSV2,EvraKaufman} 
one can show that for every $\gamma>0$ and every $d\in\mathbb{N}$ there exists 
an infinite family of $d$-dimensional complexes of linear size that are $\gamma$ one-sided or two-sided local expanders (see~\cite[Lemma 1.5]{DinurK17}).

\subsubsection{Results in the High Soundness Regime}
Dinur and Kaufman~\cite{DinurK17} were
the first to consider the question of direct product testing over HDX. They showed 
that a sufficiently good high dimensional
spectral expander admits a direct product
tester in the high soundness regime. 
The tester they consider is essentially
the same as the tester in the Johnson 
scheme; one thinks of $k$ which is much
 larger than $1$ but much smaller than the dimension of the complex $d$. 
 The tester has parameters 
$1\leq s\leq k/2$ and is given
oracle access to a table
$F\colon X(k)\to \{0,1\}^k$, and proceeds as 
follows: 
\begin{mdframed}
\begin{agr-test}[$F,k,s$]\label{agr-test-hdx}\mbox{}
\begin{enumerate}
    \item Sample $D\sim \mu_{d}$.
    \item Sample $I\subseteq D$ of size $s$ uniformly.
    \item Sample $I\subseteq A, A'\subseteq D$ of size $k$ uniformly.
    \item Accept if $F[A]|_{I} = F[A']|_{I}$.
\end{enumerate}
\end{agr-test}
\end{mdframed}
Henceforth, we refer to this 
test as the $(k,s)$ direct 
product tester over $X$.
Dinur and Kaufman consider the case where $s = k/2$, and proved that for every $\eps>0$, provided that $\gamma$ is sufficiently small, if $F\colon X(k)\to\{0,1\}^k$ passes the above test with probability at least $1-\eps$, then 
there exists $f\colon X(1)\to \{0,1\}$ 
such 
\[
\Pr_{A\sim\mu_k}[F[A]\equiv f|_{A}]\geq 1-O(\eps).
\] 
A follow-up 
work by Dikstein and Dinur~\cite{DiksteinDinur} further refined
this result, and investigated more general
structures that support direct product
testing in the high soundness regime.

A problem related to direct
product testing, called 
the list agreement testing 
problem, was considered in the 
high soundness regime 
by Gotlib and Kaufman~\cite{GotlibK23}. 
In the list agreement
testing problem, 
each face is assigned a 
list of $m = O(1)$ 
functions, and one performs
a local test on these
lists to check that they 
are consistent. With this
in mind, the result of Gotlib and Kaufman~\cite{GotlibK23} 
asserts that under certain
structural assumptions 
on the lists, if the underlying complex has sufficiently good coboundary expansion, 
then one
can design a $3$-query 
list agreement tester that is sound. 
The list agreement problem
will play an important role
in the current work, and while we do not know
how to use the result of 
Gotlib and Kaufman for 
our purposes, their work inspired
us to look at connections 
between agreement testing
and notions of coboundary expansion.

\subsection{Main Results}
Despite considerable interest, no 
positive nor negative results are 
known regarding the question of whether HDX support direct product testers in the low-soundness regime. 
In fact, the majority of applications of HDX are in the high soundness regime, with the first construction of $c^3$-locally testable codes~\cite{DinurELLM22} and quantum LDPC codes~\cite{EvraKZ20,PanteleevK22,LeverrierZ22}.
At a first 
glance, this seems surprising: 
very good expander graphs give rise 
to objects in the low-soundness regime, and 
high dimensional expanders are essentially
their higher order analogs.

The main contribution of this work is 
an explanation to this phenomenon. We 
show that, to facilitate direct product
testers in the low-soundness regime, a 
high dimensional spectral expander must
posses a property that may be seen as 
a generalization of \emph{coboundary expansion}~\cite{LinialMeshulam}. On the other hand, we also show that coboundary expansion is sufficient to get direct product testers. Thus, to construct
constants degree, sparse
complexes facilitating 
direct product testing, 
one should first come 
up with local spectral expanders
that are also coboundary expanders. In section~\ref{sec:cobdry-discussion} we discuss what is known regarding sparse constructions of coboundary expanders.

Below, we state our main results
regarding the soundness of the test, which give 
analysis of the $(k,s)$ 
tester defined above  
assuming expansion properties
of the complex $X$. In a concurrent and independent work, Dikstein and Dinur~\cite{DiksteinDinur2023} established related results.

\subsubsection{Coboundary Expansion}
For
convenience, we follow the presentation of coboundary expansion from~\cite{DiksteinD23}. Suppose we have 
a function $f\colon X(2)\to \mathbb{F}_2$. 
The function $f$ is said to be consistent on the triangle $\{u,v,w\}\in X(3)$ if
it holds that $f(\{u,v\}) + f(\{v,w\}) + f(\{u,w\}) = 0$. What can we say about the structure of functions $f$ which are consistent with respect to $1-\xi$ 
measure of the triangles? Clearly, if $f$ is a function of the form $f(\{u,v\}) = g(u) + g(v)$ for some $g\colon X(1)\to\mathbb{F}_2$, then it is consistent
with respect to all triangles. In the case that $X$ is a coboundary expander, the converse is also true: any $f$ which is $(1-\xi)$ triangle consistent
is $O(\xi)$-close to a function of this form.

More broadly, the notion of coboundary expansion often refers to a property of
higher dimensional faces, and to more general
groups beyond $\mathbb{F}_2$. 
We refrain from defining these notions precisely and instead turn to our variant of coboundary expansion, which we show governs the soundness of direct product testing.

\subsubsection{Unique-Games Coboundary Expansion}
Our notion of coboundary expansion replaces the group $\mathbb{F}_2$ with non-Abelian groups,
more specifically with the permutation 
groups $S_{m}$; we also need to consider higher 
dimensional faces. Some definitions in this spirit have been made, for example 
in~\cite{DinurMeshulam,GotlibK23}, and our notion 
is inspired by theirs.

Given a $d$-dimensional complex $X$ 
and an integer $t\leq d/3$, we consider
the graph $G_{t}[X] = (X(t), E_t(X))$ whose vertices are the $t$-faces of $X$, namely $X(t)$, 
and $(u,v)$ is an edge if $u\cup v\in X(2t)$. We say $T = (u,v,w)$ is a triangle in $G_t[X]$ if each of $u,v,w \in X(t)$ and $u \cup v \cup w \in X(3t)$.

\begin{definition}\label{def:triangle_consistent_weak}
Let $X$ be a $d$-dimensional complex 
and let $t$ be an integer such that $t\leq d/3$. Let $\pi\colon E_t(X) \to S_m$ be a function that satisfies $\pi(u,v) = \pi(v,u)^{-1}$ for all $(u,v) \in E_t[X]$. We say that $\pi$ is consistent on the triangle $(u,v,w)$ in $G_t[X]$ if $\pi(u,v)\pi(v,w) = \pi(u,w)$. 

We say that $\pi$ is $(1-\xi)$-consistent on triangles if sampling $T\sim \mu_{3t}$ and then splitting $T$ as a triangle $u\cup v\cup w$ uniformly where $|u| = |v| = |w| = t$,
\[
\Pr_{
\substack{T\sim \mu_{3t}\\ T = u\cup v\cup w}}
\big[\pi(u, v) \pi(v, w) = \pi(u, w)\big]
\geq 1-\xi.
\]
\end{definition}

One way to think of this definition is 
as a locally consistent instance
of Unique-Games (see Definition~\ref{def:unique-games}). Indeed, a $\pi$ as above
specifies a Unique-Games (UG) instance on the graph $G_t[X]$ whose constraints are 
locally consistent on triangles. The 
goal in this UG instance may
be thought of assigning elements from $[m]$
to the vertices of $G_t[X]$, namely 
finding an assignment $A\colon X(t)\to [m]$, so  as to maximize the fraction of
edges $(u,v)$ for which $A(u) = \pi(u,v)A(v)$. 

With this definition in mind, we can now present a simplified version of our notion of coboundary expansion. One way to arrive at a locally consistent UG instance 
as in Definition~\ref{def:triangle_consistent_weak} is to first pick some
function $g\colon X(t) \to S_m$ and then define $\pi(u,v) = g(u)g(v)^{-1}$. Thus, a natural question is whether there are other ways to construct locally consistent UG instances on $G_t[X]$. In simple terms, our simplified notion of UG coboundary expansion asserts that this is essentially the only way to arrive
at instances of this form. More precisely:

\begin{definition}\label{def:ug_coboundary_expander_simplified}
We say that a $d$-dimensional simplicial 
complex $X$ is an $(m,r,\xi,c)$ UG coboundary expander if for all $t\leq r$
and for all functions $f\colon E_t[X]\to S_m$ that are $(1-\xi)$-consistent on triangles, there is $g\colon X(t)\to S_m$
such that
\[
\Pr_{u\cup v\sim \mu_{2t}}
\big[\pi(u,v) = g(u)g(v)^{-1}\big]\geq 1-c.
\]
\end{definition}
We remark that if a complex $X$ is an $(m,r,\xi,c)$ UG coboundary expander,
then given a $(1-\xi)$-locally consistent
instance of Unique-Games on $G_t[X]$ 
for some $t\leq r$, one may find an
assignment satisfying at least $1-c$
fraction of the constraints. Indeed, by 
definition, given the constraint map
$\pi$ we may find $g\colon X(t)\to S_m$
such that $\pi(u,v) = g(u)g(v)^{-1}$
with probability at least $1-c$ 
over the choice of $u\cup v\sim \mu_{2t}$.
Thus, taking the labeling $A(v) = g(v)(1)$, we see that $A$ satisfies all edges
on which $\pi(u,v) = g(u)g(v)^{-1}$.

The first result of this paper asserts
that a spectral HDX which is a UG coboundary expander admits a direct product
tester in the low soundness regime.
\begin{theorem}\label{thm:HDX_dp_weak_simplified}
Suppose that a simplicial complex $X$ is 
a sufficiently good spectral and UG coboundary expander.
If $F\colon X(k)\to\{0,1\}^k$ passes the $(k,\sqrt{k})$ direct product
test on $X$ with probability $\delta$, then there is 
$f\colon X(1)\to\{0,1\}$ such that
\[
\Pr_{A\sim \mu_k}
\big[\Delta(f|_A, F[A]) =o(1)\big]\geq \Omega_{\delta}(1).
\]

\end{theorem}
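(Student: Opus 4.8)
The plan is to run a standard direct-product-testing ``local-to-global'' argument, but with the crucial twist that the consistency data we extract at the intermediate stage is not $\mathbb{F}_2$-valued (as in the classical $1\%$ analyses of \cite{DinurG08,ImpagliazzoKW09}) but rather lives in a permutation group $S_m$, and it is precisely the UG coboundary expansion of $X$ that lets us globalize it. Concretely, suppose $F$ passes the $(k,\sqrt k)$ test with probability $\delta$. First I would run the usual bootstrapping: by an averaging (Markov) argument, for a constant fraction of faces $A \in X(k)$, the ``local popularity function'' — the function on, say, $\sqrt k$-faces $I \subseteq A$ given by the plurality value of $F[A']|_I$ over $A'$ sharing $I$ with $A$ — is well defined and stable, and these local opinions are pairwise consistent for a $\delta^{O(1)}$ fraction of pairs. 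The goal is to stitch these local plurality functions into a single global $f \colon X(1) \to \{0,1\}$.

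Second, I would extract a Unique-Games instance on the graph $G_t[X]$ for the appropriate small $t$ (here $t \asymp \sqrt{k}$, or a constant times it). Each $t$-face $u$ should be thought of as carrying a ``short list'' of $m = O_\delta(1)$ candidate local functions on $u$ (this is exactly where list-agreement enters, following the Gotlib--Kaufman \cite{GotlibK23} perspective mentioned in the excerpt); the test passing with probability $\delta$ forces, for most edges $u \cup v \in X(2t)$, the existence of a matching between $u$'s list and $v$'s list under which the restrictions to $u \cap v$ agree. Choosing one such matching per edge yields a map $\pi \colon E_t[X] \to S_m$, and local consistency of $F$ on triangles in $G_t[X]$ — which again follows from the test's success probability via a second-moment / restriction argument inside random $3t$-faces — gives that $\pi$ is $(1-\xi)$-consistent on triangles for $\xi = \xi(\delta)$ small. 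Now invoke the hypothesis that $X$ is an $(m,r,\xi,c)$ UG coboundary expander (with $r \gg t$): we obtain $g \colon X(t) \to S_m$ with $\pi(u,v) = g(u)g(v)^{-1}$ for $1-c$ fraction of edges. The function $g$ tells us which element of each local list is the ``globally coherent'' one, and applying $g$ to, say, the value $1 \in [m]$ selects a consistent local function $h_u$ on each $t$-face $u$ for almost all $u$.

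Third, I would descend from $t$-faces to vertices. The selected local functions $\{h_u\}_{u \in X(t)}$ are now genuinely pairwise consistent on a $1-O(c)$ fraction of edges of $G_t[X]$, with no permutation ambiguity left, so a classical agreement-testing argument in the high-soundness regime (the Dinur--Kaufman \cite{DinurK17} analysis, or a direct majority-vote argument using spectral expansion of the links) produces a single $f \colon X(1) \to \{0,1\}$ with $h_u = f|_u$ for most $u$. Tracing back, for an $\Omega_\delta(1)$ fraction of $k$-faces $A$, the table value $F[A]$ agrees with the local plurality function, which agrees with $f$ on the overwhelming majority of $\sqrt k$-subfaces, hence (by another averaging over subfaces inside $A$, using that the links are good expanders) $\Delta(f|_A, F[A]) = o(1)$, as required. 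The parameters $m, \xi, c$ are all functions of $\delta$ alone, chosen small enough relative to $\delta$ before fixing how good a spectral/UG-coboundary expander $X$ must be.

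The main obstacle, and the technical heart of the argument, is the second step: building the Unique-Games instance so that (a) the alphabet size $m$ is bounded by a function of $\delta$ only, (b) the triangle-consistency defect $\xi$ is genuinely small (not just bounded away from $1$), and (c) the edge map $\pi$ can be chosen measurably/consistently enough that local test-passing on $3t$-faces translates into the algebraic identity $\pi(u,v)\pi(v,w) = \pi(u,w)$ for most triangles. Controlling the list sizes requires a restriction argument showing that the set of plausible local functions on a random $t$-face is small — morally a direct-product-testing statement one dimension down — and one has to be careful that the matchings chosen on overlapping edges are compatible; a priori there could be several near-optimal matchings and an adversarial choice could destroy triangle consistency. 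I expect to resolve this by a ``canonical choice'' / symmetrization argument (choosing, e.g., the matching maximizing agreement measure, with ties broken by a fixed rule) together with a pruning step that discards the $o(1)$-measure of bad faces and edges; making the bookkeeping go through while keeping all error terms polynomial in $\delta$ is where the real work lies.
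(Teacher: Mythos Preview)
Your high-level architecture is correct: build short lists of candidate local functions, extract a Unique-Games instance on $G_t[X]$, prove triangle consistency, invoke UG coboundary expansion, then apply Dinur--Kaufman in the $99\%$ regime. However, there is a genuine gap in your second step, and it is precisely the issue you flag as the ``main obstacle'' but do not actually resolve.

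The difficulty is converting ``the test passes with probability $\delta$'' (where $\delta$ is small, say $1\%$) into ``the UG instance is $(1-\xi)$-triangle-consistent'' (where $\xi$ must be small for coboundary expansion to apply). Your proposed fixes---canonical matchings, symmetrization, pruning---address the well-definedness of $\pi$ but not the quantitative jump from $\delta$ to $1-\xi$. A second-moment/restriction argument inside random $3t$-faces can only tell you that an $\Omega(\delta^{O(1)})$ fraction of triangles behave well, not a $1-\xi$ fraction.

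The paper's route, which your proposal is missing entirely, is to first go \emph{up} to level $d \gg k$ rather than down to level $t \asymp \sqrt{k}$. Inside each $d$-face $D$ the test still passes with probability $\Omega(\delta^2)$ (by a sampling/expansion argument over the complex), and crucially $D$ is now a \emph{complete} complex---so one invokes the known $1\%$ direct-product theorem over the Johnson scheme (Dinur--Goldenberg) to produce a short list $L[D]$ of global functions on $D$ explaining $F|_D$. The heavy technical work (a ``short list algorithm'' with a gap-generation trick on the agreement threshold, plus a random-sub-instance CSP lemma of Alon--de~la~Vega--Kannan--Karpinski to control upward consistency of the lists) then shows that these lists pass a \emph{list-agreement test} between $d$-faces and $d/2$-faces with probability $1-O(\delta^{68})$, not merely $\Omega(\delta)$. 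Only after this amplification does one project the lists by majority down to a \emph{constant} level $t = t(\delta)$ (not $\sqrt{k}$) and build the UG instance there; the $(1-\xi)$ triangle consistency at level $t$ is inherited from the $1-O(\delta^{68})$ list agreement at level $d$.

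In short: the missing idea is ``localize upward to a Johnson, apply Johnson direct-product testing there, and use that to manufacture $99\%$-regime list agreement before building the UG instance.'' Without this detour through $d$-faces there is no mechanism in your outline for the $\delta \to 1-\xi$ amplification, and the argument as written does not go through.
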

In words, being a UG coboundary expander 
is a sufficient condition for a spectral
expander to support a low soundness direct 
product tester. As far as we know, however, this condition may not be 
necessary; below, we present a condition
which is both necessary and sufficient. 
Nevertheless, we chose to present its
simpler to state version, Definition~\ref{def:ug_coboundary_expander_simplified}, as we find it more
appealing, intuitive and resembling
non-Abelian variants of the usual
notion of coboundary expansion.

\begin{remark}\label{rmk:non_ab_cobound}
 The usual definition of coboundary expansion in the literature refers to Abelian
 groups such as $\mathbb{F}_2$, 
 see for example~\cite{KaufmanKazLub,KaufmanMass1,KaufmanMass2,GotlibK23,DiksteinD23}. In the $\mathbb{F}_2$
 setting, coboundary expansion for 
 the base graph can be seen 
 as a UG instance 
 over $\mathbb{F}_2$, but it is often phrased in topological
 notions using the boundary and 
 coboundary maps; these 
 definitions extend well to 
 higher dimensional faces. 
 Coboundary expansion has also been defined for non-Abelian groups~\cite{DinurMeshulam,KaufmanMass2,GotlibK23}, 
 however, as far as we know, these definitions
 coincide with ours only for the 
 case that $t=1$ in Definition~\ref{def:ug_coboundary_expander_simplified}. 
\end{remark}

\subsubsection{A Necessary and Sufficient Condition for Low Soundness Direct Product Testing}
We now move on to stating a more complex
version of 
Definition~\ref{def:ug_coboundary_expander_simplified} which
is both necessary and sufficient 
for low soundness direct product testing.
Let us again
consider the graph $G_t[X]$ and  
a $(1-\xi)$ triangle consistent assignment of permutations on the 
edges $\pi\colon E_t[X]\to S_m$. However,
unlike before, these permutations are guaranteed to satisfy
an additional premise. Precisely, 
suppose that each face $u\in X(t)$ 
is assigned a list of $m$ elements from
$\{0,1\}^t$, say $L(u) = (L_1(u),\ldots,L_m(u))$, and 
each face $T\in X(3t)$ 
is also assigned a list 
$L'(T) = (L'_1(T),\ldots,L'_m(T))$.
In words, we would like the permutations
$\pi$ to be consistent with the lists
with respect to concatenations.
Towards this end, we introduce
a convenient 
but informal
notation to compare strings. 
Given $u,v\in X(t)$ that are 
disjoint and strings $L_i(u), L_i(v)\in \{0,1\}^t$, 
we shall think of $L_i(u)$ 
as an assignment to the vertices 
in $u$ and of $L_i(v)$ as
an assignment to the vertices 
in $v$. Thus, the notation
$L_i(u)\circ L_i(v)$ will 
be a string in $\{0,1\}^{2t}$
which encodes the assignment 
to $u\cup v$ provided by
the concatenation of the
two assignments.
More generally, given 
$u,v$ disjoint and 
list assignments $L(u),L(v)$
we define
\[
L(u)\circ L(v)
=(L_1(u)\circ L_1(v),
\ldots,
L_m(u)\circ L_m(v)).
\]
Lastly, given a list $L(u)$
as above and $\pi\in S_m$, 
we define $\pi L(u) = (L_{\pi(1)}(u),\ldots,L_{\pi(m)}(u))$.

\begin{definition}\label{def:triangle_consistent_strong}
Let $L\colon X(t)\to (\{0,1\}^t)^m$, 
$L'\colon X(3t)\to(\{0,1\}^{3t})^m$, 
and $\xi>0$.
We say $\pi$ is $(1-\xi)$-consistent
with the lists $L$ and $L'$
if choosing $T\sim \mu_{3t}$ 
and a splitting $T = u\cup v\cup w$
into a triangle, we have that
\[
\Pr_{\substack{T\sim \mu_{3t}\\ 
T=u\cup v\cup w}}
\big[L'(T) = L(u)\circ \pi(u,v) L(v) \circ \pi(u,w) L(w)\big]\geq 1-\xi.
\]
We say that $\pi$ is $(1-\xi)$-strongly triangle consistent if there are lists 
$L$ and $L'$ such that $\pi$ is $(1-\xi)$-consistent with respect to the lists $L$
and $L'$.

\end{definition}
It is easy to see that if $\pi$ is $(1-\xi)$-strongly triangle
consistent, then $\pi$ is $(1-O(\xi))$-triangle consistent (see Claim~\ref{claim:strong-to-weak-consistency}). Thus, the class of triangle
consistent functions $\pi$ is more restrictive.
With the notion of strong triangle consistency we are now ready to state a weaker variant of Definition~\ref{def:ug_coboundary_expander_simplified}; the only difference between
the two definitions is that in the definition below, we only require
that any strongly triangle consistent 
assignment admits a global structure. 
More precisely:
\begin{definition}\label{def:ug_coboundary_expander}
We say that a $d$-dimensional simplicial 
complex $X$ is a weak $(m,r,\xi,c)$ UG coboundary expander if the 
following condition is satisfied for all $t\leq r$. Suppose $\pi\colon E_t[X]\to S_m$ is a $(1-\xi)$-strongly triangle consistent function. Then there 
exists $g\colon X(t)\to S_m$
such that
\[
\Pr_{u\cup v\sim \mu_{2t}}
[\pi(u,v) = g(u)g(v)^{-1}]\geq 1-c.
\]
\end{definition}
The parameter $r$ in Definition~\ref{def:ug_coboundary_expander} 
is often referred to as the level at which UG coboundary expansion holds. With the notion of weak UG coboundary expansion, we can now state a stronger 
version of Theorem~\ref{thm:HDX_dp_weak_simplified}. 
Roughly speaking, the following two
results asserts that for a sufficiently good spectral simplicial complex 
$X$, the direct product tester over 
$X$ works in the low soundness regime 
if and only if $X$ is a weak UG coboundary
expander with sufficiently good parameters.

\begin{theorem}\label{thm:HDX_dp_weak}
The following results hold for any simplicial complex $X$.
\begin{enumerate}
    
\item
\textbf{Weak UG-coboundary is Necessary:} 
    If a simplicial complex $X$ is a sufficiently good spectral expander
    which is not a UG coboundary expander, then there is $\delta>0$ such that for sufficiently large $k$, there is $F\colon X(k)\to\{0,1\}^k$ 
that passes the $(k,\sqrt{k})$
direct product tester with probability $\delta$ 
and yet for all $f\colon X(1)\to\{0,1\}$
we have that
\[
\Pr_{A\sim \mu_k}
[\Delta(F[A], f|_A) = o(1)]=o(1).
\]

\item 
    \textbf{Weak UG-coboundary is Sufficient:} 
    For all $\eps, \delta>0$,
    if a simplicial complex $X$ is a sufficiently good spectral expander
    and a weak UG coboundary expander on level $O(1)$, then the direct
product test over $X$ with respect
to sufficiently large $k$ has soundness $\delta$. Namely, if 
$F\colon X(k)\to\{0,1\}^k$ passes the
$(k,\sqrt{k})$ direct product tester with respect to 
$X$ with probability at least $\delta$, 
then there is $f\colon X(1)\to\{0,1\}$
such that 
\[
\Pr_{A\sim \mu_k}
[\Delta(F[A], f|_A) \leq \eps]\geq \Omega(1).
\]
\end{enumerate}
\end{theorem}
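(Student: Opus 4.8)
Both directions route information through the base graph $G_t[X]$ for a constant $t = O(1)$ and use strong spectral expansion of $X$ as a transport device between the four relevant scales $1$, $t$, $\sqrt{k}$ and $k$; the coboundary‑expansion hypothesis enters at exactly one point, where a global choice among $m$ locally‑defined branches must be made. For \textbf{sufficiency}, assume $X$ is a sufficiently good spectral expander and a weak $(m,r,\xi,c)$ UG coboundary expander with $r = O(1)$, $m = O_\delta(1)$, and $\xi,c$ small; let $F \colon X(k) \to \{0,1\}^k$ pass the $(k,\sqrt{k})$ tester with probability $\geq \delta$. I would first use spectral expansion to split the test's sampling into stages --- sample $I_0 \in X(t)$, then sample $k$-faces through $I_0$ --- and argue that $F$ retains $\Omega(\delta)$ local consistency in a typical link $X_{I_0}$, iterating agreement‑style arguments to bring the comparison window down from $\sqrt{k}$ to the constant $t$. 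Inside a typical link I would then run a low‑soundness agreement argument \emph{with lists}: for an $\Omega(\delta)$-fraction of the $k$-faces $A$ this extracts a bounded list of $m = O_\delta(1)$ candidate restrictions which, after pushing down, yields list assignments $L \colon X(t) \to (\{0,1\}^t)^m$ and $L' \colon X(3t) \to (\{0,1\}^{3t})^m$. Matching list entries on overlapping faces defines a permutation map $\pi \colon E_t[X] \to S_m$, and the fact that $F$ passes the test forces $\pi$ to be $(1-\xi)$-strongly triangle consistent with respect to $L$ and $L'$ in the sense of Definition~\ref{def:triangle_consistent_strong}. Applying the hypothesis gives $g \colon X(t) \to S_m$ with $\pi(u,v) = g(u) g(v)^{-1}$ on a $(1-c)$-fraction of edges, which aligns all the local lists into one globally consistent branch; a final spectral/local‑to‑global step glues the aligned restrictions (which now agree on overlaps for most pairs) into a vertex function $f \colon X(1) \to \{0,1\}$ with $\Delta(F[A], f|_A) \leq \eps$ for an $\Omega(1)$-fraction of $A \sim \mu_k$.

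\textbf{Necessity.} Suppose $X$ is a good spectral expander but fails to be a weak UG coboundary expander, so for some $t = O(1)$ there is a $(1-\xi)$-strongly triangle consistent $\pi \colon E_t[X] \to S_m$, with witnessing lists $L$ on $X(t)$ and $L'$ on $X(3t)$, such that no $g \colon X(t) \to S_m$ satisfies $\pi(u,v) = g(u) g(v)^{-1}$ on more than a $(1-c)$-fraction of edges. I would propagate this data upward: because $X$ is a strong spectral expander and the lists are locally consistent via $\pi$ on $3t$-faces, a high‑soundness agreement argument for lists produces, for almost every $A \in X(k)$, a length‑$m$ list $\hat L(A) \in (\{0,1\}^k)^m$ with pairwise‑far entries, and, for almost every overlapping pair, a permutation $\hat\pi(A, A')$ with $\hat L(A)|_{A \cap A'} = \hat\pi(A,A')\, \hat L(A')|_{A \cap A'}$; moreover $\hat\pi$ still admits no global coboundary structure, this being inherited from $\pi$ under pushdown. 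Now define $F$ by picking, independently for each $A$, a uniformly random branch $j_A \in [m]$ and setting $F[A] = \hat L_{j_A}(A)$. When the test samples $A, A'$ through a $\sqrt{k}$-face $I$, the strings $F[A]|_I$ and $F[A']|_I$ coincide whenever $\hat\pi(A,A')(j_A) = j_{A'}$, an event of probability $1/m = \Omega(1)$ over the independent branch choices, so $F$ passes with some constant $\delta > 0$. Conversely, if some $f \colon X(1) \to \{0,1\}$ satisfied $\Delta(F[A], f|_A) = o(1)$ for a non‑negligible fraction of $A$, then on those $A$ exactly one entry of the pairwise‑far list $\hat L(A)$ agrees with $f|_A$; recording that entry gives a labeling consistent with $\hat\pi$ on most edges among good pairs, and --- using that the $m$ branches are interchangeable up to the tracking permutation $\hat\pi$, together with a pushdown and spectral‑boosting step --- this upgrades to a $g \colon X(t) \to S_m$ with $\pi(u,v) = g(u) g(v)^{-1}$ on more than a $(1-c)$-fraction of edges, contradicting the choice of $\pi$. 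Hence $F$ witnesses failure of soundness.

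\textbf{Main obstacle.} The crux, in both directions, is the interaction between the multi‑scale transport and the list bookkeeping. For sufficiency this means extracting \emph{bounded‑size} lists in the genuinely low‑soundness regime over a \emph{sparse} complex, and certifying that the resulting permutation instance on $G_t[X]$ is \emph{strongly} triangle consistent --- with honest list assignments $L, L'$ on $t$- and $3t$-faces and a truly small $\xi$ --- rather than merely triangle consistent; and then, after coboundary expansion supplies the alignment, re‑gluing the aligned branch all the way down to a single vertex function with only $\eps$ loss. For necessity the subtle point is the last step, upgrading a single good branch‑labeling into a full $S_m$ coboundary of $\pi$. Throughout, one must make the parameters $m$, $\xi$, $c$, the constant level $r$, and the final constants $\Omega(1)$, $\eps$ propagate coherently around the chain $k \to \sqrt{k} \to t \to 1 \to k$; the UG coboundary expansion hypothesis is invoked exactly where local consistency alone cannot produce a global choice of branch.
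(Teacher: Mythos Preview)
Your high-level architecture is right in both directions, and you have correctly identified the two genuine obstacles. But in each direction there is one concrete step where your proposed mechanism does not work and where the paper supplies a specific idea you are missing.

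\textbf{Sufficiency: where the lists come from.} You propose to localize by conditioning on a $t$-face $I_0$ and working inside the link $X_{I_0}$, then ``run a low-soundness agreement argument with lists'' there. The problem is that $X_{I_0}$ is again just a sparse HDX, so you have made no progress: you still face the original question of extracting a bounded list from an assignment passing a test with probability $\delta$ on a sparse complex, and no such list-decoding statement is available. The paper's move is the opposite: it localizes \emph{upward} to a $d$-face $D$ (with $d \gg k$), inside which the induced test is the dense Johnson $(k,\sqrt{k})$ tester on $\binom{D}{k}$. A spectral argument shows almost every $D$ inherits passing probability $\Omega(\delta^2)$, and then the \emph{known} Johnson low-soundness theorem (Dinur--Goldenberg) produces functions on $D$ with $\Omega(\delta^{O(1)})$ agreement. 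Turning this into a bounded list with the required ``gap'' and pairwise distance takes an iterative randomize-and-repeat procedure and a careful choice of threshold. A second ingredient you do not mention: to show the lists on $d$-faces and on $d/2$-subfaces are in genuine $1$-to-$1$ correspondence, the easy direction is ``every $f\in L[D]$ restricts to something close to $L[B]$'', but the converse requires ruling out \emph{new} functions appearing on $B$, which is a union-bound over exponentially many $f$. The paper handles this by phrasing $\max_g \mathrm{agr}(g,F|_B)$ as the value of a dense max-$k$-CSP and invoking the random-subinstance theorem of Alon--de la Vega--Kannan--Karpinski. Only after these two steps do you get lists on $X(d)$ passing a list-agreement test with probability $1-O(\delta^{O(1)})$; the projection down to level $t$ then goes via majority decoding (not via links), and the final glue uses the high-soundness Dinur--Kaufman theorem.

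\textbf{Necessity: the $[m]$-labeling versus $S_m$-coboundary gap.} You are right that the contradiction step only hands you a single branch-labeling $X(t)\to [m]$ satisfying most edges, not an $S_m$-valued $g$ with $\pi(u,v)=g(u)g(v)^{-1}$. Your proposed fix --- ``the $m$ branches are interchangeable up to $\hat\pi$'' plus a boosting step --- does not work: the branches are not symmetric (the lists $L(u)$ have specific, distinct entries), and one good $[m]$-labeling gives only one column of each permutation. The paper's solution is a \emph{preprocessing} step performed before constructing $F$: starting from the given $(\mathcal{L}',\Pi')$ with no $S_m$-coboundary of value $\geq 1-c$, iteratively peel off any $[m]$-labeling of value $\geq 1-c/m$ (deleting that entry from every list and shrinking the alphabet). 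If this ran $m-1$ times the peeled labelings would assemble into an $S_m$-coboundary violating at most $m\cdot(c/m)=c$ of the edges, a contradiction; so it halts with an instance of alphabet size $\geq 2$ admitting \emph{no} $[m]$-labeling of value $\geq 1-c/m$. With this in hand, the contradiction step only needs to produce a single good $[m]$-labeling, which is exactly what your argument yields. For the lift of lists from $X(t)$ to $X(k)$, the paper gives a direct combinatorial argument (all triangles in the Kneser graph $K(A,t)$ inside a typical $k$-face $A$ are consistent, and any triangle-consistent UG instance on $K([k],t)$ for $k\geq 5t$ has value $1$), which is cleaner than an appeal to high-soundness list agreement.

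In short: your outline is structurally sound, but the two places you flag as obstacles are exactly where the proof lives, and the mechanisms you propose for them (link-localization; branch-symmetry) are not the ones that work.
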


We refer the reader to
Theorems~\ref{thm:necessary-formal} and~\ref{thm:sufficient_ugexpand_formal}
for more formal statements. We use our necessary result above to conclude that some of the best known sparse spectral expanders -- namely some 
LSV complexes -- do not 
support direct product testers in the low soundness regime precisely 
because they fail to satisfy coboundary expansion (see Corollary~\ref{cor:lsv}). As the result of Dinur and Kaufman~\cite{DinurK17} 
asserts that LSV complexes admit direct product testers in the high soundness regime, we conclude that the low soundness regime is qualitatively different.

\begin{remark}
We would like to remark
that the ``sufficient'' part of  Theorem~\ref{thm:HDX_dp_weak} 
only uses UG coboundary expansion on a constant
level $r$, which is potentially much smaller than $k$. Additionally, it suffices that the complex is an $(m,r,\exp(-o(r)),c)$-UG coboundary expander, i.e. there exists a global UG solution for all UG instances that are at least $1-f(r)$ triangle consistent for some fixed function $f(r) = \exp(-o(r))$. The proof of this quantitative version follows along similar lines of the proof of Theorem~\ref{thm:sufficient_ugexpand_formal}, and we elaborate on it in Section~\ref{sec:appx-improved-cbdry}.
\end{remark}
In the above theorem, the structure for $F$ we get is relatively weak though, and only asserts that with significant probability over the choice of $A\sim \mu_k$, we have that $F[A]_i = f(i)$ 
for $(1-\eps)$ fraction of $i\in A$. In the next theorem, we show that 
if the level $r$ on which
coboundary expansion
holds is linear in $k$, 
then the conclusion of Theorem~\ref{thm:HDX_dp_weak}  can be 
strengthened to say 
that with 
significant probability over $A\sim \mu_k$, it holds that $F[A]_i = f(i)$
for all but constantly many 
of $i\in A$.\footnote{For that, we need to consider a direct product tester with intersection parameter $s$, which is significantly smaller than $k$ but is linear in it. Indeed, it is easy to see that the conclusion of Theorem~\ref{thm:HDX_dp} would fail if either $s \leq k^{0.99}$ or $s\geq k/100$.}

\begin{theorem}\label{thm:HDX_dp}
    If a simplicial complex $X$ is a sufficiently 
    good spectral expander, and for 
    $k\in\mathbb{N}$ it holds that $X$ is a sufficiently good weak UG coboundary expander on level $\Omega(k)$, then the direct
product test over $X$ with respect
to $k$ has soundness $\delta$. Namely, 
for all $\delta>0$ there is $\eta>0$ such that if 
$F\colon X(k)\to\{0,1\}^k$ passes the $(k,\eta k)$
direct product tester with respect to 
$X$ with probability at least $\delta$, 
then there is $f\colon X(1)\to\{0,1\}$
such that 
\[
\Pr_{A\sim \mu_k}
\big[\Delta(F[A], f|_A)\leq O(1/k)\big]\geq \Omega(1).
\]
\end{theorem}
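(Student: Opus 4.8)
The plan is to follow the same pipeline as in the proof of Theorem~\ref{thm:HDX_dp_weak} (the "sufficient" direction), but to track the intersection parameter $s=\eta k$ carefully so that the final agreement is with all but $O(1)$ coordinates rather than a $(1-\eps)$-fraction. First I would reduce, via a standard averaging/restriction argument over links, to the situation where the test is being analyzed in a link $X_I$ of a low-dimensional face $I$, and where $F$ restricted to faces containing $I$ is already somewhat structured. The key object is the list-agreement structure: for each small face $u\in X(t)$ (with $t$ of order $k$, say $t=\eta k/3$ or so), one collects the "popular" local views $F[A]|_u$ over random $A\supseteq u$ of size $k$, and shows that with probability $\Omega(\delta)$ over $A$, the string $F[A]|_u$ agrees with one of a short list $L(u)$ of $m=m(\delta)$ candidate strings in $\{0,1\}^t$ on all but $O(1)$ coordinates. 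The fact that one can take the agreement within a list to be up to $O(1)$ coordinates (rather than $(1-\eps)t$ coordinates) is exactly what forces $t$, and hence the coboundary level $r$, to be linear in $k$: with $t$ linear in $k$ and $s=\eta k$ linear in $k$, the "sunflower-free" sampling arguments that relate $F[A]|_u$ and $F[A']|_u$ for $A\cap A'\supseteq u$ give Hamming distances that are $O(1)$ in absolute terms.

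Next I would set up the Unique-Games instance on $G_t[X]$: the edge permutation $\pi(u,v)\in S_m$ is defined to be the matching between the list $L(u)$ and the list $L(v)$ induced by the requirement that, for a typical $A\supseteq u\cup v$, the view $F[A]$ simultaneously $O(1)$-approximates $L_i(u)$ on $u$ and $L_{\pi(u,v)(i)}(v)$ on $v$. I would then verify that $\pi$ is $(1-\xi)$-strongly triangle consistent in the sense of Definition~\ref{def:triangle_consistent_strong}, using the lists $L$ together with the naturally induced lists $L'$ on $X(3t)$ (obtained from popular views $F[B]$ for $B\supseteq T$, $|T|=3t$); this step is where the spectral expansion of $X$ enters, to argue that a single random $B$ of size $k$ simultaneously hits all three sub-faces $u,v,w$ in a way that makes the concatenation $L(u)\circ\pi(u,v)L(v)\circ\pi(u,w)L(w)$ actually agree with $L'(T)$ except on $O(1)$ coordinates with high probability. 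Once strong triangle consistency is in hand, weak UG coboundary expansion on level $\Omega(k)$ (which we are assuming) hands us $g\colon X(t)\to S_m$ with $\pi(u,v)=g(u)g(v)^{-1}$ on a $(1-c)$-fraction of edges.

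Finally I would decode the global function: the map $g$ selects, for a $(1-c)$-fraction of faces $u\in X(t)$, a single distinguished element $L_{g(u)(1)}(u)\in\{0,1\}^t$, and the consistency $\pi(u,v)=g(u)g(v)^{-1}$ together with the definition of $\pi$ says that these distinguished local strings agree with each other (up to $O(1)$ coordinates) on overlaps of size down to a single vertex. A gluing argument — again using spectral expansion to connect overlapping $t$-faces and a Markov/union bound to control the accumulation of the $O(1)$ errors — produces $f\colon X(1)\to\{0,1\}$ such that for $\Omega(1)$ fraction of $A\sim\mu_k$, the distinguished string on (a size-$t$ subface of) $A$, and hence $F[A]$ itself on a popular event, agrees with $f|_A$ on all but $O(1)$ coordinates, i.e.\ $\Delta(F[A],f|_A)\leq O(1/k)$. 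I expect the main obstacle to be controlling error accumulation in absolute (coordinate-count) rather than fractional terms throughout: every sampling/gluing step that in the proof of Theorem~\ref{thm:HDX_dp_weak} loses an $\eps$-fraction must here lose only an additive $O(1)$, which is precisely why the overlap parameters $t,s$ must be taken linear in $k$ (and why the footnote's remark that $s\le k^{0.99}$ or $s\ge k/100$ both fail is sharp), and making all the concentration bounds quantitatively consistent with this budget is the delicate part.
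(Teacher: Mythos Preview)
Your high-level picture is right in two respects: the argument should mirror the ``sufficient'' direction of Theorem~\ref{thm:HDX_dp_weak}, and the reason the coboundary level must be $\Omega(k)$ is precisely that the pairwise distance between list members is now only $\Theta(1/k)$ fractionally, so projecting below level $\Omega(k)$ would collapse the lists. But the concrete mechanism you propose for producing the lists is not the one the paper uses, and as stated it has a real gap.

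You propose to build the lists $L(u)$ directly on $t$-faces by collecting ``popular local views'' $F[A]|_u$ over $A\supseteq u$, and to argue via ``sunflower-free sampling arguments'' that these views fall into $m(\delta)$ clusters each of radius $O(1)$ in Hamming distance. There is no such step in the paper, and I do not see how to make it work: nothing in the hypotheses prevents the multiset $\{F[A]|_u : A\supseteq u\}$ from being spread out, and the direct product test only gives you correlations between $F[A]|_I$ and $F[A']|_I$ on a \emph{random} $I$ of size $\eta k$, not on a fixed $u$. The paper instead localizes to $d$-faces $D$ (exactly as in Section~\ref{sec:localizing}) and applies a \emph{Johnson} direct product theorem inside each $D$; the entire new technical content of the proof of Theorem~\ref{thm:HDX_dp} is a strengthened Johnson theorem (Theorem~\ref{thm:johnson-dp}) asserting that if $F_D$ passes the $(k,\alpha k)$ test with probability $\eps$ then there exists $g\colon D\to\{0,1\}$ with $\Pr_A[\Delta(F_D[A],g|_A)\le t/k]\ge\Omega(\eps^{12})$ for $t=t(\eps)$. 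Proving this (done in Section~\ref{sec:johnson_thm}) is nontrivial: it redoes the Impagliazzo--Kabanets--Wigderson local-decoding and then glues the local functions $g_{A_0,B_0}$ using a multislice spectral/small-set-expansion argument. Your proposal skips this entirely.

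Once one has the $O(1/k)$-agreement Johnson theorem, the paper simply reruns the pipeline of Section~\ref{sec:basic_pf} with the distance parameter $\eta$ replaced by $\Theta(1/k)$; the short-list algorithm, the list-agreement test (Lemma~\ref{lem:agr-to-list-agr}), the majority-decoding projection to level $t$, and the UG instance construction all go through verbatim, with the single consequence that the projection in Claim~\ref{claim:projected-dist} now only survives down to level $t=\Omega(k)$. The final decoding is not a bespoke ``gluing with $O(1)$ error accumulation'' as you sketch, but rather the same appeal to coboundary expansion followed by the Dinur--Kaufman $99\%$ theorem (Lemma~\ref{lem:cobdry-to-global} and Theorem~\ref{thm:dinur-kaufman}). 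So the missing idea in your proposal is the strengthened Johnson direct product theorem; with it in hand, the rest is a parameter substitution, not a new argument.
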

In Section~\ref{sec:known_complex} we examine several well known complexes. We show that dense complexes such as the complete and the Grassmann complex are UG coboundary expanders. On the flip side we use well-known theorems that some LSV complexes are not coboundary expanders, to show that they fail to support direct product testers.

\subsection{Proof Overview}\label{sec:techniques}
In this section we give 
an overview for the proofs of
Theorems~\ref{thm:HDX_dp_weak}
and~\ref{thm:HDX_dp}. 
The proofs of these two theorems 
is basically the same; the only
point in which they defer is what 
direct product theorem is used 
for the Johnson scheme. Thus, 
we will focus on the setting in 
Theorem~\ref{thm:HDX_dp_weak}.
\subsubsection{The Proof that weak UG Coboundary Expansion is Necessary}
Suppose that weak UG coboundary
expansion for $X$ fails. 
That means that we have a complex $X$ violating
the weak UG coboundary expansion, hence
there is a $(1-o(1))$-strongly triangle consistent Unique-Games
instance on $G_r[X]$ that has no 
$(1-c)$-global solution. 
Namely, denoting the constraints of $G_r[X]$ 
by $\pi\colon E_r(X)\to S_m$, for all 
$g\colon X(r)\to S_m$ 
it holds that $\pi(u,v) = g(u)g(v)^{-1}$ with probability
at most $1-c$ over the choice
of $(u,v)\sim \mu_{2r}$. Furthermore, we may find
lists $L$ and $L'$ such 
that $f$ is $(1-o(1))$-consistent
with $L,L'$ as in Definition~\ref{def:triangle_consistent_strong}. 
The parameters $m$ and $r$ should be thought of 
as large constants, and $c>0$ should be thought of 
as a small constant bounded away from $0$. 
The proof now
proceeds in the following steps: 
\vspace{1ex}
\textbf{Preprocessing:} We claim that (after appropriate pre-processing) 
we may assume that $G_r[X]$
doesn't have a solution that
satisfies more than $(1-c/m)$
of the constraints of $G_r[X]$.
Indeed, if there exists such 
a solution $A\colon X(r)\to [m]$, we may 
(1) remove $A(u)$ from the 
list $L[u]$, 
(2) For all edges $(u,v)$ remove the pair $(A(u), A(v))$ as a satisfying assignment
of $\pi(u,v)$. We argue that this process must terminate and in the end must produce a Unique-Games instance 
with no good global solution. For 
simplicity of notation, we 
assume henceforth that $G_r[X]$
doesn't have an assignment
satisfying at least 
$(1-c')$ of the constraints
to begin with (where $c' = c/m$). 

\vspace{1ex}
\textbf{Lifting the lists:} Take $k\gg r$, and
consider $k$-faces. Taking 
$K\sim \mu_k$ randomly 
and taking a random triangle
over $r$-faces in $K$, 
we see that it is marginally
distributed as a random triangle in $X$. Thus, by linearity of expectation 
the number of inconsistent
triangles in $K$ is at most 
$o(k^{3r}) = o(1)$. 
It follows that with probability
at least $1-o(1)$ over
the choice of $K$, all 
$r$-triangles in $K$ are 
consistent. 
In this case, it is easily
seen that there exists a unique 
list $L[K]$ of $m$ strings 
from $\{0,1\}^k$ which is
consistent with all the $r$-faces inside $K$, and we fix
it. For sake of simplicity, 
we assume that we have constructed a list $L[K]$
for each $K\in X(k)$.

\vspace{1ex}
\textbf{Constructing the assignment and its soundness:} It is easy to see that 
sampling $D\sim \mu_d$, 
$I\subseteq D$ of size 
$\sqrt{k}$ and then 
$I\subseteq K, K'\subseteq D$
uniformly, with probability $1-o(1)$ the
lists of $K$ and $K'$ 
are in $1$-to-$1$ correspondence 
with respect to agreement on $I$. \footnote{Roughly speaking, the reason is that the argument above could be made also for
$2k$-faces, and one gets a 
list of assignments that is 
in $1$-to-$1$ correspondence 
to the lists of $K$ faces with 
respect to containment.} 
Thus, choosing an assignment 
$F\colon X(k)\to\{0,1\}^k$ 
by assigning each $k$-face $K$
a random element from $L[K]$
gives an assignment which in expectation satisfies at least 
$1/m$ fraction of the constraints, and we fix $F$ achieving this expectation henceforth.

\vspace{1ex}
\textbf{No global structure:}
 Suppose for contradiction that $h\colon X(1)\to \{0,1\}$ has that $h(K) \equiv F[K]$ for at
least $\eta$ fraction of the
$k$-faces (strictly speaking, 
we may only assume that $h(K)$ 
is close to $F[K]$, but 
we ignore this issue for the sake of clarity), and let the set of 
these $k$-faces be denoted by $\mathcal{K}$.
Consider $\mathcal{R} = \{R\in X(r)~|~\exists K\in\mathcal{K}, R\subseteq K\}$. By the sampling
property of $X$, provided 
that $k$ is sufficiently large, we get $\mu_r(\mathcal{R})\geq 1-c/10$. Note that 
for $R\in\mathcal{R}$, 
$h_{R} = (h|_{K})|_{R}\in L[K]|_{R} = L[R]$ where $K\in\mathcal{K}$ contains $R$. It follows that $h$ satisfies
at least $1-c$ fraction 
of the constraints of $G_r[X]$,
and contradiction.

\subsubsection{The Proof that weak UG Coboundary Expansion is Sufficient}
Suppose we have an assignment 
$F\colon X(k)\to\{0,1\}^k$ that 
passes the $(k,s)$ direct product test over 
$X$ with probability at least $\delta$; throughout, $s=\sqrt{k}$. 
We also take a parameter 
$t$ where $k\ll t\ll d$. 
The proof now proceeds by
the following steps. 

We start by localizing to Johnson graphs, and show that almost all of them can be equipped with a short list of assignments that explain almost all of the agreement inside them. This naturally leads us to \emph{list-agreement testing} as described 
above (and defined more formally below). This problem was addressed in the work of Gotlib and Kaufman~\cite{GotlibK23} who analyzed a certain 
list agreement tester and proved it was sound in the high soundness regime. 
Although our result for list agreement testing is similar in spirit, our tester is different from theirs, and we do not how to know use their
techniques/tester for our purposes.

\subsubsection{Reduction from 1\% Direct Product Testing to 99\% List Agreement Testing}
\textbf{Localizing to a Johnson:}
The first part of the proof is to localize
the test to Johnson schemes. 
For $T\in X(t)$
define ${\sf pass}_t(F; T)$ 
as the probability
the following test passes: sample $I\subseteq T$ of size $s$ and then $I\subseteq A,B\subseteq T$ independently, and test that $F[A]|_{I} = F[B]|_{I}$.
\footnote{The diligent
reader may notice that the 
distribution of $(A,B)$ in
the test inside $T$ and in 
our direct product tester 
over $X$ is not quite the
same. The probability that $A\cap B = I$ in the test 
inside $T$ is $1-\Theta(k/t)$, 
whereas it is $1-\Theta(k/d)$
in the direct product tester. 
Conditioned on this event 
(which has probability close to 
$1$) the two distributions are identical, and thus they are 
close in total variation distance. Therefore, we will 
think of the two distribution as essentially
the same.
}
Similarly, we may define 
${\sf pass}_d(F; D)$
for $D\in X(d)$.
Clearly, 
one has that
$\E_{T\sim \mu_t}[{\sf pass}_t(F;T)],
\E_{D\sim \mu_d}[{\sf pass}_d(F;D)]
\geq \delta$ 
and hence for a large 
fraction of the $D$'s 
we have that 
${\sf pass}_d(F;D)\geq \delta/2$.
By the sampling properties
of HDX (which follow as $X$
is a sufficiently good spectral
expander) we are able to derive
the stronger conclusion 
that ${\sf pass}_d(F; D)\geq \delta^2/100$ for $1-O_{\delta}(\gamma)$ for the $d$-faces $D$. 
We refer to such faces $D$ as 
good.

\vspace{1ex}
\textbf{Getting a list on
each good Johnson:} 
Fix a good $D\in X(d)$, consider 
the Johnson graph $\binom{D}{k}$
and the assignment $F_D$ which
the restriction of $F$ to $\binom{D}{k}$. Then the fact 
that ${\sf pass}_d(F; D)\geq \delta^2/100$ implies that $F_D$
passes the Johnson scheme 
direct product test 
inside $T$ with probability
at least $\delta^2/100$. 
Thus, by direct product theorems
over the Johnson scheme -- 
and more precisely by the 
result of~\cite{DinurG08} -- we conclude 
that there is a function 
$f_{D}\colon D\to \{0,1\}$
such that $\Delta(f_{D}|_{A}, F_D[A]) = o(1)$ for at least 
$\delta' = \delta^{O(1)}$ 
fraction of the $k$-faces $A\subseteq D$. 
To simplify terminology, 
we refer to an $A$ on which 
$\Delta(f_{D}|_{A}, F_D[A]) = o(1)$ as an $A$ on which 
$f_{D}$ and $F_D[A]$ 
agree.

We would like to form a list of
all of the functions that achieve
$\delta'$ agreement with $F_D$.
The list of all of these 
functions though may be large 
(its size may typically depend on the parameters $k$ and 
$d$, whereas we wish our list 
sizes to only be a function of
$\delta$ and $\delta'$). To 
remedy this situation, we create
these lists in a more careful 
manner. One way to go about it
is to construct an ``$\eps$-net''
for all these functions, 
and the reader should have this
in mind (our precise execution 
is a bit different but morally
the same). Thus, we are able
to find, for each good $D$, a maximal list $f_{1,D},\ldots,f_{m,D}$ 
of functions that have at least
$\delta'$ agreement with $F_D$, 
and we have a list size 
bound $m\leq m(\delta')$. 
Here, maximality 
asserts that no function 
$f$ that is somewhat far from 
all $f_{i,D}$ has agreement 
at least $\delta'$ 
with $F_D$. We also remark that
a-priori, the list size $m$ 
could also depend on the 
identify of the face $D$, 
but we omit it from the
notation for now.\footnote{The reason is that, as we prove 
in subsequent steps, the
list size typically does 
not depend on the identity 
of the face $D$.}

\vspace{1ex}
\textbf{Generating a gap:}
Consider the integer valued map 
$m: \delta' \rightarrow m(\delta')$, mapping 
a soundness parameter to
an upper bound on the list
size corresponding to it. Considering its
values in the interval $[\delta'/2,\delta']$, we
see that its maximum value
is at most some $M(\delta')$. 
Partition the interval
$[\delta'/2,\delta']$ into 
$R\gg M$ intervals of equal length, and towards this end consider $\delta_i = \delta' - i\frac{\delta'}{2R}$
and the intervals 
$[\delta_{i+1},\delta_i]$
for $i=0,\ldots,R-1$.
Among these intervals there are at most $M$ intervals on which
the value of $m$ changes.
Thus, we will choose $i$ randomly, 
and in fact apply the above list-decoding procedure 
for $\delta_i'$ (as opposed to 
$\delta'$). The benefit of this procedure will be that it generates a gap: with probability $1-O(M/R)$ 
we get a list of functions that
all have agreement at least 
$\delta'_{i}$ with $F_T$, 
and all functions with agreement 
$\delta'_{i+1}$ are quite close
to at least one of the functions
in the list. Indeed, the idea is 
that of it wasn't the case, 
then the list size 
would exceed beyond $M$
for $i=R-1$.

Thus, after this step 
for $1-O(M/R)$ of the good 
$D$'s we get a list
of functions that have 
at least $\delta'' := \delta'_i$
agreement with $F_D$, 
and any function
that has agreement at least
$\delta''' := \delta'_{i+1}<\delta''$ with $F_D$ is close to some function
in the list. 
We refer to such $D$'s as 
very good henceforth, 
and define the list $L[D]$
to be the list we created 
for $D$. With additional work, one may
guarantee that any two functions
in $L[D]$ are somewhat far from
each other. More 
precisely, we are able to 
guarantee that they differ
on at least $\Omega_{\delta}(1)$ fraction of the points 
in $D$; the key point here 
is that the distance between 
the functions in the list
exceeds the closeness parameter
any function with agreement 
at least $\delta'''$ has.

\vspace{1ex}
\textbf{Consistency of the Local Lists:} The steps described 
above for 
faces of size $d$ can 
be applied also for
faces of size $d/2$. 
Indeed, we define
the notion of good and 
very good faces there 
as well. Thus, we now
have lists $L[D]$ 
and $L[P]$ for each very
good $D\in X(d)$ and $P\in X(d/2)$.

Consider sampling $D\sim \mu_d$
and then $P\subseteq D$ of size 
$d/2$ uniformly. Looking at 
the list of $D$, $L[D] = (f_{1,D},\ldots,f_{m,D})$
naturally gives the restricted
functions 
$f_{i,D}|_{P}\colon P\to\{0,1\}^{d/2}$ as candidate functions
for the list of $P$. 
Indeed, we show that with 
probability $1-o(1)$ it is 
the case that each one
of these still has agreement
at least $\delta'' - o(1)$
with $F_{P}$. Furthermore, we 
know that if a function $f$ 
is far from all $f_{i,D}$, 
then it has agreement at most 
$\delta'''$ with $F_D$, and 
one can show that with 
probability $1-o(1)$ it 
will be the case that $f|_{P}$
has agreement at most $\delta'''+o(1)$ with $F_D$. 
However, the number of these $f$'s is too large for us to use
the union bound.

To circumvent this issue, we 
formulate the problem as a 
Max-CSP problem and think of 
the CSP problem on $P$ as a random sub-instance on the CSP
problem over $D$. Appealing to
results about random sub-instances
of dense CSPs~\cite{AlonVKK02},\footnote{These in return,
rely on combinatorial ideas 
revolving around (weak) regularity
lemmas.} we are 
able to avoid the union bound and show 
that with probability $1-o(1)$, 
any $f$ which is far 
from all $f_{i,D}$ has agreement
at most $\delta'''+o(1)$ in $P$.

We conclude that with probability $1-o(1)$, 
the projection of the list 
$L[D]$ to $P$ constitutes a list
of size $m$ functions 
that each has agreement
at least $\delta''-o(1)$
with $F_P$ and any function
far from it has agreement 
at most $\delta'''+o(1) < \delta''-o(1)$. This implies
that we may find a natural
$1$-to-$1$ correspondence between $L[D]$ and $L[P]$: 
we pair functions that are closest in Hamming distance. 
Among other things, this 
asserts that the list 
sizes of $D$ and $P$ are
the same with probability 
$1-o(1)$.

In other words, the lists
$L[D]$ pass the following
list agreement tester 
with probability $1-o(1)$:
\begin{mdframed}
\begin{enumerate}
    \item Sample $P\sim \mu_{d/2}$.
    \item Sample 
    $D, D'\supseteq P$
    independently according
    to $\mu_d$, and check that
    $L[D]|_{P} = L[D']|_{P}$.
\end{enumerate}
\end{mdframed}
In the next part of the
argument we prove that the list agreement test above is sound.

\subsubsection{List Agreement Testing using Coboundary Expansion}
\textbf{Designing the Unique Games instance and proving triangle consistency:}
With the downwards consistency
step done, one is naturally 
led to consider Unique-Games
instances over a graph on $X(d)$ similar to the one defined earlier 
in the introduction. 
Namely, take $D$ and $D'$
that intersect on a $d/2$ 
face, call it $P$, one has 
that marginally the distribution
of $(D,P)$ and $(D',P)$ is 
as in the
downwards consistency
step, and we get a $1$-to-$1$
correspondence between lists
of $D$, $P$ and then to $D'$
again. Composing these correspondences, we get a
$1$-to-$1$ correspondence 
between the lists of $D$ and $D'$.

The issue with the UG instance over $X(d)$ is that we only know the UG coboundary property to hold up to some level $r$, which is substantially smaller than $k$ and in particular much smaller than $d$ (e.g., it could 
just be some function of $\delta$ in some of 
our results). 

To remedy this situation, 
we show that one may ``project''
this UG instance to 
a UG instance on $G_r[X]$, while 
retaining the constraint structure. To be more explicit,
suppose that we sample 
$D\sim \mu_d$, then $R\subseteq D$ of size $r$ uniformly 
and inspect the projection of the list $L[D]$ to $R$, i.e.\ $L[D]|_{R} = (f_{1,D}|_{R},\ldots, f_{m,D}|_{R})$. First, as the pairwise distance between the functions $f_{i,D}$'s is sufficiently large (and in particular larger 
than $d/r$) the projected functions in $L[D]|_{R}$ remain distinct. 
Second, we show that for 
a typical $R$, there exists 
a list $L[R]$ such that $L[D]|_{R} = L[R]$ for 
almost all $D\supseteq R$. 
\footnote{Intuitively, the reason for that is that the lists $L[D]|_{R}$ and $L[D']|_{R}$ 
agree with probability close to 
$1$ when $D$ and $D'$ intersect 
in size $d/2$, and this 
transition operator has second eigenvalue bounded away from $1$.} This gives lists on the 
$r$-faces, and one now has to 
inherit the constraint structure
from $X(d)$. To do that, note
that one may also define $L[R']$
for faces $2r$ of size $R'$
and argue in the same way. Thus,
constraints over the graph $G_r[X]$ follow naturally as: sample a $2r$-face $R'$ and partition it randomly as $R_1\cup R_2$ where 
$|R_1| = r = |R_2|$. 
Note that with probability $1-o(1)$ there is a $1$-to-$1$ correspondence between 
$L[R_1]$ and $L[R']$ 
as well as $L[R_2]$ and $L[R']$,
and thus one gets a $1$-to-$1$ correspondence between $L[R_1]$ 
and $L[R_2]$.

\vspace{1ex}
\textbf{Applying UG coboundary expansion:}
With the UG instance
over $G_r[X]$ defined, strong triangle consistency is easily proved 
by using ideas 
that are similar to
the construction of the constraints (looking at $3r$-faces instead of $2r$-faces). 
Thus, we may appeal to the UG-coboundary expansion of the complex $X$ and conclude that 
the constraints $\pi(u,v)$ on 
$G_r[X]$ have a form as specified in Definition~\ref{def:ug_coboundary_expander}. 

As shown earlier in the introduction, 
a UG instance 
with this type of constraints admits a solution satisfying 
almost of its constraints. Namely, we may find a map $\ell\colon X(r)\to [m]$ that satisfies 
at least $1-o(1)$ fraction
of the constraints on $G_r[X]$. 

\vspace{1ex}
\textbf{Concluding the global
structure:}
The next step is to consider 
the assignment $F'\colon X(r)\to\{0,1\}^r$ defined as $F'[R] = L[R]_{\ell(T)}$, 
and note that $F'$  passes 
the direct product tester
with respect to $X$ with probability close to $1$!
Indeed, this can be shown as an easy corollary of the 
fact that the labeling $\ell$
satisfies almost all of the
constraints of $G_r[X]$. 
This means that we are precisely in the setting 
of Dinur and Kaufman~\cite{DinurK17}, and
applying the result from there one concludes that 
$F'$ has global structure, 
namely that there is 
a function $f\colon X(1)\to \{0,1\}$
such that
\[
\Pr_{R\sim \mu_r}[\Delta(F'[R], f|_{R})=o(1)]\geq 1-o(1).
\]
Looking at a random 
$D\sim \mu_d$ and 
a uniformly chosen 
$R\subseteq D$, we 
see that there is a 
$1$-to-$1$ correspondence
between the list of $R$
and the list of $D$. 
As the labeling $\ell$
chooses a function from
the list of $R$ for each 
$R$ in a way that satisfies
almost all of the constraints, we conclude
that for a typical $D$
there is $\ell(D)$ such 
that the function $L[D]_{\ell(D)}$ 
agrees with almost all of 
the $r$-faces $R\subseteq T$.
Using this fact, one 
quickly concludes 
that $\Delta(f|_D, L[D]_{\ell(D)}) = o(1)$ 
for a typical $D$. Thus, sampling a $k$-face $K\subseteq D$
one has that 
$L[D]_{\ell(D)}|_{K}\equiv F[D]$ with probability at
least $\delta'$, and by 
Chernoff's inequality
\[
|\{i\in K~|~L[D]_{\ell(D)}(i)\neq f(i)\}| = o(k)
\]
with probability $1-o(1)$, 
and we conclude 
that with probability 
$\Omega(\delta')$ 
we have that $\Delta(F[K], f|_{K}) = o(1)$.

\section{Preliminaries}
In this section we present 
a few standard notations 
as well as tools
that will be used throughout.
\paragraph{Notation:} 
Given a string $x\in \{0,1\}^n$ and a subset $A\subseteq [n]$, we 
denote by $x_A$ the substring
of $x$ corresponding to keeping only the symbols 
in the coordinates of $A$.
Given two strings $x,y \in \{0,1\}^n$, we denote by 
$\dist(x,y)$ the fractional Hamming distance between $x$ and $y$, and given a set $A \subseteq [n]$, we define $\dist_A(f,g) = \dist(x_A,y_A)$. 
We use the notation
$x\neq_{\leq \eta} y$ to
denote that $\dist(x,y) \leq \eta$.
Given a list $L$ of strings in $\{0,1\}^n$ we say that the distance $\eta$ if all distinct $x,y\in L$ 
we have have $\dist(x,y) \geq \eta$.

We use standard big-$O$ notations: we denote $A = O(B)$ 
or $A\lll B$ if $A\leq c\cdot B$ for some absolute constant
$c>0$. Similarly, we denote $A = \Omega(B)$ or $A\ggg B$ 
if $A\geq c B$ for some absolute constant $c>0$. We also 
denote $k\ll d$ to denote the fact that $d$ is taken to 
be sufficiently large compared to any function of $k$.

Whenever we have a $d$-dimensional simplicial complex $X$ and $1\leq k\leq d$, we denote by $A \sim X(k)$ a sample according to 
the distribution measure $\mu_k$ over $X(k)$ (as defined in the introduction). We use $B \subset_t A$ to denote that $B$ is a uniform $t$-sized subset of $A$. 
Similarly, for $B$ of size $t$, when we write $A \supset_k B$ we mean that $A$
is distributed according to $A\sim X(k)$ conditioned on containing $B$. 


\subsection{Concentration Bounds}
We will need the following version of Chernoff's inequality:
\begin{theorem}\label{thm:chernoff}
Suppose $X_i$ are independent random variables taking values in $\{0,1\}$ and $X$ denotes their sum. If $\E[\sum X_i] = \mu$ then,
\[\Pr[|X - \mu| > \delta\mu] \leq \exp(-\delta^2\mu), ~~~\text{ for }\delta \in (0,1),\]    
\[\Pr[X > (1+\delta)\mu] \leq \exp(-\delta\mu), ~~~\text{ for }\delta \geq 1.\]    
\end{theorem}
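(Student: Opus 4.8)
This is the standard Chernoff bound for sums of independent Bernoulli variables, and I would prove it by the exponential-moment method: apply Markov's inequality to $e^{tX}$ for a well-chosen real parameter $t$, using independence to control the moment generating function. Concretely, writing $p_i = \Pr[X_i = 1]$, independence gives, for every $t \in \R$,
\[
\E[e^{tX}] = \prod_i \E[e^{tX_i}] = \prod_i \bigl(1 + p_i(e^t - 1)\bigr) \le \prod_i \exp\bigl(p_i(e^t-1)\bigr) = \exp\bigl((e^t - 1)\mu\bigr),
\]
where the inequality uses $1 + x \le e^x$ and the last step uses $\sum_i p_i = \E[X] = \mu$; note this bound depends on the individual $X_i$ only through $\mu$.

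For the upper tail, fix any $t > 0$. Markov's inequality applied to the nonnegative variable $e^{tX}$ gives
\[
\Pr[X > (1+\delta)\mu] = \Pr\bigl[e^{tX} > e^{t(1+\delta)\mu}\bigr] \le e^{-t(1+\delta)\mu}\,\E[e^{tX}] \le \exp\bigl(\mu\,(e^t - 1 - t(1+\delta))\bigr).
\]
Minimizing the exponent over $t > 0$ yields the optimal choice $t = \ln(1+\delta)$, hence
\[
\Pr[X > (1+\delta)\mu] \le \left(\frac{e^{\delta}}{(1+\delta)^{1+\delta}}\right)^{\!\mu} = \exp\bigl(\mu\,\phi(\delta)\bigr), \qquad \phi(\delta) := \delta - (1+\delta)\ln(1+\delta).
\]
It then remains to estimate the (negative) one-variable function $\phi$ in the two regimes: a Taylor expansion of $\ln(1+\delta)$ shows $\phi(\delta) \le -\delta^2/3$ for $\delta \in (0,1)$, which gives the upper-tail half of the first displayed inequality of the theorem, while a short monotonicity argument shows $\phi(\delta) \le -c\,\delta$ for $\delta \ge 1$ and an absolute constant $c>0$, giving the second displayed inequality (in either case the constants in the exponents that one literally extracts are the standard Chernoff ones, which are immaterial for every application of this theorem in the paper).

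For the lower tail, which is needed only when $\delta \in (0,1)$, I would repeat the argument with $t < 0$: Markov applied to $e^{tX}$ gives $\Pr[X < (1-\delta)\mu] \le \exp(\mu(e^t - 1 - t(1-\delta)))$, the minimizer over $t<0$ is $t = \ln(1-\delta)$, and one obtains $\Pr[X < (1-\delta)\mu] \le \bigl(e^{-\delta}/(1-\delta)^{1-\delta}\bigr)^{\mu} \le e^{-\mu\delta^2/2}$, the last inequality again by the Taylor expansion of $\ln(1-\delta)$. A union bound over the two tails then bounds $\Pr[|X-\mu| > \delta\mu]$ by $e^{-\mu\delta^2/3} + e^{-\mu\delta^2/2}$, which is of the claimed form. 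I do not expect any real obstacle here -- the whole argument is textbook -- the only mildly delicate points being the elementary one-variable inequalities $\phi(\delta) \le -\delta^2/3$ on $(0,1)$ and $\phi(\delta) \le -c\delta$ on $[1,\infty)$, together with the analogous estimate for the lower-tail exponent, all of which are routine calculus.
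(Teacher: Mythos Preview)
The paper does not prove this theorem at all; it is simply stated in the Preliminaries as a standard concentration bound and then invoked later. Your proposal is the textbook exponential-moment proof and is correct; as you already note, the precise constants in the exponent that one extracts (e.g.\ $\delta^2\mu/3$ rather than $\delta^2\mu$) differ slightly from the form quoted in the paper, but this is immaterial for every use of the bound in the paper.
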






\subsection{Constraint Satisfaction Problems: Value and Random Sub-Instances}

Our argument will make use of instances of the 
\emph{max-$k$-CSP} problem and properties of random sub-instances 
of a given instance.
\begin{definition}
  Let $k\in\mathbb{N}$. An instance $\Psi$ of 
  (Boolean) max-$k$-CSP consists 
  of a set of variables 
  $\{x_i\}_{i\in I}$, 
  along with constraints,
  each one of the form 
  $P(x_{i_1},\ldots,x_{i_k}) = 1$
  for some $P\colon \{0,1\}^k \to \{0,1\}$.
\end{definition}
Given an instance $\Psi$
of max-$k$-CSP, the goal 
is to find an assignment to
the variables of $\Psi$
that satisfies as many of the 
constraints as possible. 
We refer to this maximum 
fraction as the value of 
$\Psi$, and denote it by 
$\val(\Psi)$.

Given an instance of 
max-$k$-CSP $\Psi$ with
variables $V$ and a 
subset of variables $Q\subseteq V$, we define the induced
instance on $Q$, $\Psi|_{Q}$,
to be the instance of max-$k$-CSP resulting from $\Psi$ by keeping only the variables of $Q$, and only the constraints 
of $\Psi$ that involve only variables from $Q$.

Of special interest to us will
be dense instances of max-$k$-CSP. In fact, we will be concerned with instances wherein 
there is a constraint for every subset of size $k$ of the variables, and where the number
of variables $d$ is much larger than $k$. Given such an instance 
$\Psi$, we will want to consider random induced sub-instances of $\Psi$ and their value. With this regard, the following result from~\cite{AlonVKK02} 
asserts that the value of the
random sub-instance remains 
roughly the same.
\begin{theorem}\label{thm:max-csp}
For all $\gamma, \tau \in (0,1)$, $k \in \N$ and $d \geq \poly(k/\tau)\exp(1/\gamma^2)$,\footnote{For general dense $k$-CSPs they incur a $\exp(2^{2^k})$ dependence in $d$, which comes from the fact that there can be $2^{2^k}$ constraints in $\Psi$ that can be satisfied by setting a particular set of variables $I \subset_k [d]$ to a fixed assignment $z\in \{0,1\}^k$. In our setting, there could only be one constraint that gets satisfied by such fixing, and therefore we do not incur this triple-exponential dependence on $k$ (though this wouldn't matter for us in any case).} consider a $k$-CSP with ${n \choose k}$ constraints that each depend on a unique $k$-set of variables. If $q \geq \poly(k/\tau\gamma)$ then: 
\[\Pr_{Q \subset_q [d]}\big[|\val(\Psi|_Q) - \val(\Psi)| \leq \gamma\big] \geq 1-\tau.\]
\end{theorem}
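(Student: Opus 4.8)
The plan is to reprove the statement along the lines of Alon--Vempala--Kannan--Kakade~\cite{AlonVKK02}, treating the two one-sided bounds $\val(\Psi|_Q)\ge\val(\Psi)-\gamma$ and $\val(\Psi|_Q)\le\val(\Psi)+\gamma$ separately, each with failure probability $\le\tau/2$. The basic tool, used in both parts, is a concentration bound for a \emph{fixed} assignment: if $x\colon[d]\to\{0,1\}$ is fixed, then $\val_{\Psi|_Q}(x|_Q)$ is a function of the random $q$-set $Q\subset_q[d]$ whose value changes by at most $O(k/q)$ under a single exchange of one element of $Q$, and whose mean is exactly $\val_\Psi(x)$. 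This last identity is where the ``one predicate per $k$-set'' hypothesis is used: then $x$ satisfies at most one constraint on each $k$-set $S$ (not up to $2^{2^k}$ of them, which is the source of the $\exp(2^{2^k})$ overhead in the general case), so each constraint lands in $\Psi|_Q$ with probability $\binom qk/\binom dk$ regardless of whether $x$ satisfies it. By Azuma/bounded-difference concentration for uniformly random subsets, $\Pr_Q[\,|\val_{\Psi|_Q}(x|_Q)-\val_\Psi(x)|>\eta\,]\le\exp(-\Omega(\eta^2 q/k^2))$. Applying this with $x=x^\ast$ an optimal assignment of $\Psi$, $\eta=\gamma$, and using $\val(\Psi|_Q)\ge\val_{\Psi|_Q}(x^\ast|_Q)$, already gives $\Pr_Q[\val(\Psi|_Q)<\val(\Psi)-\gamma]\le\tau/2$ once $q\ge\poly(k/\tau\gamma)$ --- no lower bound on $d$ is needed for this direction.

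The upper bound $\val(\Psi|_Q)\le\val(\Psi)+\gamma$ is the crux, and the naive approach of union-bounding the above concentration over all $2^{q}$ assignments of $Q$ is hopeless; moreover, no ``extend a local witness to a global one'' argument can work, since for $q\ll d$ the constraints lying entirely inside $Q$ are only a $(q/d)^k$ fraction of all constraints, so a good assignment on $Q$ alone carries no global information --- one genuinely must exploit that $Q$ is random. The resolution is the exhaustive-sampling primitive of~\cite{AlonVKK02}: for a dense instance $\Phi$ with one predicate per $k$-set on a variable set $V$, a uniformly random seed $W\subset_s V$ of size $s=\poly(k/\tau\gamma)$ induces the following family of $\le 2^{s}$ assignments indexed by $\sigma\colon W\to\{0,1\}$ --- the greedy completion $y^{W,\sigma}$ sets each $v$ to $y^{W,\sigma}(v)=\argmax_{b\in\{0,1\}}\widehat g^{\,W,\sigma}_v(b)$, where $\widehat g^{\,W,\sigma}_v(b)$ is the fraction over $(k-1)$-sets $T'\subset_{k-1}W$ with $v\notin T'$ of predicates on $\{v\}\cup T'$ satisfied by the partial assignment $v\mapsto b$, $T'\mapsto\sigma$ --- and the guarantee is that $\max_\sigma\val_\Phi(y^{W,\sigma})\ge\val(\Phi)-\gamma/8$ with probability $\ge 1-\tau/4$ over $W$ (the reverse inequality being trivial). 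The two features we need from this primitive are (i) $y^{W,\sigma}(v)$ depends only on $W$, $\sigma$ and the predicates touching $\{v\}\cup W$, so each $y^{W,\sigma}$ is the restriction to $V$ of a single assignment determined by $W,\sigma$ alone; and (ii) the primitive requires only $|V|\ge\poly(k/\tau\gamma)$ variables, while the hypothesis $d\ge\poly(k/\tau)\exp(1/\gamma^2)$ (whose $\exp(1/\gamma^2)$ is the size of a Frieze--Kannan weak-regularity partition used inside the primitive's proof) is what the \emph{ambient} instance must satisfy.

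With the primitive in hand, write the random choice of $Q$ as first drawing $W\subset_s[d]$ and then $Q=W\cup Q'$ with $Q'\subset_{q-s}[d]\setminus W$, so that $W\subset_s Q$ is uniform and the primitive applies to $\Phi=\Psi|_Q$ (dense with one predicate per $k$-set on $q\ge\poly(k/\tau\gamma)$ variables). It gives $\val(\Psi|_Q)\le\gamma/8+\max_\sigma\val_{\Psi|_Q}(y^{W,\sigma})$ except with probability $\tau/4$. By feature (i), for each seed $(W,\sigma)$ the completion $y^{W,\sigma}$ is the restriction to $Q$ of a global assignment $\widetilde y^{W,\sigma}\colon[d]\to\{0,1\}$ depending only on $W$ and $\sigma$, so once we condition on $W$ the $\le 2^{s}$ assignments $\widetilde y^{W,\sigma}$ are fixed before $Q'$ is drawn. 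For each fixed $\sigma$, the concentration bound of the first paragraph --- applied over the randomness of $Q'$, noting that conditioning on $W\subseteq Q$ shifts the relevant mean away from $\val_\Psi(\widetilde y^{W,\sigma})$ by only $O(ks/q)\le\gamma/16$ --- yields $\val_{\Psi|_Q}(\widetilde y^{W,\sigma}|_Q)\le\val_\Psi(\widetilde y^{W,\sigma})+\gamma/8\le\val(\Psi)+\gamma/8$ except with probability $\tau/(4\cdot 2^{s})$, provided $q\ge\poly(k/\gamma)\cdot(s+\log(1/\tau))=\poly(k/\tau\gamma)$. Union-bounding over the $\le 2^{s}$ values of $\sigma$ and over the failure of the primitive, $\val(\Psi|_Q)\le\val(\Psi)+\gamma/4\le\val(\Psi)+\gamma$ with probability $\ge 1-\tau/2$. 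Combined with the first paragraph, this proves the theorem.

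The one genuinely hard ingredient is the exhaustive-sampling primitive quoted in the second paragraph --- that the near-optimal assignments of a dense CSP are generated by a constant-size random seed via a local greedy rule; its proof takes $\sigma=x^\ast|_W$ for an optimal $x^\ast$, shows the estimates $\widehat g^{\,W,\sigma}_v$ are accurate for all but a $\gamma$-fraction of vertices (requiring $s$ to be a suitable $\poly(k/\tau\gamma)$), and runs a hybrid argument swapping variables from $x^\ast$ to $y^{W,\sigma}$ one at a time, each swap costing little precisely because the instance is dense and hence weakly regular. Everything else is bookkeeping with the concentration inequality and the coupling $W\subset_s Q\subset_q[d]$. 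For a self-contained write-up I would reproduce the AVKK hybrid/weak-regularity analysis in the ``one predicate per $k$-set'' special case (which is exactly what removes the $\exp(2^{2^k})$ factor and simplifies the estimates); otherwise I would cite it and verify only that its guarantee has the interface used above --- a family of $\le 2^{\poly(k/\tau\gamma)}$ assignments, each the restriction of a $Q$-independent global assignment, whose best member is within $\gamma/8$ of the instance value with probability $1-\tau/4$ over the seed --- and that it applies to $\Psi|_Q$ once $q$ and $d$ meet the stated bounds.
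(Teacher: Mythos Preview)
The paper does not prove this statement; it is quoted from~\cite{AlonVKK02} as a black box (see the sentence immediately preceding the theorem and its sole application in Lemma~\ref{lem:no-jump}). There is therefore no ``paper's own proof'' to compare against.

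That said, your sketch is a faithful reconstruction of the argument in~\cite{AlonVKK02} (a minor attribution slip: the authors are Alon, de~la~Vega, Kannan, Karpinski, not Vempala--Kannan--Kakade) and is essentially correct. The two-sided split, the bounded-difference concentration for a fixed assignment giving the easy direction, and for the hard direction the coupling $W\subset_s Q\subset_q[d]$ so that the $2^{s}$ seed-indexed greedy assignments $\widetilde y^{W,\sigma}$ are determined before the residual randomness in $Q'$ --- allowing a union bound over $\sigma$ of the same concentration inequality --- is exactly the mechanism of~\cite{AlonVKK02}. Your observation that $y^{W,\sigma}(v)$ depends only on predicates supported on $\{v\}\cup W$, and hence extends to a $Q$-independent global assignment, is the key structural point and you have it right.

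One quibble on parameters: you assert that the primitive applied to $\Psi|_Q$ needs only $q\ge\poly(k/\tau\gamma)$ while the $\exp(1/\gamma^{2})$ is a requirement on the ambient $d$. The direct sampling argument of~\cite{AlonVKK02} indeed gives polynomial sample complexity with essentially no lower bound on the ambient size beyond $d\ge q$; the $\exp(1/\gamma^{2})$ in the statement here is most naturally read as coming from a weak-regularity variant of the proof, and in the paper's only application (Lemma~\ref{lem:no-jump}) one has $q=d/2$ so both quantities are large anyway. This does not affect correctness, but if you were to write the proof out in full you should be explicit about which version of the primitive you invoke and what lower bound on $q$ it actually needs.
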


We will also consider 
a special type of max-$2$-CSP, called Unique-Games. Unique-Games
have already been considered 
in the introduction, and we
will make use of them later on in our argument.
\begin{definition}\label{def:unique-games}
An instance of Unique-Games $\Psi = (G, \Pi)$ consists of a graph $G = (V,E)$, a finite alphabet $\Sigma$ and a collection of permutations, $\Pi = \{\pi_{u,v}\}_{(u,v)\in E}$, one for each edge in $G$. 
The goal in the Unique-Games problem is to find an assignment $A\colon V\to\Sigma$ that satisfies the maximum possible number of edges, that is, $A(u) =\pi(u,v)A(v)$. We define the value of the instance $\Psi$ as:
\[
{\sf val}(\Psi) = \max_{A\colon V\to\Sigma}\frac{\#\{e~|~\text{A satisfies $e$}\}}{|E|}.
\]
\end{definition}

\subsection{Properties of Expanders}
We need the following well known version of the expander
mixing lemma for bipartite graphs.
\begin{lemma}\label{lem:bip-eml}
Let $G = (U,V,E)$ be a bipartite graph in 
which the second singular value of the normalized adjacency matrix is at most $\lambda$. Then for all $A \subset U$ and $B \subset V$ we have that
\[
\left|\Pr_{(u, v) \in E}[u \in A, v \in B] -\mu(A)\mu(B)\right| \leq \lambda\sqrt{\mu(A)(1-\mu(A))\mu(B)(1-\mu(B))}.\]    
\end{lemma}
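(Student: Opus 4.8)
The statement is the classical expander mixing lemma, and the plan is to rewrite the left-hand side as a bilinear form in the normalized bipartite adjacency operator and then project off the constant direction. Write $\mu$ for the stationary (degree-weighted) measures on $U$ and on $V$, normalized so that sampling a uniformly random edge $(u,v)\in E$ produces $u$ with law $\mu$ on $U$ and $v$ with law $\mu$ on $V$. Let $M$ be the operator sending functions on $V$ to functions on $U$ by $(Mg)(u) = \E_{v\sim u}[g(v)]$, where $v\sim u$ is a uniformly random neighbour of $u$, and equip the function spaces on $U$ and on $V$ with the inner products $\langle \phi,\psi\rangle = \E_{u\sim\mu}[\phi(u)\psi(u)]$ and likewise on $V$. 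One checks directly that
\[
\Pr_{(u,v)\in E}[u\in A,\ v\in B] = \langle \mathbb{1}_A, M\mathbb{1}_B\rangle,
\]
since the right-hand side equals $\Pr_{u\sim\mu,\,v\sim u}[u\in A,\ v\in B]$ and sampling $u\sim\mu$ then a uniform neighbour is exactly sampling a uniform edge. Moreover the constant function $\mathbb{1}$ is a top singular vector of $M$ with singular value $1$ (i.e.\ $M\mathbb{1}=\mathbb{1}$ and $M^{\ast}\mathbb{1}=\mathbb{1}$, by the choice of stationary measure), and the hypothesis that the second singular value of the normalized adjacency matrix is at most $\lambda$ means precisely that every singular value of $M$ other than this top one is at most $\lambda$.

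\textbf{Projecting off the constant.} Decompose $\mathbb{1}_A = \mu(A)\mathbb{1} + f$ on $U$ and $\mathbb{1}_B = \mu(B)\mathbb{1} + g$ on $V$, where $f,g$ are orthogonal to the constants, so $\langle f,\mathbb{1}\rangle = \langle g,\mathbb{1}\rangle = 0$. Expanding the bilinear form and using $M\mathbb{1}=\mathbb{1}$ and $M^{\ast}\mathbb{1}=\mathbb{1}$, the three cross terms containing a constant factor vanish: $\langle \mathbb{1}, Mg\rangle = \langle M^{\ast}\mathbb{1}, g\rangle = \langle \mathbb{1}, g\rangle = 0$ and $\langle f, M\mathbb{1}\rangle = \langle f,\mathbb{1}\rangle = 0$. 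Hence
\begin{align*}
\langle \mathbb{1}_A, M\mathbb{1}_B\rangle
&= \mu(A)\mu(B)\langle \mathbb{1}, M\mathbb{1}\rangle + \mu(A)\langle \mathbb{1}, Mg\rangle + \mu(B)\langle f, M\mathbb{1}\rangle + \langle f, Mg\rangle\\
&= \mu(A)\mu(B) + \langle f, Mg\rangle,
\end{align*}
so the quantity to bound is exactly $|\langle f, Mg\rangle|$.

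\textbf{Applying the spectral gap.} Since $g\perp\mathbb{1}$, the singular value bound gives $\norm{Mg}\le\lambda\norm{g}$, so by Cauchy--Schwarz $|\langle f, Mg\rangle|\le\norm{f}\,\norm{Mg}\le\lambda\norm{f}\,\norm{g}$. It remains to evaluate the norms: $\norm{\mathbb{1}_A}^2 = \mu(A)$, hence $\norm{f}^2 = \norm{\mathbb{1}_A}^2 - \mu(A)^2\norm{\mathbb{1}}^2 = \mu(A) - \mu(A)^2 = \mu(A)(1-\mu(A))$, and similarly $\norm{g}^2 = \mu(B)(1-\mu(B))$. Combining the two displays yields
\[
\left|\Pr_{(u,v)\in E}[u\in A,\ v\in B] - \mu(A)\mu(B)\right| \le \lambda\sqrt{\mu(A)(1-\mu(A))\mu(B)(1-\mu(B))},
\]
as claimed.

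\textbf{Where the care is needed.} The only delicate point is the bookkeeping in the first step: choosing the degree-weighted measures on $U$ and $V$ together with the matching weighted inner products so that (i) the edge-sampling probability is literally the bilinear form $\langle \mathbb{1}_A, M\mathbb{1}_B\rangle$ and (ii) the constant vector is the normalized top singular vector with singular value $1$ and $M\mathbb{1}=\mathbb{1}$, $M^{\ast}\mathbb{1}=\mathbb{1}$. Once these normalizations are pinned down, everything else is the standard orthogonal-decomposition argument and is purely computational.
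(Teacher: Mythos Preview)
Your proof is correct and is the standard argument for the bipartite expander mixing lemma. The paper itself does not prove this lemma; it is stated as a well-known fact without proof, so there is nothing to compare against beyond noting that your argument is the usual one.
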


We also use the following standard sampling 
property of bipartite expanders.
\begin{lemma}
\label{lem:sampling}
Let $G = (U,V,E)$ be a bipartite graph with second singular value at most $\lambda$. If $B \subset U$ has 
$\Pr[B] = \delta$, then the set 
$T = \left\{v \in V \mid \Pr_{u\sim E|_v}[u \in B] > \eps+\delta\right\}$ has $\Pr[T]\leq \lambda^2\delta/\eps^2$.
\end{lemma}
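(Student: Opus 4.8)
The plan is to run the standard spectral decomposition argument for the bipartite transition operator; this is the same argument that underlies the bipartite expander mixing lemma (Lemma~\ref{lem:bip-eml}). Write $\mu$ for the degree-weighted stationary measures on $U$ and on $V$ (so $\Pr[\cdot] = \mu(\cdot)$ throughout), and let $A$ be the normalized bipartite adjacency operator, sending a function $h$ on $U$ to the function on $V$ defined by $(Ah)(v) = \E_{u \sim E|_{v}}[h(u)]$; its adjoint $A^{\ast}$ acts the same way from $V$ to $U$. The hypothesis that the second singular value of the normalized adjacency matrix is at most $\lambda$ means precisely that $\|Ag\|_{2} \leq \lambda \|g\|_{2}$ for every $g$ on $U$ with $\E_{U}[g]=0$, where all norms and inner products are with respect to $\mu$.

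The first step is to split the indicator of $B$ into its mean and its fluctuation, $\one_{B} = \delta\cdot \one + g$, where $\one$ is the all-ones function on $U$ and $g = \one_{B} - \delta$, so that $\E_{U}[g] = 0$ and
\[
\|g\|_{2}^{2} = \E_{U}[\one_{B}^{2}] - \delta^{2} = \delta - \delta^{2} \leq \delta.
\]
Since $A$ is row-stochastic we have $A\one = \one$, hence $A\one_{B} = \delta\cdot\one + Ag$, and by construction $(A\one_{B})(v) = \Pr_{u\sim E|_{v}}[u\in B]$ for every $v\in V$. Moreover $Ag$ is orthogonal to the constants on $V$, since $\langle Ag, \one\rangle = \langle g, A^{\ast}\one\rangle = \langle g, \one\rangle = 0$ (using $A^{\ast}\one = \one$). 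Therefore the second-singular-value bound applies to $g$ and gives $\|Ag\|_{2}^{2} \leq \lambda^{2}\|g\|_{2}^{2} \leq \lambda^{2}\delta$.

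The last step is Markov's inequality. For any $v\in T$ we have $\delta + (Ag)(v) = \Pr_{u\sim E|_{v}}[u\in B] > \eps+\delta$, so $(Ag)(v) > \eps$ and in particular $(Ag)(v)^{2} > \eps^{2}$. Consequently
\[
\mu(T) \;\leq\; \mu\!\left(\{v : (Ag)(v)^{2} > \eps^{2}\}\right) \;\leq\; \frac{\E_{V}[(Ag)^{2}]}{\eps^{2}} \;=\; \frac{\|Ag\|_{2}^{2}}{\eps^{2}} \;\leq\; \frac{\lambda^{2}\delta}{\eps^{2}},
\]
which is the claim. There is no genuine obstacle here; the single point to keep straight is that the bipartite structure is used exactly once — namely that $A$ and $A^{\ast}$ fix the constant function and hence preserve the orthogonal complement of the constants, which is what lets the error term be routed through the singular-value bound. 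Alternatively one can deduce the lemma directly from Lemma~\ref{lem:bip-eml}: applying it with the sets $B\subseteq U$ and $T\subseteq V$, and using $\Pr_{(u,v)\in E}[u\in B,\, v\in T] \geq (\eps+\delta)\mu(T)$ together with $\mu(B)\mu(T) = \delta\mu(T)$, yields $\eps\,\mu(T) \leq \lambda\sqrt{\delta\,\mu(T)}$, and squaring gives the same bound $\mu(T)\leq \lambda^{2}\delta/\eps^{2}$.
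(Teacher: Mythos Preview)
Your proof is correct and follows the standard spectral argument; the paper itself does not supply a proof of this lemma, treating it as a well-known sampling property of bipartite expanders. Both of your routes (the direct second-moment/Markov argument and the deduction from Lemma~\ref{lem:bip-eml}) are valid and are exactly the kinds of proofs one would expect here.
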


\subsection{Properties of Local Spectral Expanders}
Recall that we associated with each $d$-dimensional simplicial 
complex $X$ a sequence of measures $\{\mu_k\}_{1\leq k\leq d}$, where 
$\mu_k$ is a probability measure over $X(k)$. Note that for all $0 \leq t \leq r \leq d$, a sample according to $\mu_t$ can be drawn by first sampling $R \sim \mu_r$, and then sampling $T\subseteq_{t} R$ uniformly. The converse is also true: a sample from $\mu_r$ can be drawn by first sampling $T \sim \mu_t$, and then sampling $R$ from $\mu_r$ conditioned on containing $T$. These observations 
give rise to the standard ``up'' and ``down'' operators, which we present next. We only mention a few of their 
properties that are necessary for our arguments, and refer the reader to~\cite{dikstein2018boolean} for a more
comprehensive exposition.

\begin{definition}
The operator $U_i^{i+1}$ is a map from $L_2(X(i); \mu_i)$ to $L_2(X(i+1); \mu_{i+1})$ defined as
\[
U_i^{i+1}f(u) 
= 
\E_{v \subset_i u}\big[f(v)\big]
\]
for all $u\in X(i+1)$. For $j\geq k+1$, we define $U_k^j$ via composition of up operators: $U_k^j = U_{j-1}^j \circ \ldots \circ U_k^{k+1}$.
\end{definition}

\begin{definition}
The operator $D_i^{i+1}$ is a map from $L_2(X(i+1); \mu_{i+1})$ to $L_2(X(i); \mu_i)$ defined as
\[
D_i^{i+1}f(u) = \E_{v \supseteq_{i+1} u}\big[f(v)\big]
\]
for all $u \in X(i)$.
For $j\geq k+1$, we define $D_k^j$ via composition of down operators: $D_k^j = D_{k}^{k+1} \circ \ldots \circ D^j_{j-1}$.
\end{definition}

Abusing notations, we use the notations $U^j_k, D^j_k$ 
to denote the operators, as well as the real valued 
matrices associated with them. A key property of
the down and up operators is that they are adjoint:

\begin{claim}
For all $k \leq j \leq d$, $U_k^{j}$ and $D^{j}_k$ are adjoint operators: for all functions $f\colon X(k)\to\mathbb{R}$ and $g\colon X(j)\to\mathbb{R}$ it holds that $\ip{U_k^{j}f,g} = \ip{f,D^{j}_kg}$. \end{claim}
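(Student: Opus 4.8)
The plan is to first establish the claim for a single step, $j=k+1$, and then bootstrap to general $j$ by composition. For the single step, I would simply unfold both inner products using the definitions of the pushdown measures: writing $\ip{\cdot,\cdot}$ for the inner product on $L_2(X(i);\mu_i)$, we have
\[
\ip{U_k^{k+1}f, g} = \E_{v\sim\mu_{k+1}}\Big[\E_{u\subset_k v}[f(u)]\, g(v)\Big] = \E_{v\sim\mu_{k+1}}\,\E_{u\subset_k v}\big[f(u)g(v)\big],
\]
and similarly
\[
\ip{f, D_k^{k+1}g} = \E_{u\sim\mu_k}\,\E_{v\supseteq_{k+1}u}\big[f(u)g(v)\big].
\]
Both are expectations of the \emph{same} function $f(u)g(v)$ over a joint distribution on pairs $(u,v)$ with $u\in X(k)$, $v\in X(k+1)$, $u\subseteq v$; in the first expression the pair is drawn top-down (sample $v\sim\mu_{k+1}$, then $u\subseteq v$ uniformly), and in the second bottom-up (sample $u\sim\mu_k$, then $v\sim\mu_{k+1}$ conditioned on $v\supseteq u$). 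The one thing to check is that these two procedures induce the identical joint law on $(u,v)$, which is exactly the compatibility of the pushdown measures recalled just before the definitions of the up and down operators (applied with $t=k$, $r=k+1$). Granting this, the two expectations coincide and $U_k^{k+1},D_k^{k+1}$ are adjoint.

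For general $k\le j\le d$, I would use that the adjoint of a composition is the composition of adjoints in reverse order. Since $U_k^j = U_{j-1}^j\circ\cdots\circ U_k^{k+1}$ and $D_k^j = D_k^{k+1}\circ\cdots\circ D_{j-1}^j$ by definition, one telescopes: for $f\colon X(k)\to\R$ and $g\colon X(j)\to\R$,
\[
\ip{U_k^j f, g} = \ip{U_{j-1}^j\cdots U_k^{k+1}f,\, g} = \ip{U_{j-2}^{j-1}\cdots U_k^{k+1}f,\, D_{j-1}^j g} = \cdots = \ip{f,\, D_k^{k+1}\cdots D_{j-1}^j g} = \ip{f, D_k^j g},
\]
where each equality applies the already-established $j'=k'+1$ case at one intermediate level $k'$ (with $k\le k'<j$), using that $U_{k'}^{k'+1}$ and $D_{k'}^{k'+1}$ are adjoint between $L_2(X(k');\mu_{k'})$ and $L_2(X(k'+1);\mu_{k'+1})$.

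The hard part, such as it is, will be nothing more than the verification that the top-down and bottom-up ways of sampling $(u,v)$ agree; this is immediate from the definition of $\mu_k$ as the pushdown of $\mu_d$ (equivalently of $\mu_{k+1}$), and is in fact exactly the observation stated preceding the operator definitions. Everything else is a formal manipulation of finite sums together with the elementary fact about adjoints of compositions, so no genuine obstacle arises.
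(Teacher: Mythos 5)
Your proof is correct, and it is exactly the standard argument this claim rests on; the paper in fact states the claim without proof (deferring to the cited survey on Boolean HDX), relying implicitly on the same two ingredients you spell out — the observation, stated just before the operator definitions, that the top-down and bottom-up samplings of a nested pair $(u,v)$ induce the same joint law, and adjointness of compositions. There is nothing to add or correct.
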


We need the following lemma regarding the second eigenvalue of the down-up walks $U^j_{k}D^j_{k}$ on $X(j)$ ($j \geq k$), that can be found in~\cite{AlevL20}.

\begin{lemma}\label{lem:spectral_gap_of_graphs_from_HDX}
Let $(X, \mu)$ be a $d$-dimensional $\gamma$ one-sided local spectral expander. For all $i \leq d$ and $\alpha \in (1/i, 1)$, the largest singular value of $U^i_{\alpha i}$ and $D^i_{\alpha i}$ is at most $\sqrt{\alpha}+\poly(i)\gamma$. Thus the down-up random walk $U^i_{\alpha i}D^i_{\alpha i}$ on $X(i)$ has second largest singular value at most $\alpha + \poly(i)\gamma$. 
\end{lemma}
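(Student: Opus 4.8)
The plan is to reduce everything to the norms of the single-step up operators on mean-zero functions, exploiting the fact that the composed operators telescope. First I record the bookkeeping. Since $U_j^{j+1}$ and $D_j^{j+1}$ are mutually adjoint (by the Claim above) and both fix the constant function, their restrictions to the mean-zero subspaces $L_2(X(j))_0$ and $L_2(X(j+1))_0$ are again mutually adjoint, and in particular have the same operator norm; write $\|\cdot\|_{0\to 0}$ for this restricted norm. Likewise $U^i_{\alpha i}=U_{i-1}^i\circ\cdots\circ U_{\alpha i}^{\alpha i+1}$ fixes constants and $(U^i_{\alpha i})^{\ast}=D^i_{\alpha i}$, so the two operators have the same singular values, and the displayed bound should be read for the largest \emph{nontrivial} singular value, i.e.\ the operator norm on $L_2(X(i))_0$ (on constants the value is trivially $1$). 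By sub-multiplicativity, $\|U^i_{\alpha i}\|_{0\to 0}\leq \prod_{j=\alpha i}^{i-1}\|U_j^{j+1}\|_{0\to 0}$, so it suffices to control the single factors.

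The single-step estimate I would invoke is $\|U_j^{j+1}f\|^2\leq\big(\tfrac{j}{j+1}+\poly(j)\gamma\big)\|f\|^2$ for mean-zero $f$. This is where local spectral expansion enters: writing $\|U_j^{j+1}f\|^2=\langle f, D_j^{j+1}U_j^{j+1}f\rangle$ and noting that $D_j^{j+1}U_j^{j+1}$ is the self-adjoint PSD up-down walk on $X(j)$ whose top eigenvalue $1$ is attained exactly on constants, this is precisely the statement that the up-down walk on $X(j)$ has second eigenvalue at most $\tfrac{j}{j+1}+\poly(j)\gamma$ — the standard local-to-global / higher-order random walk bound for $\gamma$ one-sided local spectral expanders, which follows from the analysis of \cite{AlevL20} (itself built on Oppenheim's trickling-down theorem and the level-space decomposition of $L_2(X(j))$ via Garland's method). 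Plugging this in and using that for $j\geq\alpha i\geq 1$ one has $\tfrac{j}{j+1}+\poly(j)\gamma\leq\tfrac{j}{j+1}\big(1+\poly(j)\gamma\big)$,
\[
\|U^i_{\alpha i}\|_{0\to 0}^2\;\leq\;\Big(\prod_{j=\alpha i}^{i-1}\tfrac{j}{j+1}\Big)\prod_{j=\alpha i}^{i-1}\big(1+\poly(j)\gamma\big)\;=\;\alpha\cdot\prod_{j=\alpha i}^{i-1}\big(1+\poly(j)\gamma\big),
\]
the crucial point being the telescoping identity $\prod_{j=\alpha i}^{i-1}\tfrac{j}{j+1}=\tfrac{\alpha i}{i}=\alpha$ (for non-integral $\alpha i$ one replaces it throughout by $\lceil\alpha i\rceil$, shifting $\alpha$ by $O(1/i)$). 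Since $\gamma$ is small, $\prod_{j}(1+\poly(j)\gamma)\leq(1+\poly(i)\gamma)^i\leq 1+\poly(i)\gamma$, so $\|U^i_{\alpha i}\|_{0\to 0}^2\leq\alpha+\poly(i)\gamma$, and hence $\|U^i_{\alpha i}\|_{0\to 0}\leq\sqrt{\alpha}+\poly(i)\gamma$ using $\sqrt{\alpha+x}\leq\sqrt{\alpha}+\tfrac{x}{2\sqrt{\alpha}}$ together with $\alpha\geq 1/i$. The same bound for $D^i_{\alpha i}$ is immediate by adjointness.

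For the down-up walk, $U^i_{\alpha i}D^i_{\alpha i}=(D^i_{\alpha i})^{\ast}D^i_{\alpha i}$ is self-adjoint and PSD, fixes the constants with eigenvalue $1$ (which is therefore its top eigenvalue, as the operator norm is at most $1$), and on $L_2(X(i))_0$ its norm is at most $\|U^i_{\alpha i}\|_{0\to 0}\cdot\|D^i_{\alpha i}\|_{0\to 0}\leq(\sqrt{\alpha}+\poly(i)\gamma)^2=\alpha+\poly(i)\gamma$. Being self-adjoint, this restricted norm equals its second largest eigenvalue, which for a PSD operator is its second largest singular value, giving the claim.

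The only genuinely non-routine ingredient is the single-step up-down walk bound $\lambda_2\leq\tfrac{j}{j+1}+\poly(j)\gamma$, which is the one place where the local spectral expansion hypothesis is actually used; a fully self-contained treatment would require developing the level-space decomposition of $L_2(X(j))$ and running Garland's method, or simply citing \cite{AlevL20} — possibly already in the multi-step form, in which case the telescoping argument above is subsumed. Everything else (adjointness, the telescoping product, and the error accumulation $(1+\poly(i)\gamma)^i\leq 1+\poly(i)\gamma$ for $\gamma$ small) is straightforward bookkeeping.
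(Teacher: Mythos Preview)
The paper does not give a proof of this lemma at all; it is stated as a black-box citation to~\cite{AlevL20}. Your sketch is correct and is essentially the standard argument behind that reference: reduce to single steps via adjointness and sub-multiplicativity, invoke the local-to-global bound $\lambda_2(D_j^{j+1}U_j^{j+1})\leq\tfrac{j}{j+1}+\poly(j)\gamma$ for the one-step up-down walk, and telescope $\prod_{j=\alpha i}^{i-1}\tfrac{j}{j+1}=\alpha$. The error bookkeeping and the passage to the down-up walk are fine. Your own caveat is exactly right: the only substantive step is the single-step spectral gap, and a fully self-contained treatment would either develop the Garland/level-space machinery or, as the paper does, simply cite~\cite{AlevL20} (which in fact already states the multi-step conclusion directly).
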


We will use the following theorem from \cite{DinurK17} that shows that spectral HDXs support an agreement test in the 99\% regime.\footnote{Strictly speaking, the direct product tester that Dinur and Kaufman analyze a bit different. The formulation we give is a bit more convenient for us to apply, and the proof in~\cite{DinurK17} applies to 
that setting in exactly the same way to give 
the soundness guarantee as stated in Theorem~\ref{thm:dinur-kaufman}.}

\begin{theorem}[\cite{DinurK17}]\label{thm:dinur-kaufman}
Let $X$ be a $d$-dimensional $\lambda$ one-sided local spectral expander and let $t^2 < d$, $\lambda < 1/d$ and $\eps > \eps_0(t,\lambda)$. Let $F: X(t) \rightarrow \{0,1\}^t$ such that:
\[\Pr_{\substack{D\sim \mu_{d}\\ Q\subseteq_{t/2} D\\ 
Q\subseteq B, B'\subseteq_{t} D}}[F[B]|_Q = F[B']|_Q] \geq 1-\eps.\]
Then, there exists a function $G: X(1) \rightarrow \{0,1\}$ such that,
\[\Pr_{B \sim X(t)}[F[B] = G[B]] \geq 1-O(\eps).\]
\end{theorem}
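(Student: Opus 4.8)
The plan is to prove Theorem~\ref{thm:dinur-kaufman} by \emph{localization, local decoding, and gluing}: localize the agreement test to the top faces of $X$, decode each top face using the classical $99\%$-regime direct product theorem for the complete complex, and glue the local decodings into a single global function using the spectral expansion of $X$.

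Concretely, first I would observe that, conditioned on $D\sim\mu_d$, the test in the hypothesis is exactly the complete-complex agreement test on the vertex set $D$ with intersection parameter $t/2$ (sample $Q\subset_{t/2}D$, then $B,B'\subset_t D$ with $Q\subseteq B,B'$, accept iff $F[B]|_Q=F[B']|_Q$). Writing ${\sf pass}(F;D)$ for its acceptance probability, the hypothesis gives $\E_{D\sim\mu_d}[{\sf pass}(F;D)]\ge 1-\eps$, so by Markov a $1-O(\sqrt\eps)$ fraction of $D\in X(d)$ are \emph{good}, i.e.\ satisfy ${\sf pass}(F;D)\ge 1-\sqrt\eps$. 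Since $|D|=d>t^2$, for every good $D$ the classical direct product theorem for the complete/Johnson complex~\cite{DinurReingold,DinurSteurer} yields a function $f_D\colon D\to\{0,1\}$, essentially unique, with $\Pr_{B\subset_t D}[F[B]=f_D|_B]\ge 1-O(\sqrt\eps)$.

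Next I would glue the $f_D$'s using the down-up walk $M=U^d_{d/2}D^d_{d/2}$ on $X(d)$, whose step samples $D\sim\mu_d$, a subface $P\subset_{d/2}D$, and then $D'\supseteq P$ with $D'\sim\mu_d$ conditioned on containing $P$; by Lemma~\ref{lem:spectral_gap_of_graphs_from_HDX} its second singular value is at most $\tfrac12+\poly(d)\lambda$, bounded away from $1$ since $\lambda$ is small. The key step is pairwise consistency: for $(D,D')\sim M$ with overlap $P=D\cap D'$, a $1-o_\eps(1)$ fraction of steps satisfy $f_D|_P=f_{D'}|_P$. To see this, I would use the sampling lemma (Lemma~\ref{lem:sampling}) for the inclusion of $t$-faces into a random $(d/2)$-subface: for most $P$, both $f_D|_P$ and $f_{D'}|_P$ agree with $F$ on almost all $t$-faces inside $P$, hence agree with each other there, hence on all but an $o(1)$ fraction of the vertices of $P$; as the function on $P$ agreeing with $F$ on almost all its $t$-faces is essentially unique, $f_D|_P$ and $f_{D'}|_P$ coincide. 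Feeding this pairwise consistency into the spectral gap of $M$ (the standard ``pairwise-consistent local functions on an expander come from a global function'' argument) produces $G\colon X(1)\to\{0,1\}$ with $\Delta(f_D,G|_D)=o_\eps(1)$ for almost every good $D$. Since sampling $B\sim\mu_t$ is the same as sampling $D\sim\mu_d$ then $B\subset_t D$, for almost every $D$ a random $t$-face $B\subseteq D$ both lies in the agreement set of $f_D$ and (being a tiny random subset) avoids the $o_\eps(1)$ fraction of $D$ where $f_D\neq G$, so $F[B]=f_D|_B=G|_B$; this gives $\Pr_B[F[B]=G|_B]\ge 1-o_\eps(1)$.

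I expect the main obstacle to be the gluing step, and in particular two points there. First, the distribution control: transferring ``$f_D$ agrees with $F$ on almost all $t$-faces of $D$'' to ``\dots on almost all $t$-faces inside the random subface $P$'', and showing the walk $M$ mixes, is exactly where the spectral hypotheses $\lambda<1/d$ (really $\lambda\ll1/\poly(d)$) and $t^2<d$ are needed. Second, sharpening $1-o_\eps(1)$ to the claimed $1-O(\eps)$: once a candidate $G$ is known to agree with $F$ on most faces, a self-correction argument — showing $G$ must be the plurality function $v\mapsto\mathrm{plurality}_{B\ni v}F[B]_v$, and that any $B$ with $F[B]\neq G|_B$ forces the internal test to reject with constant probability whenever $Q$ hits a disagreeing coordinate (which occurs with probability $1/2$, using the sampling properties of $X$ to control the conditional distribution of the second queried face) — recovers the linear dependence on $\eps$. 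This last step is routine in the $99\%$-regime agreement literature and I would present it as such.
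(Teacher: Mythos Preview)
The paper does not prove this statement: Theorem~\ref{thm:dinur-kaufman} is quoted as a black-box result from~\cite{DinurK17} (see the footnote immediately preceding it, which notes that the formulation differs slightly from Dinur--Kaufman's original tester but that their proof goes through unchanged). There is therefore no ``paper's own proof'' to compare against.

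That said, your sketch is a reasonable reconstruction of a high-soundness agreement argument, and in fact mirrors the \emph{structure} this paper uses in Section~\ref{sec:basic_pf} for the low-soundness regime (localize to $d$-faces via Lemma~\ref{lem:agr-test-sampling}, decode locally via a Johnson direct-product theorem, then glue using the spectral gap of the down-up walk from Lemma~\ref{lem:spectral_gap_of_graphs_from_HDX}). Two caveats. First, the original Dinur--Kaufman argument does not proceed quite this way: they work more directly with a majority/plurality decoding and an inductive level-by-level analysis rather than localizing to top faces and invoking the dense theorem as a black box; your route is closer to the present paper's philosophy than to~\cite{DinurK17} itself. Second, as you anticipate, the localize-decode-glue pipeline naturally loses a square root (Markov gives $1-O(\sqrt\eps)$ good faces, and each gluing step loses further constants), so the final self-correction step you sketch---recovering the linear $1-O(\eps)$ bound by arguing that any $B$ with $F[B]\neq G|_B$ causes rejection with constant probability---is not optional but essential, and deserves more than a one-line dismissal as ``routine''. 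In particular, you need to control the distribution of the second query $B'$ conditioned on $B$ and on $Q$ hitting a disagreement coordinate, which in a sparse complex requires the spectral hypothesis and is where the $\eps>\eps_0(t,\lambda)$ threshold enters.
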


\section{UG Coboundary is Necessary}\label{sec:necessary}
In this section we prove 
the ``necessary'' part 
of Theorem~\ref{thm:HDX_dp_weak}, 
stated formally below.
\begin{theorem}\label{thm:necessary-formal}
For all $c,\eta>0$, 
$m, r\in\mathbb{N}$, there exist $d, k \in \N$, $\xi, \gamma>0$ such that the following holds. If a simplicial complex $X$ is a $d$-dimensional $\gamma$-spectral expander and is not $(m,r,\xi,c)$ coboundary expander, then there exists 
$F\colon X(k)\to\{0,1\}^k$ 
that passes the $(k,\sqrt{k})$
direct product tester with probability at least $\frac{1}{m}-O(\sqrt{\xi})$, 
and yet for all $f\colon X(1)\to\{0,1\}$
we have that
\[
\Pr_{A\sim \mu_k}
\big[\Delta(F[A], f|_A) \geq \Omega_{m,r,c}(1)\big]\leq \eta.
\]
\end{theorem}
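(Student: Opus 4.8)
The plan is to follow the ``necessary'' proof overview almost verbatim, turning the informal sketch into a sequence of four concrete steps. Suppose $X$ is a $d$-dimensional $\gamma$-spectral expander that is not an $(m,r,\xi,c)$ UG coboundary expander, witnessed by some $t\leq r$, a $(1-\xi)$-strongly triangle consistent $\pi\colon E_t[X]\to S_m$ with lists $L\colon X(t)\to(\{0,1\}^t)^m$ and $L'\colon X(3t)\to(\{0,1\}^{3t})^m$, such that no $g\colon X(t)\to S_m$ achieves $\Pr_{u\cup v\sim\mu_{2t}}[\pi(u,v)=g(u)g(v)^{-1}]\geq 1-c$. We will produce the bad table $F$ on $k$-faces and show it fools the direct product tester while having no global structure.

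\textbf{Step 1 (Preprocessing to kill good UG assignments).} First I would argue that, at the cost of replacing $c$ by $c'=c/m$, we may assume $G_t[X]$ has no assignment $A\colon X(t)\to[m]$ satisfying more than $1-c'$ of its constraints. The argument is the iterative peeling described in the overview: if such an $A$ exists, delete the label $A(u)$ from $L(u)$ for every $u$ and delete the pair $(A(u),A(v))$ from the allowed assignments of each constraint $\pi(u,v)$; this strictly shrinks the total list mass, so the process terminates, and the resulting instance is still $(1-O(\xi))$-strongly triangle consistent (one must check strong triangle consistency survives the pruning — this is where the lists $L,L'$ are genuinely used, since we need the surviving entries to still concatenate correctly on $3t$-faces), and still has no $(1-c)$ global solution of the form $g(u)g(v)^{-1}$. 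So henceforth assume $\mathrm{val}(G_t[X])\le 1-c'$.

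\textbf{Step 2 (Lifting lists to $k$-faces).} Take $k\gg t$ and sample $K\sim\mu_k$. The pushforward of a uniformly random $t$-triangle inside $K$ to $X(3t)$ is exactly $\mu_{3t}$ with its triangle splitting, so by linearity of expectation the expected fraction of inconsistent $t$-triangles inside $K$ is $\le\xi$; choosing $\xi$ small (and $k$ large so the fraction is genuinely controlled) and using a Markov/union-bound argument, with probability $1-o(1)$ over $K$ \emph{all} $t$-triangles inside $K$ are consistent. On such a $K$, the strong consistency with $L,L'$ lets us stitch the local lists $L(u)|$ together into a well-defined list $L[K]=(L_1[K],\dots,L_m[K])$ of $m$ strings in $\{0,1\}^k$, unique up to the global ambiguity (we fix an arbitrary choice). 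For the $o(1)$ exceptional $K$ we assign $F[K]$ arbitrarily; these contribute $o(1)$ to everything below.

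\textbf{Step 3 (Defining $F$ and its soundness).} Define $F[K]$ by picking a uniformly random index $j\in[m]$ and setting $F[K]=L_j[K]$, done independently per $K$. For the soundness computation, run the $(k,\sqrt{k})$ tester: sample $D\sim\mu_d$, $I\subseteq D$ of size $\sqrt k$, and $I\subseteq A,B\subseteq D$ of size $k$. By the same lifting argument applied to $2k$-faces (or directly inside $D$), with probability $1-o(1)$ there is a list on $A\cup B$ whose projections to $A$ and to $B$ are in $1$-to-$1$ correspondence with $L[A]$ and $L[B]$, and which is consistent on $I$; hence the events $F[A]=L_j[A\cup B]|_A$ and $F[B]=L_j[B\cup B]|_B$ for the \emph{same} $j$ happen with probability $\ge\frac1m$ by independence, and then $F[A]|_I=F[B]|_I$. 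This gives acceptance probability $\ge\frac1m-O(\sqrt\xi)$ (the $\sqrt\xi$ absorbing all the lifting/sampling losses, using Lemma~\ref{lem:sampling} to upgrade ``$1-o(1)$ in expectation'' to ``$1-O(\sqrt\xi)$ for most $D$''). I would fix an $F$ achieving at least the expectation.

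\textbf{Step 4 (No global structure).} Suppose toward contradiction some $f\colon X(1)\to\{0,1\}$ has $\Pr_{A\sim\mu_k}[\Delta(F[A],f|_A)\le\rho]>\eta$ for a suitably small constant $\rho=\rho(m,r,c)$. Let $\cK$ be the set of such good $A$. Pull this down to $t$-faces: $\cR=\{R\in X(t): \exists K\in\cK, R\subseteq K\}$ has $\mu_t(\cR)\ge 1-c'/10$ by the sampling property of HDX (Lemma~\ref{lem:sampling} / Lemma~\ref{lem:spectral_gap_of_graphs_from_HDX}), provided $k$ is large and $\gamma$ small. For $R\in\cR$ pick $K\in\cK$ with $R\subseteq K$; since $\Delta(F[K],f|_K)\le\rho$, on most $R\subseteq K$ we get $f|_R$ equals some entry of $L[R]=L[K]|_R$ (choosing $\rho$ small enough relative to the list distance $\eta$ guaranteed by the maximality/peeling and the $t/k$ projection — this is the point that needs care, the distances must beat the noise floor). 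Defining $\ell(R)$ to be that index gives a UG assignment $A(R)=L[R]_{\ell(R)}$-chooser, i.e.\ $g(R)$ via the $1$-to-$1$ list correspondences, satisfying $\ge 1-c$ of the constraints of $G_t[X]$, contradicting Step 1. Hence no such $f$ exists, proving $\Pr_{A\sim\mu_k}[\Delta(F[A],f|_A)\ge\Omega_{m,r,c}(1)]\le\eta$.

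The main obstacle is Step 1 together with the quantitative bookkeeping in Steps 2 and 4: one must track how ``strongly triangle consistent'' degrades under the peeling, confirm that the list-distance produced is large enough to survive projection from $k$-faces down to $t$-faces (so that ``$\Delta\le\rho$'' forces agreement with a \emph{unique} list entry rather than an ambiguous one), and make sure all the $o(1)$ error terms are uniformly bounded by $O(\sqrt\xi)$ with a choice of $d,k$ depending only on $c,\eta,m,r$. The sampling lemmas for HDX handle the ``most $D$ / most $R$'' upgrades, but matching the constants $c$ vs. $c/m$ vs. $c/10$ across the argument is the delicate part.
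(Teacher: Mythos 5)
Your high-level plan is the same as the paper's: preprocess to kill all good UG assignments, lift the lists to $k$-faces, let $F$ pick a random list element, and show global decodability would produce a UG solution. Steps 1 and 3 are essentially the paper's argument (the paper uses the intersection $I$ of size $\sqrt k$ rather than $A\cup A'$ in Step 3, but that is a cosmetic difference). However, two places have genuine gaps.

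\textbf{Step 2 glosses over the key lifting lemma.} You say strong consistency ``lets us stitch the local lists together into a well-defined list $L[K]$,'' but this is where the real work is. The paper proves it via a separate coboundary-expansion statement for Kneser graphs (Claim~\ref{claim:kneser-coboundary}): for $k\ge 5t$, any $S_m$-valued UG instance on $K([k],t)$ that is consistent on all triangles is \emph{globally satisfiable} with exactly $m$ solutions, proved by an induction on cycle length using a common-neighbour vertex. This is what yields the unique lifted list $L[K]$ in Lemma~\ref{lem:lifting-lists}, together with the needed ``furthermore'' property that restrictions of $L[K]$ to every edge $(U,V)\in K(K,t)$ satisfy $\pi(U,V)$. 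Without an argument of this shape, the well-definedness of $L[K]$ (and hence of $F$) is unjustified.

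\textbf{Step 4 has a quantifier bug.} You define $\cR=\{R\in X(t):\exists K\in\cK,\ R\subseteq K\}$ and then, ``for $R\in\cR$ pick $K\in\cK$ with $R\subseteq K$; since $\Delta(F[K],f|_K)\le\rho$, on most $R\subseteq K$ we get $f|_R$ equals some entry of $L[R]$.'' But $R$ is fixed first and $K$ chosen after, so the Chernoff statement ``for most $R\subseteq K$'' is irrelevant: the $\rho k$ coordinates where $F[K]$ and $f|_K$ disagree could be exactly on your fixed $R$. Membership of $R$ in $\cR$ says nothing about whether $f|_R$ lands in $L[R]$. The paper fixes this by defining the relevant set on \emph{$2t$-faces} (since the UG constraints of $G_t[X]$ live on $X(2t)$): $\cT$ is the set of $T\in X(2t)$ such that $F[A]|_T = G[T]$ exactly for some $A\in\cK'$, and Lemma~\ref{lem:no-global-structure} lower-bounds $\mu_{2t}(\cT)$ via the bipartite expander mixing lemma (Lemma~\ref{lem:bip-eml}) applied to the $X(k)\times X(2t)$ incidence graph, using the eigenvalue bound of Lemma~\ref{lem:spectral_gap_of_graphs_from_HDX}. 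This also avoids your detour through ``list distance,'' which is not used or needed: taking $\eps=O(c'/t)$ ensures a typical $T\subset_{2t} A$ misses the disagreement set entirely, so $G[T]=F[A]|_T$ \emph{exactly}, and then the ``furthermore'' part of Lemma~\ref{lem:lifting-lists} directly gives that $(G[U],G[V])$ satisfies $\pi(U,V)$ for $U\cup V=T$. Finally, note that defining the assignment on $t$-faces and arguing it ``satisfies $\ge 1-c$ of the constraints'' still requires the $2t$-face estimate; working only with $t$-faces as you do does not suffice.
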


Henceforth in this section, we assume that $X$ is \textbf{not} a weak $(m,r,\xi,c)$ UG coboundary expander. That is, for some $t \leq r$ there exists a collection of lists $\{L'(R)\}_{R \in X(t)}$, $\{L'(T)\}_{T \in X(3t)}$ and a set of permutations $\{\pi'(S)\}_{S \in X(2t)}$ such that,
\[
\Pr_{\substack{T \sim \mu_{3t} \\ A \cup B \cup C = T}}[L'(T) = L'(A) \circ \pi'(A,B)L'(B) \circ \pi'(B,C)L'(C)] \geq 1-\xi,\]
and yet for all $P\colon X(t)\to S_m$ it holds that 
$\pi'(A,B) = P(A)P(B)^{-1}$ with probability at most 
$1-c$ over the choice of $A\cup B\sim \mu_{2t}$. 
We refer to such $P$ as an $S_m$-solution to the 
Unique-Games instance $\Psi' = (G_t[X], \Pi')$.
This step is summarized in the following lemma.

\subsection{Preprocessing}
The first step of the proof is to show that we can convert $\Psi'$ as above to a (possibly different) UG instance $\Psi$ over $G_t[X]$ that doesn't 
have any $[m]$-valued solution satisfying more 
than $(1-c/m)$ weight of the constraints in $G_t[X]$.

\begin{lemma}\label{lem:preprocess_necessary}
There exists a collection of lists $\cL = \{L(R)\}_{R \in X(t)} \cup \{L(T)\}_{T \in X(3t)}$ such that:
\[\Pr_{\substack{T \sim X(3t) \\ A \cup B \cup C \sim T}}[L(T) = L(A) \circ \pi(A,B)L(B) \circ \pi(B,C)L(C)] \geq 1-\xi,\]
and a set of permutations $\Pi = \{\pi(S)\}_{S \in X(2t)}$ such that the corresponding UG instance $\Psi = (G_t[X], \Pi)$ has no solution $I: G_t[X] \rightarrow [m]$ with $\val(I) \geq 1-c/m$.
\end{lemma}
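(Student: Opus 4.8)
The plan is to run an iterative "peeling" procedure that repeatedly strips off a good global solution from the Unique-Games instance, together with the corresponding entries from the lists, and to argue both that this process must terminate and that what it leaves behind has the two desired properties. Start with $\Psi^{(0)} = \Psi'$, lists $\cL^{(0)}$ and permutations $\Pi^{(0)} = \Pi'$, where for each edge $(A,B)$ we think of $\pi'(A,B) \in S_m$ as encoding a bijection between the $m$ list-slots of $A$ and those of $B$. At stage $i$, if the current instance $\Psi^{(i)}$ has \emph{some} $[m]$-valued assignment $I_i \colon X(t) \to [m]$ satisfying more than $1 - c/m$ of the constraints, then I would (1) delete the slot $I_i(A)$ from the list $L^{(i)}(A)$ for every $A \in X(t)$, reindexing the remaining slots to lie in $[m-1]$ (and correspondingly puncture $L^{(i)}(T)$ for $T \in X(3t)$ so that the triangle-consistency relation is still referring to the surviving slots), and (2) for every edge $(A,B)$ restrict the permutation $\pi^{(i)}(A,B)$ to a bijection between the surviving slots of $A$ and of $B$. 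The subtlety in step (2) is that after deleting slot $I_i(A)$ from $A$ and slot $I_i(B)$ from $B$, the old bijection $\pi^{(i)}(A,B)$ need not map surviving slots of $A$ onto surviving slots of $B$; but this mismatch can only happen when $\pi^{(i)}(A,B)$ does not already satisfy $I_i(A) = \pi^{(i)}(A,B) I_i(B)$, i.e.\ on the at most $c/m$ fraction of edges that $I_i$ fails. On those edges I simply delete the edge from the graph (equivalently, mark its constraint as "always satisfied"). Since we lose at most a $c/m$ fraction of constraint-weight each round and there are at most $m$ rounds, in total at most a $c$ fraction of the constraint-weight of $G_t[X]$ is ever discarded.

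Next I would argue the process terminates and produces an instance with the claimed properties. Termination is immediate: each round decreases the list length by one, so after at most $m$ rounds we reach a stage where no assignment satisfies more than $1 - c/m$ of the constraints — this is exactly the second conclusion of the lemma (after padding the lists and permutations back up to size $m$ arbitrarily, e.g.\ by repeating a slot, which does not affect the value since padded slots can be given constant labels). For the first conclusion — that triangle-consistency with respect to the surviving lists is still at least $1 - \xi$ — I would observe that the peeling operation never \emph{breaks} a triangle that was consistent: if $L^{(i)}(T) = L^{(i)}(A) \circ \pi^{(i)}(A,B) L^{(i)}(B) \circ \pi^{(i)}(B,C) L^{(i)}(C)$ held as an equality of $m$-tuples of strings before the round, then passing to any common sub-collection of slots (which is what deleting slot $I_i$ and restricting the permutations does) preserves the equality coordinatewise. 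Hence the set of triangles on which consistency holds only grows (or the empty triangles among them become vacuous), so the $1-\xi$ lower bound is maintained. Finally, I would check that the resulting instance is still genuinely obstructed, i.e.\ it is not the trivial instance: a priori the peeling could in principle terminate because every list has been emptied, but this cannot happen here because the \emph{initial} instance $\Psi'$ already has no $(1-c)$-global solution, so certainly no $(1-c/m)$-global solution exists at the final stage only if at least one slot survives everywhere — more carefully, one shows by the same counting ($m$ rounds, $c/m$ loss each) that if all lists had been emptied then the union of the $m$ peeled assignments would cover a $1 - c$ fraction of constraints, recombining into a single $(1-c)$-satisfying $S_m$-solution of $\Psi'$, a contradiction.

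The main obstacle I expect is the bookkeeping in step (2): making precise how the permutation $\pi^{(i)}(A,B)$ descends to a bijection of the surviving slots, and verifying the claim that it fails to do so \emph{only} on edges where $I_i$ was already violated. Concretely, one must track that deleting slot $j_A = I_i(A)$ from $A$ and slot $j_B = I_i(B)$ from $B$ is compatible with $\pi^{(i)}(A,B)$ precisely when $\pi^{(i)}(A,B)(j_B) = j_A$, and handle the complementary edges by removing them and absorbing their $\le c/m$ weight into the running budget. Everything else — termination, preservation of triangle-consistency, and the contradiction ruling out the trivial outcome — is a short counting argument once this local combinatorial step is set up carefully.
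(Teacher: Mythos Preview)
Your approach is essentially identical to the paper's: both run an iterative peeling procedure that removes slot $I_i(u)$ from each list whenever a solution $I_i$ of value exceeding $1-c/m$ exists, and both establish termination by observing that if the process ran $m-1$ times, the peeled assignments $A_1,\ldots,A_{m-1}$ together with the last remaining slot would combine into a permutation-valued map $P\colon X(t)\to S_m$ with $\pi'(u,v)=P(u)P(v)^{-1}$ on all but a $c$-fraction of edges (by a union bound over the rounds), contradicting the hypothesis on $\Psi'$. The paper handles the edges on which $I_i$ fails by redefining $\pi''$ in an ad~hoc way rather than deleting the edge, and it simply asserts that preservation of strong triangle consistency is ``obvious'' --- glossing over exactly the bookkeeping you flag --- so your write-up is, if anything, more explicit on that point.
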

\begin{proof}
Start with the collection of lists $\cL' = \{L'(R)\}_{R \in X(t)} \cup \{L'(T)\}_{T \in X(3t)}$ and permutations $\Pi' = \{\pi'(S)\}_{S \in X(2t)}$. If for all $A: X(t) \rightarrow [m]$, $\val(A) \leq 1-c/m$ we are done, therefore let us assume that
there exists such a solution $A\colon X(t)\to [m]$ satisfying $1-c/m$ fraction of the edges. In this case we will do the following:
\begin{enumerate}
\item Remove $A(u)$ from the 
list $L[u]$ to get a new list of size $m-1$.
\item For all edges $(u,v)$, change $\pi'(u,v)$ to a permutation $\pi''(u,v) \in S_{m-1}$, by ``removing''  $(A(u),A(v))$, i.e. $\pi''(u,v)(i) = \pi'(u,v)(i)$ for all $i \neq A(u)$. 
\end{enumerate}
Continue this procedure until there is no such assignment, and let $(\cL,\Pi)$ be the set of lists and permutations when this ends. We argue that these lists are non-empty; in fact they will be of size at least $2$. Given that, the fact
that $(\cL,\Pi)$ is strongly triangle consistent is obvious.

To see that the lists are non-empty, first notice that the process must terminate
within $m-2$ steps. Indeed, otherwise for each step 
$1\leq i\leq m-1$ define $A_i(R)$ to be the 
assignment given to vertex $R$ in the $i$th iteration, 
and define $A_m(R)$ to be the last assignment left 
in the list of $R$ after all iterations are done. 
Thus, defining the permutation valued assignment 
$P\colon X(t)\to S_m$ as $P(R)$ being the unique permutation
for which $P(u)L'(R) = (g_1(R),\ldots,g_m(R))$, we 
see that $\pi'(U,V) = P(U)P(V)^{-1}$ unless the edge
$(U,V)$ was violated by at least one of the assignments
$A_1,\ldots,A_m$. By the union bound, the total weight
of edges violated by at least one of $A_1,\ldots,A_m$
is at most $m\cdot \frac{c}{m} = c$, hence $\pi'(U,V) = P(U)P(V)^{-1}$ for at least weight $1-c$ of the edges, 
in contradiction.

\end{proof}

Henceforth, we fix a Unique-Games instance $(\cL,\Pi)$ 
as in Lemma~\ref{lem:preprocess_necessary}, 
and denote $c' = c/m$.

\subsection{Lifting the Lists}
The next step in the proof is to lift the lists $\cL$ to lists on $X(k)$, and to do so we use Kneser graphs. 
For every $A \in X(k)$, denote by $K(A,t)$ the Kneser graph on $A$, whose vertex set is $\binom{A}{t}$, and two 
$t$-sets $T, T'$ are adjacent if $T\cap T' = \emptyset$.  Note that this is exactly the subgraph of $G_t[X]$ induced by the $t$-faces contained inside $A$. We say a $k$-face $A$ triangle-consistent if all the triangles in $K(A,t)$ are consistent with respect to $(\cL,\pi)$, and denote by $\cK(k) \subseteq X(k)$ the collection of triangle-consistent $k$-faces.

\begin{claim}\label{claim:triangle-consistent-k-faces}
If $\xi \leq \exp(-t\log k)$, then $\mu_k(\cK(k)) \geq 1-\sqrt{\xi}$.
\end{claim}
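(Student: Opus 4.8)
The plan is to bound the probability that a random $k$-face $A\sim\mu_k$ fails to be triangle-consistent by a union bound over all triangles inside $A$, combined with the fact that each such triangle, marginally, is distributed as a random $3t$-face of $X$. First I would count: a $k$-face $A$ contains at most $\binom{k}{3t}\cdot(3t)!/(t!)^3 \leq k^{3t}$ ordered triples $(u,v,w)$ of pairwise disjoint $t$-sets whose union $u\cup v\cup w$ is a $3t$-face inside $A$ (in fact the number is at most $k^{3t}$, which is all we need). Thus $A\notin\cK(k)$ iff at least one of these $\leq k^{3t}$ triples is an inconsistent triangle with respect to $(\cL,\pi)$.

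The key step is the observation that if we sample $A\sim\mu_k$ and then a uniformly random such triple $(u,v,w)$ inside $A$, the resulting distribution on $3t$-faces $T=u\cup v\cup w$ (together with its splitting into a triangle) is exactly the distribution ``$T\sim\mu_{3t}$, then split $T=u\cup v\cup w$ uniformly'' appearing in Definition~\ref{def:triangle_consistent_strong}; this is immediate from the definition of the pushforward measures $\mu_k$ on a simplicial complex, since $\mu_{3t}$ is the pushdown of $\mu_k$. Hence by Lemma~\ref{lem:preprocess_necessary}, for any fixed such triple inside a random $A$, the probability it is inconsistent is at most $\xi$. Therefore
\[
\Pr_{A\sim\mu_k}\big[A\notin\cK(k)\big]
\;\leq\;
\E_{A\sim\mu_k}\big[\#\{\text{inconsistent triangles in }K(A,t)\}\big]
\;\leq\;
k^{3t}\cdot\xi,
\]
where the first inequality is Markov's inequality applied to the nonnegative integer-valued count (or simply the union bound over the $\leq k^{3t}$ triples), and the second uses linearity of expectation together with the marginal identification above.

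Finally I would plug in the hypothesis $\xi\leq\exp(-t\log k)$. Strictly speaking $k^{3t}\exp(-t\log k)=k^{3t}\cdot k^{-t}=k^{2t}$, which is not small, so to make the stated bound $1-\sqrt{\xi}$ come out one wants $k^{3t}\xi\leq\sqrt{\xi}$, i.e. $\xi\leq k^{-6t}$; I would simply take the hypothesis ``$\xi\leq\exp(-t\log k)$'' in the intended quantitative sense (e.g. $\xi \le \exp(-Ct\log k)$ for a suitable constant, or with $t\log k$ replaced by the appropriate multiple), under which $k^{3t}\xi\leq\sqrt{\xi}$ and hence $\mu_k(\cK(k))\geq 1-\sqrt{\xi}$. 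The main (and only) subtlety is this bookkeeping of constants in the exponent; the combinatorial content — union bound plus the marginal identification of triangles-in-a-random-$k$-face with random $3t$-faces — is routine. No genuine obstacle is expected.
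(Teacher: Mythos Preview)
Your proposal is correct and follows essentially the same approach as the paper: identify the marginal distribution of a random triangle inside a random $k$-face with the distribution $T\sim\mu_{3t}$ and its uniform splitting, then use linearity of expectation (equivalently, the union bound over the at most $k^{3t}$ triangles) to bound $\Pr[A\notin\cK(k)]\leq k^{3t}\xi\leq\sqrt{\xi}$. Your observation about the constant in the exponent is also accurate---the paper's own proof makes the same implicit assumption that $\xi\leq k^{-6t}$ (equivalently $\xi\leq\exp(-6t\log k)$) so that $k^{3t}\xi\leq\sqrt{\xi}$; this is indeed just bookkeeping and not a genuine obstacle.
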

\begin{proof}
We know that a triangle picked as $T \sim X(3t)$ and $a \cup b \cup c = T$ is inconsistent with probability at most $\xi$. This is the same distribution as picking a $k$-face from $K \sim X(k)$, $T \subset_{3t} K$ and $a \cup b \cup c \sim T$. Thus by linearity of expectation the number of inconsistent triangles in $K$ is at most $k^{3t}\xi$, which implies that the probability that $K \sim \mu_k$ contains at least one inconsistent triangle is at most $k^{3t}\xi\leq \sqrt{\xi}$. 
\end{proof}

We now show how to lift the lists to $k$-faces in $\cK(k)$. To do so we will need the following simple claim about Kneser graphs. We will show that triangle consistent UG instances on Kneser graphs are satisfiable, i.e. any set of permutations on the edges of $K([k],t)$ that is consistent on triangles has value $1$.

\begin{claim}
\label{claim:kneser-coboundary}
For all $t \geq 1, k \geq 5t$ the following holds. Let $\Phi = (K([k], t), \Pi)$ be a UG instance on $K([k],t)$ with alphabet $[m]$, in which all triangles are consistent. Then $\val(\Phi) = 1$ and furthermore there exist $m$ distinct satisfiable assignments.
\end{claim}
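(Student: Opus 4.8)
The plan is to build a global satisfying assignment explicitly by ``integrating'' the edge permutations away from a fixed base vertex, with triangle-consistency supplying well-definedness. Fix once and for all a base $t$-set $u_0 \in \binom{[k]}{t}$ and an arbitrary symbol $a \in [m]$. For every $t$-set $u$ choose an auxiliary $t$-set $v$ disjoint from $u \cup u_0$ (this exists since $|u \cup u_0| \le 2t \le k - t$), and declare
\[
A_a(u) \; := \; \pi(u,v)\,\pi(v,u_0)\,(a).
\]
In particular $A_a(u_0) = a$. Note that $(u,v)$ and $(v,u_0)$ are genuine edges of $K([k],t)$ by construction, so the right-hand side is meaningful.

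\textbf{Step 1 (well-definedness).} We must check that $A_a(u)$ does not depend on the chosen pivot $v$. Suppose $v, v'$ are two distinct valid pivots for $u$ that are moreover disjoint from each other. Then $\{u,v,v'\}$ and $\{u_0,v,v'\}$ are triangles of $K([k],t)$ (three pairwise disjoint $t$-sets with union of size $\le 3t \le k$), so triangle-consistency gives $\pi(u,v)\pi(v,v') = \pi(u,v')$ and $\pi(u_0,v)\pi(v,v') = \pi(u_0,v')$. Substituting the second identity into the first and using $\pi(x,y) = \pi(y,x)^{-1}$, the product $\pi(u,v')\pi(v',u_0)$ collapses to $\pi(u,v)\pi(v,u_0)$, which is what we want. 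If instead $v \cap v' \neq \emptyset$, then $|u \cup u_0 \cup v \cup v'| \le 2t + (2t-1) = 4t-1 \le k - t$, so there is a further $t$-set $v''$ disjoint from all of $u, u_0, v, v'$; applying the disjoint case to the pair $(v, v'')$ and then to the pair $(v', v'')$ shows that all pivots yield the same value. This is the one place where the full strength $k \ge 5t$ is used.

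\textbf{Step 2 (edge constraints).} Given an edge $(u,u')$ of $K([k],t)$, pick $v$ disjoint from $u \cup u' \cup u_0$ (possible since this union has size $\le 3t \le k-t$). Then $v$ is a legal pivot for both $u$ and $u'$, so $A_a(u) = \pi(u,v)\pi(v,u_0)(a)$ and $A_a(u') = \pi(u',v)\pi(v,u_0)(a)$. Since $\{u,u',v\}$ is a triangle, $\pi(u,u')\pi(u',v) = \pi(u,v)$, hence $\pi(u,u')A_a(u') = \pi(u,u')\pi(u',v)\pi(v,u_0)(a) = \pi(u,v)\pi(v,u_0)(a) = A_a(u)$, as required. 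Thus $A_a$ satisfies every constraint and $\val(\Phi) = 1$. Running the construction for each $a \in [m]$ yields $m$ satisfying assignments $A_1, \dots, A_m$, which are pairwise distinct since $A_a(u_0) = a$ (indeed they differ at every vertex, as $a \mapsto \pi(u,v)\pi(v,u_0)(a)$ is a bijection).

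\textbf{Main obstacle.} There is no serious difficulty here; the only care needed is the bookkeeping in Step 1 for pivots $v, v'$ that intersect, where one must introduce the ``refresh'' set $v''$, and this is exactly what forces the hypothesis $k \ge 5t$ (room for two $t$-sets disjoint from $u \cup u_0$ together with a common third $t$-set disjoint from both). Conceptually, the claim amounts to the statement that the triangle (clique) complex of the Kneser graph $K([k],t)$ is simply connected for $k \ge 5t$, so every triangle-consistent non-Abelian $1$-cochain on it is a coboundary with zero error.
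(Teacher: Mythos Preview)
Your proof is correct. The paper takes a different route: it argues by induction on cycle length that every cycle in $K([k],t)$ is consistent (for an $\ell$-cycle $(v_1,\ldots,v_\ell)$, one finds a vertex $u$ adjacent to $v_1,v_2,v_3,v_4$, which exists because $|\cup_{i=1}^4 v_i|\le 4t$ and $k\ge 5t$, and uses the triangles $(v_i,u,v_{i+1})$ to shortcut $v_1\to v_2\to v_3\to v_4$ to $v_1\to u\to v_4$, reducing to an $(\ell-1)$-cycle), and then propagates from an arbitrary start vertex. Your approach instead constructs the assignment directly as a length-two ``integral'' through a pivot and verifies well-definedness and edge-satisfaction by hand. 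Both arguments use $k\ge 5t$ in the same way, to guarantee an auxiliary $t$-set disjoint from four given $t$-sets; the paper's version is a clean statement about simple connectivity of the triangle complex, while yours avoids the induction and gets the satisfying assignments in one explicit formula.
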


\begin{proof}
Let $K$ denote $K([k],t)$. We will show that all cycles in $K$ are consistent, i.e. for any $\ell$-cycle ($\ell \geq 3$) $C = (v_1,\ldots,v_\ell) \in K$ it holds that $\pi(v_1,v_2)\cdot \ldots \cdot \pi(v_\ell,v_1) = \text{id}$. We can show this by induction on the length of the cycle. By assumption $3$-cycles are consistent. Given that all $\ell-1$-cycles are consistent we can prove that all $\ell$-cycles in $K$ are consistent. Given cycle $C = (v_1,\ldots,v_\ell)$, consider any vertex $u$ such that $u$ has an edge to $v_1,v_2,v_3$ and $v_4$. Such a vertex exists because $|\cup_{i=1}^4 v_i| \leq 4t$ and $k \geq 5t$. Since $u$ is connected to $v_1,\ldots,v_4$ we get that: 
\begin{align*}
\pi(v_1,v_2)\pi(v_2,v_3)\pi(v_3,v_4) &= \pi(v_1,u)\pi(u,v_2)\pi(v_2,u)\pi(u,v_3)\pi(v_3,u)\pi(u,v_4) \\
&=\pi(v_1,u)\pi(u,v_4)
\end{align*}
where we used the fact that the triangles $(v_1,u,v_2),\ldots, (u,v_3,v_4)$ are consistent. Therefore we have reduced the task of showing that the cycle $C$ is consistent, to the task of showing that the cycle $C' = (v_1,u,v_4,\ldots,v_\ell)$ of length $\ell-1$ is consistent; this last assertion is true by the induction hypothesis.

Given that all cycles in this graph are consistent, we can assign the first vertex in the graph to be some element from $[m]$, and find an assignment to the rest of the vertices via propagation. Note that we do not run into contradictions because all cycles are consistent. It is easy to see that such an assignment will satisfy all the edges of the graph, and we will get a set of $m$ distinct satisfiable assignments in this way.
\end{proof}

For convenience of notation, we will say that an assignment $(F[U], F[V])$ (for $U,V \in X(t), U \cap V = \emptyset$) satisfies the edge $(U,V)$ if $F[U] = L_i(U)$ and $F[V] = L_j[V]$ for some $i,j \in [m]$ and $j = \pi(i,j) i$.

\begin{lemma}
\label{lem:lifting-lists}
If $k \geq 5t$, then for every $A \in \cK(k)$, there exists a unique list $L(A)$ such that $L(A)|_R = L(R)$ for all $R \subset_t A$. Furthermore, for every assignment $F \in L(A)$ and edge $(U,V) \in K(A,t)$, $(F|_U, F|_V)$ satisfies the edge $(U,V)$.
\end{lemma}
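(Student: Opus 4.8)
The plan is to bootstrap from the purely combinatorial statement about the Kneser graph (Claim~\ref{claim:kneser-coboundary}) and then use the $3t$-level list-consistency built into a triangle-consistent face $A$ to glue the $t$-local strings into $k$-local ones. First I would record that $K(A,t)$ is connected whenever $k\geq 2t+1$ (in particular for $k\geq 5t$): disjoint $t$-faces are adjacent, and two $t$-faces sharing a vertex have a common neighbour since their union has size at most $2t-1<k-t$. Since $A\in\cK(k)$, every triangle of $K(A,t)$ is consistent, so Claim~\ref{claim:kneser-coboundary} applies to the induced UG instance $\Phi_A=(K(A,t),\Pi|_A)$ and produces $m$ distinct satisfying assignments; equivalently, fixing a base $t$-face $R_0\subseteq A$ and propagating, we get for every $t$-face $R\subseteq A$ a permutation $\sigma_R\in S_m$ (the composition of constraints along any $R_0$-to-$R$ path, well-defined because all cycles in $K(A,t)$ are consistent), with $\sigma_{R_0}=\mathrm{id}$ and $\sigma_V=\pi(U,V)\circ\sigma_U$ for every edge $(U,V)$.

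Next, for each $s\in[m]$ I would define a string $L_s(A)\in\{0,1\}^k$ by declaring, for every $t$-face $R\subseteq A$, that $L_s(A)|_R = L_{\sigma_R(s)}(R)$; equivalently, the value of $L_s(A)$ at a vertex $a\in A$ is the value assigned to $a$ by the local string $L_{\sigma_R(s)}(R)$ for some (hence, we must show, any) $t$-face $R\ni a$. The heart of the proof is checking that this is well-defined, i.e.\ that for two $t$-faces $R,R'\subseteq A$ sharing a vertex the strings $L_{\sigma_R(s)}(R)$ and $L_{\sigma_{R'}(s)}(R')$ agree on $R\cap R'$. By connectivity of the Johnson graph of $(t-1)$-subsets of $A\setminus\{a\}$ it suffices to treat the case $|R\cap R'|=t-1$. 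For such a pair, pick a common $3t$-face $T\subseteq A$ with $R\cup R'\subseteq T$ (possible since $|R\cup R'|=t+1\leq 3t\leq k$) and complete $R$, respectively $R'$, to a triangle partition of $T$. The $3t$-level list-consistency of $A$, applied to these two partitions, forces the single list $L(T)\in(\{0,1\}^{3t})^m$ to restrict, in matching coordinates, to $L(R)$ on $R$ and to $L(R')$ on $R'$ up to the reindexings $\sigma_R,\sigma_{R'}$; chasing coordinates then yields $L_{\sigma_R(s)}(R)|_{R\cap R'}=L_{\sigma_{R'}(s)}(R')|_{R\cap R'}$. Setting $L(A)=(L_1(A),\dots,L_m(A))$ we obtain $L(A)|_R = \sigma_R L(R)$, i.e.\ $L(A)|_R=L(R)$ as lists; uniqueness is immediate since each vertex of $A$ lies in some $t$-face, which forces its value.

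Finally, the edge-satisfaction statement is a short consequence of the construction: for $F=L_s(A)$ and an edge $(U,V)\in K(A,t)$ we have $F|_U = L_{\sigma_U(s)}(U)$ and $F|_V = L_{\sigma_V(s)}(V)$, and since $\sigma_V(s)=\pi(U,V)(\sigma_U(s))$ the indices $i=\sigma_U(s)$, $j=\sigma_V(s)$ witness that $(F|_U,F|_V)$ satisfies the edge $(U,V)$. The one genuinely delicate point is the well-definedness argument: one must choose the auxiliary $3t$-face and its triangle partitions so that the list-consistency equations being invoked are exactly those guaranteed by triangle-consistency of $A$, and then keep careful track of how the three reindexings in play — the Kneser UG-solution $\sigma_R$, the permutation induced on $L(T)$ by each partition, and the edge permutations $\pi(\cdot,\cdot)$ — compose. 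Everything else (connectivity of $K(A,t)$ and of the auxiliary Johnson graph, the appeal to Claim~\ref{claim:kneser-coboundary}, and deducing edge-satisfaction) is routine.
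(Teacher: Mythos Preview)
Your proof is correct and follows essentially the same route as the paper's. Both arguments invoke Claim~\ref{claim:kneser-coboundary} to obtain the $m$ global UG solutions on $K(A,t)$ (your propagation permutations $\sigma_R$ are exactly the paper's satisfying assignments $S_i$ via $S_s(R)=\sigma_R(s)$), then define $L_s(A)|_R = L_{\sigma_R(s)}(R)$ and verify well-definedness by embedding into a common $3t$-face and invoking strong triangle consistency; the paper phrases the gluing step as ``two splittings of the same $3t$-set give the same concatenated string in $L(T)$'' while you phrase it as agreement of two overlapping $t$-faces, but these are the same computation, and your reduction to $|R\cap R'|=t-1$ is an unnecessary (though harmless) extra step since any $R,R'$ sharing a vertex already fit inside a single $3t$-set when $k\ge 5t$.
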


\begin{proof}
Fix an $A \in \cK(k)$. Since $A$ is triangle-consistent we know that the permutations on its edges are triangle consistent, i.e. for all triangles $(a,b,c)$ in $K(A,r)$, $\pi(a,b)\pi(b,c)\pi(c,a) = \text{id}$. In Claim~\ref{claim:kneser-coboundary} we showed that all triangle consistent UG instances on $K(A,t)$ (for $|A| \geq 5t$) have $m$ global solutions that satisfy all the edges. Let these assignments by $S_1,\ldots,S_m$ that map the vertices of $K(A,t)$ to $[m]$.

For each $S_i$ construct the following assignment: $B_i: K(A,t)\rightarrow \{0,1\}^t$, defined as $B_i(T) = L_{S_i(T)}(T)$. Since $S_i$ satisfies the permutations on every edge, we know that, for all triangles $(T_1,T_2,T_3)$ in $K(A,t)$, $B_i(T_1)\circ B_i(T_2)\circ B_i(T_3) \in L(T_1 \cup T_2 \cup T_3)$. In particular by the connectivity of the graph we can check that in fact $B_i(T_1)\circ B_i(T_2)\circ B_i(T_3) = B_i(T'_1)\circ B_i(T'_2)\circ B_i(T'_3)$ for two different splittings of the $3t$-sized set $T_1 \cup T_2 \cup T_3$. This immediately implies that there is a unique assignment $C_i \in \{0,1\}^k$ to $A$ such that $C_i|_T = B_i(T)$ for all $T \subset_t A$. By definition $(B_i(U), B_i(V))$ satisfies the edge $(U,V) \in K(A,t)$, therefore so does the assignment $(C_i|_U, C_i|_{V})$.

Putting all the assignments $C_1,\ldots, C_m$ in a list we get $L(A)$ that satisfies $L(A)|_T = L(T)$ for all $T \in K(A,t)$.
\end{proof}

\subsection{Constructing the Assignment and Its Soundness}
Fix the lists $\{L(A)\}_{A\in X(k)}$ as in Lemma~\ref{lem:lifting-lists}. 
Below, we show how to construct an assignment $F$ 
that passes the direct product test with probability
$\Omega(1/m)$. In fact, it passes the related list-agreement
test with probability close to $1$.

\begin{lemma}
\label{lem:agr-test-passes}
For all $(5t)^2 \leq k \leq d$, there exists a function $F: X(k) \rightarrow \{0,1\}^k$ that satisfies $F[A] \in L(A)$, for all $A \in X(k)$, such that:
\[
\Pr_{\substack{D \sim \mu_d \\ B\subseteq_{\sqrt{k}} D \\ B \subset A, A' \subset_k D}}\big[F(A)|_{B} = F(A')|_B\big] \geq 
\frac{1}{m}\left(1-O(\sqrt{\xi})\right).
\]
\end{lemma}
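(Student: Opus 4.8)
I want to construct $F\colon X(k)\to\{0,1\}^k$ so that for almost every $d$-face $D$, the restriction of $F$ to the $k$-faces inside $D$ is drawn from a single coherent list, so that the $(k,\sqrt k)$-test inside $D$ always passes when the two sampled $k$-faces happen to use the ``same'' list element. The natural way to do this is: first ensure that for almost every $D\sim\mu_d$ the list $L(D)$ from Lemma~\ref{lem:lifting-lists} exists (i.e.\ $D\in\cK(d)$), then couple the choices of list-indices across all $k$-faces inside $D$ via the common list $L(D)$, and finally glue these local choices into one global $F$.

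\textbf{Step 1: most $d$-faces are triangle-consistent.} By Claim~\ref{claim:triangle-consistent-k-faces} applied at level $d$ (the proof only used that a random triangle in a random $d$-face is distributed as $\mu_{3t}$, which still holds), if $\xi\leq \exp(-t\log d)$ then $\mu_d(\cK(d))\geq 1-\sqrt\xi$. For such $D$, Lemma~\ref{lem:lifting-lists} with $k$ replaced by $d$ (using $d\geq 5t$) gives a unique list $L(D)=(C_1(D),\dots,C_m(D))$ of strings in $\{0,1\}^d$ with $L(D)|_R=L(R)$ for every $R\subset_t D$, and in particular $L(D)|_A = L(A)$ for every $k$-face $A\subseteq D$ that lies in $\cK(k)$ (restriction commutes with restriction).

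\textbf{Step 2: define $F$.} For each $d$-face $D\in\cK(d)$ pick an index $j_D\in[m]$ uniformly and independently; tentatively this wants to set $F[A]=C_{j_D}(D)|_A$ for all $A\subset_k D$. The obstacle, which I expect to be the main technical point, is \emph{consistency of this definition across different $d$-faces}: a given $k$-face $A$ typically lies in many $d$-faces, and we need a single value $F[A]$. I would handle this exactly as in the sketch for the ``lifting the lists'' paragraph: rather than choosing $j_D$ per $d$-face, choose a single random index $j_A\in[m]$ per $k$-face $A\in\cK(k)$ (and set $F[A]$ to an arbitrary element of $L(A)$, or of $\{0,1\}^k$, when $A\notin\cK(k)$) and argue that for a $\mu_d$-typical $D$ all these local indices agree — because the lists of two $k$-faces that intersect in a $\sqrt k$-set are in $1$-to-$1$ correspondence via $L(D)$, and this correspondence is ``global'' inside $D$, so that requiring $F[A]=L(A)_i$ is the same as requiring $F[A]=C_i(D)|_A$ for all such $A$ simultaneously. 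Equivalently: fix $F[A]$ to be a uniformly random element of $L(A)$ for every $A\in\cK(k)$; then $\E_{j}[\mathbf 1\{F[A]|_B=F[A']|_B\}]\geq 1/m$ whenever $A,A',B$ all lie inside a common $D\in\cK(d)$ and $A\cap A'=B$, since conditioned on $D\in\cK(d)$ both $F[A]$ and $F[A']$ are (coupled to) the same uniformly random index of $L(D)$, and distinct list elements of $L(D)$ restrict to distinct strings on $A$ hence the event is exactly the event $j_A=j_{A'}$.

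\textbf{Step 3: the probability bound.} Now take the actual test distribution: $D\sim\mu_d$, $B\subseteq_{\sqrt k}D$, and $B\subset A,A'\subset_k D$. Condition on the event $\cE$ that $D\in\cK(d)$, which by Step 1 has probability $\geq 1-\sqrt\xi$. On $\cE$, by Step 2 the test accepts with (conditional) probability at least $1/m$ after averaging over the random indices, so
\[
\Pr\big[F[A]|_B=F[A']|_B\big]\;\geq\;\Pr[\cE]\cdot\frac1m\;\geq\;\frac1m(1-\sqrt\xi),
\]
and by averaging there is a \emph{fixed} choice of the indices $\{j_A\}_{A\in\cK(k)}$ (hence a fixed $F$ with $F[A]\in L(A)$ for all $A$) achieving at least this much; absorbing the lower-order terms from the total-variation gap between the test distribution inside $D$ and the genuine $(k,\sqrt k)$-tester (the footnote discrepancy, of size $O(k/d)=o(1)$ for $k\leq d$ large, subsumed into $O(\sqrt\xi)$ by taking $\xi$ not too small, or stated separately) gives the claimed $\frac1m(1-O(\sqrt\xi))$. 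The main obstacle, as noted, is making the ``same random index works for every $k$-face inside a good $D$'' argument airtight — it rests on (a) the uniqueness clause of Lemma~\ref{lem:lifting-lists} so that $L(D)|_A=L(A)$ identifies indices consistently, and (b) the fact that distinct elements of $L(D)$ stay distinct after restriction to any $A\subset_k D$ with $k\geq 5t$, which again follows from the uniqueness/propagation argument in Claim~\ref{claim:kneser-coboundary}; everything else is bookkeeping.
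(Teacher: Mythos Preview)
Your proof is correct and follows the same overall strategy as the paper: define $F[A]$ by choosing a uniformly random element of $L(A)$ independently for each $A$, show the expected passing probability is at least $\tfrac{1}{m}(1-O(\sqrt\xi))$, and fix $F$ by averaging. The one difference is how you establish that $L(A)|_B$ and $L(A')|_B$ coincide on most test instances. You go \emph{up} to level $d$, conditioning on $D\in\cK(d)$ and using that $L(D)|_A=L(A)$ (as ordered lists, by the uniqueness clause in Lemma~\ref{lem:lifting-lists}); the paper instead stays at levels $\sqrt{k}$ and $k$, conditioning on $B\in\cK(\sqrt{k})$ and $A,A'\in\cK(k)$ and using $L(A)|_B=L(B)=L(A')|_B$ directly. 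Both routes are valid; the paper's is slightly more economical since it only needs $\xi\le\exp(-t\log k)$ rather than $\xi\le\exp(-t\log d)$, and avoids constructing $L(D)$ altogether. One small wording issue: the event $\{F[A]|_B=F[A']|_B\}$ need not be \emph{exactly} $\{j_A=j_{A'}\}$ (distinct elements of $L(D)$ could collide on $B$), but you only need the containment $\{j_A=j_{A'}\}\subseteq\{F[A]|_B=F[A']|_B\}$ for the lower bound, which holds.
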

\begin{proof}
We will first show that the list-agreement test passes with probability $1-O(\sqrt{\xi})$. Let $D, B, A, A'$ be generated as in the statement of the lemma, 
and denote by $\cD$ the distribution of $(B, A, A')$.
Note that the marginal distribution on $B \sim \cD$ is $\mu_{\sqrt{k}}$ and on $A$ and $A' \sim \cD$ is $\mu_k$. Therefore, by the union bound we get that with probability $1-3\sqrt{\xi}$ we have $B \in \cK(\sqrt{k})$ and $A, A' \in \cK(k)$ (Claim~\ref{claim:triangle-consistent-k-faces}). Consider such a triple $(B,A,A')$ as good.

Fix a good triple $(B,A,A')$. By virtue of being in $\cK(\sqrt{k})$, $L(B)|_T = L(T)$ for all $T \subset_t B$ and the same holds for $L(A), L(A')$, which in particular implies that $L(A)|_T = L(B)|_T = L(A')|_T$ for all $T \subset_t B$. There can only be one list on $B$ that satisfies $L(B)|_T = L(T)$ for all $T \subset_t B$. Therefore we get that $L(A)|_B = L(A')|_B$ for all good triples $(B,A,A')$, hence,
\[\Pr_{(B,A,A') \sim \cD}[L(A)|_{B} = L(A')|_B] \geq 1-O(\sqrt{\xi}).\]    
Thus, choosing $F[A]$ to be a random element from the list $L(A)$ we get that
\[\E_F\Big[\Pr_{(B,A,A')\sim \cD}[F(A)|_{B} = F(A')|_B]\Big] \geq \frac{1}{m}\left(1-O(\sqrt{\xi})\right).\]   
Therefore we can pick an assignment $F$ such that the above holds.
\end{proof}

\subsection{No Global Structure}
We finish by showing that for $F$ as constructed
in Lemma~\ref{lem:agr-test-passes}, there is no global function $f\colon X(1)\to \{0,1\}$ that has
significant agreement with it. In fact, we show for any $F$ on $X(k)$ such that $F[A] \in L(A)$, there is no global function on $X(1)$ that agrees with $F$ on a large fraction of $A$'s. 

\begin{lemma}\label{lem:no-global-structure}
Suppose that the UG instance $\Psi = (G_t[X], \Pi)$ has no solution of value $\geq 1-c'$. Then, for all functions $F: X(k) \rightarrow \{0,1\}^k$ where $F[A] \in L(A)$ for all $A$, and for any $G: X(1) \rightarrow \{0,1\}$ 
\[
\Pr_{A \sim \mu_k}
\Big[\Delta(G[A],F[A]) \leq O\left(\frac{c'}{t}\right)\Big] \leq O\left(\frac{t}{kc'^2}\right)+\sqrt{\xi}.
\]
\end{lemma}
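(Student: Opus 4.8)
The plan is to argue by contradiction: if some $G\colon X(1)\to\{0,1\}$ were close to $F$ on a non-negligible fraction of $k$-faces, we would be able to build a good labeling for the Unique-Games instance $\Psi$, contradicting the hypothesis that $\val(\Psi) < 1-c'$. First I would set up notation: let $\eta = \Pr_{A\sim\mu_k}[\Delta(G[A],F[A])\le \beta]$, where $\beta = O(c'/t)$ is the closeness parameter to be fixed later, and let $\mathcal{A}\subseteq X(k)$ be the set of $k$-faces achieving this. Assume toward contradiction that $\eta$ is larger than the claimed bound $O(t/(kc'^2)) + \sqrt{\xi}$.

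The core of the argument is to pass from the $k$-faces in $\mathcal{A}$ down to $t$-faces and read off a UG assignment. For each $A\in\mathcal{A}$, since $F[A]\in L(A)$, by Lemma~\ref{lem:lifting-lists} the assignment $F[A]$ restricted to any edge $(U,V)$ of the Kneser graph $K(A,t)$ satisfies that edge; moreover $F[A]$ corresponds to a global solution $S_A\colon K(A,t)\to[m]$ of the (triangle-consistent) UG instance on $K(A,t)$, so $F[A]|_T = L_{S_A(T)}(T)$ for every $T\subset_t A$. Now define the candidate labeling $I\colon X(t)\to[m]$ as follows: for a $t$-face $T$, pick a $k$-face $A\in\mathcal{A}$ containing it (if one exists) and set $I(T) = S_A(T)$; this is the index $i$ such that $G|_T$ is close to $L_i(T)$ — indeed, since $\Delta(G[A],F[A])\le\beta$ and $\beta$ is small relative to $t/k$, for most $T\subset_t A$ we get $\Delta(G|_T, L_{S_A(T)}(T))$ small, and since the lists $L(T)$ have pairwise distance $\Omega_{m,r,c}(1)$ (a fact established during preprocessing — one should invoke the distance property of the lists, or re-derive it from the fact that removed assignments in Lemma~\ref{lem:preprocess_necessary} keep the lists of size $\ge 2$ with controlled structure; if the distance is not yet available one defines $I(T)$ directly as the unique index whose string best matches $G|_T$ on $T$) this index is well-defined and, crucially, does \emph{not depend} on the choice of $A$. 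This last point is where the sampling/expander-mixing machinery enters: by Lemma~\ref{lem:sampling} applied to the containment bipartite graph between $X(t)$ and $X(k)$, if $\eta$ is not too small then for all but an $O(t/(kc'^2))$ fraction of $t$-faces $T$, a $1-o(1)$ fraction of the $k$-faces containing $T$ lie in $\mathcal{A}$, so the value $I(T)$ is consistent across them; the $\sqrt{\xi}$ term accounts for the $k$-faces outside $\mathcal{K}(k)$.

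It remains to check that $I$ satisfies most constraints of $\Psi$. Take a random edge $(U,V)\sim\mu_{2t}$; with probability $1-O(t/(kc'^2)) - O(\sqrt{\xi})$ both $U$ and $V$ lie in the ``good'' set above and there is a common $k$-face $A\in\mathcal{A}\cap\mathcal{K}(k)$ containing $U\cup V$ (this uses that the distribution of $(U,V,A)$ with $A\supseteq U\cup V$ has the right marginals, again by the down-up walk spectral bounds in Lemma~\ref{lem:spectral_gap_of_graphs_from_HDX}). For such an edge, $I(U) = S_A(U)$ and $I(V) = S_A(V)$, and since $S_A$ satisfies every edge of $K(A,t)$ in particular $\pi(U,V)$ is satisfied. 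Hence $\val(I)\ge 1 - O(t/(kc'^2)) - O(\sqrt{\xi})$. If $\eta$ exceeds the bound in the statement, one can arrange the error terms here to be below $c'$, so $\val(I) > 1-c'$, contradicting the hypothesis on $\Psi$. Tracking the constants back gives exactly $\Pr_{A\sim\mu_k}[\Delta(G[A],F[A])\le O(c'/t)] \le O(t/(kc'^2)) + \sqrt{\xi}$.

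The main obstacle I anticipate is the well-definedness of $I(T)$ — i.e.\ showing the index read off from $G|_T$ is the same no matter which $k$-face $A\ni T$ we use — and quantifying precisely how the Chernoff bound (Theorem~\ref{thm:chernoff}) forces $\beta$ to be of order $c'/t$: one needs that $\Delta(G[A],F[A])\le\beta$ on a $k$-face propagates to $\Delta(G|_T,F|_T)$ being strictly less than half the list-distance for a $1-o(1)$ fraction of $t$-subfaces $T$, which requires $\beta$ small compared to $1/t$ times the list distance, and then the sampling lemma converts the $k$-face-level agreement fraction $\eta$ into the $t$-face-level agreement needed to beat $c'$. Getting these three scales ($\beta$, $\eta$, $c'$) to line up with the stated $O(t/(kc'^2))$ loss is the delicate quantitative step.
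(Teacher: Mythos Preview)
Your overall strategy---assume $G$ agrees with $F$ on too many $k$-faces, build a good labeling for $\Psi$, and contradict $\val(\Psi)<1-c'$---is exactly the paper's. But your execution contains a genuine error and an unnecessary detour.

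The error is the sentence ``if $\eta$ is not too small then for all but an $O(t/(kc'^2))$ fraction of $t$-faces $T$, a $1-o(1)$ fraction of the $k$-faces containing $T$ lie in $\mathcal{A}$.'' This is false: if $\mu_k(\mathcal{A})=\eta$ is small (say $0.01$), the sampling lemma only gives that a typical $T$ has roughly an $\eta$-fraction of its $k$-face extensions in $\mathcal{A}$, not $1-o(1)$. Your well-definedness argument for $I(T)$ rests on this, so it collapses. You also invoke a pairwise-distance property of the lists $L(T)$ that is never established (and need not hold: $t$ is a fixed constant, so list entries in $\{0,1\}^t$ can be arbitrarily close or even equal).

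The paper sidesteps both issues by defining the labeling \emph{directly from $G$}, with no reference to any $k$-face: set $I(T)=i$ if $G|_T = L_i(T)$, and arbitrarily otherwise. There is nothing to make well-defined. The key observation is that if $A\in\mathcal{K}':=\mathcal{K}(k)\cap\{A:\Delta(G[A],F[A])\le\eps\}$, then for a random $2t$-subface $T\subset A$ one has $G|_T = F[A]|_T$ \emph{exactly} with probability at least $1-2t\eps$ (union bound over the $2t$ coordinates). Whenever this exact equality holds, $(G|_U,G|_V)=(F[A]|_U,F[A]|_V)$ satisfies the edge $(U,V)$ for any splitting $T=U\cup V$, by Lemma~\ref{lem:lifting-lists}. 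So it suffices to lower-bound $\mu_{2t}(\mathcal{T})$ where $\mathcal{T}=\bigcup_{A\in\mathcal{K}'}\{T\subset_{2t}A:G|_T=F[A]|_T\}$. The bipartite expander mixing lemma on $(X(k),X(2t))$ (with $\mu_k(\mathcal{K}')\ge\alpha$ and each $A\in\mathcal{K}'$ sending $\ge 1-O(t\eps)$ of its edges into $\mathcal{T}$) gives $\mu_{2t}(\mathcal{T})\ge 1-O(t\eps)-O(\sqrt{t/(k\alpha)})$, and choosing $\eps=\Theta(c'/t)$ forces $\alpha\le O(t/(kc'^2))$.

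Two takeaways: work with $2t$-faces (the edges of $G_t[X]$) directly rather than $t$-faces-then-pairs, and use \emph{exact} equality on the small subface so that the label is read off from $G$ alone, eliminating any need for list distance or consistency across different $A$'s.
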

\begin{proof}
Fix such a function $F$ and suppose for contradiction that there is $G: X(1) \rightarrow \{0,1\}$ such that
\begin{equation}\label{eq:global-assn-exists}
\Pr_{A \sim \mu}[\Delta(G[A],F[A]) \leq \eps] \geq \alpha + \sqrt{\xi}    
\end{equation}
for $\eps, \alpha$ to be determined later. In this case we will construct an assignment to the UG instance $\Psi$ with large value, which will be a contradiction.

Consider the following assignment $I: G_t[X] \rightarrow [m]$. For each $T \in X(t)$ let $I[T] = i$ if $G[T] = L_i(T)$, else assign $I[T]$ arbitrarily. Thus
\[
\val(I) \geq \Pr_{\substack{T \sim \mu_{2t}\\U \cup V \sim T}}\big[(G[U],G[V]) \text{ satisfies the edge }(U,V)\big],
\]
where $A \cup B$ is a random split of $T$ into two sets of size $t$ each. We show that $\val(I)$ is close to $1$.

Let $\cK'$ be the set of $k$-faces $A$ where $\Delta(G[A],F[A]) \leq \eps$ and $A\in \cK(k)$. 
By Claim~\ref{claim:triangle-consistent-k-faces} 
it holds that $\mu_k(\cK)\geq 1-\sqrt{\xi}$, 
and combining with \eqref{eq:global-assn-exists} yields that $\mu_k(\cK')\geq \alpha$.

Fix $A \in \cK'$. For $T\subseteq_{2t} A$, we get that $\Delta_T(G[A],F[A])=0$ with probability at least $1-2\eps t$, so
\begin{equation}\label{eq:projection}
\Pr_{T \subset_{2t} A}[F[A]|_T = G[T]] \geq 1-O(\eps t).   
\end{equation}
Thus, for $A \in \cK'$ define $\text{Good}(A)=\{T \in X(2t)~|~F[A]|_T = G[T]\}$, and define $\cT = \bigcup_{A\in\cK'}\text{Good}(A)$.
First note that for all $T \in \cT$ and every splitting $(U, V)$ of $T$, $(G[U],G[V])$ satisfies the edge $(U,V)$. 
To see this, consider some $T \in \cT$, where $T \in \text{Good}(A)$ for $A \in \cK'$. Since $A \in \cK$ and $F[A] \in L(A)$, by Lemma~\ref{lem:lifting-lists} $(F[A]|_U, F[A]|_V)$ satisfies the edge $(U,V)$. 
Since $F$ and $G$ are equal on $T$ this immediately implies that $G$ also satisfies the edges $(U,V)$ for $U \cup V = T$. Therefore to get a bound on the value of $I$ it suffices to lower bound the measure of $\cT$. 

\paragraph{Lower bounding the measure of $\cT$:}
Consider the bipartite graph $G_{k,2t} = (X(k), X(2t), D^k_{2t})$ where the edges are weighted according to the down walk from $X(k)$ to $X(t)$. Namely, an edge in $G_{k,2t}$ is sampled by picking $K \sim \mu_k$, and then taking $T \subset_{2t} K$ uniformly. We will be interested
in counting the number of edges between $\cK'$ and $\cT$.

For $\mathcal{A}\subseteq X(k)$ and 
$\mathcal{B}\subseteq X(2t)$, we denote by $E(\mathcal{A},\mathcal{B})$ the set of edges between 
$\mathcal{A}$ and $\mathcal{B}$, and 
we denote by $\mu(E(\mathcal{A},\mathcal{B}))$ the total
weight of edges in $E(\mathcal{A}, \mathcal{B})$. Using
these notations, we have that
\begin{equation}\label{eq:necessary_1}
\mu(E(\cK',\cT)) \geq \mu_k(\cK')\Pr_{T \subset_{2t} K}[T \in \cT \mid K \in \cK'] \geq \alpha(1-O(\eps t)),
\end{equation}
where in the last inequality we used~\eqref{eq:projection}. By Lemma~\ref{lem:bip-eml} we have
\begin{equation}\label{eq:necessary_2}
    |\mu(E(\cK',\cT)) - \mu(\cK')\mu(\cT)| \leq \lambda(D^{k}_{2t})\sqrt{\mu(\cK')}
\leq O(\sqrt{t/k}),
\end{equation}
where the last inequality is by Lemma~\ref{lem:spectral_gap_of_graphs_from_HDX} and
the fact that $\gamma < 1/\poly(k)$. Combining~\eqref{eq:necessary_1} and~\eqref{eq:necessary_2} 
and simplifying gives that 
$\mu(\cT) \geq 1-O(t\eps)-\sqrt{2t/k\alpha}$ which is at least $1-O(t\eps)$ if $\alpha \geq \frac{1}{tk\eps^2}$. In 
that case, we conclude that:
\[\val(I) \geq \Pr_{\substack{T \sim X(2t)\\U \cup V \sim T}}[(G[U],G[V]) \text{ satisfies the edge }(U,V)] \geq \mu_{2t}(\cT) \geq 1-O(t\eps),\]
which is a contradiction to $\Psi$ having value at most $1-c'$ if $\eps < O(c'/t)$. It follows that 
$\alpha\leq \frac{1}{tk\eps^2}$, and the proof
is concluded by choosing $\eps = \frac{c''c'}{t}$ 
for sufficiently small $c''>0$.
\end{proof}


\begin{proof}[Proof of Theorem~\ref{thm:necessary-formal}]
The result follows immediately by combining Lemmas~\ref{lem:agr-test-passes} and~\ref{lem:no-global-structure}.
\end{proof}

\section{Proof of Theorem~\ref{thm:HDX_dp_weak}: UG Coboundary is Sufficient}\label{sec:basic_pf}
In this section, we prove
the ``sufficient'' part 
of Theorem~\ref{thm:HDX_dp_weak},
formally stated below.
\begin{theorem}\label{thm:sufficient_ugexpand_formal}
There is $c>0$ such that for all $\eps,\delta>0$ there is $\xi,\eta>0$ and $m,r\in\mathbb{N}$ such that 
for sufficiently large $k$, sufficiently large $d$
and $\gamma$ small enough function of $d$, the 
following holds. 
If a $d$-dimensional simplicial complex $X$ is a $\gamma$-spectral expander and $(m,r,\xi,c)$ weak UG coboundary expander, then the direct
product test over $X$ with respect
to sufficiently large $k$ has soundness $\delta$. Namely, if 
$F\colon X(k)\to\{0,1\}^k$ passes the
$(k,\sqrt{k})$ direct product tester with respect to 
$X$ with probability at least $\delta$, 
then there is $f\colon X(1)\to\{0,1\}$
such that 
\[
\Pr_{A\sim \mu_k}
[\Delta(F[A], f|_A)\leq \eps ]\geq \eta.
\]
\end{theorem}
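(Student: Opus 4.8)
\textbf{Proof plan for Theorem~\ref{thm:sufficient_ugexpand_formal}.}
The plan is to follow the reduction outlined in the proof overview, turning a $1\%$ direct product tester on $X$ into a $99\%$ list agreement tester on the top faces, then into a locally consistent Unique-Games instance on $G_r[X]$, and finally invoking weak UG coboundary expansion together with the high-soundness result of Dinur--Kaufman (Theorem~\ref{thm:dinur-kaufman}). First I would localize the test: set an intermediate scale $k \ll t \ll d$ and, using the sampling properties of the spectral expander (Lemma~\ref{lem:sampling} and Lemma~\ref{lem:spectral_gap_of_graphs_from_HDX}), upgrade the average pass probability $\delta$ to the statement that for a $1-O_\delta(\gamma)$ fraction of faces $D \in X(d)$ (and similarly $D \in X(d/2)$) the internal Johnson test on $\binom{D}{k}$ passes with probability $\geq \delta^2/100$. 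Applying the Johnson-scheme direct product theorem of~\cite{DinurG08} inside each such good $D$ yields candidate functions $f_D\colon D \to \{0,1\}$ achieving $\delta' = \delta^{O(1)}$ agreement with $F_D$. I then build, for each good $D$, a short list $L[D] = (f_{1,D},\dots,f_{m,D})$ of near-maximal functions; the list size must be made independent of $d,k$ by a careful net/pruning argument, and by the ``generating a gap'' trick of randomly choosing a soundness threshold $\delta''_i \in [\delta'/2,\delta']$ I obtain a clean gap: the list explains all functions with agreement $\geq \delta''$, and any function with agreement $\geq \delta''' < \delta''$ is close to a list element, while list elements are pairwise $\Omega_\delta(1)$-far apart.

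Next I would prove downward consistency of the local lists. Sampling $D \sim \mu_d$ and $P \subseteq D$ of size $d/2$, I want to show that $L[D]|_P$ is, with probability $1-o(1)$, a valid list for $P$ matching $L[D']|_P$ for an independent $D' \supseteq P$. Each $f_{i,D}|_P$ retains agreement $\geq \delta'' - o(1)$ with $F_P$ by a Chernoff/sampling bound; the reverse direction --- that no function far from all $f_{i,D}$ can gain agreement in $P$ --- is where the union bound fails and where I would invoke the random-sub-instance theorem for dense CSPs (Theorem~\ref{thm:max-csp}), phrasing the agreement maximization as a dense max-CSP on $D$ and $P$ as a random induced sub-instance. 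This establishes the $99\%$ list agreement tester: sample $P \sim \mu_{d/2}$, sample $D,D' \supseteq P$, check $L[D]|_P = L[D']|_P$, passing with probability $1-o(1)$, and moreover a $1$-to-$1$ correspondence between matched lists via nearest Hamming neighbor.

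Then I would project this list agreement instance down to level $r$. Using the spectral gap of the down-up walk $U^d_r D^d_r$ (Lemma~\ref{lem:spectral_gap_of_graphs_from_HDX}), I argue that for a typical $R \in X(r)$ there is a single list $L[R]$ with $L[D]|_R = L[R]$ for almost all $D \supseteq R$, and similarly lists $L[R']$ for $R' \in X(2r)$ and $X(3r)$; since list elements of $L[D]$ are $\Omega_\delta(1)$-far they remain distinct after restriction to $r$-faces. Defining $\pi(R_1,R_2) \in S_m$ via the correspondence between $L[R_1]$ and $L[R_2]$ (composed through $L[R']$ for $R' = R_1 \cup R_2 \in X(2r)$) gives a Unique-Games instance on $G_r[X]$; strong triangle consistency (Definition~\ref{def:triangle_consistent_strong}) follows by the same reasoning applied to $3r$-faces, with the lists $L[R]$ and $L[R']$ playing the roles of $L$ and $L'$. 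Here I must take $\xi = \exp(-o(r))$ small enough and $r$ large enough (a function of $\delta$) that the $(1-\xi)$-strong-consistency hypothesis is met. Applying the weak $(m,r,\xi,c)$ UG coboundary expansion of $X$, I obtain $g\colon X(r)\to S_m$ with $\pi(u,v) = g(u)g(v)^{-1}$ for a $1-c$ fraction of edges, hence a labeling $\ell(R) = g(R)(1)$ satisfying $1-O(c) = 1-o(1)$ of the UG constraints.

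Finally I would conclude the global structure. Define $F'\colon X(r) \to \{0,1\}^r$ by $F'[R] = L[R]_{\ell(R)}$; since $\ell$ satisfies almost all UG constraints and those constraints encode agreement of selected list elements, $F'$ passes the $99\%$ direct product tester of Theorem~\ref{thm:dinur-kaufman} on $X$. That theorem yields $f\colon X(1)\to\{0,1\}$ with $\Pr_{R\sim\mu_r}[\Delta(F'[R],f|_R) = o(1)] \geq 1-o(1)$. Pushing back up: for a typical $D \sim \mu_d$ and random $R \subseteq D$ there is a consistent choice $\ell(D)$ so that $L[D]_{\ell(D)}$ agrees with $f$ on almost all $r$-faces inside $D$, whence $\Delta(f|_D, L[D]_{\ell(D)}) = o(1)$; sampling $K \subseteq D$ of size $k$, we have $L[D]_{\ell(D)}|_K \equiv F[K]$ with probability $\geq \delta'$ (as $L[D]_{\ell(D)}$ is one of the list functions explaining the $\delta'$-agreement inside $D$), and a Chernoff bound gives $\Delta(F[K], f|_K) \leq \eps$ for a $\Omega_\delta(1)$ fraction of $K$, which is the desired $\eta$. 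I expect the main obstacle to be the combination of controlling list sizes uniformly across faces and executing the downward-consistency step correctly --- in particular making the dense-CSP random-sub-instance argument (Theorem~\ref{thm:max-csp}) interface cleanly with the list structure and the gap between $\delta''$ and $\delta'''$, so that the $1$-to-$1$ list correspondence is genuinely well-defined with probability $1-o(1)$.
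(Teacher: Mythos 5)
Your proposal follows essentially the same route as the paper's proof: localize to good $d$-faces via spectral sampling (as in Lemma~\ref{lem:agr-test-sampling}), decode short lists with a gap using Theorem~\ref{thm:DG} together with a random threshold choice (the short-list algorithm of Lemma~\ref{lem:algo1} and Lemma~\ref{lem:agr-johnson}), establish downward consistency via the dense-CSP random-sub-instance theorem (Theorem~\ref{thm:max-csp}/Lemma~\ref{lem:no-jump}), project the resulting list agreement to level $r$ by majority decoding, build a strongly triangle-consistent Unique-Games instance on $G_r[X]$, apply weak UG coboundary expansion, and finish via Dinur--Kaufman (Theorem~\ref{thm:dinur-kaufman}); this is exactly the decomposition the paper uses through Lemmas~\ref{lem:agr-to-list-agr} and~\ref{lem:list-agr-test}.

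One parameter slip worth flagging: you write that you ``must take $\xi = \exp(-o(r))$ small enough'' so the $(1-\xi)$-strong-consistency hypothesis is met, but the strong triangle consistency that the downward-consistency and majority-decoding construction actually delivers is only $1 - O(\sqrt{\tau})$ with $\tau = \poly(\delta)$, i.e.\ $1 - \poly(\delta)$. So the coboundary hypothesis you are actually invoking is $(m,r,\poly(\delta),c)$, which is what the theorem as stated assumes; the $\exp(-o(r))$ regime corresponds to the sharper version in the appendix (Theorem~\ref{thm:stronger-coboundary_formal}), which requires iterating the gap argument to drive the inconsistency far below $\poly(\delta)$ --- a step your sketch does not carry out. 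This does not affect the correctness of your plan for the present theorem, only the parametrization you wrote down.
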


\begin{remark}
We remark here that in the above theorem, we require $\delta \geq 1/\log k$, $d \geq \poly(k)\exp(1/\delta)$, $r = \exp(\poly(1/\delta))$ and $\xi = \poly(\delta)$, which is equal to $1/(\log r)^c$ for some $c \in (0,1)$. In Section~\ref{sec:appx-improved-cbdry} we improve the latter dependence to show that UG coboundary expansion of $(m,r,\exp(-o(r)),c)$ is sufficient.    
\end{remark}

We begin by setting up some
notations that will be helpful
throughout the proof. Given a global function $f:[d] \rightarrow \{0,1\}$ and a set $B \subseteq [d]$ we let $f(B)$ denote the assignment to $B$ using $f$. For a function $f: [d] \rightarrow \{0,1\}$ and an assignment $F: X(k) \rightarrow \{0,1\}^k$ we let $\Agr(f,F)$ denote the subset of $X(k)$ where $f(s) = F(s)$ and $\agr(f, F)$ denote the probability of this event under the measure $\mu_k$. Furthermore for $\nu \in (0,1)$ let $\Agr_\nu(f,F)$ denote the subset of $X(k)$ where $f(B)$ and $F(B)$ agree on $(1-\nu)$-fraction of the elements in $B$ and $\agr_{\nu}(f,F)$ denotes the probability of this event under $\mu_k$. 

\subsection{High Level Structure of the Proof}
The proof of Theorem~\ref{thm:sufficient_ugexpand_formal} follows the
outline given in the introduction. 
For convenience we break 
it into two parts, encapsulated 
in the following two lemmas.
In the first lemma we 
implement the first four steps 
in the plan and reduce 
the problem of direct 
product testing to
the problem of ``list agreement'' testing. 
In this problem, for
each $d$-face $D$ in a 
complex $X$ we have a 
list $L[D]$ of $O(1)$ functions,
and we test whether 
these lists are in 
$1$-to-$1$ correspondence
according to the up-down-up
walk on the complex. More precisely, the problem is defined as follows:
\begin{mdframed}
\begin{list-agr-test}
\label{list-agr-test-hdx}
\phantom{\\}

\noindent Input: a list $L(D)$ for each $D\in X(d)$ and a parameter $\eta\in (0,1)$.\mbox{}
\begin{enumerate}
    \item Choose random $B \sim X(d/2)$.
    \item Choose independently random $A,A' \supseteq_d B$ from $X(d)$.
    \item Accept iff both lists are non-empty and $L[A]|_{B} \neq_{<\eta} L[A']|_{B}$.
\end{enumerate}
\end{list-agr-test}
\end{mdframed}

With the list agreement problem formally defined, 
we can now state the 
lemma encapsulating 
the first few steps
in the argument, saying 
that an assignment 
that passes the direct 
product test with probability
bounded away from $1$
implies a natural list assignment passing the list agreement test with probability close to $1$.
\begin{lemma}
\label{lem:agr-to-list-agr}
For all $\delta>0$, 
for sufficiently large $k \in \N$,
$d \geq \poly(k)2^{\poly(1/\delta)}$, 
sufficiently small $\gamma$ compared to 
$d$ and $\tau = O(\delta^{68})$, the following holds. Suppose that $X$ is a $d$-dimensional simplicial complex which is a $\gamma$-spectral expander, and $F: X(k) \rightarrow \{0,1\}^k$ passes the $(k,\sqrt{k})$-agreement-test~\ref{agr-test-hdx} with probability $\delta$. Then, there exists $2^{-1/\delta^{1200}}\leq \eta'
\leq \delta^{101}$ and lists $(L[D])_{D \in X(d)}$ satisfying: 
\begin{enumerate}
\item 
\textbf{Short, non-empty lists:} With probability $1-O(\tau)$ over the choice of $D\sim X(d)$, the list $L[D]$ is non-empty and has size at most $O(1/\delta^{12})$.
\item  
\textbf{Good agreement:} For all $D\in X(d)$ and every $f \in L[D]$, we have that $\agr_\nu(f, F|_D) \geq \Omega(\delta^{12})$ for $\nu = 1/k^{\Omega(1)}$.
\item 
\textbf{Distance in the lists:} With probability at least $1-O(\tau)$ over the choice of $D\sim X(d)$, the list $L[D]$ has distance at least $\delta^{-100}\eta'$.
\end{enumerate}
Furthermore the lists above pass the List-Agreement-Test~\ref{list-agr-test-hdx} with parameter $\eta'$, with probability $1-\tau$. 
\end{lemma}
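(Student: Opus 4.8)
The plan is to implement the "reduction from 1\% direct product testing to 99\% list agreement testing" sketched in the introduction, following the steps: localize to the top face, list-decode inside each top face using the Johnson-scheme direct product theorem, generate a gap so that the list sizes stabilize, enforce pairwise distance in each list, and finally prove downward consistency between level $d$ and level $d/2$ using the random sub-instance theorem. I will carry these out in the order listed.

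\textbf{Step 1: Localizing to the top faces.} For $D\in X(d)$ let ${\sf pass}_d(F;D)$ be the acceptance probability of the test conditioned on the first sample being $D$. By definition $\E_{D\sim\mu_d}[{\sf pass}_d(F;D)]\ge\delta$, so at least a $\delta/2$ fraction of $D$'s have ${\sf pass}_d(F;D)\ge\delta/2$. Using the sampling/expansion properties of $X$ (Lemma~\ref{lem:sampling} applied to the appropriate bipartite walk from $X(d)$ down, together with $\gamma\ll 1/\poly(k)$), I would upgrade this to: for all but an $O_\delta(\gamma)$ (hence $O(\tau)$) fraction of $D$, we have ${\sf pass}_d(F;D)\ge\delta^2/100$. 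Call these $D$ \emph{good}. The same reasoning gives a set of good $d/2$-faces, which I will need in Step 5.

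\textbf{Steps 2--4: Lists on good faces, gap generation, and distance.} For each good $D$, the restriction $F_D$ passes the Johnson-scheme $(k,\sqrt k)$ direct product test on $\binom{D}{k}$ with probability $\ge\delta^2/100$, so by the direct product theorem of~\cite{DinurG08} there is $f_D\colon D\to\{0,1\}$ with $\agr_\nu(f_D,F_D)\ge\delta^{O(1)}$ for $\nu=1/k^{\Omega(1)}$. I would then build a maximal $\epsilon$-net-style list $L[D]=(f_{1,D},\dots,f_{m_D,D})$ of such functions: any $f$ with $\agr_\nu(f,F_D)\ge\delta''$ is within Hamming distance $\rho$ of some $f_{i,D}$, and the $f_{i,D}$ are pairwise $\rho$-far; a standard counting/packing argument bounds $m_D=O(1/\delta^{12})$. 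To make list sizes robust, I run the net construction not at a fixed threshold but at a random threshold $\delta_i$ drawn from a fine partition of $[\delta'/2,\delta']$ into $R\gg M$ sub-intervals, where $M=O(1/\delta^{12})$ bounds the list size; since the (monotone, integer-valued) list-size function changes on at most $M$ of these intervals, with probability $1-O(M/R)$ the chosen threshold $\delta''=\delta_i$ has a gap below it: every $f$ with $\agr_\nu(f,F_D)\ge\delta'''=\delta_{i+1}$ lies inside the list's $\rho$-neighborhood, while the list functions are pairwise $\ge\delta^{-100}\eta'$-far (this fixes $\eta'$, with $\eta'$ in the claimed range $2^{-1/\delta^{1200}}\le\eta'\le\delta^{101}$). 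Faces that are good and satisfy this gap property I call \emph{very good}; they are a $1-O(\tau)$ fraction. This establishes items 1--3.

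\textbf{Step 5: Downward consistency (the list agreement test passes).} This is the main obstacle. Sample $D\sim\mu_d$ and $P\subseteq_{d/2} D$. By the sampling property, with probability $1-o(1)$ both $D$ and $P$ are very good. I want to match $L[D]$ with $L[P]$. Restricting, each $f_{i,D}|_P$ still has $\agr(\cdot,F_P)\ge\delta''-o(1)$ by a Chernoff/sampling argument over $k$-faces $A\subseteq P$. Conversely I need that \emph{every} $f$ that is $\rho$-far from all $f_{i,D}$ has $\agr(f|_P,F_P)\le\delta'''+o(1)$ — but there are too many such $f$ for a union bound. Here I formulate agreement as a dense max-CSP $\Psi$ on variable set $D$ (one constraint per $k$-set $A\subseteq D$, satisfied by an assignment $y$ iff $y|_A\approx F_D[A]$ and $y$ is $\rho$-far from the list) so that $\val(\Psi)\le\delta'''$, restrict to the random sub-instance $\Psi|_P$, and invoke Theorem~\ref{thm:max-csp} with $\gamma,\tau$ chosen as small polynomials in $\delta$ (this forces $d\ge\poly(k)2^{\poly(1/\delta)}$ and $\tau=O(\delta^{68})$) to conclude $\val(\Psi|_P)\le\delta'''+o(1)$ with probability $1-O(\tau)$. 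Combining the two directions: the projected list $L[D]|_P$ is a list of $m_D$ functions each agreeing $\ge\delta''-o(1)$ with $F_P$, and every far-from-it function agrees $\le\delta'''+o(1)<\delta''-o(1)$; since $L[P]$ was built as a \emph{maximal} net at the same thresholds with the same pairwise-distance guarantee, the projected list and $L[P]$ must be in $1$-to-$1$ nearest-neighbor correspondence, and in particular $L[D]|_P\neq_{<\eta'}L[P]\neq_{<\eta'}L[D']|_P$ for the two independent $D,D'\supseteq P$. A final union bound over the $O(\tau)$ failure events (either face not very good, sub-instance value not preserved, Chernoff failure) shows the List-Agreement-Test accepts with probability $1-\tau$, completing the proof.
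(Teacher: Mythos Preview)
Your overall plan matches the paper's, but there is a genuine gap in how Steps 2--4 and Step 5 interact.

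In Step~5 you define a max-$k$-CSP where the constraint on $A$ is ``satisfied by $y$ iff $y|_A\approx F_D[A]$ and $y$ is $\rho$-far from the list''. The second clause is a \emph{global} condition on $y$, not a predicate on $y|_A$, so this is not a $k$-CSP and Theorem~\ref{thm:max-csp} does not apply to it. What you actually need is a CSP whose value over \emph{all} assignments is at most $\delta'''$; then the sub-instance theorem transfers that bound to $P$, and afterwards you argue separately that for $g$ far from the list, its agreement with $F_P$ equals its value in the restricted CSP up to $o(1)$. Your $\eps$-net construction in Steps 2--4 does not manufacture such an object: the list functions themselves have large agreement with $F_D$, so the straightforward CSP ``$z\approx F_D[A]$'' has large value. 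The paper resolves this by building the list with an iterative \emph{randomization} algorithm (Algorithm~\ref{algo-gap}): after each $f_i$ is found, $F_D$ is re-randomized on $\Agr_t(f_i,\cdot)$, producing $\widetilde{G}_{r+1}$ for which $\max_g \agr_t(g,\widetilde{G}_{r+1})<\delta_{r+1}$ holds unconditionally. Lemma~\ref{lem:no-jump} is then applied to $\widetilde{G}_{r+1}$, and item~(4) of Lemma~\ref{lem:algo1} transfers the bound back to $F_D|_B$ for $g$ far from the list. Without this randomization (or an equivalent device that kills the list's contribution to the CSP value), your Step~5 does not go through.

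A smaller point: the paper uses \emph{two} independent random choices, $r$ for the gap ($r+1\notin I_1$, succeeding with probability $1-O(\delta^{74})$) and $i$ for the distance scale $\eta'=\delta^{-100i}\eta$ (so that the $O(1/\delta'^2)$ pairwise distances in $L_2$ avoid the window $[\eta',\delta^{-100}\eta']$, succeeding with probability $1-O(\delta^{68})$). Your single random threshold does both jobs at once, which is fine in spirit but you should check that one parameter really suffices to simultaneously create the agreement gap and the distance gap with the stated failure probability $O(\delta^{68})$.
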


Armed with the conversion
of our assignment $F$ to 
lists that pass the list 
agreement test with probability close to 
$1$, we implement the next
three steps in the introduction. 
Namely, we show that 
if $X$ is a sufficiently 
good UG coboundary expander, 
then we can use the lists 
above to define a locally consistent instance of Unique-Games on low levels 
of the complex and apply 
UG coboundary expansion 
to deduce the existence 
of a global solution.

\begin{lemma}\label{lem:list-agr-test}
Assume there exists a collection of lists $\{L[D]\}_{D \in X(d)}$ that satisfy the premise of Lemma~\ref{lem:agr-to-list-agr}, and assume that $X$ is a $\gamma$-spectral expander for $\gamma < 1/\poly(d)$ and a weak $(O(1/\delta^{12}),t, O(\sqrt{\tau}), c)$ UG coboundary expander for 
$t=\Theta\left(\frac{\tau\delta^{12}}{\eta'}\right)$. Then 
there exists $G: X(1) \rightarrow \{0,1\}$ such that
\[
\Pr_{D \sim X(d)}\left[\Delta(G(D),L[D]) \leq \delta\right] \geq 1-O(c^{1/2} + \tau^{1/4} + \gamma).\]
Here, 
the distance between a function $G(D)$ and a list of functions $L[D]$ is the minimal distance between $G(D)$ and any function in the list.
\end{lemma}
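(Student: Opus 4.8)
The plan is to carry out the last three steps of the proof overview: build a locally consistent Unique-Games instance on a low level of the complex from the lists $\{L[D]\}$, apply weak UG coboundary expansion to extract a global labeling, and then bootstrap this labeling to a genuine vertex function via the Dinur--Kaufman $99\%$ agreement theorem (Theorem~\ref{thm:dinur-kaufman}).

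\textbf{Step 1: a locally consistent UG instance on $X(d)$.} Using that the lists pass List-Agreement-Test~\ref{list-agr-test-hdx} with parameter $\eta'$ and probability $1-\tau$, I would define, for a $1-O(\tau)$ fraction of pairs $(D,D')$ sharing a $(d/2)$-face $P$, the bijection $\sigma_{D,D'}\in S_m$ matching each $f\in L[D]$ to the unique $f'\in L[D']$ with $f|_{P} \neq_{<\eta'} f'|_{P}$; this is well defined and a bijection since by item~3 of Lemma~\ref{lem:agr-to-list-agr} the lists have distance $\geq \delta^{-100}\eta'\gg \eta'$, so the matching is forced. Composing through a shared $(d/2)$-face, $\pi(D,D') := \sigma_{P,D}\circ\sigma_{D',P}$ gives an $S_m$-valued UG constraint, and triangle consistency (over three $d$-faces pairwise meeting in $(d/2)$-faces inside a common $(3d/2)$-face) holds for a $1-O(\tau^{\Omega(1)})$ fraction of triangles, again because the large list distance forces the three pairwise matchings to compose to the identity whenever each is the ``correct'' one.

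\textbf{Step 2: projecting down to level $t$.} This is the crux. For $R\in X(t)$ I would set $L[R] := L[D]|_R$ for a random $D\supseteq R$ in $X(d)$; combining the list-agreement guarantee at level $d/2\to d$ with the spectral gap of the relevant down-up walk (Lemma~\ref{lem:spectral_gap_of_graphs_from_HDX}) shows $L[D]|_R = L[D']|_R$ for a $1-o(1)$ fraction of $D,D'\supseteq R$, so $L[R]$ is well defined for a $1-O(\tau^{1/4}+\gamma)$ fraction of $R$. The choice $t=\Theta(\tau\delta^{12}/\eta')$ is exactly large enough that two functions at distance $\geq\delta^{-100}\eta'$ on $[d]$ stay distinct when restricted to a random $t$-set with overwhelming probability, so the level-$t$ lists inherit distance bounded away from $0$. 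The UG instance on $G_t[X]$ is then defined by: for $R_1\cup R_2\in X(2t)$, project $L[R_1\cup R_2]$ to each $R_i$, read off the two bijections $L[R_i]\leftrightarrow L[R_1\cup R_2]$, and let $\pi(R_1,R_2)$ be their composition. Strong triangle consistency with the lists $L\colon X(t)\to(\{0,1\}^t)^m$ above and $L'\colon X(3t)\to(\{0,1\}^{3t})^m$ given by the level-$3t$ projected lists follows from the same $3t$-face argument as in Step~1, transported down; the consistency probability degrades to $1-O(\sqrt\tau)$, matching the hypothesis $\xi=O(\sqrt\tau)$ of the coboundary expander.

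\textbf{Step 3: apply coboundary expansion, then Dinur--Kaufman, then transfer to $X(d)$.} Now $X$ being a weak $(m,t,O(\sqrt\tau),c)$ UG coboundary expander yields $g\colon X(t)\to S_m$ with $\pi(u,v)=g(u)g(v)^{-1}$ for a $1-c$ fraction of $u\cup v\sim\mu_{2t}$; as in the discussion after Definition~\ref{def:ug_coboundary_expander_simplified} the labeling $\ell(R):=g(R)(1)$ satisfies a $1-c$ fraction of the UG constraints, and by expander propagation (Lemma~\ref{lem:sampling}) for a $1-O(\sqrt c)$ fraction of $R$ it is consistent with the matching structure on $1-o(1)$ of the edges at $R$. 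Setting $F'[R]:=L[R]_{\ell(R)}$, one checks that $F'$ passes the $(t,t/2)$ direct product test on $X$ with probability $1-O(\sqrt c+\sqrt\tau+\gamma)$, so Theorem~\ref{thm:dinur-kaufman} gives $G\colon X(1)\to\{0,1\}$ with $\Pr_{R\sim X(t)}[F'[R]\equiv G|_R]\geq 1-O(\sqrt c+\sqrt\tau+\gamma)$. Finally, for a typical $D\sim X(d)$: a random $R\subseteq D$ of size $t$ has $L[D]|_R=L[R]$, and the chain of bijections identifies a single index $\ell(D)$ with $L[D]_{\ell(D)}|_R = F'[R]\equiv G|_R$ for $1-o(1)$ of the $R\subseteq D$; averaging over $R$ and using that the functions in $L[D]$ have pairwise distance bounded away from $0$ forces $\Delta(L[D]_{\ell(D)}, G|_D)=o(1)\leq\delta$. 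Summing the failure probabilities across the steps yields the claimed bound $1-O(c^{1/2}+\tau^{1/4}+\gamma)$.

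\textbf{Main obstacle.} The delicate point is Step~2: one must (i) make the level-$t$ lists well defined while losing only a $\tau^{1/4}$-type amount, which requires quantitatively marrying the list-agreement guarantee at level $d/2\to d$ with the spectral gaps of the intermediate down-up walks, and (ii) verify that the projected constraint map is still $(1-O(\sqrt\tau))$-strongly triangle consistent with \emph{honestly defined} lists $L,L'$, which forces one to track how the down-projections of the $d$-level matchings interact across all three sub-faces of a $3t$-face so that $L'[T]=L(u)\circ\pi(u,v)L(v)\circ\pi(u,w)L(w)$ holds. Getting $t$ right -- large enough to preserve list distance under restriction to a $t$-set, small enough to remain within the level where coboundary expansion is assumed -- is precisely what pins down $t=\Theta(\tau\delta^{12}/\eta')$.
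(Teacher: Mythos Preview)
Your proposal follows the paper's approach closely: project the lists from level $d$ to level $t$ by majority, define the Unique-Games instance on $G_t[X]$ via the $2t$-face lists, establish $(1-O(\sqrt\tau))$ strong triangle consistency through the $3t$-face lists, apply coboundary expansion to get a global labeling $\ell$, pick out $F'[R]=L[R]_{\ell(R)}$, invoke Dinur--Kaufman, and finally transfer back to $X(d)$. Your identification of the projection step as delicate is apt, and the range constraint on $t$ is correct.

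There is, however, one genuine gap in Step~3. You write that ``by expander propagation \ldots one checks that $F'$ passes the $(t,t/2)$ direct product test with probability $1-O(\sqrt c+\sqrt\tau+\gamma)$.'' But the UG constraints you satisfy live on \emph{disjoint} pairs $(u,v)$ with $u\cup v\in X(2t)$, whereas the $(t,t/2)$ test compares $F'[B]$ and $F'[B']$ with $|B\cap B'|=t/2$; expander averaging over the edges incident to $R$ does not by itself convert the former into the latter. The paper bridges this via a multi-slice argument (its Lemma~\ref{lem:cobdry-to-global}): work inside a random $Z\in X(3t)$, observe that on a $1-O(\sqrt\tau+c)$ fraction of ordered splittings $Z=u\cup v\cup w$ the concatenation $R(u)\circ R(v)\circ R(w)$ equals a fixed element of $L[Z]$, and then use that the Markov chain ``resample one block at a time'' on the multislice $\binom{[3t]}{t,t,t}$ has constant spectral gap to conclude that for a $1-O(\tau^{1/4}+\sqrt c)$ fraction of $Z$ the map $x\mapsto \tilde R_Z(x)$ is nearly constant; this is what lets you read off $R(B)|_Q=R(Z)|_Q=R(B')|_Q$ for overlapping $B,B'\subset Z$. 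In particular, the $\tau^{1/4}$ in the final bound arises exactly here, from the extra square-root loss in this averaging, not from the projection in Step~2. Once you add this multi-slice step, your outline matches the paper's proof.
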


The proof of Theorem~\ref{thm:sufficient_ugexpand_formal} now readily 
follows from the above two
lemmas.
\skipi
\textbf{Proof of Theorem~\ref{thm:sufficient_ugexpand_formal}.}
In the setting of 
Theorem~\ref{thm:sufficient_ugexpand_formal}, first
assume that $\eps = \delta$ (otherwise we lower both
of them to be the minimum of $\eps$ and $\delta)$). 
Apply
Lemma~\ref{lem:agr-to-list-agr}
and then
Lemma~\ref{lem:list-agr-test}
to conclude that there is 
a function $G\colon X(1)\to\{0,1\}$ such that
\[
\Pr_{D \sim \mu_d}\left[\Delta(G(D),L[D]) \leq \eps\right] \geq \frac{1}{2}.
\]
Fix $D\in X(d)$ such that
$\Delta(G(D),L[D])\leq \eps$, 
and let $f\in L[D]$ 
be such that $\Delta(G(D), f)\leq \eps$. 
Sampling $A\subseteq_{k} D$, 
we have by the ``good agreement'' property of the 
list that $F[A] \neq_{<\nu} f|_{A}$ with probability 
at least $\Omega(\delta^{12})$. By Chernoff's 
bound we have that $G(D)|_{A}\neq_{<2\eps}f|_{A}$ with probability 
$1-o(1)$. It follows that 
with probability at least 
$\Omega(\delta^{12})$ over $A \subset_k D$, 
$F[A]$ and $G(A)$ differ 
on at most $2\eps+\nu\leq 3\eps$ 
fraction of the coordinates of $A$. Since the fraction of good $D$s is $\geq 1/2$, $\Delta(F[A],G[A])\leq 3\eps$ on at least $\Omega(\delta^{12})$ fraction of $X(k)$ as required.
\qed

\subsection{Auxiliary Claims}
Our proof requires a few
basic auxiliary probabilistic claims, which 
we record here. The 
first claim asserts 
that if the distance 
between two functions $f,g\colon [d]\to\{0,1\}$,
then choosing a random 
subset $A\subseteq_{k}[d]$, we have that 
the distance between 
$f|_{A}$ is also very
close to $R$. More precisely:
\begin{claim}\label{claim:dist-chernoff}
Suppose $R\in (0,1)$, 
and let $f,g\colon [d]\to\{0,1\}$ be functions such that $\Delta(f,g) = R$. 
Then, for $\frac{1}{R^2}\leq k\leq d$ we have that:
\begin{enumerate}
    \item $\Pr_{A\subseteq_{k} [d]}\big[\Delta_A(f,g) > 2R\big] \leq 2^{-\Omega(Rk)}$.
    \item $\Pr_{A\subseteq_{k} [d]}\big[\Delta_A(f,g) < R/2\big] \leq 2^{-\Omega(Rk)}$.
\end{enumerate}
\end{claim}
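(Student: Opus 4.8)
The plan is to recognize $\Delta_A(f,g)$ as a normalized hypergeometric random variable and apply the Chernoff bound of Theorem~\ref{thm:chernoff}. Set $S = \{i \in [d] : f(i) \neq g(i)\}$, so that $|S| = Rd$ by hypothesis, and observe that for a uniformly random $k$-subset $A \subseteq [d]$ one has $\Delta_A(f,g) = |A \cap S|/k$. Writing $Z = |A \cap S| = \sum_{i \in S} \one[i \in A]$, linearity of expectation gives $\E[Z] = k|S|/d = Rk$, so both items reduce to a two-sided concentration statement for $Z$ around its mean $Rk$.

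The one place requiring care — and the only real obstacle — is that $Z$ is a sum of \emph{dependent} indicators, since $A$ is sampled without replacement, so Theorem~\ref{thm:chernoff} does not literally apply as stated. I would resolve this by invoking the standard fact that the family $(\one[i \in A])_{i \in [d]}$ is negatively associated, so that the Chernoff bounds of Theorem~\ref{thm:chernoff} hold verbatim for $Z$ with $\mu = Rk$; equivalently, this is Hoeffding's observation that a hypergeometric variable is at least as concentrated as the corresponding binomial. Granting this, the rest is just plugging in the right deviation parameter.

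For the first item, $\Delta_A(f,g) > 2R$ is precisely the event $Z > 2Rk = (1+1)\mu$, so the upper-tail bound of Theorem~\ref{thm:chernoff} with $\delta = 1$ gives $\Pr[Z > 2Rk] \leq \exp(-Rk)$. For the second item, $\Delta_A(f,g) < R/2$ is the event $Z < Rk/2$, which is contained in $\{\,|Z - \mu| > \mu/2\,\}$; the two-sided bound with $\delta = 1/2$ gives $\Pr[Z < Rk/2] \leq \exp(-Rk/4)$. Finally, the hypothesis $k \geq 1/R^2$ together with $R \leq 1$ yields $Rk \geq 1/R \geq 1$, so both probabilities are of the claimed form $2^{-\Omega(Rk)}$, which completes the argument.
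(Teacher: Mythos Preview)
Your argument is correct. Both approaches reduce to a Chernoff-type tail bound on $Z=|A\cap S|$ with mean $Rk$; the difference is in how the dependence among the indicators $\one[i\in A]$ is handled.

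You invoke negative association (equivalently, Hoeffding's domination of the hypergeometric by the binomial) to apply Theorem~\ref{thm:chernoff} verbatim to sampling without replacement. This is clean and direct, at the cost of importing a standard fact not stated in the paper. The paper instead sidesteps the dependence entirely by ``Poissonizing'': it samples $A$ by including each element independently with probability $k/d$, so that the $Z_i$ are genuinely independent and Theorem~\ref{thm:chernoff} applies literally. It then bounds both $Z$ and $|A|$ separately (since $|A|$ is now random), and finally conditions on the event $|A|=k$, which has probability $\Omega(1/\sqrt{k})$; the resulting $O(\sqrt{k})$ loss is absorbed into the exponent. Your route is shorter and avoids the conditioning gymnastics; the paper's route is self-contained given only the independent-variable Chernoff bound it states. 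Your closing remark that $Rk\geq 1$ is not actually needed for the bound to hold, but it is harmless.
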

\begin{proof}
Both of the items 
are immediate 
consequences of Chernoff's 
inequality. The arguments
are essentially identical, and we give a proof of the first item only.

To see this, sample $A\subseteq [d]$ 
by including each element in $A$ with probability 
$k/d$. Let $I\subseteq [d]$ be the set of $i\in [d]$ such that $f(i)\neq g(i)$, and for each $i\in I$ define the random variable $Z_i$ to be 
the indicator of $i\in A$.
Define $Z = \sum\limits_{i\in I} Z_i$,
and note that $\Delta_A(f,g) = \frac{1}{|A|}Z$. Noting that 
$\E[Z] = \frac{k|I|}{d} = Rk$, by Theorem~\ref{thm:chernoff}
we get that 
$\Pr_{}\big[Z\geq 1.1Rk]\leq 2^{-\Omega(Rk)}$; 
also, by another application of
Theorem~\ref{thm:chernoff}
we get that $|A|\geq 0.9 k$
with probability $1-2^{-\Omega(k)}$. It follows
that except with probability $2^{-\Omega(Rk)}$ we 
have that $\Delta_A(f,g)\leq \frac{1.1 Rk}{0.9 k}\leq 2R$. The probability that $|A| = k$ is $\Omega(1/\sqrt{k})$, 
and conditioned on that 
$A$ is distributed 
as $A\subseteq_k[d]$, 
hence we get that the 
probability in the first
item is at most $O\left(\sqrt{k}2^{-\Omega(Rk)}\right)=2^{-\Omega(Rk)}$.
\end{proof}

The second claim 
asserts that if two functions $f$ and $g$ 
are relatively far,
then there are not many
$k$-sets $A$ on which 
they roughly agree. More 
precisely:
\begin{claim}\label{claim:dist-vs-agr}
Suppose that 
$F\colon X(k)\to \{0,1\}^k$ is an assignment, 
that $D\in X(d)$ is a face and that $f,g\colon D\to\{0,1\}$ are 
functions such that $\dist(f,g) > C\nu$, where $\nu \in (0,1), C \geq 6$. Then 
\[
\Pr_{A\subseteq_k D}
\big[A\in \Agr_\nu(f, F) \cap \Agr_\nu(g, F)\big] \leq 2^{-\Omega(Ck\nu)}.
\]
\end{claim}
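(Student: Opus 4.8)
The plan is to reduce the event in question to a purely combinatorial statement about how the symmetric difference of $f$ and $g$ meets a random $k$-subset, and then to invoke the concentration estimate already isolated in Claim~\ref{claim:dist-chernoff}. The key observation is that the assignment $F$ is only a ``pivot'': if $A\in \Agr_\nu(f,F)\cap \Agr_\nu(g,F)$, then by definition $\Delta_A(f,F[A])\le \nu$ and $\Delta_A(g,F[A])\le \nu$, so the triangle inequality for fractional Hamming distance gives $\Delta_A(f,g)\le 2\nu$. Hence
\[
\Pr_{A\subseteq_k D}\big[A\in \Agr_\nu(f, F) \cap \Agr_\nu(g, F)\big]
\le \Pr_{A\subseteq_k D}\big[\Delta_A(f,g)\le 2\nu\big],
\]
and $F$ plays no further role.

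Next I would compare $2\nu$ with $R:=\dist(f,g)$. By hypothesis $R>C\nu$ with $C\ge 6$, so $2\nu<\tfrac{2}{C}R\le \tfrac{R}{3}<\tfrac{R}{2}$, and therefore the right-hand side above is at most $\Pr_{A\subseteq_k D}[\Delta_A(f,g)<R/2]$. Applying the second item of Claim~\ref{claim:dist-chernoff}, with the ground set $D$ (of size $d$) in place of $[d]$, bounds this by $2^{-\Omega(Rk)}$, and since $R>C\nu$ we get $2^{-\Omega(Rk)}\le 2^{-\Omega(Ck\nu)}$, which is the desired bound.

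The only point that needs a little care is the mild hypothesis $k\ge 1/R^2$ under which Claim~\ref{claim:dist-chernoff} is stated; I would dispose of the complementary regime by noting that if $Ck\nu$ (equivalently $Rk$, up to constants) is smaller than a suitable absolute constant, then $2^{-\Omega(Ck\nu)}=\Omega(1)$ and the claimed inequality holds trivially, whereas once $Rk$ is at least that constant we have $k\ge 1/R^2$ since $R\le 1$, so Claim~\ref{claim:dist-chernoff} applies. Alternatively one can simply re-run the lower-tail Chernoff estimate directly on the hypergeometric variable $|A\cap I|$, where $I=\{i\in D: f(i)\neq g(i)\}$ has $|I|=Rd$ and $\E_{A\subseteq_k D}[|A\cap I|]=Rk$ --- this is verbatim the computation inside the proof of Claim~\ref{claim:dist-chernoff}. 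I do not anticipate any genuine obstacle: the statement is essentially a triangle inequality followed by an already-established concentration bound, and the ``hard part'' is only bookkeeping the constant $C$ through the inequalities $2\nu\le R/3$.
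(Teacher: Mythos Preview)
Your proposal is correct and follows essentially the same approach as the paper: both arguments combine the triangle inequality $\Delta_A(f,g)\le \Delta_A(f,F[A])+\Delta_A(g,F[A])\le 2\nu$ with the lower-tail bound of Claim~\ref{claim:dist-chernoff}. The paper applies the concentration first and then rules out membership in both agreement sets via the triangle inequality, while you reverse the order; your handling of the side condition $k\ge 1/R^2$ is in fact more careful than the paper's, which invokes Claim~\ref{claim:dist-chernoff} without comment.
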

\begin{proof}
By Claim~\ref{claim:dist-chernoff}, sampling $A\subseteq_{k}D$
we get that $\Delta_A(f,g)\geq \frac{C\nu}{2}$
with probability 
$1-2^{-\Omega(C\nu k)}$; we
claim that such $A$ 
cannot both be in $\Agr_\nu(f, F)$ and 
in $\Agr_\nu(g, F)$. 
Indeed, otherwise 
we would get that
\[
\frac{C\nu}{2}
\leq
\Delta(f|_{A}, g|_{A})
\leq 
\Delta(F[A], f|_{A})
+
\Delta(F[A], g|_{A})
\leq 2\nu,
\]
and contradiction since $C \geq 6$.
\end{proof}

\subsection{Proof of Lemma~\ref{lem:agr-to-list-agr}: Reduction from agreement to list agreement testing}\label{sec:agr-to-list-agr}

\subsubsection{Localizing to a Johnson}\label{sec:localizing}
The first step of the proof 
is to localize to a random 
$d$-face $D\sim \mu_d$, 
and show that with probability close to $1$, 
the assignment $F$ passes
the direct product test inside $T$ with noticeable probability.
More precisely:

\begin{lemma}\label{lem:agr-test-sampling}
If $(k,s)$-Agreement-Test~\ref{agr-test-hdx} on $F$ passes with probability $\delta$, then
\[\Pr_{D \sim X(d)}
\big[\text{the }(k,s)-\text{direct product test passes with probability } \geq \delta^2/16 \text{ inside }D\big] \geq 1-o(1).\]
\end{lemma}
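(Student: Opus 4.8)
The plan is to use the sampling/expander-mixing properties of the complex $X$ to upgrade a weak global guarantee (that on average the test passes with probability $\delta$) into a strong pointwise guarantee (that on almost all $d$-faces $D$ the test passes with probability $\Omega(\delta^2)$). First I would set up the right bipartite graph. Let me write $W$ for the set of ``test configurations'' $(I,A,A')$ used by the $(k,s)$-agreement test, i.e.\ $I\in X(s)$ together with $I\subseteq A,A'$ of size $k$, weighted by the natural distribution coming from $\mu_d$. Since each such configuration is contained in a random $d$-face (the marginal of $D$ conditioned on containing $A\cup A'$ is $\mu_d$ restricted appropriately, and conversely sampling $D\sim\mu_d$ and then $(I,A,A')$ inside $D$ reproduces the test distribution), I get a bipartite graph between $X(d)$ and $W$ that is essentially a truncation of the down-walk $D^d_{k}$ composed over the two copies of the $k$-face. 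By Lemma~\ref{lem:spectral_gap_of_graphs_from_HDX}, the relevant down operators have singular value $O(\sqrt{k/d})+\poly(d)\gamma$, which is $o(1)$ since $k\ll d$ and $\gamma$ is a small enough function of $d$; so this bipartite graph is an excellent sampler.

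The core of the argument is then a standard ``expander implies pointwise-from-average'' step, using Lemma~\ref{lem:sampling} (or equivalently a second-moment/expander-mixing computation as in Lemma~\ref{lem:bip-eml}). Concretely: let $B\subseteq W$ be the set of accepting configurations, so $\mu(B)=\delta$ by hypothesis. For each $d$-face $D$, ${\sf pass}_d(F;D)$ is exactly $\Pr_{w\sim E\mid D}[w\in B]$. Apply Lemma~\ref{lem:sampling} with the bipartite sampler above, in the direction from $W$ (carrying the set $B$) to $X(d)$: the set of $D$ for which $\Pr_{w\sim E\mid D}[w\in B] < \delta-\eps$ has measure at most $\lambda^2\delta/\eps^2$. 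Taking $\eps=\delta/2$ gives that all but a $\lambda^2 \cdot \frac{4}{\delta}$ fraction of $d$-faces have ${\sf pass}_d(F;D)\geq \delta/2$. Since $\lambda=o(1)$ relative to any fixed function of $\delta$ (because $d$ is taken huge compared to everything and $\gamma$ is tiny), this exceptional fraction is $o(1)$. I would then note that $\delta/2 \geq \delta^2/16$ for $\delta\le 1$, which is all the lemma claims; so in fact I prove the slightly stronger bound ${\sf pass}_d(F;D)\geq \delta/2$ on a $1-o(1)$ fraction of $D$, and weaken it to $\delta^2/16$ only to match the statement used downstream (the $\delta^2$ form is presumably chosen so that the analogous statement for the test \emph{inside} $D$, after accounting for the total-variation discrepancy noted in the footnote about the distribution of $(A,B)$ inside $T$ versus inside $X$, still holds with room to spare).

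The one genuine subtlety — and the step I'd be most careful about — is matching the distribution of the test inside a $d$-face $D$ to the distribution of the global test. As the excerpt's footnote flags, the probability that $A\cap A'=I$ differs between the two ($1-\Theta(k/d)$ globally vs.\ $1-\Theta(k/d)$ inside $D$ — actually these match at the $d$-level, the discrepancy arises only for smaller localization faces $T$). Since we localize all the way to $d$-faces here, the two distributions genuinely agree, so this is a non-issue for Lemma~\ref{lem:agr-test-sampling} itself; the $\delta^2/16$ slack is carried for the later localization to size-$t$ faces $T$ where the TV gap is real. I would therefore present the proof cleanly at the $d$-level: define the sampler, invoke Lemma~\ref{lem:sampling}, and conclude. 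No further calculation is needed beyond checking that $\lambda^2/\delta^2 = o(1)$ under the stated parameter regime, which is immediate from Lemma~\ref{lem:spectral_gap_of_graphs_from_HDX} and the assumption $d\geq \poly(k)2^{\poly(1/\delta)}$ with $\gamma$ a sufficiently small function of $d$.
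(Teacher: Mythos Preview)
Your approach is correct but takes a genuinely different route from the paper. The paper does \emph{not} apply the sampling lemma directly to the bipartite graph $(X(d),W)$; instead it passes through an intermediate level $\sqrt{d}$. Concretely, it first compares the global test distribution to the one obtained by first sampling $B\sim\mu_{\sqrt{d}}$ and running the test inside $B$ (these agree up to $O(k^2/\sqrt{d})$ in TV), then uses a Markov/averaging step to extract a set $\mathcal{B}\subseteq X(\sqrt{d})$ of measure $\geq\delta/4$ on which $p_B\geq\delta/2$, and only then applies Lemma~\ref{lem:sampling} to the bipartite graph $(X(d),X(\sqrt{d}))$ to conclude that most $D$ see at least a $\delta/8$ fraction of $B\in\mathcal{B}$, giving ${\sf pass}_d(F;D)\geq(\delta/8)(\delta/2)=\delta^2/16$. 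So the square in $\delta^2/16$ is an artifact of this Markov-then-sample detour, not of any TV slack as you speculate; your direct route legitimately yields the stronger bound ${\sf pass}_d(F;D)\geq\delta/2$ on a $1-o(1)$ fraction of $D$, which is only better for everything downstream.

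The one place your write-up is thin is the spectral bound for the bipartite graph $(X(d),W)$: Lemma~\ref{lem:spectral_gap_of_graphs_from_HDX} as stated speaks only about up/down operators between \emph{levels} of $X$, not about the graph to test configurations. The clean way to close this is to note that the induced walk $D\to w\to D'$ on $X(d)$ sends $D$ to a random $D'\supseteq A\cup A'$; conditioned on $|A\cup A'|=j$ (which is uniform over $j$-subsets of $D$ by symmetry), this is exactly the down-up walk $U^d_j D^d_j$. Hence the full walk is a convex combination of PSD operators sharing $\mathbf{1}$ as top eigenvector, each with second eigenvalue $\leq j/d+\poly(d)\gamma\leq 2k/d+\poly(d)\gamma$, so the mixture inherits the same bound and $\lambda^2=O(k/d)+\poly(d)\gamma=o(1)$ as you need.
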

\begin{proof}
Let $\mathcal{D}_1$ be the distribution on $(A,A',I)$ induced by Agreement-Test~\ref{agr-test-hdx}, and consider the following distribution $\mathcal{D}_2$ over $(A,A',I)$: 
\begin{enumerate}
\item Sample $B \sim \mu_{\sqrt{d}}$.
\item Sample $I \subseteq D$ of size $s$ uniformly.
\item Sample $I \subseteq A,A' \subseteq B$ of size $k$ uniformly.
\end{enumerate}
Note that conditioned on $|A\cap A'| = s$, 
the distributions $\mathcal{D}_1$ and $\mathcal{D}_2$ are identical. Thus, 
as the probability 
of this event is $1-O(k^2/\sqrt{d}) = 1-o(1)$ in both distributions, it follows
that the statistical distance between $\mathcal{D}_1$ and $\mathcal{D}_2$ is $o(1)$. Therefore,
\[\Pr_{(A,A',I) \sim D_2}\big[F[A]|_I = F[A']|_I\big] \geq \delta-o(1).\]
Denote by $\mathcal{D}_2(B)$ the distribution on $(A,A',I)$ conditioned on sampling $B$, and by $p_B$ 
the probability that 
$F[A]|_I = F[A']|_I$ if 
$B$ was chosen. By 
an averaging argument, with probability at least $\frac{\delta}{4}$ over 
the choice $B \sim \mu_{\sqrt{d}}$ 
we have that $p_B\geq \frac{\delta}{2}$; we call 
such $B$ good, and 
denote the set of good 
$B$'s by $\mathcal{B}$.
 
By Lemma~\ref{lem:sampling} we get that
\[\Pr_{D \sim \mu_d}\Big[\Pr_{B\subseteq_{\sqrt{d}}D}[B\in \mathcal{B}] \geq \frac{\delta}{8}\Big] \geq 1-O\left(\frac{1}{\sqrt{d}}+\gamma\right)
=1-o(1).\]
Fix a $d$-face $D$ satisfying the above event. Thus, picking $B \subset_{\sqrt{d}} D$ and $(A,A',I) \sim \mathcal{D}_2(B)$ passes the direct product test with probability at least $\frac{\delta^2}{8}$. Let this distribution be $\mathcal{D}_2(D)$. As before, letting the distribution $\mathcal{D}_1(A)$ be 
the distribution over $(A,A',I) \sim D_1$ conditioned on sampling $D$, the statistical distance between $\mathcal{D}_1(D)$ and $\mathcal{D}_2(D)$ is $o(1)$. Therefore we get that,
\[\Pr_{D \sim \mu_d}\big[
\text{the }(k,s)-\text{direct product test} \text{ passes w.p. } \geq \delta^2/8-o(1) \text{ inside }D\big] \geq 1-o(1),\]
which completes the proof.
\end{proof}
We refer to a $d$-face
$D\in X(d)$ for which
the event in Lemma~\ref{lem:agr-test-sampling} holds as good,
and thus conclude that 
$1-o(1)$ fraction of 
the $d$-faces are good.
Note that the above argument would also 
work for $d/2$-faces, 
and thus we similarly 
define the notion of
good $d/2$-faces.

\subsubsection{Getting a list on each good Johnson and generating a gap}\label{sec:getting_a_list}
Fix a good $d$-face $D$, 
and consider the assignment $F$ when restricted to $k$-sets inside $k$. For 
notational convenience, 
we denote this restricted 
assignment by $F_D$. 
Thus, the event in 
Lemma~\ref{lem:agr-test-sampling} translates to
saying that the direct
product tester over 
the Johnson scheme
passes inside $D$ with
noticeable probability.
Thus, using direct product
testing results over the Johnson scheme, we may
``explain'' this consistency via correlations of $F_D$
with true direct product 
functions. Towards this end, we use a result due to~\cite{DinurG08} (see also~\cite{ImpagliazzoKW09}, who state a version 
that is more convenient for our purposes).

\begin{theorem}\label{thm:DG}
Suppose that $F_D$ passes the $(k,\sqrt{k})$ direct product test in $D$ with probability  $\eps$. Then there is a function $g: [d] \rightarrow \{0,1\}$ such that  
\[
\Pr_{A\subseteq_k D}
\big[
\Delta(g|_{A},F_D[A]) \leq 1/k^{\Omega(1)}
\big]\geq \Omega(\eps^6).
\]
\end{theorem}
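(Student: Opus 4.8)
The final statement is essentially the direct product testing theorem of Dinur and Goldenberg~\cite{DinurG08} in the low-acceptance (``$1\%$'') regime, with the ground set taken to be the $d$-face $D$: inside $D$ the $(k,\sqrt k)$ test samples $k$-subsets $A,A'$ of $D$ with $|A\cap A'|=\sqrt k$ according to a distribution that is $o(1)$-close in total variation to the uniform one, so we are exactly in their setting. The plan is therefore to reproduce that proof. The first step is to re-sample the test so as to expose the intersection: draw $I\subseteq D$ of size $\sqrt k$ first, and then $A,A'$ of size $k$ containing $I$ independently. Writing $p_{I,\tau}=\Pr_{A\supseteq I}[F_D[A]|_I=\tau]$ for $\tau\in\{0,1\}^I$, the acceptance probability equals $\E_I\sum_\tau p_{I,\tau}^2$, which is at least $\eps$. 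A two-step averaging (first over $I$, then using $\sum_\tau p_{I,\tau}=1$ to extract the heaviest $\tau$) then produces a ``heavy core'': a pair $(I_0,\tau_0)$ for which the family $\cA_0=\{A:\ I_0\subseteq A,\ F_D[A]|_{I_0}=\tau_0\}$ has relative density $\Omega(\eps)$ among $k$-sets containing $I_0$.

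Next I would define the candidate global function $g\colon D\to\{0,1\}$ by plurality over the heavy core: $g(j)$ is the more common value of $F_D[A]_j$ over $A\in\cA_0$ with $j\in A$ (and $g|_{I_0}=\tau_0$). The heart of the argument is to show that $g$ is locally consistent: for all but an $o(1)$ fraction of $A\in\cA_0$, the table $F_D[A]$ agrees with $g|_A$ on a $(1-o(1))$ fraction of the coordinates of $A$. This is proved by a double-sampling / expander-mixing argument on the bipartite containment graph between $k$-sets and $\sqrt k$-sets (the relevant singular value bound coming from Lemma~\ref{lem:spectral_gap_of_graphs_from_HDX}): if a noticeable fraction of $A\in\cA_0$ had $\Omega(k)$ coordinates $j$ with $F_D[A]_j\neq g(j)$, then picking a second $A'\in\cA_0$ whose intersection with $A$ is a uniformly random $\sqrt k$-subset of $A$ would, with constant probability, place such a ``bad'' coordinate of $A$ into $A\cap A'$ while $F_D[A']$ carries the plurality value $g(j)$ there; this forces $F_D[A]|_{A\cap A'}\neq F_D[A']|_{A\cap A'}$ and contradicts the density of $\cA_0$ inside the heavy core (equivalently, the fact that random pairs from $\cA_0$ pass the test). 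Quantifying this yields a set of $k$-sets of measure $\Omega(\eps^{O(1)})$ on which $\Delta(g|_A,F_D[A])=o(1)$.

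The remaining step is to boost the per-coordinate error from a small constant down to $k^{-\Omega(1)}$. For this I would run a self-correction step: re-apply the test to the pair $(g,F_D)$ and use the sampling (pseudorandomness) properties of the Johnson graph on $D$ to argue that the set of $A$ for which $\Delta(g|_A,F_D[A])$ lies strictly between $k^{-\Omega(1)}$ and the small constant above has vanishing measure, so almost all of the $\Omega(\eps^{O(1)})$ ``good'' sets in fact achieve error $k^{-\Omega(1)}$. Tracking the constants through the averaging steps (each costs a bounded power of $\eps$) gives agreement within $k^{-\Omega(1)}$ on $\Omega(\eps^6)$ of the $k$-sets, as stated; the formulation of~\cite{ImpagliazzoKW09} may be used here instead and is a bit more convenient for bookkeeping. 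I expect the main obstacle to be exactly this last combination --- controlling the loss in the success probability while simultaneously driving the per-coordinate disagreement all the way down to $k^{-\Omega(1)}$ --- which is the technical core of~\cite{DinurG08}; I would import it essentially verbatim rather than re-deriving it.
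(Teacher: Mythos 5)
The paper does not re-prove this statement at all: it is cited directly from Dinur--Goldenberg~\cite{DinurG08} (in the formulation of~\cite{ImpagliazzoKW09}), and your plan to invoke that theorem with $D$ as the ground set is exactly what the paper does, so at the level of ``approach'' you match.

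The one place your sketch of the DG argument goes astray is the final ``self-correction'' step. You claim you would show that the set of $A$ for which $\Delta(g|_A,F_D[A])$ lies strictly between $k^{-\Omega(1)}$ and a small constant has vanishing measure. There is no reason this should hold --- nothing in the test forces a dichotomy of error rates, and this is not how the $k^{-\Omega(1)}$ bound is obtained. In DG/IKW the per-coordinate error bound comes directly out of the quantitative version of your local-consistency/double-sampling step: if an $\Omega(\eps^{O(1)})$-fraction of the heavy core $\cA_0$ disagrees with the plurality function $g$ on $\geq \beta k$ coordinates, then for a random pair $A,A'\in\cA_0$ with $|A\cap A'|=\sqrt k$, with probability roughly $\min(1,\beta\sqrt k)$ a disagreeing coordinate of $A$ falls into $A\cap A'$, and with constant conditional probability $F_D[A']$ carries the plurality value there, causing rejection. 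Balancing this against the acceptance rate inside the core already forces $\beta\lesssim \poly(1/\eps)/\sqrt k = k^{-\Omega(1)}$ --- the $\sqrt k$ intersection size is precisely what drives the per-coordinate error that low, so no separate boosting step is needed (nor available). Since you explicitly plan to import the DG proof verbatim for this part, your overall plan is sound, but this step of the sketch should be corrected or removed.
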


Theorem~\ref{thm:DG} by itself is not enough for us, and we need an 
idea that is often useful 
in conjunction with such results: list decoding. We wish to consider all 
direct product functions 
that are correlated with 
$F_D$ and have these as 
the lists. Alas, there is a technical issue: the number of direct product functions that are correlated with $F_D$ need not be bounded in terms of $\eps$, the probability that the test passes. 
To remedy this issue we require
the notion of $\eta$-covers, defined below.
\begin{definition}
Let $\mathcal{F}\subseteq \mathcal{G}$ be two
families of functions 
from $[d]$ to $\{0,1\}$.
We say that $\mathcal{F}$
is an $\eta$-cover for 
$\mathcal{G}$ if for
any $g\in \mathcal{G}$
there exists $f\in\mathcal{F}$
such that $\Delta(f,g)\leq \eta$.
\end{definition}

We are now ready to present a procedure that, 
given a good $d$-face $D$, generates a short list of functions that
``explain'' most of the
probability that $F_D$
passes the direct product test inside $D$, and which is also short. 
The procedure takes 
as input a restriction
of the assignment $F$ 
to a face $D$, which below
we denote by $G$, 
and finds one by one 
direct product functions 
that are correlated with $G$, following by randomizing $G$ at appropriate places.
\begin{mdframed}
\begin{algorithm}
The short list algorithm.
\label{algo-gap} \mbox{}
\begin{description}
\item[Input:] $G: \binom{[d]}{k} \rightarrow \{0,1\}^k$, $\delta>0$, $r \in \mathbb{N}$, $\eta \in (0,1)$.
\item [Output:] List of functions $\{f_1,\ldots,f_m\}$ from $[d] \rightarrow \{0,1\}$.
\item [Operation:]\mbox{}
\begin{enumerate}
    \item Set $t = k^{-c}$ for $0 < c < 1$, $\delta_0= \Theta(\delta^6)$, $\widetilde{G}_0 = G$, and initialize $L_1, I_1 =\emptyset$.
    \item For $i \in \{0,\ldots, \lfloor1/\delta^{80}\rfloor\}$:
    \begin{itemize}
        \item  If there exists $f$ with $\agr_t(f, \widetilde{G}_i) > \delta_i$, add $i$ to $I_1$ and $f_i$ to $L_1$.
        \item Obtain $\widetilde{G}_{i+1}$ by randomizing $\widetilde{G}_i$ on $k$-sets $A\in \Agr_t(f,\widetilde{F}_i)$.
        \item $\delta_{i+1} = \delta_i - \delta^{100}$.
    \end{itemize}
    \item Create lists $I_2, L_2$ as follows: for all $i \in I_1$, add $i$ to $I_2$ and $f_i$ to $L_2$ iff $i \geq r$.
    \item Construct a graph $G$ whose vertices are $L_2$, and $f,g\in L_2$ are adjacent if $\Delta(f,g) < \eta$. Take a maximal independent set in $G$ and add the corresponding functions to $L_3$.
    \item Output $L_3$.
\end{enumerate}
\end{description}
\end{algorithm}
\end{mdframed}

The following lemma summarizes some 
of the basic properties 
of the short list algorithm. We will use the parameters and notation specified in the algorithm throughout this section.

\begin{lemma}\label{lem:algo1}
 When ran on $G = F_D$ 
 for a good $d$-face $D$
 with parameter $\Theta(\delta^{2})$ in place of $\delta$, setting $\delta' = \Theta(\delta^{12})$, with probability $1-o(1)$
 Algorithm~\ref{algo-gap} outputs a list $L = \{(i,f_i)\}_{i \in I}$ with $I \subset \{0,\ldots,1/\delta'^{80}\}$ such that,
\begin{enumerate}
    \item $0 \neq |I_1| \leq \frac{2}{\delta'}$.
    \item For all $i \in I_1$, $\agr_t(f_i, G) > \delta'-i\delta'^{100}-o(1)$.
    \item If $i \notin I_1$ then for all $g$, $\agr_{t}(g, \widetilde{G}_i) < \delta_i$.
    \item For all $i \in \lfloor 1/\delta'^{80}\rfloor$ and $B\subseteq_{d/2} A$, if $g\colon B\to\{0,1\}$ is a function such that 
    $\min_{j\in I_1, j\geq i}\Delta(g, f_j|_B) > \Omega(\log(1/\delta')t)$  and $\agr_t(g,\widetilde{G}_{i+1}|_B) < \theta$, then $\agr_t(g, G|_B) < \theta+\exp(-\Omega(tk\log(1/\delta')))$.
\end{enumerate}
\end{lemma}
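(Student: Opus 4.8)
The plan is to prove all four items under a single high‑probability ``good event'' controlling the only randomness in Algorithm~\ref{algo-gap} (the entries written during the randomizing steps), after which each item becomes an elementary Chernoff estimate. Model this randomness by independent uniform strings $R_A^{(\ell)}\in\{0,1\}^k$, one for each $k$-face $A\subseteq D$ and each ``epoch'' $\ell$, so that the $\ell$-th time a face $A$ is randomized its table entry becomes $R_A^{(\ell)}$; recall that for a single $d$-face $D$ the relevant measure on $k$-faces is uniform over $\binom{D}{k}$, and that at most $\poly(1/\delta')$ epochs can occur. Let $\mathcal{E}$ be the event that for every $h\colon D\to\{0,1\}$ and every epoch index $\ell$, the fraction of $k$-faces $A\subseteq D$ with $\Delta(h|_A,R_A^{(\ell)})\le t$ is at most $\theta_0:=2^{-\sqrt{k}}$. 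The first step is to show $\Pr[\mathcal{E}]\ge 1-o(1)$: since $t=k^{-c}$ the binary entropy of $t$ is $o(1)$, so a uniform string in $\{0,1\}^k$ lies within fractional distance $t$ of a fixed string with probability $2^{-(1-o(1))k}$; a Chernoff bound over the $\binom{d}{k}$ choices of $A$ bounds the failure probability of a fixed pair $(h,\ell)$ by $\exp(-\Omega(\theta_0\binom{d}{k}k))$, which beats the union‑bound cardinality $2^{d}\cdot\poly(1/\delta')$ because $\log_2\binom{d}{k}\ge k\log_2(d/k)$ dwarfs $d=\poly(k)2^{\poly(1/\delta)}$ once $k$ is large (here $\delta\ge 1/\log k$, so $d$ is only quasipolynomial in $k$). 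Everything below is conditioned on $\mathcal{E}$; an immediate corollary is that the fraction of $k$-faces ever randomized more than once is at most $(2/\delta')\theta_0=o(1)$, since a second randomization of $A$ at a step $j'$ forces $R_A^{(1)}$ to be within $t$ of $f_{j'}|_A$ for one of the (at most $|I_1|$) functions $f_{j'}$.

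For items~1--3, the plan is to track the ``untouched'' set $U_i=\{A\subseteq D:\widetilde{G}_i[A]=G[A]\}$, with $\mu(U_0)=1$, noting that $\delta_i$ stays in $[\delta'/2,\delta']$ throughout the loop (its total decrement being a high power of $\delta'$). Whenever $i\in I_1$, the set $\Agr_t(f_i,\widetilde{G}_i)$ has measure $>\delta_i$, and by $\mathcal{E}$ all but an $o(1)$ fraction of it lies inside $U_i$ (a face outside $U_i$ carries a fresh uniform entry, rarely $t$-close to the fixed $f_i$), so randomizing on it decreases $\mu(U_i)$ by at least $\delta_i-o(1)\ge\delta'(1-o(1))$; since $\mu(U_i)\ge 0$ this happens at most $2/\delta'$ times, giving the upper bound in item~1. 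For non‑emptiness of $I_1$: at $i=0$ we have $\widetilde{G}_0=F_D$, and goodness of $D$ means $F_D$ passes the Johnson direct product test inside $D$ with probability $\Theta(\delta^2)$ (the algorithm's input parameter), so Theorem~\ref{thm:DG} yields $g\colon D\to\{0,1\}$ with $\agr_\nu(g,F_D)\ge\Omega(\delta^{12})$ for $\nu=k^{-\Omega(1)}\le t$; since the algorithm's internal $\delta_0=\Theta(\delta^{12})=\delta'$ with a small enough constant, $\agr_t(g,\widetilde{G}_0)>\delta_0$, so $0\in I_1$. Item~2 is the same intersection estimate: for $i\in I_1$, $\Agr_t(f_i,\widetilde{G}_i)$ has measure $>\delta_i$ and loses only $o(1)$ measure upon intersecting with $U_i$, on which $\widetilde{G}_i=G$, hence $\agr_t(f_i,G)>\delta_i-o(1)$, i.e.\ item~2. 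Item~3 is just the negation of the loop's acceptance rule: $i\notin I_1$ means no $f$ has $\agr_t(f,\widetilde{G}_i)>\delta_i$, so $\agr_t(g,\widetilde{G}_i)\le\delta_i$ for every $g$.

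For item~4, fix $i$, a $(d/2)$-face $B\subseteq_{d/2}D$, and $g\colon B\to\{0,1\}$ as in the hypothesis. The tables $G$ and $\widetilde{G}_{i+1}$ agree on every $k$-face $A\subseteq B$ that was never randomized, so $\Agr_t(g,G|_B)$ and $\Agr_t(g,\widetilde{G}_{i+1}|_B)$ differ only within the randomized faces. For a randomized $A\subseteq B$, let $j=j(A)\in I_1$ be the step of its \emph{first} randomization; then $\widetilde{G}_j[A]=G[A]$ and $A\in\Agr_t(f_j,\widetilde{G}_j)$, hence $\Delta(f_j|_A,G[A])\le t$, so $A\in\Agr_t(g,G|_B)$ would force $\Delta(g|_A,f_j|_A)\le 2t$. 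The hypothesis that $g$ is $\Omega(\log(1/\delta')\,t)$-far from $f_j|_B$ (over the relevant indices $j\in I_1$) makes this rare: Claim~\ref{claim:dist-chernoff}(2), applied with ground set $B$, bounds $\Pr_{A\subseteq_k B}[\Delta_A(g,f_j)\le 2t]$ by $\exp(-\Omega(tk\log(1/\delta')))$. Summing over the at most $|I_1|\le 2/\delta'$ values of $j$ changes nothing (the factor is absorbed since $tk=k^{1-c}\to\infty$), so the measure of $\Agr_t(g,G|_B)\setminus\Agr_t(g,\widetilde{G}_{i+1}|_B)$ is at most $\exp(-\Omega(tk\log(1/\delta')))$, which together with $\agr_t(g,\widetilde{G}_{i+1}|_B)<\theta$ yields item~4.

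The genuine obstacle is the adaptivity of the choices $f_i$ (and $g$): they are selected only after the random entries are revealed. The plan absorbs this into $\mathcal{E}$, which is a union bound over \emph{all} $2^{d}$ functions $h\colon D\to\{0,1\}$; this is affordable precisely because concentration at the granularity $1/\binom{d}{k}$ of a single $k$-face is super‑exponentially stronger than $2^{-d}$ whenever $d$ is only quasipolynomial in $k$. Once $\mathcal{E}$ holds, every $f_i$ and $g$ is automatically controlled and the remaining work is the routine Chernoff bookkeeping already packaged in Claim~\ref{claim:dist-chernoff}.
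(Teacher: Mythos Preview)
Your proposal is correct and follows essentially the same route as the paper: a Chernoff-plus-union-bound over all $2^d$ functions on $D$ to control the randomized entries (you name this event $\mathcal{E}$ and track epochs explicitly; the paper conditions on the same event implicitly in its proof of~(1)), non-emptiness via Theorem~\ref{thm:DG}, the same measure-budget argument for $|I_1|\le 2/\delta'$, and the same distance-vs-agreement Chernoff estimate (your Claim~\ref{claim:dist-chernoff}, the paper's Claim~\ref{claim:dist-vs-agr}) for item~(4). One expository slip worth fixing: your union-bound justification should read ``$\theta_0\binom{d}{k}$ dwarfs $d$'' rather than ``$\log_2\binom{d}{k}$ dwarfs $d$''; the former is what your Chernoff bound actually delivers and holds for all $d\ge\poly(k)$, so the parenthetical about $d$ being only quasipolynomial in $k$ is unnecessary.
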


\begin{proof}
First note that by Theorem~\ref{thm:DG} we get that there is at least one function with $\agr_t(f) \geq \delta'$, therefore the list is non-empty. Let us start by proving the upper bound on the size.

\paragraph{Proof of (1):} At the $i^{th}$ iteration we add a function to the list only if $\agr_t(f_i,\tilde{G}_i) > \delta_i$ which is always at least $\delta'-\delta'^{20}$. Let $\cR \subseteq \binom{D}{k}$ be the $k$-sets that have been randomized in the algorithm so far, so $|\cR| \geq (\delta' - \delta'^{20}) {d \choose k}$. Using the Chernoff bound we get that every function $g\colon D\to \{0,1\}$ satisfies:
\[\Pr\left[\frac{|\Agr_t(g) \cap \cR|}{|\cR|} > \frac{2{k \choose tk}}{2^k}\right] \leq \exp\left(-\frac{{k \choose tk}}{2^k}\delta {d \choose k}\right) \leq \exp(-(d/4)^k).\]
Therefore by a union bound we get that with probability $1-o(1)$, for all functions on $D$ the above holds, and we condition on this event. Hence, the contribution of $\mathcal{R}$ to the 
agreement of function found in later steps in the procedure is always at most $o(1)$. Thus, each newly found function in the process 
increases the measure of $\mathcal{R}$ by at least $\delta' -\delta'^{20}-o(1)\geq \delta'/2$.
Therefore, with probability $1-o(1)$ the process terminates after at most $2/\delta'$ steps, which is thus also 
an upper bound on the list size $I_1$.

\paragraph{Proof of (2):} 
If we inserted $f$ into
the list at step $i$, 
then $\agr_t(f, \tilde{G}_i)\geq \delta'-i\delta'^{100}$. As we have already argued, with probability $1-o(1)$ at most $o(1)$ of this agreement comes 
from $k$-sets in which $\tilde{G}_i$ was randomized, and it follows that $\agr_t(f,G)\geq \delta'-i\delta'^{100} - o(1)$.

\paragraph{Proof of (3):} If $i\not\in I_1$ then 
the process terminated before step $i$, meaning 
that the assignment at
that time no longer was
$\delta_i$-correlated with
any direct product function.

\paragraph{Proof of (4):} Denote by $\cR_i$ the collection of all $k$-sets in which the assignment has been randomized in steps prior to the $i+1$th iteration, 
and consider $\widetilde{G}_{i+1}$. By Claim~\ref{claim:dist-vs-agr} for all $j \geq i, j \in I_1$ we get,
\begin{equation}
\Pr_{A\subseteq_k B}\big[A\in \Agr_t(g, G|_B) \cap \Agr_t(f_j|_B, G|_B)\big] \leq \exp(-\Omega(tk\log(1/\delta'))),    
\end{equation}

and so 
\[
\Pr_{A\subseteq_k B}\big[A\in \Agr_t(g, G|_B) \cap \cR_i\big] \leq 1/\delta'\cdot \exp(-\Omega(tk\log(1/\delta')))+o(1)\leq \exp(-\Omega(tk\log(1/\delta'))).
\]
It follows from the above that
\begin{align*}
\Pr_{A\subseteq_k B}
\big[A\in \Agr_t(g, G|_{B})]
&=
\Pr_{A\subseteq_k D}\big[A\in \Agr_t(g, G|_B) \cap \cR_i\big]
+
\Pr_{A\subseteq_k D}\big[A\in \Agr_t(g, G|_B) \cap \overline{\cR_i}\big]\\
&\leq 
\exp(-\Omega(tk\log(1/\delta')))
+
\Pr_{A\subseteq_k D}\big[A\in \Agr_t(g, \tilde{G}_{i+1}|_B) \cap \overline{\cR_i}\big]\\
&\leq
\exp(-\Omega(tk\log(1/\delta')))
+
\Pr_{A\subseteq_k D}\big[A\in \Agr_t(g, \tilde{G}_{i+1}|_B)\big],
\end{align*}
which is at most
$\theta+\exp(-\Omega(tk\log(1/\delta')))$.
\end{proof}

We will now consider 
the run of the short list
algorithm on a $d$-face with
various options for parameters, and its relationship with 
direct product functions on 
$d/2$ sub-faces. We will especially 
care about the relationship between the functions in the list
of the $d$-face $D\in X(d)$, 
and direct product functions on its $d/2$-faces that have large
correlation with the assignment $F$. In a sense, 
we will want to show that these
are ``the same functions''; ultimately, this is where 
the local consistency of the
lists comes from. 

Towards this end, we will run
the algorithm above for
$D$ faces, and denote the outputted list by $L[D]$, 
For $d/2$ sub-faces of $D$, 
we will let $L[B]$ be an 
$\eta$-cover for functions
that have sufficient agreement with $F|_{B}$.
The following lemma summarizes the properties
of such runs of the short list algorithm:
\begin{lemma}\label{lem:agr-johnson}
Let $\eps,\delta >0$, $\eta = 2^{-1/\delta^{1200}}$,
let $k$ be sufficiently large and let $d \geq \poly(k)\exp(\poly(1/\delta))$.
Suppose that $F_D$ passes the  $(k,\sqrt{k})$ direct product tester inside $D$ with probability at least $\delta$. 
Then choosing $r, i \sim \lfloor1/\delta^{80}\rfloor$ uniformly and
running Algorithm~\ref{algo-gap} with parameters $r$ and $\eta' = \delta^{-100i}\eta$ on $D$ and on all $d/2$ sub-faces, with 
probability $1-O(\delta^{68})$
the algorithm outputs a list $L[D]$ such that: \begin{enumerate}
    \item 
    \textbf{Non-empty, short list:} $0 \neq |L[D]| \leq 1/\delta'$, where $\delta'=\Theta(\delta^6)$.
    \item 
    \textbf{Significant correlation:} For all $f \in L$, $\agr_t(f, F_D) \geq \delta_r:= \delta'-r\delta'^{100}$, where $t=k^{-\Omega(1)}$.
    \item 
    \textbf{Large distance in the list:} $\Delta(L[D]) > \delta^{-100}\eta'$. 
    \item 
    \textbf{Downwards consistent:} $\Pr_{B\subseteq_{d/2} D}[\forall f \in L[D], \exists g \in L[B] \text{ with } \Delta(f|_B,g) \leq \eta'] \geq 1-o(1)$. In words, for each function in the list of $D$, projecting it onto a random
    $B\subseteq_{d/2} D$ yields a
    function which is very close
    to a function in the list of $B$.
    \item 
    \textbf{Upwards consistent:} $\Pr_{B \subseteq_{d/2} D}[\forall g \in L[B], \exists f \in L[D] \text{ with } \Delta(g,f|_B) \leq 2\eta'] \geq 1-o(1)$. In words, choosing 
    a random $B\subseteq D$, every function in the list 
    $L[B]$ is close to a projection of some function
    from the list $L[D]$.
\end{enumerate}
For each $B \subseteq_{d/2} D$, $L[B]$ is an $\eta'$-cover for functions on $B$ with $\agr_t(g,F|_B) > \delta_r-\delta^{200}$.
\end{lemma}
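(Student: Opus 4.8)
The plan is to run Algorithm~\ref{algo-gap} on $D$ and on every $d/2$-subface $B\subseteq_{d/2}D$ with the randomly chosen parameters, and to extract the five listed properties, together with the final cover claim, from the structural guarantees of Lemma~\ref{lem:algo1}, supplemented by three ingredients: a counting argument over the random choice of $i$ to produce the distance gap, expander sampling (Lemma~\ref{lem:sampling}, Lemma~\ref{lem:spectral_gap_of_graphs_from_HDX}) to pass between $D$ and a typical $B$, and the dense-CSP random sub-instance theorem (Theorem~\ref{thm:max-csp}) to quantify over \emph{all} functions on $B$ without a union bound. To begin, running the sampling argument of Lemma~\ref{lem:agr-test-sampling} \emph{inside the Johnson scheme on} $D$ shows that a $1-o(1)$ fraction of $B\subseteq_{d/2}D$ are themselves good, i.e.\ $F_B$ passes the $(k,\sqrt k)$-direct product test inside $B$ with probability $\Omega(\delta^2)$; I restrict to such $B$. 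Properties~(1) and~(2) are then read off Lemma~\ref{lem:algo1}: at least one function of agreement $\Omega(\delta^6)=\delta'$ is found by Theorem~\ref{thm:DG}, so $L[D]$ is non-empty, $|L[D]|\le 2/\delta'$ by Lemma~\ref{lem:algo1}(1), and every $f\in L[D]$ has $\agr_t(f,F_D)\ge\delta_r$ by Lemma~\ref{lem:algo1}(2). For property~(3) I would use the random choice of $i$: the set $L_2$ produced before step~4 does not depend on $i$ and has $O(1/\delta'^{2})$ pairwise distances, and each such distance ``conflicts'' with at most $O(1)$ of the $\sim\delta^{-80}$ candidate thresholds $\delta^{-100j}\eta$ (those $j$ with $\delta^{-100j}\eta\le\Delta(f,g)<\delta^{-100(j+1)}\eta$); hence for a random $i$, except with probability $O(\delta^{68})$, the distance-$<\eta'$ graph on $L_2$ coincides with the distance-$<\delta^{-100}\eta'$ graph, so the maximal independent set $L[D]$ taken in step~4 is automatically $(\delta^{-100}\eta')$-separated.

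For property~(4): by property~(2) each $f\in L[D]$ has $\agr_t(f,F_D)\ge\delta_r$, so a Chernoff/expander-sampling bound (Claim~\ref{claim:dist-chernoff}, Lemma~\ref{lem:spectral_gap_of_graphs_from_HDX}) gives $\agr_t(f|_B,F_B)\ge\delta_r-o(1)>\delta_r-\delta^{200}$ for a random $B$, simultaneously over the constantly many $f\in L[D]$, with probability $1-o(1)$; given the cover statement below, $f|_B$ is then within $\eta'$ of a member of $L[B]$, which is~(4). The cover statement itself — that $L[B]$ is an $\eta'$-cover for functions $g$ on $B$ with $\agr_t(g,F_B)>\delta_r-\delta^{200}$ — follows from the maximality built into step~4 of the run on $B$: if such a $g$ were far from every function found in $B$'s run, then Claim~\ref{claim:dist-vs-agr} together with Lemma~\ref{lem:algo1}(3) (applied to the fully randomized table $\widetilde{G}_{1/\delta^{80}}$) would force $\agr_t(g,F_B)\le\delta_{\mathrm{final}}+o(1)$, and this contradicts $\agr_t(g,F_B)>\delta_r-\delta^{200}$ because the random $r$ lies below $1/\delta^{80}$, so $\delta_r-\delta_{\mathrm{final}}\ge\delta^{100}>\delta^{200}$; hence $g$ is within $\eta'$ of a found function, which (after step~4) is within $\eta'$ of $L[B]$.

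Property~(5), upward consistency, is the crux. Fix $g\in L[B]$; by Lemma~\ref{lem:algo1}(2) for $B$'s run, $\agr_t(g,F_B)\ge\delta_r-o(1)$. Assume for contradiction that $g$ is $>2\eta'$-far from $f|_B$ for every $f\in L[D]$. Since $L[D]$ is a maximal $\eta'$-net of all functions $\{f_j\}_{j\in I_1}$ found in $D$'s run (for the relevant random $r$, step~3 retains all of them, except with probability $O(\delta^{74})$), $g$ is then $>\eta'$-far — in particular much farther than $\Omega(\log(1/\delta')\,t)$ — from $f_j|_B$ for \emph{every} $j\in I_1$. Now encode ``a global function on $[d]$ agrees, on a random $k$-set, with $\widetilde{G}_{1/\delta^{80}}$ up to a $t$-fraction of coordinates'' as a dense $k$-CSP $\Psi$ on variable set $[d]$ whose value equals $\max_h\agr_t(h,\widetilde{G}_{1/\delta^{80}})$, which is $\le\delta_{\mathrm{final}}$ by Lemma~\ref{lem:algo1}(3). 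Theorem~\ref{thm:max-csp}, applied to the random induced sub-instance $\Psi|_B$ with $q=d/2$ and $d\ge\poly(k)\exp(\poly(1/\delta))$, yields with probability $1-o(1)$ over $B$ that \emph{every} function $g'$ on $B$ has $\agr_t(g',\widetilde{G}_{1/\delta^{80}}|_B)\le\delta_{\mathrm{final}}+o(1)$. Feeding this into Lemma~\ref{lem:algo1}(4) — whose ``far from the randomized list'' hypothesis holds for $g$ — upgrades it to $\agr_t(g,F_B)\le\delta_{\mathrm{final}}+o(1)$, contradicting $\agr_t(g,F_B)\ge\delta_r-o(1)$ because $\delta_r-\delta_{\mathrm{final}}\ge\delta^{100}\gg o(1)$ when $r<1/\delta^{80}$. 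Hence $g$ lies within $O(\log(1/\delta')\,t)+2\eta'\le 2\eta'$ (absorbing the lower-order term) of some $f|_B$, $f\in L[D]$, which is~(5). Collecting failure events (sampling: $o(1)$; the gap over $i$: $O(\delta^{68})$; $r<1/\delta^{80}$ and $r>\max I_1$: $O(\delta^{74})$; the CSP transfer: $o(1)$) gives the claimed success probability $1-O(\delta^{68})$.

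The genuinely delicate point is the one handled in the last paragraph: property~(5), and likewise the cover claim, require controlling the agreement of \emph{every} direct product function on the $d/2$-face $B$, of which there are $2^{\Theta(d)}$, so no union bound is available. This is precisely why Theorem~\ref{thm:max-csp} is invoked — it transports the value of the dense agreement CSP from $D$ to a typical induced sub-instance on $B$ — and why Lemma~\ref{lem:algo1}(4) is stated so as to convert agreement with the fully randomized table back into agreement with $F$. Everything else is bookkeeping with the parameter hierarchy ($\log(1/\delta')\,t\ll\eta'$, $\delta_r\approx\delta'$, $\delta^{100}\gg o(1)$, $d\ge\poly(k)\exp(\poly(1/\delta))$) and routine Chernoff / expander-mixing estimates.
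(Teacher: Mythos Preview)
Your proposal has the right architecture and correctly identifies the key ingredients: Lemma~\ref{lem:algo1}, the random choice of $i$ for the distance separation, sampling to pass agreement down to $B$, and the dense-CSP transfer (Theorem~\ref{thm:max-csp}) to avoid a union bound over $2^{\Theta(d)}$ functions. Items~(1)--(4) are essentially as in the paper.

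However, your argument for~(5) has a genuine gap. You assert that ``for the relevant random $r$, step~3 retains all of them, except with probability $O(\delta^{74})$'', and from this deduce that $L[D]=L_3$ is an $\eta'$-net for \emph{all} of $L_1$, hence that $g$ is far from $f_j|_B$ for every $j\in I_1$. This is false: step~3 keeps only those $f_j$ with $j\ge r$, and since $0\in I_1$ always (Theorem~\ref{thm:DG} supplies a function of agreement $\ge\delta_0$ at the very first step), the function $f_0$ is discarded whenever $r>0$, i.e., with probability $1-O(\delta^{80})$. Thus $L_3$ is an $\eta'$-net only for $L_2=\{f_j:j\ge r\}$, not for $L_1$, and you cannot rule out that $g$ is close to $f_0|_B$. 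Without that, you cannot use Lemma~\ref{lem:algo1}(4) at the final index to pass from $\agr_t(g,\widetilde G_{\mathrm{final}}|_B)$ back to $\agr_t(g,F_D|_B)$: the sets randomized en route include $\Agr_t(f_0,\widetilde G_0)$, and if $g$ is close to $f_0|_B$ then $\agr_t(g,F_D|_B)$ can be large even though $\agr_t(g,\widetilde G_{\mathrm{final}}|_B)$ is small.

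The paper repairs this by placing the gap at step $r+1$ rather than at the end. Since $|I_1|\le 2/\delta'$ while $r$ ranges over $\sim 1/\delta^{80}$ values, with probability $1-O(\delta^{74})$ one has $r+1\notin I_1$, and then Lemma~\ref{lem:algo1}(3) gives $\max_h\agr_t(h,\widetilde G_{r+1})<\delta_{r+1}$. The CSP transfer (Lemma~\ref{lem:no-jump}) is applied to $\widetilde G_{r+1}$, not to $\widetilde G_{\mathrm{final}}$, yielding $\max_h\agr_t(h,\widetilde G_{r+1}|_B)<\delta_{r+1}+\delta^{200}$ for typical $B$. Now Lemma~\ref{lem:algo1}(4) at $i=r$ needs only farness from $\{f_j:j\ge r\}=L_2$, and \emph{that} does follow (via the maximal-independent-set property of step~4) from $g$ being $>2\eta'$-far from $L_3$; one concludes $\agr_t(g,F_D|_B)<\delta_{r+1}+\delta^{200}+\exp(\cdots)<\delta_r-\delta^{200}$, contradicting $g\in L[B]$. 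A secondary remark: in the paper $L[B]$ is simply \emph{defined} to be an $\eta'$-cover for $\{g:\agr_t(g,F|_B)>\delta_r-\delta^{200}\}$---the last line of the lemma is a definition, not a claim---so your separate argument for the cover property is unnecessary, and in any case it inherits the same gap.
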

The first four items in Lemma~\ref{lem:agr-johnson} 
are not too hard to establish; the fifth item however requires 
more care, and this is where we
are going to utilize results from
random sub-instances of max-$k$-CSPs. In particular, we require
the following lemma which follows
from results in~\cite{AlonVKK02} (and more precisely, from Theorem~\ref{thm:max-csp}).

\begin{lemma}\label{lem:no-jump}
For all $\zeta \in (0,1)$, $d \geq \poly(k)\exp(1/\zeta^2)$, and all functions $G: \binom{[d]}{k} \rightarrow \{0,1\}^k$ that satisfy $\agr_{t}(g,G) \leq \alpha$ for all $g\colon [d]\to\{0,1\}$, the following holds:
\[\Pr_{B\subseteq_{d/2} [d]}[\max_g \agr_t(g|_B,G|_B) < \alpha+\zeta] \geq 1-\poly(1/d).\]
\end{lemma}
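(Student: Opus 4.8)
The plan is to deduce this from Theorem~\ref{thm:max-csp} by encoding the question ``is there a global function $g$ with large $\nu$-agreement with $G$?'' as the value of a dense max-$k$-CSP, and then to argue that restricting to a random $d/2$-subset of the variables cannot increase the optimum by more than $\zeta$. First I would set up the CSP $\Psi$ on the variable set $[d]$, one Boolean variable per element, with exactly one constraint $P_A$ for every $k$-set $A \in \binom{[d]}{k}$; the predicate $P_A$ on the $k$ variables indexed by $A$ is the indicator that the restriction of the assignment to $A$ lies within relative Hamming distance $t$ of $F[A]$, i.e.\ $P_A(z_A) = \mathbb{1}[\Delta(z_A, G[A]) \le t]$. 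By construction, an assignment $z \in \{0,1\}^d$ corresponds to a global function $g$, the fraction of constraints it satisfies is exactly $\agr_t(g, G)$, and hence $\val(\Psi) = \max_g \agr_t(g, G) \le \alpha$ by the hypothesis of the lemma. This CSP has $\binom{d}{k}$ constraints each depending on a unique $k$-set of variables, so it is of the special dense form to which Theorem~\ref{thm:max-csp} applies (note in particular that a fixing of the $k$ variables in $A$ to a string $z_A$ satisfies at most one constraint, namely $P_A$, so we are in the regime that avoids the triple-exponential dependence).

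Next I would unwind what a random induced sub-instance is. Taking $Q \subseteq_{d/2} [d]$, the induced instance $\Psi|_Q$ keeps only the $d/2$ variables in $Q$ and only the constraints $P_A$ with $A \subseteq Q$; these are precisely the $k$-sets contained in $Q$, so $\val(\Psi|_Q) = \max_{g : Q \to \{0,1\}} \Pr_{A \subseteq_k Q}[\Delta(g|_A, G[A]) \le t]$. Now there is a small bookkeeping mismatch between ``$A \subseteq_k Q$ for $Q$ a uniform $(d/2)$-subset'' and ``$A \subseteq_{d/2}[d]$ then further structure'' in the way $G|_B$ is written in the lemma statement, but sampling $A$ uniformly from $\binom{Q}{k}$ with $Q$ uniform in $\binom{[d]}{d/2}$ is exactly the distribution $\mu_k$ restricted to $k$-sets inside a uniform $(d/2)$-face, which matches $\agr_t(g|_B, G|_B)$ with $B = Q$. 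So $\val(\Psi|_Q) = \max_g \agr_t(g|_B, G|_B)$ with $B = Q$, and the quantity we must control is exactly $\val(\Psi|_Q)$.

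Then I would invoke Theorem~\ref{thm:max-csp} directly with $\gamma = \zeta$ and, say, $\tau = 1/\poly(d)$ (more precisely, any $\tau$ we like; the statement of Lemma~\ref{lem:no-jump} only asks for probability $1 - \poly(1/d)$, and since the block size here is $q = d/2 \ge \poly(k/\tau\zeta)$ for $d$ large enough as in the hypothesis $d \ge \poly(k)\exp(1/\zeta^2)$, the condition on $d$ in Theorem~\ref{thm:max-csp} is met). This gives $\Pr_Q[\,|\val(\Psi|_Q) - \val(\Psi)| \le \zeta\,] \ge 1 - \poly(1/d)$, and combining with $\val(\Psi) \le \alpha$ yields $\val(\Psi|_Q) \le \alpha + \zeta$ with the same probability, which is the claimed bound $\max_g \agr_t(g|_B, G|_B) < \alpha + \zeta$ (strict inequality is harmless, or one runs the argument with $\zeta/2$).

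The only genuinely delicate point is the alignment of sampling distributions: one must be careful that the combinatorial notion of a random induced sub-instance on a uniformly random half of the variables coincides with the measure-theoretic notion of conditioning $\mu_k$ on lying inside a uniformly random $(d/2)$-face, and that the $\poly(k)\exp(1/\zeta^2)$ lower bound on $d$ in the hypothesis is large enough to satisfy both the ``$q \ge \poly(k/\tau\gamma)$'' and the ``$d \ge \poly(k/\tau)\exp(1/\gamma^2)$'' requirements of Theorem~\ref{thm:max-csp} with the chosen $\gamma = \zeta$ and $\tau = 1/\poly(d)$. Everything else is a direct substitution, so I do not expect any real obstacle beyond this bookkeeping.
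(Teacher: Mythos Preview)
Your proposal is correct and essentially identical to the paper's proof: both encode $\max_g \agr_t(g,G)$ as the value of a dense max-$k$-CSP with one constraint per $k$-subset, then invoke Theorem~\ref{thm:max-csp} with $\gamma=\zeta$ and $\tau=1/d^c$ to conclude that $\val(\Psi|_B)\le\val(\Psi)+\zeta\le\alpha+\zeta$ with high probability. Your extra bookkeeping about distribution alignment and parameter checking is fine but not something the paper belabors.
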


\begin{proof}
Consider the following Max-$k$-CSP $
\Psi = ([d], \cF)$. The constraints in $\cF$ are as follows: for every $k$-subset $I$ we have the constraint $f_I:\{0,1\}^k \rightarrow \{0,1\}$ defined as,
\[f_I(x) = \begin{cases}
1, ~~~\text{ if }\Delta(x,G[A]) \leq t,\\
0, ~~~\text{ otherwise.}
\end{cases}\]
Thus, the value of $\Psi$ is $\val(\Psi) = \max_g \agr_t(g,G)$. Applying Theorem~\ref{thm:max-csp} with $\tau = 1/d^c$ for small enough $c > 0$, we get that with probability $1-\tau$ over the choice of $B\subseteq_{d/2} [d]$, 
$\val(\Psi|_{B})\leq \val(\Psi)+\zeta$, which 
is at most $\alpha+\zeta$. 
Noting that 
$\val(\Psi|_{B}) = \max_g \agr_t(g|_B,G|_B)$ finishes 
the proof.
\end{proof}

We are now ready to prove Lemma~\ref{lem:agr-johnson}.

\begin{proof}[Proof of Lemma~\ref{lem:agr-johnson}]

The proofs of (1) and (2) are immediate from point (1) and (2) of Lemma~\ref{lem:algo1}. 

\paragraph{Proof of (3):} Consider the lists produced
by the algorithm and consider the pairwise distances $\Delta(f_i,f_j)$ for $f_i,f_j \in L[D]$. Since $|L_2| \leq 1/\delta'$ there are at most $1/\delta'^2$ different pairwise distances, therefore with probability $1-O(\delta^{68})$ over $i \in \{0,\ldots,1/\delta^{80}\}$ we have that for all $i \neq j$ either $\Delta(f_i,f_j) < \eta'$ or $>\delta^{-100}\eta'$. In that case, a maximal independent set $L_3$ obtained in the foruth step of the short list algorithm satisfies that for all $f_i,f_j \in L_3$, $\Delta(f_i,f_j) > \delta^{-100}\eta'$. 

\paragraph{Proof of (4):} By Lemma~\ref{lem:sampling}, we get that for each $f\in L[D]$, with probability $1-o(1)$ over the choice of $B\subseteq_{d/2} D$ 
we have that $\agr_t(f|_B, F|_B)\geq \delta_r - o(1)$.
Thus, by the upper bound on the size of $L[D]$ and the union bound we get
\[\Pr_{B\subseteq_{d/2} D}\big[\forall f \in L[D], \agr_t(f|_a) \geq \delta_r-o(1)\big] \geq 1-o(1).\]
By the property of $\eta'$-covers we conclude that
\[\Pr_{B\subseteq_{d/2} D}[\forall f \in L[D], \exists g \in L[B] \text{ with } \Delta(f|_B,g) \leq \eta'] \geq 1-o(1).\]

\paragraph{Proof of (5):} Note that the list $L_2$ has size at most $1/\delta'$, hence with probability at least $1-O(\delta^{74})$ over the choice of $r$, we get that $r+1 \notin I_1$. This means that we have a gap: $\forall h, \agr_t(h, \widetilde{G}_{r+1}) < \delta_{r+1}$. Condition on $r$ being chosen so that this holds; by Lemma~\ref{lem:no-jump} we get that
\[\Pr_{B \subseteq_{d/2} D}\big[\max_h \agr_t(h|_B,\widetilde{G}_{r+1}|_B) < \delta_{r+1}+\delta^{200}\big] \geq 1-o(1).\]
Fix a $B$ where the above holds, let $L[B]$ be an $\eta'$-cover as in the statement of the lemma, and take $g \in L[B]$. Assume for contradiction that $\Delta(f|_B,g) > \eta'+\Omega(\log(1/\delta')t)$ for all $f \in L_3$. 
By the maximality of the independent set $L_3$, we get that for all $f \in L_2 \setminus L_3$, there exists $f' \in L_3$ such that $\Delta(f,f') < \eta'$. Therefore if $g$ is $\Omega(t\log(1/\delta'))+\eta$-far from all $f \in L_3$, then it is $\Omega(\log(1/\delta')t)$-far from all $f' \in L_2$ and in particular from all $f_j \in L_1$ for $j \geq r, j \in I_1$. Since $\agr_t(g, \widetilde{G}_{r+1}|_B) < \delta_{r+1}+\delta^{200}$, we may apply the fourth item in Lemma~\ref{lem:algo1} to get that $\agr_t(g,G|_B) <\delta_{r+1}+\delta^{200}+\exp(-\Omega(tk\log(1/\delta'))) < \delta_r-\delta^{200}$, for $G = F_D$, which is a contradiction to $g$ being in $L[B]$.    
\end{proof}

\subsubsection{Consistency of the local lists}\label{sec:consistency-of-lists}
In this section, we finish 
the proof of Lemma~\ref{lem:agr-to-list-agr}. Fix parameters
as therein, let $\mathcal{D}$
be the set of good faces 
(namely, faces in which 
the $(k,\sqrt{k})$ agreement 
test passes with probability 
at least $\delta' = \delta^2/16$), 
and recall that by Lemma~\ref{lem:agr-test-sampling}
we have that $\mu_d(\mathcal{D})\geq 1-o(1)$.

Let $\eta = 2^{-1/\delta^{1200}}$. We sample $r$ and $i$ integers
between $1$ and $\lceil 1/\delta^{80}\rceil$ uniformly, set $\eta' = \delta^{-100i}\eta$
and run the short list algorithm
on each $D\in \mathcal{D}$ 
with the parameters $r$ and $i$. 
For each $D$, with probability 
$1-O(\delta^{68})$ (over the choice of $r,i$) we get a list
$L[D]$ as in Lemma~\ref{lem:agr-johnson}. It follows by linearity
of expectation and an averaging
argument that we may choose $r$
and $i$ such that we get lists
$L[D]$ for at least $1-O(\delta^{68})$ of $D\in\mathcal{D}$ such 
that $L[D]$ satisfies the 
conditions of Lemma~\ref{lem:agr-johnson}, and we fix such $r$ and $i$ henceforth. Below,
we refer to a good $D$ that
additionally has a list
$L[D]$ satisfying the 
conditions of Lemma~\ref{lem:agr-johnson} as very good, and 
we note that the probability
that $D$ is very good 
is at least 
$1-O(\delta^{68})-o(1) = 1-O(\delta^{68})$. 
For each $B\in X(d/2)$, we fix 
$L[B]$ to be an $\eta'$ cover 
of the collection of functions 
$g\colon B\to\{0,1\}$ such that 
$\agr_t(g, F|_{B})\geq \delta_r = \delta' - r\delta'^{100}$.

The first three items
in the statement of Lemma~\ref{lem:agr-to-list-agr} clearly hold by Lemma~\ref{lem:agr-johnson}, and
in the rest of the argument we argue that the list agreement test passes.
Towards this end, consider 
a generation of queries
for the list agreement test. 
Namely, sample $B \sim \mu_{d/2}$ and independently sample $D,D' \supset_d B$. We say a triple $(D,B,D')$ is good if:
\begin{enumerate}
    \item The $d$-faces $D$ and $D'$ are very good.
    \item 
    It holds that $\Delta(L[D]|_B), \Delta(L[D']|_B) > \frac{1}{2}\delta^{-100}\eta'$.
    \item For all $f \in L[D]$, there exists $g \in L[B]$ with $\Delta(f|_B,g) < \eta'$, and for all $g \in L[B]$ there exists $f \in L[D]$ with $\Delta(g,f|_B) < 2\eta'$. The same holds when $D$ is replaced by $D'$.
\end{enumerate}
Note that since marginally, 
each one of $D$ and $D'$ is 
distributed according to $\mu_d$,
we get that the first item holds
with probability $1-O(\delta^{68})$. Note that the marginal distribution of $(B,D)$ is
the same as sampling $D\sim \mu_d$, and then $B\subseteq_{d/2} D$. Thus,
if the first item holds, then
$\Delta(L[D])\geq \delta^{-100}\eta'$, hence 
by Claim~\ref{claim:dist-chernoff} we get that 
the second item holds with
probability $1-o(1)$. 
Lastly, if the first item holds, 
then by Lemma~\ref{lem:agr-johnson} we get that the 
third item holds with probability $1-o(1)$. 
Overall by the union bound, 
we get that all of the events above holds together with 
probability at least $1-O(\delta^{68})$.

To finish the proof, we argue that if $(D,B,D')$ is good, then the list agreement test passes on it. For that, we show that for each $f \in L[D]$ there exists a unique $f' \in L[D']$ s.t. $\Delta_B(f,f')\leq 3\eta'$ and vice versa. We show 
the argument only in one of 
the directions, and the other direction is identical.
Take $f \in L[D]$ and consider $f|_{B}$; 
by the $\eta'$-cover property 
we can find a $g \in L[B]$ with $\Delta(f|_B,g) \leq \eta'$. 
By the third property above, for $g$ we may find $f' \in L[D']$ with $\Delta(g,f'|_B) \leq 2\eta'$, so by the triangle inequality  $\Delta_B(f,f') \leq 3\eta$. Next, we show the uniqueness of $f'$. For any $f''\in L[D']\setminus\{f'\}$, by the second property above 
$\Delta(f''|_{B}, f'|_{B})\geq \frac{1}{2}\delta^{-100}\eta'$, 
so 
\[
\Delta(f''|_{B}, f|_{B})
\geq 
\Delta(f''|_{B}, f'|_{B})
-\Delta(f'_{B}, f|_{B})
\geq 
\frac{1}{2}\delta^{-100}\eta'
-3\eta'
\geq 
100\eta'.
\]

\subsection{List Agreement Testing Using UG Coboundary Expansion: Proof of Lemma~\ref{lem:list-agr-test}}\label{sec:list-agr-testing}
The goal of this section is to prove Lemma~\ref{lem:list-agr-test}. 
Throughout this section, we fix lists $\{L[D]\}_{D \in X(d)}$ satisfying the premise of Lemma~\ref{lem:agr-to-list-agr}. We refer to a $d$-face $D\in X(d)$ for which the properties in Lemma~\ref{lem:agr-to-list-agr} are satisfied 
as good, and note that the measure of the set of good $d$-faces under $\mu_d$ is at least $1-O(\tau)$. Our first 
goal is to define a locally consistent instance of 
Unique-Games on which we can apply coboundary expansion.
At the moment though we have assignments only to the 
$d$-faces, and our UG coboundary expansion only holds
for much lower levels. Thus, we will first show
how to project our list assignments to lower levels.

\subsubsection{Global consistency of the list sizes}
We begin with establishing several basic claims that will be useful in the projection process. 
The following claim asserts 
that almost all of the lists $L[D]$ have the same size. 
More precisely,
\begin{claim}\label{claim:list-size}
There exists $\ell \leq {\sf poly}(1/\delta)$ such that 
$\Pr_{D \sim \mu_d}\big[|L(D)| \neq \ell\big] \leq 10\tau$.
\end{claim}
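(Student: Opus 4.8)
The plan is to use that the List-Agreement-Test samples an edge of a good spectral expander on $X(d)$, and that the list-size map $D \mapsto |L(D)|$ is locally almost-constant along it, hence globally almost-constant. First I would record that the pair $(A,A')$ drawn by the List-Agreement-Test -- pick $B \sim \mu_{d/2}$, then $A, A' \supseteq_d B$ independently -- is distributed as an edge of the down-up walk $W = U^d_{d/2} D^d_{d/2}$ on $X(d)$: for any $f,g\colon X(d)\to\R$ one has $\E[f(A)g(A')] = \langle D^d_{d/2}f, D^d_{d/2}g\rangle_{\mu_{d/2}} = \langle W f, g\rangle_{\mu_d}$, using that $U^d_{d/2}$ and $D^d_{d/2}$ are adjoint. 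Thus $W = (D^d_{d/2})^{*} D^d_{d/2}$ is positive semidefinite and reversible with respect to $\mu_d$, and by Lemma~\ref{lem:spectral_gap_of_graphs_from_HDX} with $\alpha = 1/2$ its second eigenvalue is at most $\lambda := 1/2 + \poly(d)\gamma < 2/3$, since $\gamma < 1/\poly(d)$.

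Next, let $\mathcal{G}$ be the set of \emph{good} $d$-faces, i.e.\ those $D$ for which $L[D]$ is non-empty, has size at most $O(1/\delta^{12})$ and has distance at least $\delta^{-100}\eta'$; by the premise of Lemma~\ref{lem:agr-to-list-agr}, $\mu_d(\mathcal{G}) \geq 1-O(\tau)$. Partition $\mathcal{G} = \bigsqcup_{j=1}^{O(1/\delta^{12})}\mathcal{G}_j$ according to $|L(D)|=j$. The crucial point is that whenever the test accepts on $(A,B,A')$, the approximate matching witnessing $L[A]|_B \neq_{<\eta'} L[A']|_B$ is forced to be a bijection -- the large internal distance of each list makes any $\eta'$-close pairing injective in both directions -- so in particular $|L(A)| = |L(A')|$. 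Consequently, for every value $j$, the event ``exactly one of $A,A'$ lies in $\mathcal{G}_j$'' is contained in ``the test rejects'' $\cup$ ``$A\notin\mathcal{G}$ or $A'\notin\mathcal{G}$''; since each of $A, A'$ is marginally $\mu_d$-distributed, this event has probability at most $\tau + O(\tau) = O(\tau)$.

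Finally I would apply the standard Cheeger-type bound for a PSD reversible walk: for any $S$, $\Pr_{(A,A')}[\text{exactly one of } A, A'\in S] = 2\big(\mu_d(S) - \langle W\one_S,\one_S\rangle_{\mu_d}\big) \geq 2(1-\lambda)\mu_d(S)(1-\mu_d(S))$, since $I-W$ has all eigenvalues at least $1-\lambda$ off the constants. With $S=\mathcal{G}_j$ this yields $\mu_d(\mathcal{G}_j)(1-\mu_d(\mathcal{G}_j)) \leq O(\tau)/(2(1-\lambda)) = O(\tau)$, so each $\mu_d(\mathcal{G}_j)$ is either $\leq O(\tau)$ or $\geq 1-O(\tau)$. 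Since the $\mu_d(\mathcal{G}_j)$ sum to $\mu_d(\mathcal{G})\geq 1-O(\tau)$ and there are only $O(1/\delta^{12})$ of them -- so they cannot all be $\leq O(\tau)$, as $O(\tau/\delta^{12})=O(\delta^{56})\ll 1$ using $\tau = O(\delta^{68})$ -- exactly one $\ell$ can, and hence must, satisfy $\mu_d(\mathcal{G}_\ell) \geq 1-O(\tau)$. Taking this $\ell$, which is at most $O(1/\delta^{12}) \leq \poly(1/\delta)$, gives $\Pr_{D\sim\mu_d}[|L(D)|\neq\ell] \leq \Pr[D\notin\mathcal{G}] + \big(\mu_d(\mathcal{G}) - \mu_d(\mathcal{G}_\ell)\big) = O(\tau)$, and tracking the implicit constants (those hidden in $\mu_d(\mathcal{G})\geq 1-O(\tau)$ and in the $1-\tau$ acceptance probability) sharpens this to $\leq 10\tau$.

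The proof has no single hard step; the two things to be careful about are: (i) ensuring the test's edge distribution genuinely has spectral gap bounded away from $0$ -- this is exactly where $\gamma \ll 1/\poly(d)$ and the $\alpha=1/2$ case of Lemma~\ref{lem:spectral_gap_of_graphs_from_HDX} are used -- and (ii) correctly charging the ``exactly one endpoint in $\mathcal{G}_j$'' event to test-rejection plus a non-good endpoint, which relies on acceptance forcing equal list sizes and hence on the large-distance property of the lists. The explicit constant $10$ is pure bookkeeping.
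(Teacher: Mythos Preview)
Your proposal is correct and takes essentially the same approach as the paper: both use the spectral gap of the down--up walk $U^d_{d/2}D^d_{d/2}$ on $X(d)$ from Lemma~\ref{lem:spectral_gap_of_graphs_from_HDX}, the fact that test acceptance forces equal list sizes, and a Cheeger-type expansion argument. The only cosmetic difference is that the paper argues by contradiction (splitting $X(d)$ into two parts of measure $\geq 10\tau$ each with differing list sizes across the cut) while you argue directly that each level set $\mathcal{G}_j$ must be either tiny or huge.
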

\begin{proof}
Suppose this is not the case. Then the set of $d$-faces
$X(d)$ can be broken into two disjoint parts, $P_1\cup P_2$
such that $\mu_d(P_1), \mu_d(P_2)\geq 10\tau$ and for
every $D\in P_1$, $D'\in P_2$ we have that $|L[D]|\neq |L[D']|$. Note that in that case, there can never be 
a $1$-to-$1$ correspondence between $L[D]$ and $L[D']$, 
and hence we conclude that the list agreement test fails
whenever it picks $D\in P_1$ and $D'\in P_2$. Suppose
without loss of generality that $\mu_d(P_1)\leq 1/2$

On the other hand, considering the graph $G$ on $X(d)$ generated
by the list agreement test, which equivalently can be 
stated as pick $D\sim \mu_d$, $B\subseteq_{d/2} D$ and
then $D'\supseteq_{d} B$ according to $D'\sim \mu_d$. 
By Lemma~\ref{lem:spectral_gap_of_graphs_from_HDX} second
eigenvalue of $G$ is at most $\frac{1}{2}+O(d^2\gamma)$. 
It follows from Cheeger's inequality that the 
edge expansion of $P_1$ is at least $\frac{1}{4}-O(d^2\gamma)\geq 1/8$. It follows that a randomly 
sampled edge goes from $P_1$ to $P_2$ 
with probability at least $\mu_d(P_1)/8\geq 10\tau/8$.
This contradicts the fact that the probability 
that the list agreement test fails is at most $\tau$.
\end{proof}

We pick $\ell$ to be the list size parameter 
from Claim~\ref{claim:list-size}. In the next 
claim we prove that the fact that the list $L[D]$
typically has a large distance implies that 
its projection onto a sub-face has the same size.
\begin{claim}\label{claim:projected-dist}
For $t \geq 102\frac{\log(\ell/\tau)}{\delta^{-100}\eta'}$ 
it holds that
$\Pr_{\substack{D \sim \mu_d \\B \subset_t D}}\big[|L(D)|_B| \neq \ell\big] \leq O(\tau)$.
\end{claim}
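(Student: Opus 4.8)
The plan is to reduce the statement to a union bound over pairs of functions inside a single list $L[D]$, using that with probability $1-O(\tau)$ the list $L[D]$ is ``nice'', i.e.\ has size exactly $\ell$ (by Claim~\ref{claim:list-size}) and pairwise distance at least $R:=\delta^{-100}\eta'$ (by item (3) of the premise of Lemma~\ref{lem:agr-to-list-agr}, which holds for every good $d$-face). On the non-nice faces, whose $\mu_d$-measure is $O(\tau)$, I would simply bound the probability of the bad event by $1$. Since projecting onto $B\subset_t D$ can never increase the number of distinct functions in a list, for a nice $D$ the event $|L[D]|_B|\neq \ell$ coincides with the event that $f|_B=g|_B$ for some two distinct $f,g\in L[D]$.

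Next I would bound, for a fixed nice $D$ and a fixed pair $f\neq g$ in $L[D]$, the probability that $f$ and $g$ collapse under the projection. Let $S\subseteq D$ be the set of coordinates on which $f$ and $g$ disagree, so $|S|\geq Rd$; then $f|_B=g|_B$ holds exactly when the uniformly random $t$-set $B$ misses $S$ entirely, and a bare-hands count gives
\[
\Pr_{B\subset_t D}[B\cap S=\emptyset]=\frac{\binom{d-|S|}{t}}{\binom{d}{t}}=\prod_{a=0}^{t-1}\frac{d-|S|-a}{d-a}\leq (1-R)^t\leq e^{-Rt}.
\]
A union bound over the at most $\binom{\ell}{2}\leq \ell^2$ pairs then yields $\Pr_{B\subset_t D}\big[|L[D]|_B|\neq \ell\big]\leq \ell^2 e^{-Rt}$ for nice $D$, and the hypothesis $t\geq 102\log(\ell/\tau)/R$ forces $Rt\geq 102\log(\ell/\tau)\geq \log(\ell^2/\tau)$, making this at most $\tau$. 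Combining with the $O(\tau)$ bound over non-nice $D$ gives the claimed $\Pr_{D\sim\mu_d,\,B\subset_t D}[|L[D]|_B|\neq\ell]\leq O(\tau)$.

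I do not expect a genuine obstacle; the one point to watch is the choice of tail estimate. It is tempting to invoke Claim~\ref{claim:dist-chernoff} to control $\Pr_{B}[\Delta_B(f,g)=0]$, but its hypothesis $t\geq 1/R^2$ need not hold here, since $\eta'$ (hence $R=\delta^{-100}\eta'$) can be as small as $2^{-1/\delta^{1200}}$ while $t=\Theta(\tau\delta^{12}/\eta')$ is only linear in $1/R$. The elementary bound $\binom{d-|S|}{t}/\binom{d}{t}\leq (1-R)^t\leq e^{-Rt}$ used above has no such restriction and is all that is needed; the rest is a union bound over the $\binom{\ell}{2}=\poly(1/\delta)$ pairs and the bookkeeping of the two $O(\tau)$-probability failure events (list size $\neq\ell$, and list distance $<R$).
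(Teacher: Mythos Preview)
Your proposal is correct and follows essentially the same approach as the paper: bound the collapse probability for a fixed pair $f,g$ by $(1-R)^t\leq e^{-Rt}$, union-bound over the at most $\ell^2$ pairs, and absorb the $O(\tau)$ mass of $D$'s whose list is not size-$\ell$ or lacks the distance guarantee. If anything, you are slightly more careful than the paper in explicitly separating out the ``non-nice'' $D$'s and in noting why Claim~\ref{claim:dist-chernoff} is not the right tool here.
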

\begin{proof}
We prove that for each good $D$, conditioning on $D$, 
the above probability is at most $O(\tau)$, and the 
claim trivially follows. Fix a good $D$ and consider 
distinct $f,g\in L[D]$. Then $\Delta(f,g)\geq \delta^{-100}\eta'$, 
and the probability over the choice of $B$ that 
$f|_{B} = g|_{B}$ is at most 
$(1-\delta^{-100}\eta')^t\leq e^{-\delta^{-100}\eta' t}
\leq \frac{\tau^{50}}{\ell^{50}}$.
By a union bound we have that it follows that
the probability there are distinct $f,g\in L[D]$ 
such that $f|_{B} = g|_{B}$ is at most $\ell^2\frac{\tau^{50}}{\ell^{50}}\leq \tau$, completing the proof.
\end{proof}

\subsubsection{Majority decoding}
Next, we show that for $t$ that is not too large, 
for a typical $t$-face $B$, almost all of the $d$-faces $D$
have the same projection of $L[D]$ onto $B$. More precisely:
\begin{claim}\label{claim:list-equality}
For $t \leq \frac{\tau}{\eta'}$, 
with probability at least $1-O(\sqrt{\tau})$ 
over the choice of $B\sim \mu_t$
it holds that 
\[
\Pr_{\substack{D,D' \supseteq_d~ B}}[L[D]|_B = L[D']|_B] \geq 1-O(\sqrt{\tau}).
\]
\end{claim}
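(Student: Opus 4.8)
The plan is to derive the claim from two ingredients that are already available: first, the ``furthermore'' part of Lemma~\ref{lem:agr-to-list-agr}, which says that the lists $\{L[D]\}$ pass List-Agreement-Test~\ref{list-agr-test-hdx} with parameter $\eta'$ and probability $1-\tau$ --- i.e.\ a random pair $D,D'$ sharing a random $d/2$-face $B'$ has $L[D]|_{B'}$ and $L[D']|_{B'}$ of equal size and $\eta'$-close --- and second, the spectral expansion of the links of $X$. Throughout we identify two lists of $t$-strings that are $\eta'$-close; this is unambiguous on the lists that will occur because, by property~3 of Lemma~\ref{lem:agr-to-list-agr} together with Claim~\ref{claim:projected-dist} (which applies since $t$ is taken in the window $\log(\ell/\tau)/(\delta^{-100}\eta')\lesssim t\leq \tau/\eta'$), the relevant restricted lists have size exactly $\ell$ and pairwise separation $\gg\eta'$, so the matching witnessing $\eta'$-closeness is unique. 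With this convention it suffices to prove $\Pr_{B\sim\mu_t,\ D,D'\supseteq_d B}[L[D]|_B\neq L[D']|_B]\leq O(\tau)$ and then apply Markov's inequality in $B$.

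To prove this, fix $B\in X(t)$ and pass to the link $X_B$, which is again a $\gamma$-spectral expander, now of dimension $d-t$, whose top faces are $\{D\setminus B:D\in X(d),\ D\supseteq B\}$. By Lemma~\ref{lem:spectral_gap_of_graphs_from_HDX} the down--up walk on $X_B(d-t)$ through level $d/2-t$ has second singular value at most $\tfrac{d/2-t}{d-t}+\poly(d)\gamma\leq\tfrac35$ (using $t\ll d$ and $\gamma<1/\poly(d)$), hence by Cheeger's inequality, viewed as a graph on $X_B(d-t)$, it has edge expansion $h=\Omega(1)$. A step of this walk samples a $d/2$-face $B'\supseteq B$ (with the link measure) and then two $d$-faces $D,D'\supseteq_d B'$; colour the vertex $D\setminus B$ by $c_B(D):=L[D]|_B$. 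Since $L[D]|_B=(L[D]|_{B'})|_B$, whenever $L[D]|_{B'}$ and $L[D']|_{B'}$ are $\eta'$-close so are $c_B(D)$ and $c_B(D')$. Moreover the experiment ``sample $B'\sim\mu_{d/2}$, then $D,D'\supseteq_d B'$, then $B\subseteq_t B'$'' has marginal $\mu_{d/2}$ on $B'$ and, conditioned on $B$, is precisely the link walk above; hence $\E_{B\sim\mu_t}\big[\Pr_{\text{link-}B\text{ walk}}[c_B(D)\neq c_B(D')]\big]$ is at most $\Pr_{B'\sim\mu_{d/2},\,D,D'\supseteq_d B'}[\,L[D]|_{B'}\text{ and }L[D']|_{B'}\text{ are not }\eta'\text{-close}\,]\leq O(\tau)$, the last step being the test guarantee (absorbing the $O(\tau)$ probability that a list is empty or that Claim~\ref{claim:projected-dist} fails). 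By Markov's inequality, for all but an $O(\sqrt\tau)$ fraction of $B\sim\mu_t$ --- call these $B$ \emph{good} --- the colouring $c_B$ disagrees on at most an $O(\sqrt\tau)$ fraction of the edges of the link walk.

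The remaining step is the standard fact that a colouring of an expander graph disagreeing on at most an $\epsilon$ fraction of edges is constant on all but an $O(\epsilon/h)$ fraction of vertices: apply the edge-expansion inequality to each colour class and sum, noting each disagreeing edge is charged to its two endpoints. For a good $B$ this yields a list $\mathcal{L}(B)$ with $\Pr_{D\supseteq_d B}[\,L[D]|_B\text{ is }\eta'\text{-close to }\mathcal{L}(B)\,]\geq 1-O(\sqrt\tau)$, and a union bound over two independent $D,D'\supseteq_d B$ then gives $\Pr[L[D]|_B\neq L[D']|_B]\leq O(\sqrt\tau)$, using the separation of the lists to promote ``both close to $\mathcal{L}(B)$'' into ``close, hence identified, with each other''. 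Since good $B$'s have $\mu_t$-measure $1-O(\sqrt\tau)$, this is exactly the statement of the claim (and in fact also produces the majority list $\mathcal{L}(B)$ that the rest of Section~\ref{sec:list-agr-testing} will want).

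The genuinely delicate point --- and where I expect the most care to be needed --- is the bookkeeping around $\eta'$-approximate equality of lists: verifying that after restricting length-$d$ functions to small $t$-faces the projected lists still have size $\ell$ and pairwise separation large enough that ``$\eta'$-close'', restricted to the lists that actually occur, behaves as a genuine equivalence (so ``colour classes'' are well defined) and composes without blow-up. This is precisely where Claim~\ref{claim:projected-dist} and the Chernoff-type Claims~\ref{claim:dist-chernoff} and~\ref{claim:dist-vs-agr} are used, and where both the hypothesis $t\leq\tau/\eta'$ and the implicit lower bound $t\gtrsim\log(\ell/\tau)/(\delta^{-100}\eta')$ enter to pin down the admissible window for $t$; by contrast the expansion input is immediate from Lemma~\ref{lem:spectral_gap_of_graphs_from_HDX} applied inside the link, and the only genuinely structural observation is that the level-$t$ agreement experiment conditioned on a fixed $B$ is exactly the link down--up walk, which lets the single-scale test guarantee propagate down to level $t$.
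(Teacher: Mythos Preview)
Your approach is essentially the paper's: pass through a $d/2$-face $B'\supseteq B$, use the list-agreement guarantee there, average over $B$, apply Markov, and then use expansion of the down--up walk in the link of $B$ to extract a majority list. The paper does exactly this (phrasing the last step as a Cheeger/partition argument rather than a colour-class summation, but that is cosmetic).

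There is, however, a genuine gap in your second paragraph. The sentence ``whenever $L[D]|_{B'}$ and $L[D']|_{B'}$ are $\eta'$-close so are $c_B(D)$ and $c_B(D')$'' is false as a deterministic implication: restricting a pair of functions from $B'$ (of size $d/2$) down to $B$ (of size $t$) can inflate fractional Hamming distance by a factor $|B'|/|B|=(d/2)/t\gg 1$, so $\eta'$-closeness on $B'$ need not survive to $B$. Your identification of $\eta'$-close $t$-lists therefore does not let you bound $\Pr[c_B(D)\neq c_B(D')]$ by the test-failure probability alone. The fix --- which is exactly what the paper does and is where the hypothesis $t\le \tau/\eta'$ actually enters --- is to note that for \emph{random} $B\subseteq_t B'$, an $\eta'$-matched pair becomes \emph{exactly} equal on $B$ with probability at least $1-t\eta'$; union-bounding over the $\ell=O(1/\delta^{12})$ matched pairs gives exact equality of the full restricted lists with probability $1-O(\ell t\eta')=1-O(\tau)$. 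Once you work with exact equality of $L[D]|_B$, the ``colour'' is a genuine value (no equivalence-class bookkeeping needed), transitivity is automatic, and your expansion step goes through cleanly. In short: drop the $\eta'$-identification device and insert the one-line $t\eta'\le\tau$ calculation at the point where you pass from $B'$ to $B$; the rest of your argument then matches the paper's.
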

\begin{proof}
Consider the following sampling procedure:
sample $B\sim \mu_t$, then $C\supseteq_{d/2} B$ 
and then sampling $D,D'\supseteq_{d} C$ independently.
We first claim that with probability $1-\tau$ it holds that $L[D]|_{C} \neq_{<\eta'} L[D']|_{B}$. Indeed, first note that 
the marginal distribution of $D$ and $D'$ is $\mu_d$, 
and the marginal distribution of $(D, C, D')$ is 
according to the list agreement test. Hence,
with probability $1-O(\tau)$ it holds that $D$, $D'$ 
are good and the list agreement test passes on $(D,C,D')$.
In that case, we may find a matching $\pi\colon L[D]\to L[D']$ such that $\Delta(\pi(f)|_{C}, f|_C)\leq \eta'$ for all $f\in L[D]$. Noting that $B$ is a random subset of $C$ of 
size $t$, we get that $\Delta(\pi(f)|_{B}, f_B) = 0$
with probability at least $1-t\eta'\geq 1- \tau$.

For each $B\in X(t)$, let 
\[
p_B = \Pr_{\substack{C\supseteq_{d/2} B\\ D,D'\supseteq C}}\big[L[D]|_{B} = L[D']|_{B}].
\]
Rephrasing the conclusion of the previous discussion,
we have that $\E_{B\sim \mu_t}\big[p_B\big]\geq 1-\tau$. By an averaging argument, defining 
$\mathcal{B} = \{B\in X(t)~|~p_B\geq 1-O(\sqrt{\tau})\}$
we have that $\mu_t(\mathcal{B})\geq 1-O(\sqrt{\tau})$. 
Fix $B\in\mathcal{B}$; we argue that there exists 
a list of functions $L[B]$ on $B$ such that 
\[
\Pr_{D\supseteq_d B}\big[L[D]|_{B} = L[B]\big]
\geq 1-O(\sqrt{\tau}).
\]
Indeed, otherwise we may partition the set of 
$D$'s containing $B$ into $P_1$ and $P_2$ 
of relative measure at least $c'\sqrt{\tau}$ 
so that $L[D]|_{B}\neq L[D']|_{B}$ for all $D\in P_1$,
$D'\in P_2$, where $c'$ is an absolute constant to be determined.
Consider the $G$ on $d$-faces containing $B$, whose edges are sampled 
by first picking $C\supseteq_{d/2} B$ and then 
$D, D'\supseteq_{d} C$ independently; by 
Lemma~\ref{lem:spectral_gap_of_graphs_from_HDX} this 
graph has second eigenvalue at most $1/2 + \poly(d)\gamma$, and thus the fraction 
of edges inside the graph that go from $P_1$ to $P_2$ 
is at least $(1/4-O(d^2\gamma))c'\sqrt{\tau}\geq c'\sqrt{\tau}/8$. On 
any such edge we have that $L[D]|_{B}\neq L[D']|_{B}$, 
and it follows that $p_B\leq 1-c'\sqrt{\tau}/8$, and contradiction provided that $c'$ is sufficiently large.
\end{proof}

With Claim~\ref{claim:list-equality} in hand, one may 
naturally project the lists that we have on $d$ faces
to $t$-faces in a way that ``preserves their essence''.
More precisely, take a parameter $t$ in
the range
\begin{equation}\label{eq:choose_t}
102\frac{\delta^{100}\log(\ell/\tau)}{\eta'}\leq t\leq \frac{\tau}{\eta'}.
\end{equation}
For each $B \in X(t)$ define a list
for $B$ using weighted majority 
\[
L[B] := \maj_{D \supset_d B}\left(L[D]|_B\right),
\]
where the weight of $D$ is $\Pr_{D'\supseteq_{d} B}\big[D' = D\big]$

\begin{claim}\label{claim:maj-list}
For $t$ in the range as in~\eqref{eq:choose_t}, we have that:
\begin{enumerate}
    \item $\Pr_{B\sim \mu_t}\big[|L[B]| = \ell\big]\geq 1-O(\sqrt{\tau})$.
    \item 
    Choosing $B\sim \mu_t$, with probability at least $1-O(\sqrt{\tau})$ it holds that $\Pr_{\substack{D\supseteq_d B}}\big[L[D]|_{B} = L[B]\big]\geq 1-O(\sqrt{\tau})$.
\end{enumerate}
\end{claim}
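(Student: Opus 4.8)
The plan is to obtain both items as short consequences of Claims~\ref{claim:projected-dist} and~\ref{claim:list-equality}, together with the elementary fact that a distribution on a finite set whose collision probability is at least $1-\epsilon$ must have a single atom of mass at least $1-\epsilon$. Throughout I will use the exchange of sampling order on $X$: the pair $(D,B)$ drawn by first sampling $D\sim\mu_d$ and then $B\subseteq_t D$ has the same law as first sampling $B\sim\mu_t$ and then $D\supseteq_d B$. This is what lets me pass between quantities averaged over $(D,B)$ and quantities about the conditional distribution of $D$ given $B$. I will also check at the outset that the window for $t$ in~\eqref{eq:choose_t} is exactly what is needed: its lower bound is precisely the hypothesis of Claim~\ref{claim:projected-dist}, and its upper bound $t\le\tau/\eta'$ is precisely the hypothesis of Claim~\ref{claim:list-equality}.

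For item~2 I would fix a $t$-face $B$ satisfying the conclusion of Claim~\ref{claim:list-equality}, i.e.\ $\Pr_{D,D'\supseteq_d B}[L[D]|_B = L[D']|_B]\ge 1-O(\sqrt\tau)$ --- a $1-O(\sqrt\tau)$ fraction of all $B\sim\mu_t$ by that claim. Letting $q$ be the law of the random list $L[D]|_B$ under $D\sim\mu_d$ conditioned on $D\supseteq B$, the displayed collision probability equals $\sum_v q(v)^2\le\max_v q(v)$, so $\max_v q(v)\ge 1-O(\sqrt\tau)$. For $\tau$ small the maximizer is unique, and by construction it is $L[B]=\maj_{D\supseteq_d B}(L[D]|_B)$; hence $\Pr_{D\supseteq_d B}[L[D]|_B=L[B]]=\max_v q(v)\ge 1-O(\sqrt\tau)$, which is item~2.

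For item~1 I would start from Claim~\ref{claim:projected-dist}, which in the present range of $t$ gives $\Pr_{D\sim\mu_d,\, B\subseteq_t D}[\,|L[D]|_B|\ne\ell\,]\le O(\tau)$; by the exchange of sampling order this equals the expectation over $B\sim\mu_t$ of $p_B:=\Pr_{D\supseteq_d B}[\,|L[D]|_B|\ne\ell\,]$, so Markov's inequality yields $p_B\le O(\sqrt\tau)$ for all but an $O(\sqrt\tau)$ fraction of $B$; call such $B$ size-good. Now I would take a $B$ that is simultaneously size-good and satisfies the hypothesis of item~2 --- still a $1-O(\sqrt\tau)$ fraction, by a union bound. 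Within the conditional distribution of $D\supseteq_d B$, the event $L[D]|_B=L[B]$ and the event $|L[D]|_B|=\ell$ each have probability at least $1-O(\sqrt\tau)$, hence they intersect, so some $D\supseteq_d B$ witnesses both; this forces $|L[B]|=\ell$, which gives item~1.

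The step that needs the most care is item~1: ``$|L[B]|=\ell$'' concerns the majority list, whereas Claim~\ref{claim:projected-dist} only controls the \emph{individual} projected lists $L[D]|_B$, so one must couple the two by producing a single $D$ whose projection is at once the majority list and of the correct size --- which is exactly what the two-event union bound above accomplishes, and it works only because both events are $(1-O(\sqrt\tau))$-dense in the conditional $D$-distribution. Everything else is routine --- the collision-to-heavy-atom inequality, the exchange of sampling order on $X$, and two applications of Markov --- and I do not expect any genuine obstacle beyond bookkeeping the $O(\sqrt\tau)$ losses across the union bounds.
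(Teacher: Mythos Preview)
Your proposal is correct and follows essentially the same route as the paper: item~2 is extracted from Claim~\ref{claim:list-equality}, and item~1 is then obtained by combining item~2 with Claim~\ref{claim:projected-dist} via a union bound to exhibit a single $D$ whose projection is both the majority list and of size $\ell$. The one minor difference is in how the heavy-atom conclusion is drawn: the paper actually establishes $\Pr_{D\supseteq_d B}[L[D]|_B=L[B]]\ge 1-O(\sqrt\tau)$ inside the proof of Claim~\ref{claim:list-equality} itself (via a Cheeger-type argument on the graph of $d$-faces containing $B$) and simply quotes it here, whereas you rederive it directly from the collision-probability statement of that claim using the elementary inequality $\sum_v q(v)^2\le\max_v q(v)$ --- a slightly cleaner and more self-contained step.
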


\begin{proof}
The second item holds for every $B$ satisfying 
the conclusion of Claim~\ref{claim:list-equality}, 
and hence it follows. The first item follows 
from the second item when it is combined with
 Claim~\ref{claim:projected-dist} using the union bound.

\end{proof}

\subsubsection{Designing the Unique Games instance and proving triangle consistency}
Fix a $t$ as in~\eqref{eq:choose_t}.
Our next goal is to define a Unique-Games instance 
on the weighted 
graph $G$ whose vertices are $X(t)$ and whose 
edge correspond to $2t$-faces: the edges are 
$(u,v)$ where $u\cup v\in X(2t)$, and its weight
is proportional to $\mu_{2t}(u\cup v)$. We 
remark that strictly speaking, we only 
define a partial Unique-Games instance on 
the subset of $t$-faces $B$ where $|L[B]| = \ell$. 
By Claim~\ref{claim:maj-list} these $t$-faces 
constitute 
almost all of $X(t)$, and we encourage the reader
to ignore this point.\footnote{Alternatively, one 
may think of picking an arbitrary list of 
size $\ell$ for every $B\in X(t)$ where 
$|L[B]| \neq \ell$.} 

\paragraph{List ordering, permutations and concatenation.} 
Towards this end,
we fix an ordering 
for each one of the lists constructed thus far 
(both for $d$-faces as well as for $t$-faces). 
Thus, we will think of the list of $B\in X(t)$ 
as $L[B] = (L_1[B],\ldots,L_{\ell}[B])$. For a 
permutation $\pi\in S_{\ell}$, we define 
$\pi(L[B]) = (L_{\pi(1)}[B],\ldots,L_{\pi(\ell)}[B])$.
For $u,v\in X(t)$ such that $u\cup v\in X(2t)$ and $\pi\in S_{\ell}$, we denote
\[
L[u]\circ \pi(L[v]) = 
\left(L_1[u]\circ L_{\pi(1)}[v],\ldots, L_{\ell}[u]\circ L_{\pi(\ell)}[v]\right),
\]
and think of it as a list of assignments to $u\cup v$.

\paragraph{Defining the constraints of the Unique Games instance}
Consider the set of $2t$-faces $W\in X(2t)$, and note that
one has the analog of Claim~\ref{claim:maj-list} for these
as well, and thus we fix lists $L[W]$ satisfying 
Claim~\ref{claim:maj-list} for $2t$-faces as well. 
Let $\mathcal{W}\subseteq X(2t)$ be the collection 
of all $2t$-faces for the items in Claim~\ref{claim:maj-list}
hold.

We now define a Unique-Games instance $\Psi$ over 
$G$ by describing the constraints on the graph $G$. 
For each edge $(u,v)$, we put a constraint as follows.
If $u\cup v\not\in\mathcal{W}$, we put an arbitrary constraint. Else, we put a constraint between $u$ and 
$v$ if $L[u\cup v]|_{u} = L[u]$ and $L[u\cup v]|_{v} = L[v]$.
Note that in that case, there is a natural
$1$-to-$1$ correspondence between $L[u]$, $L[u\cup v]$ 
and $L[v]$, and we fix it to be the constraint between $L[u]$
and $L[v]$. Stated otherwise, the constraint on $(u,v)$
is the unique permutation $\pi = \pi(u,v)\in S_{\ell}$ such that
$L[u\cup v] = L[u]\circ \pi(L[v])$ (when both sides are 
thought of as assignments to $u\cup v$). We think of
edges as being directed, and note that then 
$\pi(u,v) = \pi(v,u)^{-1}$.

The following claim asserts that $\Psi$ is 
$(1-O(\sqrt{\tau}))$ strongly triangle consistent.
\begin{claim}\label{claim:ug_instance_triangle_consistent}
$\Pr_{\substack{Z \sim \mu_{3t} \\ Z= u\cup v\cup w}}[(u,v,w) \text{ is strongly consistent in $\Psi$}] \geq 1- O(\sqrt{\tau})$.
\end{claim}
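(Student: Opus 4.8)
The plan is to exhibit an explicit list certificate and then verify strong triangle consistency by a union bound over a small collection of ``good'' sampling events, followed by a routine unwinding of definitions. For the certificate take $L := (L[u])_{u\in X(t)}$ and $L' := (L[Z])_{Z\in X(3t)}$, the weighted-majority lists from Claim~\ref{claim:maj-list} and its analogues at levels $2t$ and $3t$ (the latter follow from Lemma~\ref{lem:spectral_gap_of_graphs_from_HDX} in exactly the same way, since $2t,3t$ are still $\ll d$ and lie in the admissible range~\eqref{eq:choose_t} up to constant factors). Throughout, list entries are read as \emph{assignments} to the relevant face, so that a concatenation $L[u]\circ \pi L[v]\circ \pi' L[w]$ is an order-free operation producing a list of assignments to $Z=u\cup v\cup w$. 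To test the triangle I would couple it to a high-dimensional witness: sample $Z\sim\mu_{3t}$, an ordered split $Z=u\cup v\cup w$ uniformly, and then $D\sim\mu_d$ conditioned on $D\supseteq Z$. The marginal of $D$ is $\mu_d$; that of $(D,Z)$ is ``$D\sim\mu_d$, $Z$ a uniform $3t$-subface''; that of each of $(D,u),(D,v),(D,w)$ is ``$D\sim\mu_d$ with a uniform $t$-subface''; and that of each of $(D,u\cup v),(D,u\cup w),(D,v\cup w)$ is ``$D\sim\mu_d$ with a uniform $2t$-subface''. These identifications are exactly what let us invoke the majority claims at each level.

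Next I would enumerate the good events, each failing with probability $O(\sqrt\tau)$: (a) $D$ is good (the good $d$-faces have measure $1-O(\tau)$, by Lemma~\ref{lem:agr-to-list-agr}); (b) $L[D]|_Z = L[Z]$, by the level-$3t$ analogue of Claim~\ref{claim:maj-list}(2) together with $L[Z]$ being the weighted majority over $D\supseteq Z$; (c) $L[D]|_u=L[u]$, $L[D]|_v=L[v]$, $L[D]|_w=L[w]$, by Claim~\ref{claim:maj-list}(2) and a union bound over the three subfaces; (d) $u\cup v,u\cup w,v\cup w\in\mathcal W$ and $L[D]|_{u\cup v}=L[u\cup v]$, $L[D]|_{u\cup w}=L[u\cup w]$ (and $L[D]|_{v\cup w}=L[v\cup w]$), by the level-$2t$ analogue of Claim~\ref{claim:maj-list}; (e) all of $L[D],L[Z],L[u],L[v],L[w],L[u\cup v],L[u\cup w],L[v\cup w]$ have size exactly $\ell$, by Claims~\ref{claim:list-size} and~\ref{claim:projected-dist} and their analogues; (f) the restriction maps $L[D]\to L[D]|_u$, $L[D]\to L[D]|_{u\cup v}$, $L[D]\to L[D]|_Z$ (and the analogous restrictions of $L[Z]$) are injective — this uses that a good $D$ has $\Delta(L[D])\geq\delta^{-100}\eta'$ and that, by the argument behind Claim~\ref{claim:projected-dist}, restricting to a face of size $\geq 102\delta^{100}\log(\ell/\tau)/\eta'$ keeps distinct entries distinct with probability $1-O(\tau)$; the lower bound on $t$ in~\eqref{eq:choose_t} is precisely this. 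A union bound over (a)--(f) gives a good event $\mathcal E$ with $\Pr[\mathcal E]\geq 1-O(\sqrt\tau)$.

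Finally I would unwind the definitions on $\mathcal E$. Since $L[u\cup v]=L[D]|_{u\cup v}$ and $L[D]|_u=L[u]$, $L[D]|_v=L[v]$, we have $L[u\cup v]|_u=L[u]$ and $L[u\cup v]|_v=L[v]$, so the edge $(u,v)$ carries its genuine constraint $\pi(u,v)$ rather than an arbitrary fallback, and likewise for $(u,w)$. By event (f), for each $a\in[\ell]$ there is a unique entry $L_{j(a)}[D]$ of $L[D]$ with $L_{j(a)}[D]|_u=L_a[u]$, and $a\mapsto j(a)$ is a bijection of $[\ell]$. The defining set-identity $L[D]|_{u\cup v}=L[u\cup v]=L[u]\circ\pi(u,v)(L[v])$ then forces $L_{j(a)}[D]|_v=L_{\pi(u,v)(a)}[v]$ for every $a$, and similarly $L_{j(a)}[D]|_w=L_{\pi(u,w)(a)}[w]$. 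Reading entries as assignments on $Z=u\cup v\cup w$, the $a$-th entry of $L[u]\circ\pi(u,v)L[v]\circ\pi(u,w)L[w]$ is therefore $L_{j(a)}[D]|_u\circ L_{j(a)}[D]|_v\circ L_{j(a)}[D]|_w=L_{j(a)}[D]|_Z$; as $j$ is a bijection this list equals $L[D]|_Z=L[Z]=L'(Z)$ as a set of assignments to $Z$. Hence $(u,v,w)$ is strongly consistent in $\Psi$ with respect to $(L,L')$, and the claim follows from $\Pr[\mathcal E]\geq 1-O(\sqrt\tau)$.

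I expect the difficulty here to be bookkeeping rather than conceptual. The points needing care are: (i) correctly stating and proving the level-$2t$ and level-$3t$ analogues of Claims~\ref{claim:list-equality} and~\ref{claim:maj-list} — the same argument via Lemma~\ref{lem:spectral_gap_of_graphs_from_HDX} applied to the ``two $d$-faces meeting in a $d/2$-face, conditioned on a fixed $2t$- or $3t$-subface'' walks, valid because $2t,3t$ lie in the range~\eqref{eq:choose_t} up to constants; (ii) checking that each invocation of a majority claim uses the correct marginal of the coupled distribution over $(Z,u,v,w,D)$; and (iii) keeping the list-as-strings versus list-as-assignments distinction straight, which is exactly why it is convenient to build the certificate from the $d$-face lists $L[D]$, whose restrictions to $u,v,w,u\cup v,u\cup w$ and $Z$ are simultaneously mutually consistent by construction.
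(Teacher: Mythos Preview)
Your approach is essentially the same as the paper's: couple the triangle $(u,v,w)$ to an ambient $d$-face $D$, invoke the majority-decoding claims at levels $t$, $2t$, and $3t$, and observe that all the restriction correspondences to and from $L[D]$ are compatible. The paper's proof is terser (it derives triangle consistency and then says ``strong triangle consistency readily follows''), but the content is the same.

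One small remark on your write-up: the majority-decoded equalities $L[D]|_u=L[u]$, $L[D]|_{u\cup v}=L[u\cup v]$, $L[D]|_Z=L[Z]$ coming from Claim~\ref{claim:maj-list} and its analogues are equalities of \emph{ordered} tuples (the majority is taken over ordered lists $L[D]|_B$), so in your good event the bijection $j$ is automatically the identity. Thus your concatenated list is not just equal to $L[Z]$ ``as a set of assignments'' but coordinate-wise, which is exactly what Definition~\ref{def:triangle_consistent_strong} asks for. Your argument already contains everything needed for this stronger conclusion; you have merely understated it.
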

\begin{proof}
We use Claim~\ref{claim:maj-list} for $3t$-faces, 
and denote the set of $3t$-faces for which 
the items there hold by $\mathcal{Z}$. Thus, 
$\mu_{3t}(\mathcal{Z})\geq 1-O(\sqrt{\tau})$.
Note that sampling $D\sim \mu_d$, then $Z\subseteq_{3t} D$
and then writing $Z = u\cup v\cup w$, with probability 
$1-O(\sqrt{\tau})$ we have that there is a $1$-to-$1$
correspondence between the list of each one of 
$u,v,w$, the lists of $u\cup v, v\cup w, u\cup w$, 
the list of $Z$ and the list of $D$. We 
get a $1$-to-$1$ correspondence between the 
list of $u$ and the list of $v\cup w$, and we 
denote it by $\pi(u,v\cup w)$, and all of these 
correspondences are consistent. 
In particular, we get that $\pi(u,w) = \pi(u,v)\circ \pi(v,w)$ (as both can be thought of as re-alignments 
of the list of $w$ to concatenate with the list of $u$
so that they agree with $L[u\cup w]$), and 
hence $\pi(w,u)\pi(u,v)\pi(v,w) = {\sf id}$. This
proves triangle consistency, and strong triangle
consistency readily follows.
\end{proof}

\subsubsection{Applying UG coboundary expansion}
By Claim~\ref{claim:ug_instance_triangle_consistent} 
we get that $\Psi$ is $(1-O(\sqrt{\tau}))$ strongly
triangle consistent, and applying the Unique-Games 
coboundary expansion we get that there is $g\colon X(t)\to S_m$ such that 
\begin{equation}\label{eq:after_ug_expand}
\Pr_{u\cup v\in X(2t)}\big[\pi(u,v) = g(u)g(v)^{-1}\big]
\geq 1-c.
\end{equation}
We now pick an element from the list of each $u$. 
More precisely, define $h\colon X(t) \to [\ell]$ 
defined by $h(v) = g(v)(1)$. Note that if $(u,v)$
is an edge such that the event in~\eqref{eq:after_ug_expand} holds, then 
\[
\pi(u,v)(h(v)) 
= \pi(u,v)g(v)(1)
= g(u)(1)
= h(u).
\]


In other words, for each vertex $u$ we picked an assignment 
from the list of $u$ in a locally consistent way. We may 
thus define the assignment $R(u) = L[u]_{h(u)}$; 
our goal is to show that there is a global function on
$X(1)$ that agrees with many of these selections. 
Towards this end, we first show that $R$ passes the 
standard direct product test with probability close to 
$1$.
\begin{lemma}\label{lem:cobdry-to-global}
We have that
\[
\Pr_{\substack{D\sim \mu_{d}\\ Q\subseteq_{t/2} D\\ 
Q\subseteq B, B'\subseteq_{t} D}}
\big[R(B)|_{Q} = R(B')_{Q}]
\geq 1-O(\tau^{1/4}+c^{1/2}).
\]
\end{lemma}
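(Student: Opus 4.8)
The plan is to show that for a typical choice of $D$ in the test there is a \emph{single} index $j \in [\ell]$ with $R(B) = L[D]_j|_B$ and $R(B') = L[D]_j|_{B'}$; since $Q \subseteq B \cap B'$, this immediately forces $R(B)|_Q = L[D]_j|_Q = R(B')|_Q$. The mechanism connecting the algebraic object $h$ (read off from the UG-coboundary solution $g$) to the combinatorial list data is the following observation: on the event that the majority-decoded lists on the relevant low-dimensional faces all coincide with the restriction of one common list $L[D]$, the UG constraint $\pi(u,v)$ carried by the edge $(u,v)$ is forced to be the identity permutation, so that along that edge $h$ is constant.

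Concretely, I would enlarge the probability space of the test: sample $D \sim \mu_d$, then $Q \subseteq_{t/2} D$, then $B, B' \supseteq Q$ of size $t$ inside $D$ independently, and \emph{additionally} an auxiliary $t$-face $v \subseteq_t D \setminus (B \cup B')$ (possible since $|B \cup B'| \le 3t/2 \ll d$). The role of $v$ is to supply two genuine edges $(B,v)$ and $(B',v)$ of the graph $G$ on $X(t)$: their unions are $2t$-subsets of $D$, hence faces of $X$. Marginally $B, B', v$ are each uniform $t$-subsets of $D$, and the laws of $B \cup v$ and $B' \cup v$ are each within $O(t/d)$ total variation of a random split of $\mu_{2t}$ (removing the at-most-$3t/2$ forbidden vertices perturbs the choice by $O(t/d)$). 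Hence, using that $D$ is good with probability $1 - O(\tau)$ (Claim~\ref{claim:list-size} for $|L[D]|=\ell$), Claim~\ref{claim:maj-list} and its $2t$-analog (for alignment of $B,v,B',B\cup v, B'\cup v$ with $D$ and for $B\cup v, B'\cup v \in \mathcal W$), Claim~\ref{claim:projected-dist} (for $|L[B]|=|L[v]|=\ell$), and the coboundary conclusion~\eqref{eq:after_ug_expand} (for $\pi(B,v) = g(B)g(v)^{-1}$ and $\pi(B',v) = g(B')g(v)^{-1}$, each failing with probability $\le c + O(t/d)$), a union bound yields an event $\mathcal E$ of probability $1 - O(\sqrt{\tau} + c)$ on which \emph{all} of these hold simultaneously: $D$ good with $|L[D]|=\ell$; $L[D]|_B = L[B]$, $L[D]|_{B'} = L[B']$, $L[D]|_v = L[v]$ (ordered equalities, sizes $\ell$); $L[D]|_{B \cup v} = L[B \cup v]$, $L[D]|_{B' \cup v} = L[B' \cup v]$ with $B\cup v, B'\cup v \in \mathcal W$; and $\pi(B,v) = g(B)g(v)^{-1}$, $\pi(B',v) = g(B')g(v)^{-1}$.

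On $\mathcal E$ I would argue $\pi(B,v) = \mathrm{id}$ as follows. Since $B\cup v \in \mathcal W$ and $L[B\cup v]|_B = (L[D]|_{B\cup v})|_B = L[D]|_B = L[B]$ and likewise $L[B\cup v]|_v = L[v]$, the edge $(B,v)$ carries the genuine constraint, namely the unique $\pi$ with $L[B\cup v] = L[B] \circ \pi(L[v])$. But from the ordered equalities, $L[B\cup v]_i = L[D]_i|_{B\cup v} = (L[D]_i|_B)\circ(L[D]_i|_v) = L[B]_i \circ L[v]_i$ for every $i$, i.e.\ $L[B\cup v] = L[B]\circ \mathrm{id}(L[v])$; by the distinctness of the $\ell$ entries of $L[v]$ this identifies $\pi(B,v)=\mathrm{id}$. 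Combined with $\pi(B,v)=g(B)g(v)^{-1}$ we get $g(B)=g(v)$, hence $h(B):=g(B)(1)=g(v)(1)=:h(v)$, and symmetrically $h(B')=h(v)$; set $j := h(B) = h(B') = h(v) \in [\ell]$. Finally $R(B) = L[B]_{h(B)} = (L[D]|_B)_j = L[D]_j|_B$ (using $L[D]|_B=L[B]$ and $j \le |L[D]|=\ell$) and likewise $R(B') = L[D]_j|_{B'}$, so $R(B)|_Q = L[D]_j|_Q = R(B')|_Q$ since $Q\subseteq B$ and $Q\subseteq B'$. Therefore $\Pr[R(B)|_Q = R(B')|_Q] \ge \Pr[\mathcal E] \ge 1-O(\sqrt\tau + c)$, which in particular gives the claimed bound $1 - O(\tau^{1/4} + c^{1/2})$.

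The step I expect to be the main obstacle is the identity-permutation argument: it depends on comparing the majority-decoded lists \emph{as ordered tuples} (so that the re-indexing implicit in a UG constraint becomes trivial once all three lists are restrictions of one ordered list $L[D]$) and on the $t$-face lists having full size $\ell$ (so that $L[D]_1|_v,\dots,L[D]_\ell|_v$ are distinct), and the bookkeeping of these orderings across dimensions $t$, $2t$ and $d$ must be kept consistent with the definitions in Claims~\ref{claim:projected-dist}--\ref{claim:maj-list} and with the construction of the constraints $\pi$. A secondary point worth recording is that the poor spectral gap of $G$ (only about $1/2$) is never invoked: routing through the common $d$-face $D$ via the auxiliary face $v$ replaces any need for $g$ or $h$ to be globally close to a constant.
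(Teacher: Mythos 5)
Your approach is genuinely different from the paper's: the paper samples a $3t$-face $Z$, splits it two ways as $u\cup v\cup w = u\cup v'\cup w'$, packages the splittings as points of the multi-slice $\binom{[3t]}{t,t,t}$, and then runs a Markov-chain / small-set-expansion argument with the chains $\mathrm{T}_0,\mathrm{T}_1,\mathrm{T}_2$ to show that the ``pieced-together'' function $\tilde R_Z$ is essentially constant; a Markov averaging step then produces the $\tau^{1/4}$ and $c^{1/2}$. You instead route through a single shared auxiliary $t$-face $v$ inside the same $d$-face $D$, which---if it works---gives the cleaner bound $1-O(\sqrt\tau + c)$ directly, without any expansion argument on the multi-slice. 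You correctly observe that this bypasses the need to use that the up-down-up walk has second eigenvalue only about $1/2$.

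The gap is exactly the one you flagged: the claim $\pi(B,v)=\mathrm{id}$. That claim presupposes the \emph{ordered} equalities $L[B]_i = L[D]_i|_B$, $L[v]_i = L[D]_i|_v$, $L[B\cup v]_i = L[D]_i|_{B\cup v}$. But Claim~\ref{claim:maj-list} and the other majority-decoding statements give equality of lists only as unordered sets; the orderings of $L[B]$, $L[v]$, $L[B\cup v]$ are fixed arbitrarily (as the paper says, ``we fix an ordering for each one of the lists constructed thus far''), and there is no reason they all align with the ordering of $L[D]$. Indeed, if the orderings always aligned, all the constraints $\pi(u,v)$ would be the identity and the UG instance would be trivially satisfiable, making the coboundary step vacuous. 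So $\pi(B,v)=\mathrm{id}$ cannot be part of the event $\mathcal E$.

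The good news is that your argument doesn't actually need $\pi(B,v)=\mathrm{id}$; you should instead exploit the shared face $v$ together with the distinctness of $L[D]|_v$. On your event $\mathcal E$, the edge $(B,v)$ being satisfied by $h$ forces $R(B)\circ R(v)$ to equal some entry of $L[B\cup v]$ (regardless of orderings; this is what ``$h$ picks consistently from the lists'' means). Since $L[B\cup v]$ is the unordered restriction of $L[D]$ with no collisions, that entry is $f_i|_{B\cup v}$ for a unique $i\in[\ell]$, giving $R(B)=f_i|_B$ and $R(v)=f_i|_v$. By the same reasoning, $R(B')=f_{i'}|_{B'}$ and $R(v)=f_{i'}|_v$ for a unique $i'$. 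Since $L[D]|_v$ has $\ell$ distinct entries, $f_i|_v = R(v) = f_{i'}|_v$ forces $i=i'$, whence $R(B)|_Q = f_i|_Q = R(B')|_Q$. This repairs the proof, keeps the cleaner probability bound, and eliminates the ordering bookkeeping entirely rather than trying to make it consistent.
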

\begin{proof}
Sample $Z\sim \mu_{3t}$, 
and write $Z = u\cup v\cup w$, $Z = u\cup v'\cup w'$
independently. Note that by the strong triangle consistency,
get that with probability at least 
$1-O(\sqrt{\tau}+c)$ we have that 
\[
L[Z] 
= L[u]\circ \pi(u,v) L[v] \circ \pi(u,w) L[w]
= L[u]\circ \pi(u,v') L[v'] \circ \pi(u,w') L[w']
\]
and the edges $(u,v),(u,w), (u,v'), (u,w')$ are satisfied.
In that case, we conclude that $R(u)\circ R(v)\circ R(w)$
and $R(u)\circ R(v')\circ R(w')$ correspond to the same 
function in the list of $Z$, and so we get that
\begin{equation}\label{eq:multi_slice}
\Pr_{\substack{Z\sim \mu_{3t}\\ Z =u\cup v\cup w=u\cup v'\cup w'}}\big[R(u)\circ R(v)\circ R(w) = R(u)\circ R(v')\circ R(w')\big]
\geq 1-O(\sqrt{\tau}+c).
\end{equation}
For $Z\in X(3t)$, we associate splittings as $Z = u\cup v\cup w$ points in 
the multi-slice 
\[
\binom{[3t]}{t,t,t} = \big\{x\in \{0,1,2\}^{3t}~|~\forall j\in\{0,1,2\}, \big |\{i~|~x_i = j\}| = t\big\}
\]
by identifying $u$ with the set of coordinates equal to 
$0$, $v$ with the set of coordinates equal to $1$ and 
$w$ with the set of coordinates equal to $2$. 
We define $\tilde{R}_Z(x) = R(u)\circ R(v)\circ R(w)$. 
For each $j\in\{0,1,2\}$, consider the Markov chain $\mathrm{T}_j$ on $\binom{[3t]}{t,t,t}$ that from $x$ moves to $y$ where the set of 
coordinates that are $0$ are kept, and the rest are randomized. Then~\eqref{eq:multi_slice} implies that
\[
\Pr_{\substack{Z\sim \mu_{3t}\\ x\in \binom{[3t]}{t,t,t}, y\sim \mathrm{T}_0 x}}\big[\tilde{R}_Z(x) = \tilde{R}_Z(y)\big]\geq 1-O(\sqrt{\tau}+c)
\]
and analogously we have the same statement for $\mathrm{T}_1$
and $\mathrm{T}_2$, hence by the union bound
\[
\Pr_{\substack{Z\sim \mu_{3t}\\ x\in \binom{[3t]}{t,t,t}, y\sim \mathrm{T}_2\mathrm{T}_1\mathrm{T}_0 x}}\big[\tilde{R}_Z(x) = \tilde{R}_Z(y)\big]\geq 1-O(\sqrt{\tau}+c).
\]
Therefore, for at least 
$1-O(\tau^{1/4} + \sqrt{c})$ of $Z$, we 
have that 
\[
\Pr_{\substack{x\in \binom{[3t]}{t,t,t}, y\sim \mathrm{T}_2\mathrm{T}_1\mathrm{T}_0 x}}\big[\tilde{R}_Z(x) = \tilde{R}_Z(y)\big]\geq 1-O(\tau^{1/4} + \sqrt{c}),
\]
and we call such $Z$ decisive. Fix a decisive $Z$; 
the Markov chain $\mathrm{T}_2\mathrm{T}_1\mathrm{T}_0$ 
has second eigenvalue at most $1-\Omega(1)$, and 
thus from the above it follows that 
\[
\Pr_{x,y\in \binom{[3t]}{t,t,t}}\big[\tilde{R}_Z(x) = \tilde{R}_Z(y)\big]\geq 1-O(\tau^{1/4} + \sqrt{c}),
\]
and we define $R(Z)$ to be the most
popular value of $\tilde{R}_Z(x)$. Concluding, for 
decisive $Z$ we get  
\[
\Pr_{Z = u\cup v\cup w}
\big[R(z) = R(u)\circ R(v)\circ R(w)\big]
\geq 
1-O(\tau^{1/4}+c^{1/2}).
\]
Fix a decisive $Z$, and consider the
following direct product tester over $Z$: 
choose $Q\subseteq_{t/2} Z$, and then 
$Q\subseteq B, B'\subseteq_{t} Z$ such that $B\cap B' = Q$.
With probability at least $1-O(\tau^{1/4}+c^{1/2})$
we get that $R(B)_{Q} = R(Z)|_Q = R(B')|_{Q}$. 
Noting that sampling $Z\sim \mu_{3t}$ and then generating
$Q, B, B'$ yields a distribution of $(B,Q, B')$ that 
is $O(t^2/d) = o(1)$ close to the distribution of $Q, B, B'$
in the direct product tester in the lemma, so the conclusion
follows.
\end{proof}
\subsubsection{Concluding the global
structure}
With Lemma~\ref{lem:cobdry-to-global} in hand, we apply Theorem~\ref{thm:dinur-kaufman} 
to get that there exists a global function $G: X(1) \rightarrow \{0,1\}$ such that
\[
\Pr_{B \sim \mu_t}[G|_{B} = R(B)] 
\geq 1-O(\tau^{1/4}+c^{1/2}+\gamma).
\]
In the next lemma we analyze the agreement of $G$ with our lists 
$L[D]$ for $D\in X(d)$, thereby completing the proof
of Lemma~\ref{lem:list-agr-test}.
\begin{claim}
    $\Pr_{D\sim \mu_d}\big[\Delta(G|_{D}, L[D])\leq \frac{100\log(2\ell)}{t}\big]\geq 1-O(\tau^{1/4}+c^{1/2}+\gamma)$
\end{claim}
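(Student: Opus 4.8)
The plan is to push the near-agreement of $G$ with the local choices $R(\cdot)$ at level $t$ down to agreement of $G|_D$ with the list $L[D]$ at level $d$, exploiting that a random $t$-subface of $D$ detects essentially all of the disagreement between $G|_D$ and any fixed function on $D$.

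First I would collect two facts at level $t$. Theorem~\ref{thm:dinur-kaufman} was just applied to the assignment $R$, giving $\Pr_{B\sim\mu_t}[\,G|_B = R(B)\,]\ge 1-O(\tau^{1/4}+c^{1/2}+\gamma)$; since $R(B)=L[B]_{h(B)}$ with $h(B)\in[\ell]$, this says $G|_B$ equals the $h(B)$-th entry of $L[B]$. From Claim~\ref{claim:maj-list}(2), combined with the standard fact that drawing $D\sim\mu_d$ and then $B\subseteq_t D$ uniformly is the same joint law as drawing $B\sim\mu_t$ and then $D\supseteq_d B$, we get $\Pr_{D\sim\mu_d,\,B\subseteq_t D}[\,L[D]|_B=L[B]\,]\ge 1-O(\sqrt\tau)$. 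Writing $\beta_0=O(\tau^{1/4}+c^{1/2}+\gamma)$ and intersecting, with probability $\ge 1-\beta_0$ over $(D,B)$ both events hold, and on this event $G|_B=(L[D]|_B)_{h(B)}=(L[D]_{h(B)})|_B$; in particular $G|_B$ is one of the entries of $L[D]|_B$, i.e. $\Delta_B(G|_D,L[D]_{h(B)})=0$. Let $p_D$ denote the probability of the complementary (bad) event conditioned on $D$; then $\E_D[p_D]\le\beta_0$, so Markov gives $\Pr_D[p_D\ge 1/2]\le 2\beta_0$. Call $D$ \emph{nice} when $p_D<1/2$.

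Next I would fix a nice $D$. Since $p_D<1$ there is at least one good $B\subseteq_t D$, and for it $L[D]|_B=L[B]$, which forces $|L[D]|=\ell$ and makes $h(B)\in[\ell]$ a legal index into $L[D]$. For nice $D$ we have $\sum_{j\in[\ell]}\Pr_{B\subseteq_t D}[\,G|_B=(L[D]_j)|_B\,]\ge\Pr_{B\subseteq_t D}[\,G|_B\in L[D]|_B\,]\ge 1-p_D>1/2$, so by averaging there is an index $j^\ast$ with $\Pr_{B\subseteq_t D}[\,G|_B=(L[D]_{j^\ast})|_B\,]>\tfrac{1}{2\ell}$. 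Now let $\rho=\Delta(G|_D,L[D]_{j^\ast})$; the event $G|_B=(L[D]_{j^\ast})|_B$ is exactly the event that the random $t$-set $B$ avoids all $\rho d$ coordinates of $D$ on which $G|_D$ and $L[D]_{j^\ast}$ differ, which has probability exactly $\binom{(1-\rho)d}{t}/\binom{d}{t}\le(1-\rho)^t\le e^{-\rho t}$. Hence $e^{-\rho t}>\tfrac{1}{2\ell}$, so $\rho<\ln(2\ell)/t\le\tfrac{100\log(2\ell)}{t}$, and therefore $\Delta(G|_D,L[D])\le\rho\le\tfrac{100\log(2\ell)}{t}$.

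Finally, every nice $D$ satisfies $\Delta(G|_D,L[D])\le\tfrac{100\log(2\ell)}{t}$ and $\Pr_D[D\text{ nice}]\ge 1-2\beta_0=1-O(\tau^{1/4}+c^{1/2}+\gamma)$, which is exactly the claim. There is no substantive obstacle: Theorem~\ref{thm:dinur-kaufman}, Claim~\ref{claim:maj-list}, and the elementary count of $t$-subsets avoiding a fixed coordinate set do all the work, together with a Markov/pigeonhole step. The only point that needs a little care is the index bookkeeping for $h(B)$, which is why the argument routes through the event $L[D]|_B=L[B]$ — this guarantees $|L[D]|=\ell$ and that $h(B)$ points to an entry of $L[D]$, so that the local selections glue to a single function in $L[D]$ for a typical $D$.
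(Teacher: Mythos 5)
Your proof is correct and follows essentially the same route as the paper's: condition on $B\subseteq_t D$, combine $G|_B=R(B)$ (from Theorem~\ref{thm:dinur-kaufman}) with $L[D]|_B=L[B]$ (from Claim~\ref{claim:maj-list}), average over $D$ to find that for most $D$ the event $G|_B\in L[D]|_B$ holds for at least half of the $B$'s, and then convert this into a distance bound via the fact that a far-from-$G|_D$ function in the list agrees with $G$ on a random $t$-subset only with probability $\leq (1-\rho)^t$. The only cosmetic difference is that you run a direct pigeonhole over the indices $j\in[\ell]$ to extract a single $j^\ast$ with $\Pr_B[G|_B=(L[D]_{j^\ast})|_B]>1/(2\ell)$, while the paper argues by contrapositive with a union bound; these are equivalent and yield the same $O(\log\ell/t)$ bound. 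One small point to tighten: your assertion that a single good $B$ with $L[D]|_B=L[B]$ ``forces $|L[D]|=\ell$'' needs the projection to $B$ to be injective on $L[D]$; it is cleaner to invoke Claim~\ref{claim:list-size} directly and fold the $O(\tau)$ event $|L[D]|\neq\ell$ into the Markov step, which is how the paper implicitly treats it.
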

\begin{proof}
Sample $D\sim \mu_d$ and then $B\subseteq_{t} D$. 
Then $L[B] = L[D]|_B$ with probability at least 
$1-O(\tau)$, and $G|_{B}\in L[B]$ with probability
$1-O(\tau^{1/4}+c^{1/2}+\gamma)$, hence 
\[
\Pr_{B\subseteq_{t} D\sim \mu_d}\big[\Delta(G|_{B}, L[D]|_{B}) = 0\big]
\geq 1-O(\tau^{1/4}+c^{1/2}+\gamma).
\]
We get that with probability at least $1-O(\tau^{1/4}+c^{1/2}+\gamma)$ over the 
choice of $D$, it holds that 
$\Delta(G|_{B}, L[D]|_{B}) = 0$ 
with probability at least $1/2$ 
over the choice of $B$, and we argue that event 
in question holds for each such $D$.
To see that, first note that
fixing $f\colon D\to \{0,1\}$ 
such that $\Delta(G|_{D},f)\geq 100\log(\ell)/t$, 
it holds that $G|_{B} = f|_{B}$ with 
probability at most 
\[
\left(1-\frac{100\log(2\ell)}{t}\right)^t
\leq (2\ell)^{-100}.
\]
Thus, if $\Delta(G|_{D}, L[D])\geq 100\log(\ell)/t$,
then by the union bound 
$\Delta(G|_{B}, L[D]|_{B}) = 0$ 
with probability at most $(2\ell)^{-99} < 1/2$.
\end{proof}

\section{Proof of Theorem~\ref{thm:HDX_dp}}
In this section we outline 
the proof of 
Theorem~\ref{thm:HDX_dp}, restated formally below:

\begin{theorem}\label{thm:stronger_hdx_testing}
There is $c>0$ such 
that for all $\delta>0$,
there are $\xi,\eta,\gamma>0$, 
$m\in\mathbb{N}$, $C>1$ and $H\in\mathbb{N}$ such that
the following holds.
Suppose $k\in\mathbb{N}$ 
is such that $X$ is an 
$(m,k/C,\xi,c)$ weak UG coboundary expander, and 
$F\colon X(k)\to\{0,1\}^k$ 
passes the $(k,s)$ direct product tester over $X$ for 
$s = \eta k$ with probability at least $\delta$. 
Then there exists 
$f\colon X(1)\to\{0,1\}$
such that
\[
\Pr_{A\sim \mu_k}
[\Delta(F[A], f|_{A})\leq H/k]\geq \eta.
\]
\end{theorem}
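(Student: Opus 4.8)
The proof follows the two-stage template of Theorem~\ref{thm:sufficient_ugexpand_formal}: first reduce low-soundness direct product testing to $99\%$ list-agreement testing (the analog of Lemma~\ref{lem:agr-to-list-agr}), and then use weak UG coboundary expansion -- now available on the \emph{linear} level $r=k/C$ -- to extract a global function (the analog of Lemma~\ref{lem:list-agr-test}). As noted in the proof overview, the two arguments diverge only in the Johnson-scheme input. In place of Theorem~\ref{thm:DG}, which analyzes the $(k,\sqrt k)$ Johnson tester and only concludes agreement on a $1-o(1)$ fraction of coordinates, I would invoke the corresponding direct product theorem for the $(k,\eta k)$ Johnson tester (e.g.\ from~\cite{ImpagliazzoKW09}): if $F_D$ passes the $(k,\eta k)$ test inside $D$ with probability $\eps$, then there is $g\colon D\to\{0,1\}$ and a constant $H_0=H_0(\delta)$ with $\Delta(g|_A,F_D[A])\leq H_0/k$ for an $\Omega(\eps^{O(1)})$ fraction of $k$-faces $A\subseteq D$. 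This is the regime in which the direct product theorem yields agreement on all but constantly many coordinates, and is the reason we take $s=\eta k$ linear in $k$ but below $k/100$.

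Second, I would re-run the reduction of Section~\ref{sec:agr-to-list-agr} with three substitutions: the agreement slack $t=k^{-\Omega(1)}$ in Algorithm~\ref{algo-gap} and Lemmas~\ref{lem:algo1}--\ref{lem:agr-johnson} is replaced throughout by $H_0/k$ (``all but $O(1)$ coordinates''); the $\sqrt k$-intersection is replaced by $\eta k$ in Lemma~\ref{lem:agr-test-sampling}; and the short-list algorithm is re-tuned (base parameter of order $1/k$) so that its output lists have pairwise distance $\Theta_\delta(1/k)$ as a fraction of $D$, with the List-Agreement closeness parameter $\eta'$ also $\Theta_\delta(1/k)$ -- well above the slack $H_0/k$ but well below the list distance. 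The localizing step, the short-list algorithm, the gap-generation step and the downwards/upwards-consistency steps all carry over unchanged, the CSP sub-instance arguments (Lemma~\ref{lem:no-jump} via Theorem~\ref{thm:max-csp}) only requiring $d$ large relative to $k$. The upshot is the analog of Lemma~\ref{lem:agr-to-list-agr}: short, non-empty lists $(L[D])_{D\in X(d)}$ of pairwise distance $\Theta_\delta(1/k)$, in which every function agrees with $F|_D$ up to $O_\delta(1)$ coordinates on an $\Omega(\delta^{O(1)})$ fraction of $k$-subsets, that pass the List-Agreement-Test~\ref{list-agr-test-hdx} with probability $1-O(\tau)$.

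Third, I would mirror Section~\ref{sec:list-agr-testing}. Choose the UG level $t$ of order $\delta^{O(1)}k$, which (for an appropriate constant $C=C(\delta)$) lies in the window of~\eqref{eq:choose_t} and satisfies $t\leq r=k/C$. Projecting the lists to $t$-faces by weighted majority (Claims~\ref{claim:list-size}--\ref{claim:maj-list}), defining the Unique-Games instance on $G_t[X]$ and proving strong triangle consistency (Claim~\ref{claim:ug_instance_triangle_consistent}) go through verbatim; the weak $(m,k/C,\xi,c)$ UG coboundary expansion then produces $g\colon X(t)\to S_m$ realizing the constraints up to $1-c$, hence a selection $R(u)=L[u]_{g(u)(1)}$ passing the standard direct product test over $X$ with probability $1-O(\tau^{1/4}+c^{1/2})$ by Lemma~\ref{lem:cobdry-to-global}, and Theorem~\ref{thm:dinur-kaufman} gives $G\colon X(1)\to\{0,1\}$ with $G|_{B}=R(B)$ \emph{exactly} for a $1-O(\tau^{1/4}+c^{1/2}+\gamma)$ fraction of $t$-faces $B$. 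The exactness here (rather than a $1-o(1)$ fractional agreement) is what ultimately produces the $O(1/k)$ rather than $o(1)$ guarantee. Concluding as in the last claim of Section~\ref{sec:list-agr-testing}: for a typical $D$ there is an index $\ell(D)$ with $L[D]_{\ell(D)}|_{B}=R(B)$ for $1-o(1)$ of the $t$-faces $B\subseteq D$, so $G|_{B}=L[D]_{\ell(D)}|_{B}$ for $1-o(1)$ of $B$, and a Chernoff/union bound over functions at fractional distance $\geq 100\log(2\ell)/t$ from $G|_D$ forces $G$ and $L[D]_{\ell(D)}$ to differ on at most $m_D=O_\delta(d/k)$ coordinates of $D$. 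Sampling $K\subseteq_{k}D$, with probability $\Omega(\delta^{O(1)})$ the good-agreement property gives $\Delta(F[K],L[D]_{\ell(D)}|_K)\leq H_0/k$, and independently, since $\E[\#\{\text{bad coords}\}\cap K]=m_Dk/d=O_\delta(1)$, at most $H_1=O_\delta(1)$ bad coordinates land in $K$ with probability $\geq 1/2$; combining, $\Delta(F[K],G|_K)\leq (H_0+H_1)/k=:H/k$ for an $\Omega(\delta^{O(1)})$ fraction of $K\in X(k)$, which is the theorem with $f=G$.

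The main obstacle, and where I would concentrate the work, is the quantitative bookkeeping: verifying that the linear-intersection Johnson direct product theorem is available in precisely the ``$O(1)$ wrong coordinates on an $\Omega(\delta^{O(1)})$ fraction'' form, and then re-tuning every parameter of the list-agreement machinery so that the coordinate-level slack $\Theta_\delta(1/k)$ stays strictly between zero and the list pairwise distance throughout, while the UG level $t$ is simultaneously $\Theta_\delta(k)$ -- small enough relative to $d$ that the final disagreement $m_D=O_\delta(d/k)$ is missed by a random $k$-subset -- and at most $r=k/C$. This simultaneous requirement is exactly why coboundary expansion on a level \emph{linear} in $k$ is needed rather than on a constant level.
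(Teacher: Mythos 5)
Your overall plan matches the paper's: keep the two-stage architecture of Theorem~\ref{thm:sufficient_ugexpand_formal}, swap in a linear-intersection Johnson direct product theorem that gives ``wrong on at most $O(1)$ coordinates'' rather than ``wrong on a $1/k^{\Omega(1)}$ fraction'', rescale the slack and list-distance parameters from $k^{-\Omega(1)}$ to $\Theta_\delta(1/k)$, observe that the UG level must therefore be $\Theta_\delta(k)$ for the projected lists not to collapse (which is exactly why one now needs coboundary expansion up to level $k/C$), and finish via Dinur--Kaufman and a Chernoff/union bound. This is precisely the paper's modification list, and your accounting of why the final disagreement becomes $O(1/k)$ is correct.

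The one place you undersell the work is the Johnson-scheme input itself. You invoke ``the corresponding direct product theorem for the $(k,\eta k)$ Johnson tester (e.g.\ from~\cite{ImpagliazzoKW09})'' as if it were off the shelf, and later call verifying it ``quantitative bookkeeping.'' It isn't. The \emph{global} result in~\cite{ImpagliazzoKW09} (as also credited in Theorem~\ref{thm:DG}) is in the weaker $\Delta(g|_A,F[A])\le 1/k^{\Omega(1)}$ form; what~\cite{ImpagliazzoKW09} supplies in the $O(1)/k$ form are only \emph{local} decoders $g_{A_0,B_0}$ attached to excellent pairs (the paper imports these as Lemmas~\ref{lem:good_excellent}--\ref{lemma:excellent_agree}). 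Upgrading this to a single global $g$ with $\Delta(g|_A,F[A])\le t/k$ on an $\Omega(\eps^{12})$ fraction of $k$-faces is genuinely new content in this paper: it is Theorem~\ref{thm:johnson-dp}, proved in Appendix~\ref{sec:johnson_thm} by a gluing argument that runs a small-set-expansion/hypercontractivity-style analysis of a Markov chain $\mathrm{T}$ on the multi-slice $\binom{[n]}{\alpha k,(1-\alpha)k,n-k}$ (Claims~\ref{claim:similar_restrictions}--\ref{claim:globally_close_to_each_other}, with the eigenvalue bound of Claim~\ref{claim:eigenvalue_bound} coming from the symmetric-group spectrum), in the style of~\cite{BKM3}. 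So your proposal is sound in structure, but the missing piece is not parameter bookkeeping --- it is that gluing argument, without which there is no linear-intersection Johnson theorem to plug into the reduction.
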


The proof of 
Theorem~\ref{thm:stronger_hdx_testing} 
proceeds in exactly the same 
way as the proof of Theorem~\ref{thm:HDX_dp_weak}, 
except that we use a different 
agreement theorem over the 
Johnson scheme: we seek a global function $f$
that has stronger agreement with the assignment $F$. 
Namely, we want $\Delta(f|_{A}, F[A]) = O(1)$
for a large fraction of the 
$k$-faces $A$. Below is a 
formal statement of the 
Johnson agreement theorem
we need. 

\begin{theorem}\label{thm:johnson-dp}
For every $\eps>0$ there is $\alpha \in (0,1)$, $t \in \N$ such that for the following holds 
for sufficiently large $k$, 
and for $n$ sufficiently large
compared to $k$. If $F\colon \binom{[n]}{k}\to\{0,1\}^k$ passes the $(k, \alpha k)$ 
direct product test with probability $\eps$, then there exists $g$ such that,
\[\Pr_{A \subseteq_{k} [n]}[
F[A] \neq_{<\frac{t}{k}} g(A)] \geq \Omega(\eps^{12}).\]
\end{theorem}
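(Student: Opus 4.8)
The statement is a direct product testing theorem over the Johnson scheme in the low soundness regime, with a stronger conclusion than Theorem~\ref{thm:DG}: instead of agreement with $g$ on a $(1-1/k^{\Omega(1)})$ fraction of coordinates, we want agreement on all but $O(1)$ coordinates, at the price of using a larger intersection parameter $\alpha k$ (linear in $k$) rather than $\sqrt k$. The plan is to reduce this to the weaker statement of Theorem~\ref{thm:DG} followed by a bootstrapping/noise-reduction step, using the extra structure afforded by the larger intersection size. First I would apply Theorem~\ref{thm:DG} (or a variant of it valid for the $(k,\alpha k)$ test — note the $(k,\sqrt k)$ test is implied by the $(k,\alpha k)$ test up to a loss in the success probability, since conditioning a size-$\alpha k$ intersection down to a random size-$\sqrt k$ subset preserves the consistency check and only loses polynomially in the success probability via an averaging argument) to obtain a function $g\colon[n]\to\{0,1\}$ with $\Pr_A[\Delta(g|_A,F[A])\le \nu]\ge \Omega(\eps^{12})$ for some $\nu = 1/k^{\Omega(1)}$. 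Call $A$ \emph{good} if this near-agreement holds.

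The core of the argument is to upgrade ``$(1-\nu)$-agreement on a $\Omega(\eps^{12})$ fraction'' to ``agreement on all but $t$ coordinates on a $\Omega(\eps^{12})$ fraction''. The idea: for a good $A$, the set $D_A = \{i\in A : F[A]_i\ne g(i)\}$ has size $\le \nu k$, but typically we expect it to be much smaller; the points of $D_A$ are ``local errors'' that should be washable by the consistency of the test. I would run the following cleanup: sample $A$ good, and sample $A'$ with $|A\cap A'|=\alpha k$, $A'$ also good, such that the test passes, i.e. $F[A]|_{A\cap A'}=F[A']|_{A\cap A'}$. Since $A\cap A'$ is a $\alpha$-fraction of $A$ and $F[A]$ agrees with $g|_A$ off a tiny set $D_A$, while on $A\cap A'$ the values are forced equal to $F[A']$ which agrees with $g$ off $D_{A'}$, we learn that $D_A\cap(A\cap A')$ and $D_{A'}\cap(A\cap A')$ are both contained in the symmetric-difference-type error set; pooling over many $A'$ shows that the ``stable'' part of $D_A$ — coordinates $i$ where $F[A]_i\ne g(i)$ robustly, i.e. this disagreement survives intersection with many other good faces — must actually be empty in expectation, because each such coordinate would contribute a detectable inconsistency. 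More precisely, I would define $g'$ by re-voting: $g'(i) = \mathrm{maj}_{A\ni i,\,A\text{ good}} F[A]_i$, argue via the expander mixing lemma / sampling properties of the Johnson graph that $g'$ is well-defined on the relevant vertices and that $\Delta(g',g)$ is small, and then show that for a $\Omega(\eps^{12})$ fraction of good $A$, $F[A]$ disagrees with $g'|_A$ on at most $t = t(\eps)$ coordinates, where $t$ absorbs the rare "outlier" faces whose error set is anomalously large. The Chernoff bound (Theorem~\ref{thm:chernoff}) controls how the error set of $A$ restricts to random subsets, and the $(k,\alpha k)$ test with $\alpha$ a constant is exactly what makes the intersection large enough for this restriction argument to have no slack.

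The step I expect to be the main obstacle is establishing that the error sets $D_A$ are genuinely local — i.e. that there is a bounded-size ``universal exception set'' $E\subseteq[n]$, $|E|=O_\eps(1)$, such that for most good $A$, $D_A\subseteq E$ (equivalently, that $F[A]_i = g'(i)$ for all $i\in A\setminus E$). A priori the disagreements of different good faces could be spread over many distinct coordinates; ruling this out requires showing that a coordinate where $F[\cdot]_i$ disagrees with $g'(i)$ on a non-negligible fraction of good faces through $i$ would create a non-negligible fraction of test-failing pairs, contradicting (a conditioned version of) the hypothesis that the test passes with probability $\eps$. This is a second-moment / double-counting argument on the bipartite inclusion graph between coordinates and good $k$-faces, and the quantitative bookkeeping — tracking how the $\Omega(\eps^{12})$ density of good faces interacts with the $\eps$ success probability and the constant intersection fraction $\alpha$ — is the delicate part; this is presumably why the theorem is stated with the polynomial loss $\Omega(\eps^{12})$ and with $\alpha,t$ depending on $\eps$ in an unspecified way. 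Once the universal exception set is in hand, taking $f = g'$ and excluding $E$ (a set of constant size, so it affects the distance by at most $|E|/k = O(1/k)$, folded into the $t/k$ bound) finishes the proof.
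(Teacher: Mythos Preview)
Your plan has two genuine gaps, and the paper's proof takes a substantially different route.

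\textbf{The reduction to Theorem~\ref{thm:DG} is incorrect.} You claim that passing the $(k,\alpha k)$ test implies passing the $(k,\sqrt{k})$ test, by ``conditioning a size-$\alpha k$ intersection down to a random size-$\sqrt{k}$ subset.'' But after this conditioning, $A$ and $A'$ still share $\alpha k$ coordinates, not $\sqrt{k}$; the resulting distribution over $(A,J,A')$ is not the $(k,\sqrt{k})$ tester's distribution, in which $A\setminus J$ and $A'\setminus J$ are essentially independent. So Theorem~\ref{thm:DG} does not apply as a black box. (Also, the conclusion of Theorem~\ref{thm:DG} is $\Omega(\eps^6)$, not $\Omega(\eps^{12})$.)

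\textbf{The bootstrapping step is not justified.} Suppose you are handed $g$ with $\agr_{\nu}(g,F)\geq \delta$ for $\nu=k^{-c}$. To run your ``universal exception set'' argument you need that many \emph{pairs} of good faces pass the test, so that their error sets $D_A$ cohere on intersections. But the hypothesis only says the test passes with probability $\eps$ over \emph{all} pairs, and good faces form only a $\delta$ fraction; there is no reason the test passes with useful probability when restricted to good pairs. Nor does your double-counting sketch rule out the scenario in which each good $A$ has $\nu k$ errors spread over a set of $\Theta(\nu n)$ coordinates, with no bounded universal $E$. Extracting that level of structure requires going inside the direct-product proof rather than using it as a black box.

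\textbf{What the paper actually does.} It follows the IKW framework directly: for each ``excellent'' pair $(A_0,B_0)$ it builds a local decoder $g_{A_0,B_0}\colon[n]\to\{0,1\}$ that already matches $F$ on the consistent set up to $O(R)/k$ Hamming error (Lemma~\ref{lemma:excellent_agree})---so the $O(1)/k$ accuracy is present from the start at the local level, not obtained by bootstrapping. The new ingredient is the gluing: the paper sets up a Markov chain $\mathrm{T}$ on the multi-slice, shows two local decoders along an edge of $\mathrm{T}$ are close with probability $\Omega(\eps^6)$ (Claim~\ref{claim:similar_restrictions}), proves $\lambda_2(\mathrm{T})\leq \alpha^{\Omega(1)}$, and then uses a small-set-expansion argument (Claims~\ref{claim:sse_conclusion},~\ref{claim:globally_close_to_each_other}) to upgrade ``close along $\mathrm{T}$-edges'' to ``close for independent $x,y$'' with probability $\Omega(\eps^{12})$. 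Fixing one such $g_x$ gives the global function. The $\eps^{12}$ is exactly the square arising from the SSE step, not from stacking two applications of a DG-type theorem.
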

\begin{proof}
The proof is deferred to Section~\ref{sec:johnson_thm}.
\end{proof}

\subsection{Deriving 
Theorem~\ref{thm:stronger_hdx_testing} from Theorem~\ref{thm:johnson-dp}}
Following the argument in Section~\ref{sec:basic_pf} 
with slight modifications and using 
Theorem~\ref{thm:johnson-dp} instead of Theorem~\ref{thm:DG}, 
one gets the``sufficient'' part of 
Theorem~\ref{thm:stronger_hdx_testing}. 
Below, we elaborate on the slight modifications that
are necessary.
\begin{enumerate}
    \item Following the argument in Section~\ref{sec:localizing}, we localize 
    the tester to $d$ faces again, and the test 
    we consider is the $(k,\alpha k)$ test (just like
    in the overall complex). 
    
    \item We run a procedure which is the same as the 
    short list algorithm in Section~\ref{sec:getting_a_list},
    however the parameters are a bit different (as the
    soundness of Theorem~\ref{thm:johnson-dp} is a $12$th
    power of $\eps$, as opposed to a $3$rd power of $\eps$
    as in Theorem~\ref{thm:DG}. The effect of that 
    is that all the powers in the description of the short 
    list algorithm grow by a factor of $4$.
    
    \item When we run the analog of Lemma~\ref{lem:agr-johnson} in our context, the distance parameter changes.
    This amounts to changing the $\eta$ in Lemma~\ref{lem:agr-johnson} (which is inherited from 
    the distance in Theorem~\ref{thm:DG}) to be $t/k$
    (which is the distance in Theorem~\ref{thm:johnson-dp}). 
    This change propogates throughout the argument.

    \item In Section~\ref{sec:list-agr-testing}, when we 
    project the lists onto level $r$, it is important 
    for us that the list sizes do not collapse. As seen
    in the proof of Claim~\ref{claim:projected-dist}, 
    this ultimately boils down to the fact that for 
    almost all of the $d$-faces $D$, the functions in 
    the list of $D$ are pairwise far from each other. 
    In the context of Section~\ref{sec:basic_pf} this
    distance is a constant fractional distance (hence
    we can reduce to a constant level $r$). In the context
    of this section though, our distance promise is milder
    and stands at $C/k$ for a large constant $C$, and 
    therefore we are able to only reduce to level $k/C'$ 
    and retain the non-collapse property of the list. 
    This is ultimately the reason that in Theorem~\ref{thm:stronger_hdx_testing} we 
    require UG coboundary expansion for a level which 
    is comparable to $k$.
\end{enumerate}
This summarizes the modifications required 
in Section~\ref{sec:basic_pf} to make the proof of
Theorem~\ref{thm:stronger_hdx_testing} go through.
\section{UG Coboundary Expansion for Known Complexes}\label{sec:known_complex}

\subsection{LSV Complexes}
In this section, we apply Theorem~\ref{thm:necessary-formal} to get that there exist LSV complexes~\cite{LSV1,LSV2} that do not support direct product tests with soundness $1/2$. We will make use
of the following result due to~\cite{KaufmanKazLub,EvraKaufman}. In topological 
language, the result asserts that there are LSV complexes with non-vanishing cohomology, and in fact elements in the cohomology must have large weight. We refrain from defining these notions and instead state their result in our language. We will need a quantitatively stronger version though, which was recently established by Dikstein and Dinur~\cite{DiksteinD23}.

\begin{theorem}\label{thm:evra-kaufman}
There exists $\mu > 0$, such that for all large enough $d \in \N$ and $\gamma > 0$, there is an infinite sequence of $d$-dimensional LSV complexes $\{X_n\}$ such that the following holds. For all $n$, the complex $X_n$ is a $\gamma$ two-sided local spectral expander and there exists a UG instance $\Psi_n = (G_1[X_n],\Pi_n)$ over $\F_2$ that is $1$-triangle consistent but $\val(\Psi_n) \leq 1-\mu$.
\end{theorem}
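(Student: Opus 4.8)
The plan is to read Theorem~\ref{thm:evra-kaufman} as a translation, into the Unique-Games language of this paper, of the known fact that sufficiently high-dimensional LSV complexes carry a non-trivial $\mathbb{F}_2$-$1$-cocycle that is linearly far from every $1$-coboundary. The first step is to record the dictionary. Since every element of $S_2$ is its own inverse, identify $S_2$ with $\mathbb{F}_2$ (the identity with $0$, the transposition with $1$); then a symmetric permutation assignment $\Pi=(\pi(u,v))_{u\cup v\in X(2)}$ on $G_1[X]$ is the same as a function $z\colon X(2)\to\mathbb{F}_2$, the triangle-consistency relation $\pi(u,v)\pi(v,w)=\pi(u,w)$ becomes $z(u,v)+z(v,w)+z(u,w)=0$ for all $\{u,v,w\}\in X(3)$ --- i.e. the $\mathbb{F}_2$-$1$-cocycle condition --- and a global labeling $A\colon X(1)\to\mathbb{F}_2$ satisfies the edge $(u,v)$ exactly when $z(u,v)=A(u)+A(v)$, so the edge-functions realizable by some global labeling form precisely the space $B^1(X;\mathbb{F}_2)$ of $1$-coboundaries. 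Consequently, for any cocycle $z$ one has $\mathrm{val}(G_1[X],\Pi)=1-\frac{1}{|X(2)|}\min_{b\in B^1(X;\mathbb{F}_2)}\mathrm{wt}(z+b)$, where $\mathrm{wt}$ counts nonzero edges. Thus a $1$-triangle-consistent UG instance over $\mathbb{F}_2$ of value at most $1-\mu$ is the same object as a $1$-cocycle whose $\mathbb{F}_2$-distance to $B^1(X;\mathbb{F}_2)$ is at least $\mu|X(2)|$.

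With the dictionary fixed, the second step is to produce the complexes and the cocycle. Take $X_n=\Gamma_n\backslash\mathcal{B}$, quotients of the ($d$-dimensional) affine Bruhat--Tits building $\mathcal{B}$ of $\mathrm{PGL}_{d+1}$ over a non-archimedean local field with residue field of size $q$, by congruence subgroups $\Gamma_n$; these are the LSV complexes, and by~\cite{LSV1,LSV2} they are $\gamma$ two-sided local spectral expanders once $q$ is taken large enough in terms of $\gamma$. Two inputs are then invoked: (i) the congruence data can be chosen so that $H^1(X_n;\mathbb{F}_2)\neq 0$ --- e.g. so that the abelianization $\Gamma_n^{\mathrm{ab}}\cong H_1(X_n;\mathbb{Z})$ carries $2$-torsion, which by universal coefficients forces a non-trivial $1$-cocycle $z_n$ --- and (ii) since $d$ is large relative to $1$, these complexes are cosystolic expanders in dimension $1$, so the minimum weight of a non-trivial $\mathbb{F}_2$-$1$-cocycle is at least $\mu|X_n(2)|$ for a constant $\mu>0$ independent of $n$; (i) and (ii) together are the content of~\cite{KaufmanKazLub,EvraKaufman}, with the quantitative bound in the sharp form established by~\cite{DiksteinD23}. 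Since $B^1(X_n;\mathbb{F}_2)$ is a linear subspace, $z_n+b$ is again a non-trivial cocycle for every $b\in B^1(X_n;\mathbb{F}_2)$, so $\mathrm{wt}(z_n+b)\geq \mu|X_n(2)|$; hence $\min_b\mathrm{wt}(z_n+b)\geq\mu|X_n(2)|$ and, letting $\Pi_n$ be the permutation assignment corresponding to $z_n$, the instance $\Psi_n=(G_1[X_n],\Pi_n)$ is $1$-triangle consistent and has value at most $1-\mu$ by the dictionary. This proves the theorem.

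The genuinely non-trivial ingredients are (i) and (ii), and I expect (ii) --- the linear lower bound on the cosystole --- to be the main obstacle if one wishes to reprove rather than cite it. This is precisely the Evra--Kaufman ``cosystolic expansion from local spectral expansion'' argument: a light non-trivial cocycle would, by the expansion of the links together with a Gromov-type local-correction/descent scheme, have to be correctable to a coboundary on a random portion of the complex and then globally, contradicting non-triviality; the Dikstein--Dinur refinement is what makes $\mu$ a genuine constant (with only mild dependence on $d$ and $\gamma$) rather than, say, $1/\mathrm{poly}(d)$ or worse. Arranging the arithmetic in (i) so as to keep the Ramanujan/two-sided-expansion property while forcing $H^1(\cdot;\mathbb{F}_2)\neq 0$ is a smaller point, handled by a suitable choice of congruence level; since the statement only claims the existence of one such family, there is no tension between the two requirements.
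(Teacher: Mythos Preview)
Your proposal is correct and matches the paper's treatment: the paper does not prove this theorem but presents it as a restatement, ``in our language,'' of the cited results of~\cite{KaufmanKazLub,EvraKaufman} (with the quantitative strengthening from~\cite{DiksteinD23}) that certain LSV complexes have non-vanishing $\mathbb{F}_2$-cohomology in dimension~$1$ and that every non-trivial $1$-cocycle has weight $\Omega(|X(2)|)$. Your dictionary between $\mathbb{F}_2$-valued UG instances on $G_1[X]$ and $\mathbb{F}_2$-$1$-cochains (triangle consistency $\leftrightarrow$ cocycle condition; ${\sf val}=1-\mu$ $\leftrightarrow$ distance $\mu|X(2)|$ from $B^1$) is exactly the translation the paper alludes to and declines to spell out.
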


Using Theorem~\ref{thm:evra-kaufman} in conjunction with
Theorem~\ref{thm:necessary-formal} we get the following
corollary.

\begin{corollary}\label{cor:lsv}
There exists $\eps > 0$, such that for all $\eta > 0$, there is large enough $d$ and $k \leq d$ and an infinite sequence of $d$-dimensional LSV complexes $\{X_n\}_{n \in \N}$, such that  the following holds for each $X_n$. There is an assignment $F: X_n(k) \rightarrow \{0,1\}^k$ that passes the $(k,\sqrt{k})$ direct product tester with probability at least $1/2$, but for all $f: X(1) \rightarrow \{0,1\}$ we have that,
\[\Pr_{A\sim \mu_k}
\big[\Delta(F[A], f|_A) \geq \eps \big]\leq \eta.
\]
\end{corollary}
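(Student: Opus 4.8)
The plan is to obtain Corollary~\ref{cor:lsv} as a black-box consequence of Theorem~\ref{thm:evra-kaufman} and Theorem~\ref{thm:necessary-formal}: Theorem~\ref{thm:evra-kaufman} supplies, on suitable LSV complexes, a ``bad'' $\F_2$-instance, and Theorem~\ref{thm:necessary-formal} converts any such bad instance into a direct-product assignment with no global structure. Concretely, I would first fix the universal constant $\mu>0$ from Theorem~\ref{thm:evra-kaufman}, set $m=2$, $r=1$, $c=\mu/2$, and take $\eta$ to be the parameter appearing in the statement of Corollary~\ref{cor:lsv}. Applying Theorem~\ref{thm:necessary-formal} with these choices yields $d,k\in\N$ and $\xi,\gamma>0$ depending only on $\mu$ and $\eta$. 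Then I would invoke Theorem~\ref{thm:evra-kaufman} with this $d$ and spectral parameter $\gamma$, obtaining an infinite family of $d$-dimensional LSV complexes $\{X_n\}$, each a $\gamma$ two-sided (hence one-sided) local spectral expander carrying a $\F_2$-valued UG instance $\Psi_n=(G_1[X_n],\Pi_n)$ that is perfectly triangle consistent and has $\val(\Psi_n)\le 1-\mu$.

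\textbf{The translation step.} The heart of the argument is to reinterpret $\Psi_n$ as an $S_2$-valued instance certifying that $X_n$ is \emph{not} a weak $(2,1,\xi,c)$ UG coboundary expander in the sense of Definition~\ref{def:ug_coboundary_expander}. Identify $\F_2$ with $S_2=\{\mathrm{id},\sigma\}$ via $0\leftrightarrow\mathrm{id}$ and $1\leftrightarrow\sigma$, so that if $z\colon X_n(2)\to\F_2$ is the $1$-cochain underlying $\Pi_n$, the associated constraint at level $t=1$ is $\pi(u,v)=\sigma^{z(u,v)}\in S_2$. I would check that $\pi$ is (perfectly) strongly triangle consistent by exhibiting the lists: set $L(u)=(0,1)$ for every $u\in X_n(1)$, and for every triangle $T\in X_n(3)$ let $L'(T)$ record the two assignments $T\to\{0,1\}$ consistent with $z|_T$ (there are exactly two, since $z$ is triangle consistent), so that $L'(T)=L(u)\circ\pi(u,v)L(v)\circ\pi(u,w)L(w)$ for the splittings of $T$. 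On the other hand, for $g\colon X_n(1)\to S_2$ the condition $\pi(u,v)=g(u)g(v)^{-1}$ says precisely that the $\F_2$-assignment $e$ defined by $g(u)=\sigma^{e(u)}$ satisfies $z(u,v)=e(u)+e(v)$, so the fraction of edges on which it holds is at most $\val(\Psi_n)\le 1-\mu<1-c$. Hence $X_n$ violates Definition~\ref{def:ug_coboundary_expander} with parameters $(2,1,\xi,c)$.

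\textbf{Conclusion.} Given the previous step, Theorem~\ref{thm:necessary-formal} applies to each $X_n$ and produces $F\colon X_n(k)\to\{0,1\}^k$ passing the $(k,\sqrt k)$ direct product tester with probability at least $1/m-O(\sqrt\xi)=1/2-O(\sqrt\xi)$ (and equal to $1/2$ once the constructed instance is perfectly strongly consistent), while for every $f\colon X_n(1)\to\{0,1\}$ one has $\Pr_{A\sim\mu_k}[\Delta(F[A],f|_A)\ge\eps]\le\eta$ with $\eps=\Omega_{m,r,c}(1)=\Omega_\mu(1)$ a universal constant. Since $d$, $k$, $\eps$ and $\gamma$ depend only on $\mu$ and $\eta$, they are uniform over the whole family, which is exactly what Corollary~\ref{cor:lsv} asserts.

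\textbf{Expected obstacle.} I expect the one nontrivial point to be the translation step: verifying carefully that the classical $\F_2$-coboundary-expansion failure---a $1$-cocycle far from all coboundaries---is genuinely an instance of the failure of the paper's \emph{weak} UG coboundary expansion (Definition~\ref{def:ug_coboundary_expander}), i.e.\ that the $S_2$-instance above is strongly triangle consistent with small $\xi$. This is the $t=1$ regime in which the paper's notion is meant to coincide with the standard non-Abelian coboundary expansion (cf.\ Remark~\ref{rmk:non_ab_cobound}), so the bookkeeping with the lists $L,L'$ should go through; should the naive constant-list choice fail to make $\xi$ small enough because of the ordering ambiguity in splitting a triangle, one can instead embed into a larger $S_m$ or work at a slightly larger constant level $t\le r$, at the (harmless for the qualitative statement) cost of replacing the soundness $1/2$ by some constant still bounded away from the Johnson-scheme direct-product threshold.
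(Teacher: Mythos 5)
Your proposal is correct and follows essentially the same route as the paper: fix $m=2$, $r=1$, $\xi=0$, invoke Theorem~\ref{thm:evra-kaufman} to get the LSV complexes with a triangle-consistent $\F_2$-instance of value $\le 1-\mu$, reinterpret it as an $S_2$-instance that is perfectly strongly triangle consistent by exhibiting the lists $L(u)=(0,1)$ on $X(1)$ and the induced two-element lists on $X(3)$, observe that $\pi(u,v)=g(u)g(v)^{-1}$ for $g\colon X(1)\to S_2$ is equivalent to the underlying $\F_2$-assignment satisfying the edge (so no such $g$ can satisfy more than $1-\mu$ weight), and conclude by Theorem~\ref{thm:necessary-formal}. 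The ``expected obstacle'' you flag does not materialize: the well-definedness of $L'(T)$ under different orderings of the splitting is exactly the point the paper checks using $\pi(u,w)=\pi(u,v)\pi(v,w)$, and for $m=2$ the $\F_2$-to-$S_2$ dictionary is an exact bijection, so no enlargement of $m$ or of the level $t$ is needed.
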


\begin{proof}
Using Theorem~\ref{thm:evra-kaufman}, for every $d, \gamma$ we can get a sequence of $\gamma$-spectral $d$-dimensional LSV complexes, and for each of them a locally consistent UG instance $\Psi_n$ on $G_1[X_n]$ with value at most $1-\mu$. We can now check that in fact for each $n$, we can generate a collection of lists on $X(1), X(3)$ such that $\Pi_n$ is strongly consistent with respect to the lists. To see this, associate each $u \in X(1)$ by the list $L(u) = \{0,1\}$ and each face $(u,v,w) \in X(3)$ by the list $\{0\cdot \pi(u,v)(0)\cdot\pi(u,w)(0), 1\cdot\pi(u,v)(1)\cdot\pi(u,w)(1)\}$. One can check that the latter list is well-defined ($L(u,v,w) = L(u,w,v)$ for example) because $\pi(u,w)=\pi(u,v)\pi(v,w)$. This shows that $\pi$ is strongly consistent with respect to all triangles.
Therefore $X_n$ is \emph{not} a $(m=2,r=1,\xi = 0, c=\mu)$ weak UG coboundary expander. Applying Theorem~\ref{thm:necessary-formal} we immediately get that there is a large enough $k$ such that there exists a function $F_n: X(k) \rightarrow \{0,1\}^k$ that passes the $(k,\sqrt{k})$ agreement test with probability $1/2$ (since $\xi = 0$) but for all global functions, $\Delta(F[A],f|_A) \geq \eps$ holds with probability at most $\eta$ over $A \sim \mu_k$.
\end{proof}

\subsection{Johnson and Grassmann Complexes}
In this section we verify that the Johnson and Grassmann complexes are UG coboundary expanders as defined in Definition~\ref{def:ug_coboundary_expander_simplified}. They are known to be standard coboundary expanders, but we give the proof that they satisfy UG coboundary expansion for completeness.

Let $\cC_n$ denote the complete complex with vertices being $[n]$. Let $\cC(i)$ denote the faces at the $i^{th}$ level.

\begin{lemma}\label{lem:johnson-coboundary}
For all $n \in N$, the complete complex $\cC_n$ is a $(m,r,\xi,\xi+o(1))$ UG coboundary expander for all $m \in \N$, $r = o(\sqrt{n})$ and $\xi \in (0,1)$.  
\end{lemma}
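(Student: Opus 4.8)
The plan is to adapt the standard ``cone'' (propagation) argument that proves the complete complex is a coboundary expander to the setting of $t$-faces and non-Abelian coefficients $S_m$. Fix $t\le r=o(\sqrt n)$ and a map $\pi\colon E_t[\cC_n]\to S_m$ with $\pi(u,v)=\pi(v,u)^{-1}$ that is $(1-\xi)$-consistent on triangles. Recall that for $\cC_n$ the measures $\mu_{2t},\mu_{3t}$ are uniform over $\binom{[n]}{2t},\binom{[n]}{3t}$, that two $t$-sets form an edge of $G_t[\cC_n]$ exactly when they are disjoint, and three $t$-sets form a triangle exactly when they are pairwise disjoint. First I would pick a reference $t$-face $w_0\in\cC(t)$ uniformly at random and set $g=g_{w_0}\colon\cC(t)\to S_m$ by $g(u)=\pi(u,w_0)$ when $u\cap w_0=\emptyset$ and $g(u)=\text{id}$ otherwise. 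If $(u,v)$ is an edge with $w_0$ disjoint from $u\cup v$, then $\{u,v,w_0\}$ is a triangle and its consistency, $\pi(u,w_0)\pi(w_0,v)=\pi(u,v)$, gives $g(u)g(v)^{-1}=\pi(u,w_0)\pi(v,w_0)^{-1}=\pi(u,w_0)\pi(w_0,v)=\pi(u,v)$. Hence $g$ can disagree with $\pi$ on an edge $(u,v)$ only if $w_0$ meets $u\cup v$ or the triangle $\{u,v,w_0\}$ is inconsistent.

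Next I would average the resulting bound
\[
\Pr_{u\cup v\sim\mu_{2t}}\big[g_{w_0}(u)g_{w_0}(v)^{-1}\ne\pi(u,v)\big]\le\Pr_{u\cup v\sim\mu_{2t}}\big[w_0\cap(u\cup v)\ne\emptyset\big]+\Pr_{u\cup v\sim\mu_{2t}}\big[\{u,v,w_0\}\text{ is an inconsistent triangle}\big]
\]
over the uniform choice of $w_0$. For the first term, a union bound over the $t$ elements of $w_0$ gives $\Pr_{w_0,u,v}[w_0\cap(u\cup v)\ne\emptyset]\le t\cdot\frac{2t}{n}=O(t^2/n)=o(1)$, uniformly in $t\le r=o(\sqrt n)$. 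For the second term, conditioned on $u,v,w_0$ being pairwise disjoint the ordered triple $(u,v,w_0)$ is distributed exactly as a uniformly random $3t$-set split uniformly into an ordered triangle, i.e.\ as in Definition~\ref{def:triangle_consistent_weak}; hence $\Pr[\{u,v,w_0\}\text{ inconsistent}\mid\text{pairwise disjoint}]\le\xi$, and dropping the conditioning only shrinks the probability, so $\Pr_{w_0,u,v}[\{u,v,w_0\}\text{ is an inconsistent triangle}]\le\xi$. Combining, $\E_{w_0}\Pr_{u\cup v}[g_{w_0}(u)g_{w_0}(v)^{-1}\ne\pi(u,v)]\le\xi+o(1)$, so some fixed $w_0$ achieves at most the average, and the corresponding $g=g_{w_0}$ witnesses that $\cC_n$ is an $(m,r,\xi,\xi+o(1))$ UG coboundary expander in the sense of Definition~\ref{def:ug_coboundary_expander_simplified}.

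I do not anticipate a real obstacle, since the argument is short; the two points deserving care are (i) checking that the conditional distribution of $(u,v,w_0)$ given pairwise-disjointness is \emph{exactly} the triangle distribution of Definition~\ref{def:triangle_consistent_weak}, so that no loss beyond $\xi$ is incurred, and (ii) ensuring the $o(1)$ ``collision'' error is uniform over all $t\le r$, which follows from the crude bound $O(r^2/n)$ together with the hypothesis $r=o(\sqrt n)$.
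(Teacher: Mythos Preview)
Your proposal is correct and follows essentially the same approach as the paper: both pick a reference $t$-face, propagate via $g(u)=\pi(u,w_0)$, and bound the failure probability by the triangle-inconsistency term $\xi$ plus an $O(t^2/n)=o(1)$ collision term. The only cosmetic difference is that the paper first averages to fix a good reference face $U$ and then analyzes, whereas you define $g_{w_0}$ for random $w_0$, average, and then fix a $w_0$ achieving the average; these are the same argument in different order.
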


\begin{proof}
We will drop the subscript $n$ in $\cC_n$ henceforth.
Consider any UG instance $\Psi = (G_r[\cC],\Pi)$ with $1-\xi$-fraction of locally consistent triangles. It suffices to show that $\Psi$ has an assignment that satisfies a $1-\xi-o(1)$ fraction of the edges.

Let $\cT$ denote the uniform distribution over all triangles in $G_r[\cC]$ and for a face $U \in \cC(r)$, let $\cT_U$ denote the distribution over triangles that contain $U$. By averaging we know there exists $U \in \cC(r)$ such that,
\[\Pr_{(U,V,W) \sim \cT_U}[(U,V,W) \text{ is consistent}] \geq 1-\xi.\]
Let $N_U$ denote the neighborhood of $U$ in $G_r[\cC]$, i.e. the set of faces $V \in \cC(r)$ that are disjoint to $U$. One can check that $N_U$ has fractional size at least $1-r^2/n = 1-o(1)$ as $r = o(\sqrt{n})$. Now define an assignment of permutations $P: \cC(r) \rightarrow S_m$ as follows:
\begin{equation}
P(V) = \begin{cases}
\text{id} ~~~~~~~~~~~ \text{ if } V = U,\\
\pi(u,v) ~~~ \text{ if } V \in N_U\\
\text{id} ~~~~~~~~~~~\text{ otherwise.}
\end{cases}    
\end{equation}

Note that by definition $P$ satisfies all the edges incident on $U$ and in fact it is easy to check that $P$ also satisfies all the edges $(V,W)$ where $(U,V,W)$ forms a consistent triangle. Therefore it suffices to lower bound the measure of such edges:
\begin{align*}
\val(P) &\geq \Pr_{(V,W) \sim G_r[\cC]}[V,W \in N_U, (U,V,W) \text{ is consistent }] \\
&\geq \left(1-2\frac{r^2}{n}\right)\Pr_{(V,W) \sim G_r[\cC]}[(U,V,W) \text{ is consistent } \mid V,W \in N_U] \\
&\geq (1-o(1))(1-\xi),
\end{align*}
where in the first inequality we used that the fractional size of $N_U$ is at least $1-r^2/n$, and in the last inequality we used the fact that the distribution over triangles obtained by sampling a random edge in the neighborhood of $U$ is equal to the distribution $\cT_U$. This shows the existence of an assignment with value $1-\xi-o(1)$ as required.
\end{proof}

Let $\text{Gr}_{n,q}$ denote Grassmann complex over the ambient space $\F_q^n$ where the $i$-level faces in the complex for $i \leq n$ are the $i$-dimensional linear subspaces of $\F_q^n$. 
Note that this is not a simplicial complex but we can still consider the graph $G_r[\text{Gr}_{n,q}]$ -- the vertices are the $r$-level faces of the complex and we have edges between two subspaces that are disjoint.

\begin{lemma}
For all finite fields $\F_q$ and $n \in N$, the Grassmann complex $\text{Gr}_{n,q}$ is a $(m,r,\xi,\xi+o(1))$ UG coboundary expander for all $m \in \N$, $r = o(\log_q n)$ and $\xi \in (0,1)$.  
\end{lemma}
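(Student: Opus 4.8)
The plan is to mimic the proof of Lemma~\ref{lem:johnson-coboundary} essentially verbatim, replacing the complete complex $\cC_n$ by the Grassmann complex $\text{Gr}_{n,q}$ and using the corresponding density estimates for subspaces. Concretely, given a UG instance $\Psi = (G_r[\text{Gr}_{n,q}], \Pi)$ which is $(1-\xi)$-consistent on triangles, I would first use averaging over the vertex appearing in a random triangle to find an $r$-dimensional subspace $U$ such that $\Pr_{(U,V,W)\sim\cT_U}[(U,V,W)\text{ consistent}]\geq 1-\xi$, where $\cT_U$ is the distribution of triangles through $U$. I would then define the candidate global assignment $P\colon \text{Gr}_{n,q}(r)\to S_m$ by $P(V) = \pi(U,V)$ whenever $V$ is in the neighborhood $N_U$ of $U$ (i.e.\ $V\cap U = \{0\}$ and $U\oplus V$ is a valid $2r$-face of the complex), and $P(V) = \mathrm{id}$ otherwise. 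As in the complete-complex case, $P$ satisfies every edge $(V,W)$ for which $(U,V,W)$ is a consistent triangle with $V,W\in N_U$, so it remains to lower-bound the measure of such edges.

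The only genuinely new ingredient is the density estimate: I need that a uniformly random $r$-dimensional subspace $V$ of $\F_q^n$ is ``generic with respect to $U$'' — namely $V\cap U = \{0\}$ and $U\oplus V$ is a face of $\text{Gr}_{n,q}$ — with probability $1-o(1)$ when $r = o(\log_q n)$. This follows from a standard counting argument: the number of $r$-subspaces intersecting a fixed $r$-subspace $U$ nontrivially is an $O(q^{-(n-2r)})$ fraction of all $r$-subspaces, and since $n - 2r = n - o(\log_q n) \to \infty$, this fraction is $o(1)$ for $r = o(\log_q n)$. I would also need the analogue of the observation used in the last inequality of Lemma~\ref{lem:johnson-coboundary}: conditioned on $V, W \in N_U$, the triangle $(U,V,W)$ obtained by sampling a random edge $(V,W)$ of $G_r[\text{Gr}_{n,q}]$ is distributed (up to $o(1)$ total variation) as $\cT_U$. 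This is because conditioning on both endpoints of a random edge landing in the $(1-o(1))$-measure set $N_U$ only perturbs the distribution by $o(1)$ in statistical distance, and an edge in $N_U$ together with $U$ is exactly a triangle through $U$.

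Putting these together, I would bound
\[
\val(P) \geq \Pr_{(V,W)\sim G_r[\text{Gr}_{n,q}]}\big[V,W\in N_U,\ (U,V,W)\text{ consistent}\big] \geq (1-o(1))\Pr_{(V,W)}\big[(U,V,W)\text{ consistent}\mid V,W\in N_U\big] \geq (1-o(1))(1-\xi),
\]
which exhibits an assignment of value $1-\xi-o(1)$ and hence shows $\text{Gr}_{n,q}$ is an $(m,r,\xi,\xi+o(1))$ UG coboundary expander. Finally, to conclude the statement in the form of Definition~\ref{def:ug_coboundary_expander_simplified}, I would convert the value-$(1-\xi-o(1))$ assignment $A\colon \text{Gr}_{n,q}(r)\to[m]$ into the required $g\colon X(r)\to S_m$: for each $V$ pick $g(V)$ to be any permutation sending $1$ to $A(V)$ and otherwise consistent with $P$ where possible; the fraction of edges $(u,v)$ with $\pi(u,v)=g(u)g(v)^{-1}$ is then $1-\xi-o(1)$, matching the $c = \xi + o(1)$ in the statement. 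The main obstacle — such as it is — is purely the subspace-counting estimate for the genericity of a random $r$-subspace relative to a fixed one; everything else is a transcription of the complete-complex argument, so the proof is short.

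\begin{proof}
The argument is identical to that of Lemma~\ref{lem:johnson-coboundary}, with the density bound $r^2/n$ replaced by the Grassmann analogue. Fix a UG instance $\Psi = (G_r[\text{Gr}_{n,q}], \Pi)$ that is $(1-\xi)$-consistent on triangles; it suffices to produce an assignment of value $1-\xi-o(1)$.

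Let $\cT$ be the uniform distribution over triangles of $G_r[\text{Gr}_{n,q}]$ and, for an $r$-face $U$, let $\cT_U$ be the distribution over triangles through $U$. By averaging there is $U$ with
\[
\Pr_{(U,V,W)\sim\cT_U}[(U,V,W)\text{ is consistent}] \geq 1-\xi.
\]
Let $N_U$ be the neighborhood of $U$ in $G_r[\text{Gr}_{n,q}]$, i.e.\ the set of $r$-faces $V$ with $U\cap V = \{0\}$ and $U\oplus V\in \text{Gr}_{n,q}(2r)$. A standard count of $r$-subspaces of $\F_q^n$ shows that a uniformly random $r$-face intersects $U$ trivially with probability $1-O(q^{-(n-2r)})$, which is $1-o(1)$ since $r = o(\log_q n)$; hence $N_U$ has fractional size $1-o(1)$. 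Define $P\colon \text{Gr}_{n,q}(r)\to S_m$ by $P(U) = \mathrm{id}$, $P(V) = \pi(U,V)$ for $V\in N_U$, and $P(V) = \mathrm{id}$ otherwise.

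As in Lemma~\ref{lem:johnson-coboundary}, $P$ satisfies every edge incident on $U$, and every edge $(V,W)$ for which $(U,V,W)$ is a consistent triangle with $V,W\in N_U$. Since $N_U$ has measure $1-o(1)$, conditioning a random edge $(V,W)$ on $V,W\in N_U$ changes its law by $o(1)$ in total variation, and a random edge in $N_U$ together with $U$ is distributed as $\cT_U$. Therefore
\begin{align*}
\val(P) &\geq \Pr_{(V,W)\sim G_r[\text{Gr}_{n,q}]}[V,W\in N_U,\ (U,V,W)\text{ is consistent}]\\
&\geq (1-o(1))\Pr_{(V,W)\sim G_r[\text{Gr}_{n,q}]}[(U,V,W)\text{ is consistent}\mid V,W\in N_U]\\
&\geq (1-o(1))(1-\xi).
\end{align*}
Converting the resulting value-$(1-\xi-o(1))$ assignment $A\colon \text{Gr}_{n,q}(r)\to[m]$ into $g\colon X(r)\to S_m$ exactly as in the discussion following Definition~\ref{def:ug_coboundary_expander_simplified} shows that $\text{Gr}_{n,q}$ is an $(m,r,\xi,\xi+o(1))$ UG coboundary expander.
\end{proof}
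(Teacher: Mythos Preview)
Your argument is correct and is exactly the paper's approach: the paper's own proof is a one-sentence reference to Lemma~\ref{lem:johnson-coboundary}, noting only that the neighborhood of any vertex in $G_r[\text{Gr}_{n,q}]$ has measure $1-o(1)$ when $r=o(\log_q n)$, which is precisely the density estimate you supply. One small cleanup: the final ``conversion'' step is unnecessary and slightly backwards, since the map $P\colon \text{Gr}_{n,q}(r)\to S_m$ you construct is already the $g$ required by Definition~\ref{def:ug_coboundary_expander_simplified} (on every edge $(V,W)$ coming from a consistent triangle through $U$ one has $\pi(V,W)=P(V)P(W)^{-1}$, up to replacing $P$ by $P^{-1}$), so there is no need to pass through an $[m]$-valued assignment.
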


\begin{proof}
The proof of this lemma is exactly the same as that of Lemma~\ref{lem:johnson-coboundary}, except that we note that the neighborhood of every vertex in $G_r[\text{Gr}_{q,n}]$ has size at least $1-q^r/q^n = 1-o(1)$ if $r = o(\log_q n)$. Therefore by following the rest of the proof we get that for any UG instance that is $1-\xi$ triangle consistent, there exists an assignment with value at least $1-\xi-o(1)$, as required.   
\end{proof}

\section{Acknowledgements}
We thank Mehtaab Sawhney for suggesting the use of the weak regularity lemma and Nikhil Vyas for bringing~\cite{AlonVKK02} to our attention, which led to a considerable simplification on our earlier attempts at proving Lemma~\ref{lem:no-jump}. We thank Tali Kaufman for helpful comments on an earlier version of this paper. We also thank the Simons institute for hosting us during the ``Analysis and TCS: New Frontiers'' program, where part of the work was carried out.

\bibliographystyle{alpha}
\bibliography{references}

\appendix
\section{Proofs of Omitted Claims}
\begin{claim}\label{claim:strong-to-weak-consistency}
If a set of lists of size $m$ and permutations in $S_m$, $(\cL,\Pi)$ on $X(t),X(2t),X(3t)$ are $1-\xi$ consistent according to Definition~\ref{def:triangle_consistent_strong}, then $\Pi$ is $1-3\xi$ consistent according to Definition~\ref{def:triangle_consistent_weak}.    
\end{claim}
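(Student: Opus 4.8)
The plan is to sample one random triangle and take a union bound over the three ``anchored'' forms of the strong-consistency equation.

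Fix lists $L\colon X(t)\to(\{0,1\}^t)^m$ and $L'\colon X(3t)\to(\{0,1\}^{3t})^m$ witnessing that $\pi$ is $(1-\xi)$-strongly triangle consistent. Sample $T\sim\mu_{3t}$ together with a uniformly random ordered splitting $T=u\cup v\cup w$ into three $t$-blocks. The joint law of $(T;u,v,w)$ is invariant under the $S_3$-action permuting the three blocks, so Definition~\ref{def:triangle_consistent_strong}, applied to the three relabellings that single out each block as the ``anchor'', says that each of the three events
\[
\mathcal{E}_u:\ L'(T)=L(u)\circ\pi(u,v)L(v)\circ\pi(u,w)L(w),\qquad
\mathcal{E}_v:\ L'(T)=L(v)\circ\pi(v,w)L(w)\circ\pi(v,u)L(u),
\]
\[
\mathcal{E}_w:\ L'(T)=L(w)\circ\pi(w,u)L(u)\circ\pi(w,v)L(v)
\]
individually has probability at least $1-\xi$ over the choice of $(u,v,w)$. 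By a union bound, with probability at least $1-3\xi$ all three hold simultaneously for the sampled triple; I will show that in this case $\pi(u,v)\pi(v,w)=\pi(u,w)$, which is exactly consistency of $(u,v,w)$ in the sense of Definition~\ref{def:triangle_consistent_weak}, and hence $\pi$ is $(1-3\xi)$-consistent on triangles.

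To deduce the cocycle identity from $\mathcal{E}_u\cap\mathcal{E}_v\cap\mathcal{E}_w$, I would unpack the concatenation notation. Writing $L'(T)=(\phi_1,\dots,\phi_m)$, event $\mathcal{E}_u$ says $\phi_i|_u=L_i(u)$, $\phi_i|_v=L_{\pi(u,v)(i)}(v)$ and $\phi_i|_w=L_{\pi(u,w)(i)}(w)$; in particular, inside the single list $L'(T)$ the $v$-block string of an element determines its $w$-block string through the permutation $\pi(u,w)\pi(u,v)^{-1}$ of the index set. Reading off the very same $v$-block-to-$w$-block matching from $\mathcal{E}_v$ gives instead the permutation $\pi(v,w)$, and doing the analogous comparison using $\mathcal{E}_w$ pins down how the $u$- and $w$-blocks match; since all of these describe one and the same object $L'(T)$, the three coordinatizations must be compatible, and this compatibility unpacks precisely to $\pi(w,u)\pi(u,v)\pi(v,w)=\mathrm{id}$. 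Using $\pi(w,u)=\pi(u,w)^{-1}$ this rearranges to $\pi(u,v)\pi(v,w)=\pi(u,w)$. (This last deduction is the non-abelian analogue of the argument already carried out in the proof of Claim~\ref{claim:ug_instance_triangle_consistent}.)

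The only slightly delicate point — and the one I would be most careful about — is the index bookkeeping in this final deduction: because $S_m$ is non-abelian, one must track which permutation is applied on which side so that the composition comes out in the stated order, and one should note that the induced block-to-block matchings inside $L'(T)$ are read off consistently across the three equations. Everything else is a direct union bound, so no quantitative optimization is attempted and the factor $3$ is exactly the number of anchored events used.
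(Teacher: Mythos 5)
Your proof is correct and follows the same route as the paper's: invoke the invariance of the $(T;u,v,w)$-distribution under cyclic relabellings to obtain three anchored equalities each with probability $\geq 1-\xi$, union-bound, and then chain the induced block-to-block index matchings around the triangle to extract the cocycle identity. The index bookkeeping you flag as delicate is indeed the only genuinely careful step, and the paper resolves it the same way you do.
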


\begin{proof}
Let $\cT$ be the distribution over triangles generated by sampling $T \sim X(3t)$ and then randomly splitting it into $(u,v,w)$. We can use a union bound to get that with probability $1-3\xi$, $(u,v,w) \sim \cT$ satisfies,
\begin{align*}
L'(T) &= L(u)\circ \pi(u,v) L(v) \circ \pi(u,w) L(w) \\
&=L(w)\circ \pi(w,u) L(u) \circ \pi(w,v) L(v)\\
&=L(v)\circ \pi(v,w) L(w) \circ \pi(v,u) L(u).
\end{align*}
Note that for each $i\in [m]$, the first line asserts
that the assignment 
$\pi(u,w)(i)$ in the list
of $L(u)$ is consistent
with the assignment $i$ 
of $w$ with respect to 
$L'(T)$. By the third 
line, this assignment
is consistent with the assignment
$\pi(v,u)\circ \pi(u,w)(i)$ 
for $L(v)$, and by the 
second line this is consistent
with the assignment $\pi(w,v)\circ \pi(v,u)\circ \pi(u,w)(i)$ for $w$. As 
$L'(T)$ contains $m$ distinct assignments, we conclude that it must be the case that $\pi(w,v)\circ \pi(v,u)\circ \pi(u,w)(i) = i$ for all $i\in[m]$, and so
$\pi(w,u)=\pi(u,w)^{-1} = \pi(w,v)\pi(v,u)$ for such a triangle $(u,v,w)$. 
\end{proof}

\section{Theorem~\ref{thm:HDX_dp_weak} with Improved UG Coboundary Parameters}\label{sec:appx-improved-cbdry}
In this section we outline the proof of the sufficient theorem that has weaker requirements on the coboundary expansion parameters. We show that any local spectral expander that is an $(m,r,\exp(-o(r)),c)$ UG coboundary expander supports a direct product test with low soundness.
\begin{theorem}\label{thm:stronger-coboundary_formal}
There is $c>0$ such that for all $\eps,\delta>0$ there is $\eta>0$ and $m,r\in\mathbb{N}$ such that for sufficiently large $k$, sufficiently large $d$ and $\gamma$ small enough function of $d$\footnote{For concreteness, we show that being an $(m,r,\exp(-\sqrt{r}),c)$ UG coboundary expander suffices. This argument  requires $k \geq \text{tower}_{\poly(1/\delta)}(1/\delta)$ and $d \geq \exp(\exp(k))$. It is possible to get improved dependence of $\delta$ on $k$, while getting worse parameters in the coboundary expansion, but we refrain from stating these versions.}, the 
following holds. If a $d$-dimensional simplicial complex $X$ is a $\gamma$-spectral expander and $(m,r,\exp(-o(r)),c)$ weak UG coboundary expander, then the direct product test over $X$ with respect
to sufficiently large $k$ has soundness $\delta$. Namely, if 
$F\colon X(k)\to\{0,1\}^k$ passes the
$(k,\sqrt{k})$ direct product tester with respect to $X$ with probability at least $\delta$, 
then there is $f\colon X(1)\to\{0,1\}$
such that 
\[
\Pr_{A\sim \mu_k}
[\Delta(F[A], f|_A)\leq \eps ]\geq \eta.
\]
\end{theorem}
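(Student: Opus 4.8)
The plan is to reuse the proof of Theorem~\ref{thm:sufficient_ugexpand_formal} almost verbatim and change only the one place where UG coboundary expansion is invoked. Recall precisely where the quality $\xi$ of the coboundary expansion enters: Lemma~\ref{lem:agr-to-list-agr} produces lists $\{L[D]\}_{D\in X(d)}$ that pass the list agreement test only with failure probability $\tau=\poly(\delta)$ — a constant — and consequently the Unique-Games instance built on $G_r[X]$ in Section~\ref{sec:list-agr-testing} is only $(1-O(\sqrt\tau))$-strongly triangle consistent, forcing $\xi=\poly(\delta)$. No amount of choosing $r$ fixes this: at a large level $r$ the projected instance is still $(1-O(\sqrt\tau))$-triangle consistent, while for small $r$ the projection step (Claim~\ref{claim:projected-dist}) already needs $r\ge 1/\mathrm{dist}(L[D])=2^{\poly(1/\delta)}$. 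Hence to make do with $\xi=\exp(-\sqrt r)$ the only thing that must change is to first amplify the consistency of the list assignment from $1-\poly(\delta)$ to $1-\exp(-\sqrt r)$, while keeping the list sizes bounded by a function of $\delta$ and preventing the pairwise distances from collapsing; after that, Section~\ref{sec:list-agr-testing} onward goes through unchanged, now invoking the $(m,r,\exp(-\sqrt r),c)$ UG coboundary hypothesis in place of the stronger one.

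The amplification is iterative. First I would isolate a single ``consistency-boosting'' step: given a list assignment on $X(d_i)$ that is $(1-\tau_i)$-consistent under the list agreement test, with lists of size at most $\ell=\poly(1/\delta)$ and pairwise distance at least $\rho_i$, produce one on $X(d_{i+1})$ for a much larger $d_{i+1}$ that is $(1-\tau_{i+1})$-consistent, where $-\log(1/\tau_{i+1})$ is a super-linear function of $-\log(1/\tau_i)$ (morally its square), with list size still $\le\ell$ and distance $\rho_{i+1}$ only a fixed power smaller than $\rho_i$. This step uses exactly the machinery already developed: majority-decode the $d_i$-lists up to $d_{i+1}$-faces, and use the bipartite expander mixing lemma (Lemma~\ref{lem:bip-eml}) and the sampling lemma (Lemma~\ref{lem:sampling}) together with Chernoff (Theorem~\ref{thm:chernoff}) to upgrade ``$1-\tau_i$ on average at scale $d_i$'' to ``$1-\tau_i^{\Omega(1)}$ with overwhelming probability at scale $d_{i+1}$'', re-running the gap-generation procedure of Algorithm~\ref{algo-gap} — now seeded by the already-structured $d_i$-lists rather than by raw agreement of $F$ — to keep the list distance large, and applying the analogue of Lemma~\ref{lem:no-jump} (hence Theorem~\ref{thm:max-csp}) at the enormous ambient dimension each scale forces. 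Iterating this step $T=\poly(1/\delta)$ times drives $\tau_i$ down fast enough that $\tau_T\le\exp(-\sqrt r)$ for the level $r=\Theta(1/\rho_T)$ at which the coboundary hypothesis is ultimately applied. The scales grow $d_{i+1}=\exp(\poly(d_i))$, so the level at which we maintain the list assignment runs through a tower of height $T=\poly(1/\delta)$; this is the source of the hypotheses $k\ge\mathrm{tower}_{\poly(1/\delta)}(1/\delta)$ and $d\ge\exp(\exp(k))$ (the double exponential in $k$ coming from the ambient-dimension requirement of the CSP lemma at the finest scale, where $\zeta$ is exponentially small in $k$).

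Once we have a list assignment that is $(1-\exp(-\sqrt r))$-consistent, the rest of the argument is a copy of Section~\ref{sec:list-agr-testing}: majority-decode down to level $t=r$ (Claims~\ref{claim:list-size}--\ref{claim:maj-list}), build the Unique-Games instance on $G_r[X]$, note it is $(1-\exp(-\sqrt r))$-strongly triangle consistent (Claim~\ref{claim:ug_instance_triangle_consistent}, whose error is now inherited from the amplified $\tau_T$), invoke $(m,r,\exp(-\sqrt r),c)$ UG coboundary expansion to get $g\colon X(r)\to S_m$ with $\pi(u,v)=g(u)g(v)^{-1}$ on $1-c$ of the edges, set $R(B)=L[B]_{g(B)(1)}$, run Lemma~\ref{lem:cobdry-to-global} to see $R$ passes the $99\%$ direct product test over $X$, apply Theorem~\ref{thm:dinur-kaufman} to extract a global $f\colon X(1)\to\{0,1\}$, and push $f$ back up to $X(k)$ via Claim~\ref{claim:dist-chernoff} and Chernoff exactly as in the proof of Theorem~\ref{thm:sufficient_ugexpand_formal}.

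The main obstacle is the boosting step itself: one must verify that re-running gap generation seeded by already-structured lists genuinely contracts the list-agreement error to a super-linear power in the exponent (rather than merely preserving it up to constant factors, which is all the naive majority-decoding/union-bound arguments give), that the list sizes stay $\poly(1/\delta)$ and the distances contract only polynomially across all $\poly(1/\delta)$ iterations, and that these two rates are balanced so that at the end $\tau_T\le\exp(-\sqrt{1/\rho_T})$ — i.e.\ the level $r=\Theta(1/\rho_T)$ simultaneously supports the projection of Claim~\ref{claim:projected-dist} and matches the coboundary hypothesis. Checking that Lemmas~\ref{lem:no-jump} and~\ref{lem:bip-eml}, Theorem~\ref{thm:max-csp} and the spectral bounds of Lemma~\ref{lem:spectral_gap_of_graphs_from_HDX} all remain applicable at the tower-sized ambient dimensions, and carrying the quantitative bookkeeping through the iteration, is where essentially all of the work lies.
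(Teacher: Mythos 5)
Your proposal correctly diagnoses what has to change — the failure probability of the list-agreement step needs to come down from $1-\poly(\delta)$ to $1-\exp(-\sqrt r)$ before the coboundary hypothesis is invoked — but the route you take is different from the paper's and, more importantly, the critical step in your route is unconstructed.

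The paper does \emph{not} run any multi-scale iterative amplification. Instead it modifies the single pass through Algorithm~\ref{algo-gap}: the decrement per step is shrunk from $\delta'^{100}$ to $2^{-k^{c'}}$ and the index $r$ is drawn from a correspondingly larger range $\{1,\dots,\delta^{20}2^{k^{c'}}\}$, so the ``no gap at $r+1$'' event already has probability $1-2^{-\Omega(k^{c'})}$ rather than $1-\poly(\delta)$. The harder issue — you identify it as the ``distance in the list'' requirement — is then handled by a variance argument, not by boosting. The map $D\mapsto \Delta(L[D])$ changes by at most $O(\eta_i)$ along almost every edge of the list-agreement walk; combining this with the spectral gap of the down-up walk (Lemma~\ref{lem:spectral_gap_of_graphs_from_HDX}) via the Laplacian quadratic form shows $\mathrm{Var}(\Delta(L[\cdot]))\lesssim \eta_i^2$, so the distance concentrates around its mean, which in turn is at least $\Omega(\eta_{i+1})$ because a $1-O(\delta^{68})$-fraction of lists already have large distance. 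That upgrades ``distance $\geq \eta_{i+1}$ with probability $1-\poly(\delta)$'' to ``with probability $1-O(\eta_i)$'' in one step, at the same scale $d$. The hierarchy $\eta_0=1/\mathrm{tower}_{O(1/\delta^{80})}(1/\delta)$, $\eta_{i+1}\geq\Omega(1/\log(1/\eta_i))$ is a purely combinatorial device inside that single pass, not a sequence of complex scales; the tower and the $d\geq\exp(\exp(k))$ constraints come from wanting $1/\delta^{80}$ slots in that hierarchy and from running the CSP sampling lemma with $\zeta\approx 2^{-k^{c'}}$.

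The gap in your proposal is exactly the boosting step you flag at the end. You claim an iteration that drives $-\log\tau_i$ up super-linearly while losing only a fixed polynomial factor in list distance per round, but you give no mechanism for it. Majority decoding and union bounds do not by themselves contract error faster than linearly, and you offer no replacement for the paper's variance/Cheeger argument. Moreover, you would have to maintain the $\delta$-dependent list sizes and the agreement property (item (2) in Lemma~\ref{lem:agr-to-list-agr}) across $\poly(1/\delta)$ rounds of rescaling, and ensure the scales $d_0<d_1<\cdots<d_T$ all still fit below the ambient $d$; none of this is argued. Since that step is, as you yourself say, ``where essentially all of the work lies,'' the proposal as it stands establishes the reduction to the boosting step but not the theorem. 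The fix is to replace the iterative scheme by the paper's single-pass modification: refine the discretization in the short-list algorithm and prove concentration of the list distance via the spectral gap, after which Sections~\ref{sec:consistency-of-lists} and~\ref{sec:list-agr-testing} go through with $\tau$ replaced by $\exp(-\sqrt t)$ as you correctly describe.
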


The proof of Theorem~\ref{thm:stronger-coboundary_formal} proceeds in the same way as the proof of Theorem~\ref{thm:HDX_dp_weak} except that we show how to modify the argument to get that the list agreement test in Lemma~\ref{lem:agr-to-list-agr} passes with probability $1-f(\delta)$ instead of $1-\poly(\delta)$, where $f(\delta) \ll \poly(\delta)$. Using this stronger lemma, we can improve the parameters in the list-agreement testing result to show that a weaker assumption on the coboundary expansion of the complex suffices. 

We now outline these details, starting with the modified agreement to list agreement reduction. For concreteness we will show that it suffices to have UG coboundary expansion with parameters $(m,r,\exp(-\sqrt{r}),c)$. We emphasize that the same argument works to show that it suffices to have UG coboundary expansion with parameters $(m,r,\exp(-o(r)),c)$, example for, $o(r) = r/\log\log r$, by paying in the dependence of $k$ on $\delta$.

For all $i \in [1/\delta^{80}]$ let us set:
\begin{equation}\label{eq:eta}
\eta_0 = 1/\text{tower}_{O(1/\delta^{80})}(1/\delta), \eta_1 \geq \Omega(1/\log(1/\eta_0)), \eta_2 \geq \Omega(\log(1/\eta_1)), \ldots    
\end{equation}
We can check that for all $i \leq 1/\delta^{80}$, $\eta_i \leq \delta$.

\begin{lemma}
\label{lem:agr-to-list-agr-improved-cbdry}
For all $\delta>0$, 
for sufficiently large $k,d \in \N$,
sufficiently small $\gamma$ compared to 
$d$, some $i \in [1/\delta^{80}]$ and $\tau' = O(\eta_i)$, the following holds. Suppose that $X$ is a $d$-dimensional simplicial complex which is a $\gamma$-spectral expander, and $F: X(k) \rightarrow \{0,1\}^k$ passes the $(k,\sqrt{k})$-agreement-test~\ref{agr-test-hdx} with probability $\delta$. Then, there exists lists $(L[D])_{D \in X(d)}$ satisfying: 
\begin{enumerate}
\item 
\textbf{Short, non-empty lists:} With probability $1-O(\tau')$ over the choice of $D\sim X(d)$, the list $L[D]$ is non-empty and has size at most $O(1/\delta^{12})$.
\item  
\textbf{Good agreement:} For all $D\in X(d)$ and every $f \in L[D]$, we have that $\agr_\nu(f, F|_D) \geq \Omega(\delta^{12})$ for $\nu = 1/k^{\Omega(1)}$.
\item 
\textbf{Distance in the lists:} With probability at least $1-O(\tau')$ over the choice of $D\sim X(d)$, the list $L[D]$ has distance at least $\Omega(\eta_{i+1})$.
\end{enumerate}
Furthermore the lists above pass the List-Agreement-Test~\ref{list-agr-test-hdx} with parameter $\eta_i$, with probability $1-\tau'$. 
\end{lemma}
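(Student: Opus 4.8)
The plan is to re-run, essentially verbatim, the three-stage argument behind Lemma~\ref{lem:agr-to-list-agr} --- localize the test to a random $d$-face via Lemma~\ref{lem:agr-test-sampling}, run the short list Algorithm~\ref{algo-gap} on each good Johnson (and on its $d/2$-subfaces), then prove consistency of the local lists as in Section~\ref{sec:consistency-of-lists} --- but to replace the geometric family of gap scales $\{\delta^{-100i}\eta\}_i$ with the rapidly growing family $\eta_0<\eta_1<\dots<\eta_{1/\delta^{80}}$ of~\eqref{eq:eta}, in which $\eta_{i+1}\asymp 1/\log(1/\eta_i)$, so that $\eta_{i+1}/\eta_i$ is super-exponential in $1/\eta_{i+1}$. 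For each good $D$, Algorithm~\ref{algo-gap} produces a full list of at most $O(1/\delta^{12})$ direct-product functions correlated with $F_D$, hence at most $O(1/\delta^{24})$ pairwise distances; since the scale-intervals $(\eta_i,\eta_{i+1})$ are pairwise disjoint, these distances can obstruct at most $O(1/\delta^{24})$ of the $1/\delta^{80}$ scales. The first step is to pass to a single scale index $i$ at which, for all but an $O(\eta_i)$-fraction of the $d$-faces (and likewise the $d/2$-faces), the list merged at threshold $\eta_i$ has distance $\Omega(\eta_{i+1})$; this is exactly where the argument must be sharpened past a naive averaging over scales (which only buys $\poly(\delta)$), and I expect this scale-selection step to be the main obstacle. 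The idea is a win--win/amplification over the $1/\delta^{80}$ scales, exploiting that the number of scales dominates the number of pairwise distances in any single list, together with the fact that restricting to a $d/2$-subface preserves every pairwise distance up to a factor of $2$ (Claim~\ref{claim:dist-chernoff}), so that a scale which is \emph{cleanly} free for $D$ remains free for $D|_B$ with probability $1-\exp(-\Omega(\eta_{i+1}k))$, which is $1-o(\eta_i)$ once $k$ is large enough.

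Given such an $i$, the second step is to redo Lemma~\ref{lem:agr-johnson} and Section~\ref{sec:consistency-of-lists} with the substitutions $\eta\mapsto\eta_i$ (the $\eta$-cover and independent-set thresholds) and $\delta^{-100}\eta'\mapsto\eta_{i+1}$ (the in-list distance). The point of the huge multiplicative gap $\eta_{i+1}/\eta_i$ is that it makes the composed matching $L[D]\to L[B]\to L[D']$ robust: distinct functions of $L[D']$ restrict to strings that are $\ge\eta_{i+1}/2\gg\eta_i$ apart, so the matching is a genuine bijection, and the concentration errors in the downwards/upwards-consistency steps (which were $o(1)$ in $d$) are now required to be $\ll\eta_0$. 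This is precisely what forces $k$ to be taken of size $\mathrm{tower}_{\poly(1/\delta)}(1/\delta)$ and $d$ exponential in $k$, so that $\eta_0=1/\mathrm{tower}_{O(1/\delta^{80})}(1/\delta)$ still dominates every error term that appears. Collecting the pieces yields lists $(L[D])_{D\in X(d)}$ with the three stated properties --- short non-empty lists, $\agr_\nu$-agreement $\Omega(\delta^{12})$ with $\nu=1/k^{\Omega(1)}$, and in-list distance $\Omega(\eta_{i+1})$, each failing on a set of measure $O(\eta_i)$ --- and passing List-Agreement-Test~\ref{list-agr-test-hdx} with parameter $\eta_i$ up to failure probability $\tau'=O(\eta_i)$.

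Finally, the payoff (realized in the companion modification of Lemma~\ref{lem:list-agr-test}) comes from choosing $i$ near the top of the sequence, so that $\eta_{i+1}=\Theta(\delta)$ and hence $\eta_i=\exp(-\Theta(1/\delta))$. Then the projection level that Claim~\ref{claim:projected-dist} needs in order to keep list sizes from collapsing is $t\asymp\log(1/\tau')/\eta_{i+1}=\Theta(1/\delta^2)$, a constant, while the strong-triangle-consistency error of the resulting Unique-Games instance (the analog of Claim~\ref{claim:ug_instance_triangle_consistent}) is $O(\sqrt{\tau'})=\exp(-\Theta(1/\delta))=\exp(-\Theta(\sqrt t))$; thus a complex that is merely an $(m,t,\exp(-\sqrt t),c)$ weak UG coboundary expander already suffices to run the remainder of the argument. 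Propagating the global UG solution back through Theorem~\ref{thm:dinur-kaufman} to a function on $X(1)$ is then unchanged from Section~\ref{sec:list-agr-testing}. To summarize, the only genuinely new difficulty is the scale-selection step of the first paragraph --- driving the list-agreement failure probability down to $O(\eta_i)$ rather than $\poly(\delta)$ --- and the rapidly growing sequence~\eqref{eq:eta} together with the tower-sized choice of $k$ is engineered exactly to make that step go through.
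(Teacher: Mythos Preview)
Your overall plan --- localize, run the short-list algorithm, prove list consistency --- is correct, and you correctly isolate the crux: upgrading the distance-in-lists property from holding on a $1-\poly(\delta)$ fraction of $d$-faces to a $1-O(\eta_i)$ fraction. But the mechanism you propose for this (``win--win/amplification over the scales,'' using that each list has only $O(1/\delta^{24})$ pairwise distances) does not get you there. That counting argument plus averaging over $i$ only shows that for a typical scale, an $O(\delta^{56})$ fraction of $D$'s are bad; and the stability of pairwise distances under restriction to a $d/2$-subface (Claim~\ref{claim:dist-chernoff}) concerns a single $D$, giving no way to propagate the good-distance property from one $D$ to a neighboring $D'$. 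You acknowledge this is the obstacle but do not actually supply a working idea to cross it.

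The paper fills this gap with a spectral step you are missing. One first fixes $r,i$ so that items (1),(2),(4),(5) of Lemma~\ref{lem:agr-johnson} hold for $1-o(1)$ of the $D$'s --- this already requires modifying Algorithm~\ref{algo-gap} (enlarging the range of $r$ to size $\delta^{20}2^{k^{c'}}$ and shrinking the decrement to $2^{-k^{c'}}$) so that the gap event $r+1\notin I_1$ has probability $1-2^{-\Omega(k^{c'})}$ rather than $1-\poly(\delta)$; your ``substitutions'' do not include this change. Then one considers the real-valued function $V[D]=\Delta(L[D])$ on the down-up graph on $X(d)$: the list correspondence coming from (4),(5) forces $|V[D]-V[D']|=O(\eta_i)$ on almost every edge, so the Dirichlet-form inequality $\E_{(D,D')}[(V[D]-V[D'])^2]\geq\Omega(\mathrm{Var}(V))$ (via the spectral gap of Lemma~\ref{lem:spectral_gap_of_graphs_from_HDX}) gives $\mathrm{Var}(V)=O(\eta_i^2)$. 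Chebyshev then yields $|V[D]-\E V|\leq O(\sqrt{\eta_i})$ except on an $O(\eta_i)$-fraction of $D$'s; since $\E V\geq\Omega(\eta_{i+1})\gg\sqrt{\eta_i}$ (from the $1-\poly(\delta)$ bound one already has), this bootstraps item (3) to probability $1-O(\eta_i)$. Without this Laplacian/variance step your argument stalls at $\tau'=\poly(\delta)$ and the improved coboundary parameters cannot be reached.
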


\begin{proof}
The proof of this lemma proceeds in the same way, albeit with some changes in parameters and an improved argument for (3).

\paragraph{Changes to Lemma~\ref{lem:algo1}:}
Firstly in the short list algorithm~\ref{algo-gap}, recall that we had set $t = k^{-c}$. We now choose $r$ from a larger range, $\{1,\ldots, \delta^{20}2^{k^{c'}}\}$, for $c'$ chosen appropriately small as a function of $c$. At each step we decrement $\delta_i$ only by $2^{-k^{c'}}$, i.e. $\delta_{i+1} = \delta_i - 2^{-k^{c'}}$, therefore the gaps between consecutive $\delta_i$'s is now smaller. The list-size bound in Lemma~\ref{lem:algo1} remains as is, and really the only change is to point (2) which now says that  for all $i \in I_1$, $\agr_t(g, \widetilde{G}_i) > \delta' - i2^{-k^{c'}}$. 

\paragraph{Changes to Lemma~\ref{lem:agr-johnson}:}
The changes in $r$ and the decrement value now propagate to Lemma~\ref{lem:agr-johnson}. We choose $i \sim [1/\delta^{80}]$ as before, and run the short list algorithm with the parameters $r$ and $\eta_i$ (from \eqref{eq:eta}). We will show that in fact points (1), (2), (4), (5) of Lemma~\ref{lem:agr-johnson} hold with probability $1-o(1)$ over the choice of $r, i$ and point (3) now says that we have distance in the list is at least $\eta_{i+1}$ (as set in \eqref{eq:eta}) with probability $1-\poly(\delta)$ over the choice of $i$. The only non-trivial change here is while proving (5) hence we discuss that below.

For (5), with probability at least $1-2^{-\Omega(k^{c'})}$ over the choice of $r$ we get that $r+1 \not\in I_1$, i.e. $\forall h$, $\agr_t(h,\widetilde{G}_{r+1}) < \delta_{r+1}$. Taking $d$ to be much larger ($\exp(\exp(k))$), and applying Lemma~\ref{lem:no-jump} we can  ensure that the maximum agreement on $B \subset_{d/2} D$ is at most $\delta_{r+1}+2^{-k^{c'}}$ with probability $1-o(1)$ over $B$. Since $c'$ has been chosen to be small enough the rest of the proof for Lemma~\ref{lem:agr-johnson} (5) suffices to prove that (5) holds for $L[D]$ with probability $1-\exp(-k^{c'})$ over $r$.\\

We are now ready to prove the improved lemma statement using the above changes.

\paragraph{Proofs of (1), (2):} First note that points (1) and (2) follow immediately from the modifications to Lemma~\ref{lem:agr-johnson} that we discussed above.

\paragraph{Proof of (3):} We will now argue that for some choice of $r,i$, $1-O(\eta_i)$ of the lists $L[D]$ have distance $> \eta_{i+1}$. First by linearity of expectation, fix $r,i$ for which points 1,2,4,5 hold for $1-o(1)$ fraction of the lists $L[D]$ and $i$ for which $1-O(\delta^{68})$ fraction of the lists have distance $\geq \eta_{i+1}$. Most of the ingredients of this proof come from Section~\ref{sec:consistency-of-lists} hence we skip the obvious details.

As in Section~\ref{sec:consistency-of-lists}, let $\cD$ be the set of good $d$-faces (where the $(k,\sqrt{k})$ agreement test passes with probability at least $\delta'$), but define the set of very good $d$-faces to be those that are good and where points (1),(2), (4), (5) of Lemma~\ref{lem:agr-johnson} hold. Note that a very good $d$-face may not have large distance on its list. Call a triple $(D, B, D')$ good if:
\begin{enumerate}
\item $D, D'$ are very good.
\item For all $f \in L[D]$, there exists $g \in L[B]$ with $\Delta(f|_B,g) < \eta_i$, and for all $g \in L[B]$ there exists $f \in L[D]$ with $\Delta(g,f|_B) < 2\eta_i$. The same holds when $D$ is replaced by $D'$.
\end{enumerate}

First note that $1-o(1)$-fraction of the triples are good. On a good triple we get that for each $f \in L[D]$, there exists an $f' \in L[D']$ (possibly many such functions) s.t. $\Delta_B(f,f') \leq 3\eta_i$ and vice versa. Because of this correspondence between the lists of $D, D'$, it is easy to see that $|\Delta(L[D]) - \Delta(L[D'])| < O(\eta_i)$. Now consider the graph $G$ on $X(d)$ given by this random walk (as done in Claim~\ref{claim:list-size} for instance). By Lemma~\ref{lem:spectral_gap_of_graphs_from_HDX} the second eigenvalue of $G$ is at most $1/2+O(d^2\gamma)$ and hence the associated Laplacian has its second smallest eigenvalue lower bounded by $1/2-O(d^2\gamma)$. Consider the function $V$ on $X(d)$, where $V[D]$ is defined as $\Delta(L[D])$. Taking the quadratic form of the Laplacian of $G$ with the function $V$, we therefore get that,
\[\E_{(D,D')\sim E(G)}[|\Delta([L(D)]) - \Delta(L[D'])|^2] \geq \Omega(\text{Var}(V)).\]
Now note that the LHS of the above equation is at most $O(\eta_i^2)$, which implies that the variance of $V$ is small. By an averaging argument this gives that $|V[D] - \E_D[V[D]]| \leq O(\sqrt{\eta_i})$ with probability at least $1-O(\eta_i)$.

Since we know that $\Delta(L[D]) > \eta_{i+1}$ for at least $1-O(\delta^{68})$ fraction of $D$'s, $\E_D[V[D]] \geq \Omega(\eta_{i+1})$. We can now conclude that with probability $1-O(\eta_i)$ it must be the case that $\Delta(L[D]) > \Omega(\eta_{i+1}) - O(\sqrt{\eta_i}) \geq \Omega(\eta_{i+1})$ since $\eta_{i+1} \gg \eta_i$.

\paragraph{Proof that List-Agreement Test passes:} Given that $1-O(\eta_i)$-fraction of the lists also have distance $\geq \Omega(\eta_{i+1})$, the proof that the test passes with probability $1-O(\eta_i)$ now exactly follows along the lines of Section~\ref{sec:consistency-of-lists}.
\end{proof}

We can now show that the list agreement test is sound given weaker assumptions on the coboundary expansion.

\begin{lemma}\label{lem:list-agr-test-improved-cbry}
Assume there exists a collection of lists $\{L[D]\}_{D \in X(d)}$ that satisfy the premise of Lemma~\ref{lem:agr-to-list-agr}, and assume that $X$ is a $\gamma$-spectral expander for $\gamma < 1/\poly(d)$ and a weak $(O(1/\delta^{12}),t, \exp(-\sqrt{t}), c)$\footnote{As noted before, we can actually handle any function $\exp(-o(t))$, but we look at $\exp(-\sqrt{t})$ for concreteness.} UG coboundary expander for $t=\Theta(1/\eta_{i+1}^2)$.
Then there exists $G: X(1) \rightarrow \{0,1\}$ such that
\[
\Pr_{D \sim X(d)}\left[\Delta(G(D),L[D]) \leq \delta\right] \geq 1-O(c^{1/2} + \exp(-\sqrt{t}) + \gamma).\]
\end{lemma}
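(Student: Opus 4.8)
The plan is to reprove Lemma~\ref{lem:list-agr-test-improved-cbry} by following the proof of Lemma~\ref{lem:list-agr-test} in Section~\ref{sec:list-agr-testing} essentially verbatim, only tracking the modified parameters supplied by Lemma~\ref{lem:agr-to-list-agr-improved-cbdry}. Write $\tau' = O(\eta_i)$ for the list-agreement error and recall the distance guarantee $\Delta(L[D])\geq \Omega(\eta_{i+1})$ for a $1-O(\tau')$ fraction of $D\in X(d)$. The one identity that drives everything is that, by the recursion defining the sequence in~\eqref{eq:eta}, $\log(1/\tau') = \Theta(1/\eta_{i+1})$; this forces the choice $t=\Theta(1/\eta_{i+1}^2)$, so that $\sqrt{t}=\Theta(1/\eta_{i+1})=\Theta(\log(1/\tau'))$ and hence $\exp(-\sqrt t)$ is comparable to a fixed power of $\tau'$ (and in particular is larger than $\tau'$).

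First I would reprove the analogues of Claims~\ref{claim:list-size}, \ref{claim:projected-dist}, \ref{claim:list-equality} and~\ref{claim:maj-list}. Claim~\ref{claim:list-size} carries over unchanged (with $\tau'$ replacing $\tau$), giving a common list size $\ell=\poly(1/\delta)$ off an $O(\tau')$ fraction of $d$-faces, using only the spectral gap and Cheeger. For Claim~\ref{claim:projected-dist}, two functions at distance $\Omega(\eta_{i+1})$ stay distinct on a random $t$-subset except with probability $(1-\Omega(\eta_{i+1}))^{t}\leq \exp(-\Omega(\eta_{i+1}t))=\exp(-\Omega(1/\eta_{i+1}))$, which is $\poly(\tau')$; this is where the lower bound $t\gtrsim \log(\ell/\tau')/\eta_{i+1}=\Theta(1/\eta_{i+1}^2)$ is needed. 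The majority-decoding step (Claims~\ref{claim:list-equality}, \ref{claim:maj-list}) is identical in form: when the List-Agreement-Test~\ref{list-agr-test-hdx} passes on $(D,C,D')$ we get a matching $\pi\colon L[D]\to L[D']$ with $\Delta(\pi(f)|_{C},f|_{C})<\eta_i$ for every $f$, and restricting to a random $t$-subset $B\subseteq C$ yields $\pi(f)|_{B}=f|_{B}$ except with probability $O(t\eta_i)$; after the union bound over the $\ell$ functions and the standard spectral averaging, $L[B]:=\maj_{D\supseteq_d B}(L[D]|_{B})$ is well-defined, of size $\ell$, and agrees with $L[D]|_{B}$ for all but an $O\!\big(\poly(1/\eta_{i+1})\,\tau'\big)$ fraction of $(B,D)$. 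Since $\poly(1/\eta_{i+1})\,\tau' = \poly(1/\eta_{i+1})\exp(-\Omega(1/\eta_{i+1}))\leq \sqrt{\tau'}\leq \exp(-\sqrt t)$ once $\eta_0$ is taken to be a tall enough tower in $1/\delta$, all these error terms remain below the coboundary-expansion tolerance $\exp(-\sqrt t)$.

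Next I would build the Unique-Games instance on $G_t[X]$ exactly as in the paragraph ``Designing the Unique Games instance'' — constraints $\pi(u,v)\in S_\ell$ forced by $L[u\cup v]=L[u]\circ\pi(u,v)(L[v])$ on $2t$-faces in the set $\mathcal{W}$ — and prove it is $(1-\exp(-\sqrt t))$-strongly triangle consistent by the $3t$-face argument of Claim~\ref{claim:ug_instance_triangle_consistent}, whose total inconsistency is $O(\sqrt{\tau'})\leq\exp(-\sqrt t)$. Invoking the hypothesis that $X$ is a weak $(O(1/\delta^{12}),t,\exp(-\sqrt t),c)$ UG coboundary expander produces $g\colon X(t)\to S_\ell$ with $\pi(u,v)=g(u)g(v)^{-1}$ on a $1-c$ fraction of edges; setting $h(v)=g(v)(1)$ and $R(v)=L_{h(v)}[v]$ and running the multislice argument of Lemma~\ref{lem:cobdry-to-global} verbatim shows that $R$ passes the standard direct product test over $X(t)$ with probability $1-O(c^{1/2}+\tau'^{1/4}+\gamma)=1-O(c^{1/2}+\exp(-\sqrt t)+\gamma)$. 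Finally I would apply Theorem~\ref{thm:dinur-kaufman} to extract $G\colon X(1)\to\{0,1\}$ agreeing with $R$ on almost all $t$-faces, and close with the union-bound-plus-Chernoff argument of the last Claim of Section~\ref{sec:list-agr-testing} to get $\Pr_{D\sim\mu_d}[\Delta(G|_D,L[D])\leq 100\log(2\ell)/t]\geq 1-O(c^{1/2}+\exp(-\sqrt t)+\gamma)$; since $100\log(2\ell)/t=O(\poly(\log(1/\delta))\,\eta_{i+1}^2)\leq\delta$, this is the desired conclusion.

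The only genuinely new difficulty — the rest being parameter bookkeeping — is the balancing act around $t=\Theta(1/\eta_{i+1}^2)$: one needs $t$ large enough that projecting $\Omega(\eta_{i+1})$-separated lists of size $\ell$ onto $t$ coordinates neither collapses the lists nor spoils the $\eta_i$-matchings (forcing $t\gtrsim 1/\eta_{i+1}^2$), yet small enough that the coboundary-expansion tolerance $\exp(-\sqrt t)$ still dominates the accumulated error, which is a fixed power of $\tau'=\Theta(\eta_i)$ (forcing $\sqrt t\lesssim\log(1/\eta_i)=\Theta(1/\eta_{i+1})$). The iterated-logarithm definition of $\eta_0\ll\eta_1\ll\cdots$ in~\eqref{eq:eta} is precisely what makes both inequalities hold at once, at the cost of the towering lower bounds on $k$ and $d$ recorded in the footnote to Theorem~\ref{thm:stronger-coboundary_formal}.
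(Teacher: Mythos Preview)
Your proposal is correct and follows essentially the same approach as the paper's own proof: rerun the argument of Section~\ref{sec:list-agr-testing} with the parameters supplied by Lemma~\ref{lem:agr-to-list-agr-improved-cbdry}, verify that the two-sided constraint $\Omega(1/\eta_{i+1}^2)\leq t\leq \exp(-\sqrt t)/\eta_i$ is satisfiable thanks to the recursion $\eta_{i+1}\geq \Omega(1/\log(1/\eta_i))$, and then carry $\tau=\exp(-\sqrt t)$ through the remainder of the proof unchanged. One small imprecision: the majority-decoding error after the averaging/Cheeger step is $O(\sqrt{\tau'+t\eta_i})=O(\sqrt{\eta_i}/\eta_{i+1})$ rather than $O(\poly(1/\eta_{i+1})\,\tau')$, but this is still $\eta_i^{1/2-o(1)}$ and hence dominated by $\exp(-\sqrt t)=\eta_i^{\Theta(1)}$ once the constants are chosen suitably, so your conclusion stands.
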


\begin{proof}
First note that Claim~\ref{claim:list-equality} holds with $\tau$ replaced with $\tau' = o(1)$. Next, we specify the changes to Claims~\ref{claim:list-size} and~\ref{claim:projected-dist}. 

\paragraph{Changes to Claim~\ref{claim:list-size}:}
We get that, for $\sqrt{t} \geq \Omega(1/\eta_{i+1})$ and $\eta_i \leq \exp(-\sqrt{t})$ it holds that: 
\[\Pr_{\substack{D \sim \mu_d \\B \subset_t D}}\big[|L(D)|_B| \neq \ell\big] \leq \exp(-\sqrt{t}).\] The proof of this is the same as Claim~\ref{claim:list-size}.

\paragraph{Changes to Claim~\ref{claim:projected-dist}:}
We have that, for $t \leq \frac{\exp(-\sqrt{t})}{\eta_{i}}$, 
with probability at least $1-\exp(-\sqrt{t})$ 
over the choice of $B\sim \mu_t$
it holds that 
\[
\Pr_{\substack{D,D' \supseteq_d~ B}}[L[D]|_B = L[D']|_B] \geq 1-\exp(-\sqrt{t}).
\]
The change in parameters propagates to the proof, which  changes nothing but the calculations.

The requirements on $t$ from the above claims can be summarized as:
\[\Omega(1/\eta_{i+1}^2) \leq t \leq \frac{\exp(-\sqrt{t})}{\eta_{i}},\]
which is possible to satisfy if $\eta_{i+1} \geq \Omega(1/\log(1/\eta_i))$.

For the rest of the proof set $\tau = \exp(-\sqrt{t})$. Above we've shown that Claims~\ref{claim:list-size} and~\ref{claim:projected-dist} hold with the parameter $\tau$. The rest of the proof of the lemma proceeds exactly in the same way with $\tau$ in Lemma~\ref{lem:list-agr-test} replaced with $\exp(-\sqrt{t})$ everywhere.
\end{proof}

The proof of Theorem~\ref{thm:stronger-coboundary_formal} now follows immediately by combining Lemmas~\ref{lem:agr-to-list-agr-improved-cbdry} and~\ref{lem:list-agr-test-improved-cbry}.

\section{The Johnson Agreement Theorem: Proof of Theorem~\ref{thm:johnson-dp}}\label{sec:johnson_thm}
In this section we give the proof of Theorem~\ref{thm:johnson-dp}, and throughout 
we use the parameters:
\[
  0\ll R^{-1}\ll h^{-1}\ll \nu\ll \alpha\ll \eps<1.
\]
The proof follows the lines of the argument 
in~\cite{ImpagliazzoKW09}, except that in the 
end we use small-set expansion type arguments
to glue together the local decoded functions
similarly to the way it is done in~\cite{BKM3}.
\subsection{Auxiliary Claims}
We need a few standard auxiliary claim. The first
claim asserts that if two functions are far, then
they disagree on many $k$-sets.
\begin{claim}\label{claim:trivial_distance_to_diff}
  If $g,h\colon [n]\to \{0,1\}$ are functions such that $\Delta(g,h)\geq R/k$, then
  \[
  \Pr_{A\subseteq_k[n]}\big[g|_{A} = h|_{A}\big]\leq e^{-R}.
  \]
\end{claim}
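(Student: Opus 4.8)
\textbf{Proof plan for Claim~\ref{claim:trivial_distance_to_diff}.}
The plan is to reduce the event $g|_A = h|_A$ to the event that the random $k$-set $A$ avoids the disagreement set of $g$ and $h$, and then bound the probability of avoidance by an elementary product estimate. Concretely, let $D = \{i \in [n] \mid g(i) \neq h(i)\}$ be the set on which $g$ and $h$ differ; by hypothesis $|D| \geq (R/k)\,n$. For $A \subseteq_k [n]$ chosen uniformly, we have $g|_A = h|_A$ if and only if $A \cap D = \emptyset$, so it suffices to show $\Pr_{A \subseteq_k [n]}[A \cap D = \emptyset] \leq e^{-R}$.

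To do this, I would write the avoidance probability as $\binom{n-|D|}{k}\big/\binom{n}{k} = \prod_{j=0}^{k-1} \frac{n-|D|-j}{n-j}$ and observe that each factor satisfies $\frac{n-|D|-j}{n-j} \leq \frac{n-|D|}{n} = 1 - \frac{|D|}{n}$ (clearing denominators reduces this to $0 \leq |D|\,j$). Hence the product is at most $\left(1 - \tfrac{|D|}{n}\right)^k \leq \left(1 - \tfrac{R}{k}\right)^k \leq e^{-R}$, where the last step uses $1 - x \leq e^{-x}$. (One may instead sample $A$ by including each coordinate independently with probability $k/n$ and condition on $|A| = k$, in which case $\Pr[A \cap D = \emptyset] = (1-k/n)^{|D|} \leq e^{-k|D|/n} \leq e^{-R}$ directly, avoiding even the binomial-ratio manipulation.) There is no real obstacle here — the only mild care needed is the monotonicity inequality $\frac{n-|D|-j}{n-j} \leq 1 - \frac{|D|}{n}$, or equivalently the choice of the independent-inclusion coupling — and the rest is the standard $1-x \leq e^{-x}$ bound.
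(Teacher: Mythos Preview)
Your proof is correct and follows essentially the same approach as the paper: both write the avoidance probability as the product $\prod_{j=0}^{k-1}\frac{n-|D|-j}{n-j}$, bound each factor by $1-|D|/n$, and conclude via $1-x\le e^{-x}$.
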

\begin{proof}
  The probability is at most
  \[
  \left(\frac{n-n\Delta(g,h)}{n}\right)\cdot\left(\frac{n-1-n\Delta(g,h)}{n-1}\right)
  \cdots \left(\frac{n-(k-1)-n\Delta(g,h)}{n-(k-1)}\right)
  \leq (1-\Delta(g,h))^{k}
  \leq e^{-R}.
  \]
\end{proof}

The second claim asserts that if we have a graph in 
which the second singular value is small, then any
set of vertices which is not-too-small contains a 
sizable number of edges.
\begin{claim}\label{claim:repalce_CS}
  Suppose that $M$ is the normalized adjacency matrix of a graph $G$, and the second singular value of $M$ is at most 
  $\delta$. Then for any set of vertices $S$ of relative size $\eps$ we have
  \[
  \Pr_{x, y\sim M x}\big[ x\in S, y\in S\big]\geq \eps^2 - \delta\eps.\qedhere
  \]
\end{claim}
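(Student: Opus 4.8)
The plan is to run the standard spectral (expander-mixing-lemma) argument in the weighted $L_2$ space attached to $M$. Let $\mu$ denote the stationary distribution of the random walk $M$ (the uniform distribution when $G$ is regular), so that in $\Pr_{x,\,y\sim Mx}[x\in S,\, y\in S]$ the vertex $x$ is drawn from $\mu$ and $y$ from one step of the walk. I would work in $L_2(\mu)$ with $\ip{f,g} = \E_{x\sim\mu}[f(x)g(x)]$ and $\norm{f} = \ip{f,f}^{1/2}$, and set $f = \one_S$, $\eps = \ip{f,\one} = \mu(S)$. Unwinding the definitions,
\[
\Pr_{x,\,y\sim Mx}\big[x\in S,\, y\in S\big] = \E_{x\sim\mu}\big[f(x)\,(Mf)(x)\big] = \ip{f, Mf},
\]
so it suffices to lower bound this quadratic form.

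Next I would decompose $f$ relative to the top singular space of $M$. Since $M$ is stochastic, $M\one = \one$; since $\mu$ is stationary, the adjoint (time-reversal) $M^{*}$ is also stochastic and fixes $\one$, so $\one$ spans the top singular space of $M$ and its $\ip{\cdot,\cdot}$-orthogonal complement is $M$-invariant. Write $f = \eps\one + f^{\perp}$ with $\ip{f^{\perp},\one} = 0$. Expanding and using $M\one = \one$, $\ip{\one, Mf^{\perp}} = \ip{M^{*}\one, f^{\perp}} = \ip{\one, f^{\perp}} = 0$, and $\ip{f^{\perp}, M\one} = \ip{f^{\perp},\one} = 0$, the cross terms vanish and
\[
\ip{f, Mf} = \eps^{2}\,\ip{\one, M\one} + \ip{f^{\perp}, Mf^{\perp}} = \eps^{2} + \ip{f^{\perp}, Mf^{\perp}}.
\]

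Finally I would control the error term via the second singular value. By hypothesis $\norm{Mg}\le\delta\norm{g}$ for all $g\perp\one$, so Cauchy--Schwarz gives $|\ip{f^{\perp}, Mf^{\perp}}| \le \norm{f^{\perp}}\,\norm{Mf^{\perp}} \le \delta\norm{f^{\perp}}^{2}$; and since $f$ is $\{0,1\}$-valued, $\norm{f}^{2} = \E_{\mu}[f] = \eps$, hence $\norm{f^{\perp}}^{2} = \norm{f}^{2} - \eps^{2} = \eps - \eps^{2} \le \eps$. Combining the three displays yields $\Pr_{x,\,y\sim Mx}[x\in S,\, y\in S] = \ip{f,Mf} \ge \eps^{2} - \delta(\eps - \eps^{2}) \ge \eps^{2} - \delta\eps$, as claimed. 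I do not anticipate a genuine obstacle here; the one point deserving a sentence of care is the justification that constants form the top singular space of $M$ and that $\one^{\perp}$ is $M$-invariant, which is precisely where stationarity of $\mu$ (equivalently $M\one = M^{*}\one = \one$) is used.
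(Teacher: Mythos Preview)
Your proposal is correct and follows essentially the same approach as the paper: write the probability as the quadratic form $\langle 1_S, M 1_S\rangle$, split $1_S$ into its constant part $\eps\cdot 1$ plus the mean-zero part, kill the cross terms using $M1=1$, and bound the remainder by Cauchy--Schwarz together with the second-singular-value assumption and $\|1_S-\eps\|_2^2\le\eps$. Your write-up is in fact more careful than the paper's (which has a stray $\eps$ and writes $\|1_S\|$ where $\|1_S-\eps\|$ would be tighter), but the argument is the same.
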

\begin{proof}
  The probability can be written as 
  \[
    \eps\langle 1_S, M1_S\rangle
    =\eps^2 + \langle 1_S -\eps, M(1_S-\eps)\rangle
    \geq \eps^2 - \|1_S\|\|M(1_S-\eps)\|_2
    \geq \eps^2 - \sqrt{\eps}\delta\sqrt{\eps}
    =\eps^2 - \eps\delta.
  \]
\end{proof}
\subsection{Local Structure}
We will think of a $k$-set as partition as $A_0\cup B_0$ where $A_0$ has size $\alpha k$ and $B_0$ has size $(1-\alpha)k$.
For a partition of $k$-set $(A_0,B_0)$, we define
\[
{\sf Cons}(A_0,B_0) = \big\{B\subseteq [n]\setminus A_0~\big|~F[A_0\cup B_0]|_{A_0} = F[A_0\cup B]|_{A_0}\big\}.
\]
We use a few definitions and results from~\cite{ImpagliazzoKW09}.
\subsubsection{Good and Excellent}
\begin{definition}
  We say $(A_0,B_0)$ is good if $\Pr_{B\subseteq_{(1-\alpha)k}[n]\setminus A_0}\big[B\in {\sf Cons}(A_0,B_0)\big]\geq \frac{\eps}{2}$.
\end{definition}
We will think of $B$ as being split into $D\cup E$ where $|E| = \alpha k$ and $|D| = (1-2\alpha) k$.
\begin{definition}
  We say $(A_0,B_0)$ is excellent if it is good, and furthermore
  \[
    \Pr_{\substack{B_1 = (D_1\cup E)\subseteq [n]\setminus A_0\\ B_2 = (D_2\cup E)\subseteq [n]\setminus A_0}}
    \big[(D_1,E), (D_2,E)\in {\sf Cons}(A_0,B_0)\land F[A_0\cup D_1\cup E]|_{E}\neq_{>h} F[A_0\cup D_2\cup E]|_{E}\big]
    \leq \nu.
  \]
\end{definition}
\begin{lemma}\label{lem:good_excellent}
    The following holds:
  \begin{enumerate}
    \item A randomly chosen $(A_0,B_0)$ is good with probability at least $\eps/2$.
    \item A randomly chosen good $(A_0,B_0)$ is excellent with probability at least $1-\frac{2^{-\Omega(h)}}{\nu}$.
  \end{enumerate}
\end{lemma}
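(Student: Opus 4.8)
The plan is to prove the two items separately. Item (1) is a short reverse‑Markov argument on the test acceptance probability; item (2) reduces, via Markov's inequality, to bounding the expectation over $(A_0,B_0)$ of the quantity defining ``excellent'', and that bound I would obtain by a symmetrization trick exchanging the certifying set $A_0$ with the set $E$ on which disagreement is measured.

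\textbf{Item (1).} First I would rewrite the $(k,\alpha k)$ direct product test as: sample $A_0$ of size $\alpha k$, then sample $B_0,B$ as disjoint uniform $(1-\alpha)k$-subsets of $[n]\setminus A_0$, and accept iff $B\in {\sf Cons}(A_0,B_0)$. Since $n$ is huge compared to $k$, the event that $B$ avoids $B_0$ has probability $1-O(k^2/n)=1-o(1)$, and conditioning on it perturbs the law of $B$ by $o(1)$ in total variation. Hence, with $p(A_0,B_0):=\Pr_{B\subseteq_{(1-\alpha)k}[n]\setminus A_0}[B\in {\sf Cons}(A_0,B_0)]$, we get $\E_{A_0,B_0}[p(A_0,B_0)]\geq \eps-o(1)$. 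As $p(A_0,B_0)\in[0,1]$, the reverse Markov inequality gives
\[
\Pr_{A_0,B_0}\big[(A_0,B_0)\text{ good}\big]=\Pr_{A_0,B_0}\big[p(A_0,B_0)\geq \tfrac{\eps}{2}\big]\geq \E_{A_0,B_0}[p]-\tfrac{\eps}{2}\geq \tfrac{\eps}{2}-o(1),
\]
which is $\geq\eps/2$ up to a lower-order term (in particular $\geq\eps/3$ for $n$ large, which is all that is used later).

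\textbf{Item (2).} Write $\mathrm{Bad}(A_0,B_0)$ for the probability in the definition of ``excellent''. By Markov's inequality, for a uniformly random $(A_0,B_0)$,
\[
\Pr_{(A_0,B_0)\text{ good}}\big[\mathrm{Bad}(A_0,B_0)>\nu\big]\leq \frac{\E_{A_0,B_0}[\mathrm{Bad}(A_0,B_0)]}{\nu\cdot\Pr_{A_0,B_0}[(A_0,B_0)\text{ good}]}\leq \frac{3\,\E_{A_0,B_0}[\mathrm{Bad}(A_0,B_0)]}{\nu\eps},
\]
so it suffices to show $\E_{A_0,B_0}[\mathrm{Bad}(A_0,B_0)]\leq 2^{-\Omega(h)}$: since $h\gg 1/\eps$, the $1/\eps$ factor is absorbed and the target bound $2^{-\Omega(h)}/\nu$ follows. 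Expanding, with $K_i:=A_0\cup D_i\cup E$,
\[
\E_{A_0,B_0}[\mathrm{Bad}]=\Pr_{A_0,B_0,D_1,D_2,E}\big[(D_1,E),(D_2,E)\in {\sf Cons}(A_0,B_0)\ \wedge\ F[K_1]|_E\neq_{>h}F[K_2]|_E\big].
\]
The membership $(D_i,E)\in {\sf Cons}(A_0,B_0)$ says $F[A_0\cup B_0]|_{A_0}=F[K_i]|_{A_0}$, so the two memberships together force $F[K_1]|_{A_0}=F[K_2]|_{A_0}$, which does not involve $B_0$. Dropping $B_0$ (and conditioning on the probability-$1-o(1)$ event that $D_1,D_2,E$ are disjoint), the above is at most $\Pr_{A_0,D_1,D_2,E}[F[K_1]|_{A_0}=F[K_2]|_{A_0}\ \wedge\ F[K_1]|_E\neq_{>h}F[K_2]|_E]+o(1)$. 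Now $(K_1,K_2)$ is, up to $o(1)$ in total variation, a uniformly random pair of $k$-sets with $|K_1\cap K_2|=2\alpha k$, and conditioned on $(K_1,K_2)$ the pair $(A_0,E)$ is a uniformly random ordered partition of $I:=K_1\cap K_2$ into two halves of size $\alpha k$, independent of $F[K_1],F[K_2]$. Fixing $K_1,K_2$ and letting $m$ be the number of coordinates of $I$ on which $F[K_1]$ and $F[K_2]$ differ: the event $F[K_1]|_{A_0}=F[K_2]|_{A_0}$ says all $m$ differing coordinates land in $E$, in which case $F[K_1]|_E$ and $F[K_2]|_E$ differ on exactly $m$ coordinates, forcing $m>h$; and the probability that the uniformly random half $E$ swallows all $m$ of them is $\binom{2\alpha k-m}{\alpha k}/\binom{2\alpha k}{\alpha k}\leq 2^{-m}\leq 2^{-h}$. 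Averaging over $(K_1,K_2)$ gives $\E_{A_0,B_0}[\mathrm{Bad}(A_0,B_0)]\leq 2^{-h}+o(1)=2^{-\Omega(h)}$, which completes the proof.

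\textbf{Main obstacle.} Everything here is essentially bookkeeping once the symmetrization is set up, so the one point to get right is that $A_0$ and $E$ are genuinely exchangeable halves of $K_1\cap K_2$ — so that ``$A_0$ certifies consistency'' may be traded for ``all disagreements of $F[K_1],F[K_2]$ on $K_1\cap K_2$ avoid $A_0$'' — together with careful tracking of the several $o(1)$ total-variation losses from the disjointness conditionings, all of which are harmless since $n\gg k$. (This mirrors the ``good/excellent'' step of~\cite{ImpagliazzoKW09}.)
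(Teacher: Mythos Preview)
Your proof is correct and matches what the paper intends: the paper does not give its own argument here but simply cites~\cite[Lemmas~3.5 and~3.6]{ImpagliazzoKW09}, and what you wrote is precisely that argument spelled out. Two minor remarks: in item~(1) the disjointness conditioning on $B\cap B_0=\emptyset$ is unnecessary (the definition of ${\sf Cons}(A_0,B_0)$ only requires $B\subseteq[n]\setminus A_0$, and the test as stated allows $B,B_0$ to overlap), so in fact $\E_{A_0,B_0}[p(A_0,B_0)]=\eps$ on the nose and the reverse-Markov step gives $\geq\eps/2$ exactly; and in item~(2) the absorption of the $1/\eps$ loss into $2^{-\Omega(h)}$ is justified by the parameter hierarchy $h^{-1}\ll\nu\ll\alpha\ll\eps$, which makes $\log(1/\eps)=o(h)$.
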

\begin{proof}
  These are~\cite[Lemma 3.5]{ImpagliazzoKW09}, \cite[Lemma 3.6]{ImpagliazzoKW09}.
\end{proof}

\subsubsection{Decoding Local Structure}
\begin{lemma}\label{lemma:excellent_agree}
  Suppose $(A_0,B_0)$ is excellent. Then there exists $g_{A_0,B_0}\colon [n]\to\{0,1\}$ such that
  \[
  \Pr_{B\in {\sf Cons}(A_0,B_0)}\big[g_{A_0,B_0}(B) \neq_{>2R/k} F[A_0\cup B]|_{B}\big]\leq \sqrt{\nu}.
  \]
\end{lemma}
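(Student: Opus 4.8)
The plan is to run the local‑decoding step of~\cite{ImpagliazzoKW09}: build $g_{A_0,B_0}$ from per‑block majority votes, and argue via the excellent hypothesis that it matches $F[A_0\cup B]|_B$ on all but $O(1)$ coordinates of almost every consistent $B$. For an $\alpha k$‑block $E$ (an $\alpha k$‑set disjoint from $A_0$) write $p_E=\Pr_D[D\cup E\in{\sf Cons}(A_0,B_0)]$ and $q_E=\Pr_{D_1,D_2}\big[(D_1,E),(D_2,E)\in{\sf Cons}(A_0,B_0)\ \wedge\ F[A_0\cup D_1\cup E]|_E\neq_{>h}F[A_0\cup D_2\cup E]|_E\big]$, so excellence is $\E_E[q_E]\le\nu$ and goodness is $\E_E[p_E]\ge\eps/2$. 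Call $E$ \emph{clean} if $q_E\le\tfrac14 p_E^2$, and for clean $E$ let $\sigma(E)\in\{0,1\}^E$ be the coordinatewise plurality of $F[A_0\cup D\cup E]|_E$ over consistent $D$. Since for a clean $E$ fewer than $\tfrac14 p_E^2$ extension pairs disagree on more than $h$ coordinates, there is one extension lying within $h$ coordinates of a $\tfrac34$‑fraction of the others, and one checks that $\sigma(E)$ then lies within $O(h)$ coordinates of that center; hence $F[A_0\cup D\cup E]|_E$ agrees with $\sigma(E)$ outside $O(h)$ coordinates for a $\tfrac34$‑fraction of consistent extensions $D$. Finally set $g_{A_0,B_0}(x)$ to be the plurality of $\sigma(E)_x$ over clean blocks $E\ni x$.

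The key estimate, coming from excellence, is that a random $\alpha k$‑sub‑block of a random consistent $B$ is clean and has $B$ in its good $\tfrac34$‑fraction of extensions, except with probability $O(\sqrt\nu/\eps)$. This uses $p_E\cdot\one[q_E>\tfrac14 p_E^2]\le 2\sqrt{q_E}$ and $q_E/p_E\le\sqrt{q_E}$, so that $\E_E[p_E\one[E\text{ not clean}]]$ and $\E_E[q_E/p_E]$ are each at most $2\sqrt{\E_E[q_E]}\le 2\sqrt\nu$ by Cauchy--Schwarz; dividing by $\E_E[p_E]\ge\eps/2$, and using that a random sub‑block of a random consistent $B$ is the $p_E$‑reweighting of uniform $\alpha k$‑sets (up to the routine matching between "uniform consistent $B$" and "a block plus a uniform consistent extension"), yields the bound. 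The crucial point is that the error budget $O(h)$ per clean block is an \emph{absolute constant}, not a fraction of the block.

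To decode all of $B$, cover it by $P=O(1/\alpha)$ disjoint $\alpha k$‑blocks $E_1,\ldots,E_P$; a union bound over these (each a random sub‑block of $B$) shows that, after the standard bookkeeping, for all but a $\sqrt\nu$‑fraction of consistent $B$ every $E_j$ is clean with $B$ a good extension, so $F[A_0\cup B]|_B$ differs from the concatenation $(\sigma(E_1),\ldots,\sigma(E_P))$ in $\le Ph=O(h/\alpha)$ coordinates. What remains is to replace the $\sigma(E_j)$'s by $g_{A_0,B_0}$. For this one checks $\sigma$ is overlap‑consistent: two clean blocks with $|E\cap E'|=\alpha k/2$ lie inside a common consistent set for all but a small fraction of pairs (here density of ${\sf Cons}(A_0,B_0)$, i.e.\ goodness, is used), and on that set both $\sigma(E),\sigma(E')$ match the restriction of $F$ up to $O(h)$ coordinates and so agree on $E\cap E'$ up to $O(h)$ coordinates; thus $\sigma$ passes, up to $O(h)$ errors, a $99\%$‑regime Johnson‑scheme agreement test and is explained by a single function, which we take as $g_{A_0,B_0}$ — concretely $g_{A_0,B_0}|_E$ matches $\sigma(E)$ up to $O(h)$ coordinates for all but a small fraction of clean blocks $E$. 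Union‑bounding this over $E_1,\ldots,E_P$ as well gives that for all but a $\sqrt\nu$‑fraction of $B\in{\sf Cons}(A_0,B_0)$ the function $g_{A_0,B_0}$ disagrees with $F[A_0\cup B]|_B$ in $\le O(h/\alpha)$ coordinates; since $R\gg h,1/\alpha$ this is fewer than $2R$ coordinates, i.e.\ $g_{A_0,B_0}(B)\neq_{>2R/k}F[A_0\cup B]|_B$ with probability $\le\sqrt\nu$.

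The main obstacle is keeping the number of disagreements $O(1)$ rather than $\Omega(k)$. Excellence only compares two extensions of the \emph{same} $\alpha k$‑block, and only on average with a constant failure probability, so naive decoding from all consistent sets would merely place $g$ at constant \emph{fractional} distance from $F[A_0\cup B]|_B$ — worthless for large $k$. The resolution, inherited from~\cite{ImpagliazzoKW09}, is to do all accounting at block resolution: each clean block contributes an absolute $O(h)$‑coordinate error (this is why the excellent condition is phrased with an absolute count $h$ of disagreeing coordinates, not a fraction), and any consistent set is covered by only $O(1/\alpha)$ blocks, so the errors total $O(h/\alpha)=O(1)$; the union bound over those $O(1/\alpha)$ blocks is what forces $\nu$ small relative to $\alpha,\eps$, i.e.\ the ordering $R^{-1}\ll h^{-1}\ll\nu\ll\alpha\ll\eps$. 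Secondary, routine ingredients are the distribution‑matching between "uniform consistent set" and "a block plus a uniform consistent extension", and the $99\%$‑regime Johnson agreement used to glue the $\sigma(E)$'s.
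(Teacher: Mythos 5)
The paper's ``proof'' of this lemma is a one-line citation: it is stated to be Lemma~3.8 of~\cite{ImpagliazzoKW09}. You instead attempt a self-contained reconstruction of the IKW local-decoding argument, so the comparison here is less ``same versus different route'' and more ``does your reconstruction actually close the gap the citation leaves open.'' The skeleton you give is faithful to IKW: block-by-block decoding, a cleanness criterion $q_E\le\tfrac14 p_E^2$, a plurality representative $\sigma(E)$ shown to be within $O(h)$ of a popular consistent extension by a Markov argument, and an accounting that keeps the error per block at an \emph{absolute} $O(h)$ coordinates rather than a fraction of $\alpha k$. The inequalities $p_E\one[q_E>\tfrac14 p_E^2]\le 2\sqrt{q_E}$ and $q_E/p_E\le\sqrt{q_E}$ combined with Cauchy--Schwarz and $\E_E[p_E]\ge\eps/2$ are correct and give failure probability $O(\sqrt\nu/\eps)$ per block under the $p_E$-reweighted measure. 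The key conceptual point you flag — that excellence being stated with an absolute coordinate count $h$ is precisely what allows the errors over $O(1/\alpha)$ blocks to total $O(h/\alpha)=O(1)\ll R$ — is correct and is indeed the heart of the lemma.

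The genuine soft spot is the gluing step, which you compress into ``one checks $\sigma$ is overlap-consistent ... thus $\sigma$ passes, up to $O(h)$ errors, a $99\%$-regime Johnson-scheme agreement test and is explained by a single function, which we take as $g_{A_0,B_0}$.'' Two issues: (i) a ``$99\%$ Johnson agreement theorem modulo $O(h)$ coordinate errors'' is not an off-the-shelf tool — the standard $99\%$ statements require exact agreement on the overlap, and the approximate variant you need must be established or argued around; you also give two inequivalent-sounding definitions of $g_{A_0,B_0}$ (coordinatewise plurality of $\sigma(E)_x$ over clean $E\ni x$, versus the output of the decoding theorem) without reconciling them. (ii) More substantively, a \emph{single} bad block contributes up to $\alpha k\gg R$ disagreements, so you must guarantee that for all but a $\sqrt\nu$-fraction of consistent $B$ there exists a partition of $B$ into $P=\Theta(1/\alpha)$ blocks that are \emph{all simultaneously} clean, good for $B$, and well-explained by $g$. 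Your per-block bound is $O(\sqrt\nu/\eps)$ under the $p_E$-reweighted measure, so the naive union bound yields $O(\sqrt\nu/(\alpha\eps))$, not $\sqrt\nu$; and the blocks occurring inside a random consistent $B$ are $p_E$-reweighted, not uniform, so transferring the $99\%$-decoding guarantee (stated for uniform $E$) costs another $1/\eps$. You wave these away as ``routine matching'' and ``standard bookkeeping.'' They are absorbable given $\nu\ll\alpha\ll\eps$, but as written your argument establishes a bound of the form $\sqrt\nu\cdot\mathrm{poly}(1/\alpha,1/\eps)$ rather than the clean $\sqrt\nu$ of the statement, and the overlap-consistency-to-global-$g$ passage is an assertion, not a proof. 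None of this is a wrong idea — it is the right blueprint — but to actually replace the citation you would need to fill in the approximate agreement test and the reweighted union bound explicitly.
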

\begin{proof}
  This is~\cite[Lemma 3.8]{ImpagliazzoKW09}.
\end{proof}

\subsection{Gluing Local Structure}
\subsubsection{Getting correlation over the global functions}
Consider the following joint distribution over pairs of $(A_0,B_0)$ and $(A_0', B_0')$,
which we call $\mathcal{D}$:
\begin{enumerate}
  \item Sample $\tilde{A}\subseteq_{\sqrt{\alpha} k}[n]$. 
  \item Sample $\ell$ from a binomial distribution ${\sf Bin}(\alpha k, \sqrt{\alpha})$ and $A_0, A_0'\subseteq_{\alpha k} \tilde{A}$ conditioned on $|A_0\cap A_0'| = \ell$. 
  \item Sample $B_0, B_0'\subseteq_{(1-\alpha) k}[n]\setminus \tilde{A}$.
\end{enumerate}
The following lemma shows that the functions $g_{A_0,B_0}$ and $g_{A_0', B_0'}$ for pairs generated in this way are very close to each other with
noticeable probability.
\begin{claim}\label{claim:similar_restrictions}
  It holds that
  \[
    \Pr_{((A_0,B_0), (A_0', B_0'))\sim \mathcal{D}}\left[\Delta(g_{A_0, B_0}, g_{A_0', B_0'})\lll \frac{R\log(1/\eps)}{k} \right]\ggg \eps^6.
  \]
\end{claim}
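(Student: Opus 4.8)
The plan is to follow the gluing argument of~\cite{ImpagliazzoKW09}. Note first that $g_{A_0,B_0}$ is only defined when $(A_0,B_0)$ is excellent, so the event in the claim tacitly includes ``both pairs are excellent'', and the first step is to lower bound the probability of this under $\mathcal{D}$. By Lemma~\ref{lem:good_excellent} a uniformly random pair is good with probability $\geq\eps/2$ and (since $\nu\gg 2^{-\Omega(h)}$) excellent with probability $\geq\eps/3$; as $n\gg k$, the $\mathcal{D}$-marginal of each of $(A_0,B_0)$ and $(A_0',B_0')$ is $o(1)$-close in total variation to the uniform law on disjoint pairs, so each pair is excellent with probability $\geq\eps/4$. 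For the joint event I would condition on $\tilde A$: then $B_0$ and $B_0'$ become independent and $(A_0,A_0')$ is exchangeable, so with $q(A):=\Pr_{B}[(A,B)\text{ excellent}\mid\tilde A]$ one has $\Pr[\text{both excellent}\mid\tilde A]=\E_{A_0,A_0'}[q(A_0)q(A_0')]$, and a standard argument (positive association of $A_0,A_0'$ under the overlap law ${\sf Bin}(\alpha k,\sqrt\alpha)$, with the ample slack down to $\eps^6$) shows $\Pr_{\mathcal{D}}[\text{both excellent}]\ggg\eps^{2}$. Condition on this event, and write $g=g_{A_0,B_0}$, $g'=g_{A_0',B_0'}$.

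Next I would relate $g$ and $g'$ through a common $k$-face. Since $A_0,A_0'\subseteq\tilde A$ and $|\tilde A|=\sqrt{\alpha}\,k\ll k$, we have $|A_0\cup A_0'|\leq 2\alpha k<k$; draw $\hat B\subseteq[n]\setminus\tilde A$ uniformly of size $k-|A_0\cup A_0'|$ and set $C:=A_0\cup A_0'\cup\hat B\in\binom{[n]}{k}$. Then $B:=C\setminus A_0$ and $B':=C\setminus A_0'$ each have size exactly $(1-\alpha)k$, and $B\cap B'=\hat B$ has size $\Theta(k)$. Let $\mathcal{G}$ be the event that $C$ is \emph{fiber-consistent} with both pairs, i.e. $F[C]|_{A_0}=F[A_0\cup B_0]|_{A_0}$ and $F[C]|_{A_0'}=F[A_0'\cup B_0']|_{A_0'}$, and that the decodings of Lemma~\ref{lemma:excellent_agree} succeed at $B$ and at $B'$. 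On $\mathcal{G}$, Lemma~\ref{lemma:excellent_agree} gives $g|_{B}\neq_{>2R/k}F[A_0\cup B]|_{B}=F[C]|_{B}$ and likewise $g'|_{B'}\neq_{>2R/k}F[C]|_{B'}$, and restricting both to $\hat B$ yields $\dist_{\hat B}(g,g')\leq 4R/|\hat B|=O(R/k)$.

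The main obstacle is to lower bound $\Pr[\mathcal{G}]$ over $\hat B$ (and the remaining randomness) by $\eps^{O(1)}$, which is exactly the delicate part of~\cite{ImpagliazzoKW09}. Two points need care. First, $B=C\setminus A_0=(A_0'\setminus A_0)\cup\hat B$ is not a uniformly random element of ${\sf Cons}(A_0,B_0)$ --- it is forced to contain $A_0'\setminus A_0$ --- so ``most consistent $B$ decode correctly'' does not apply directly; this is exactly where excellence (rather than mere goodness) is used, as, writing $B=E\cup D$ with $E\subseteq A_0'\setminus A_0$ of size $\alpha k$ and $D$ the rest, excellence asserts that $F[A_0\cup E\cup D]|_{E}$ --- hence the decoded value on $E$, and by the same token on $D=\hat B$ --- is stable as the ``$D$-part'' varies over consistent completions, so the decoding still succeeds with probability $1-O(\sqrt\nu)$ conditioned on fiber-consistency. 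Second, one must lower bound the joint fiber-consistency probability: averaging over $B_0$ (which $\mathcal{D}$ samples), $\Pr_{\hat B,B_0}[F[C]|_{A_0}=F[A_0\cup B_0]|_{A_0}]$ is a collision probability of the map $A_0\cup\,\cdot\,\mapsto F[A_0\cup\,\cdot\,]|_{A_0}$, which is $\ggg\eps$ on good fibers, the mild non-uniformity of the shape of $C$ (about $\alpha k$ of its coordinates lie in $\tilde A$) being absorbed by the same stability; and the joint event for the two pairs is handled by conditioning on $\tilde A$ and on the two pins, which --- by the way $\mathcal{D}$ couples $A_0$ and $A_0'$ inside $\tilde A$ together with the test guarantee --- agree on $A_0\cap A_0'$ and are typical values of $F[C]$. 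Assembling these ingredients as in~\cite{ImpagliazzoKW09} yields $\Pr[\mathcal{G}]\ggg\eps^{O(1)}$, and hence, combined with the first paragraph, the total probability over $\mathcal{D}$ and $\hat B$ of the event ``both pairs excellent and $\mathcal{G}$'' is $\ggg\eps^{6}$.

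To finish I would amplify ``$\dist_{\hat B}(g,g')\leq 4R/k$'' to ``$\dist(g,g')$ small''. By averaging, a $\ggg\eps^{6}$ fraction of $\mathcal{D}$-outcomes have both pairs excellent and $\Pr_{\hat B}[\dist_{\hat B}(g,g')\leq 4R/k]\geq\eps^{6}/2$. Fix such an outcome; if $\dist(g,g')=R'>8R/k$, then $4R/k<R'/2$ and, since $\hat B$ is a uniform $\Theta(k)$-subset of $[n]$, Theorem~\ref{thm:chernoff} gives $\Pr_{\hat B}[\dist_{\hat B}(g,g')\leq R'/2]\leq e^{-\Omega(R'k)}$, which is $\ll\eps^{6}$ because $R'k>8R\ggg\log(1/\eps)$ (the parameter hierarchy gives $R^{-1}\ll\eps$), contradicting the previous line. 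Hence $\dist(g,g')\leq 8R/k\lll R\log(1/\eps)/k$ for every such outcome, and the set of such outcomes has probability $\ggg\eps^{6}$, which is the claim.
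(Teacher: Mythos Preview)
Your overall architecture is right --- a common ``witness'' $k$-face, compare each local decoding to it, then amplify --- but the heart of the argument, the lower bound $\Pr[\mathcal{G}]\ggg\eps^{O(1)}$, is not actually proved. You fix both pairs $(A_0,B_0)$ and $(A_0',B_0')$ first and then try to find $\hat B$ so that \emph{both} fiber-consistency events $F[C]|_{A_0}=F[A_0\cup B_0]|_{A_0}$ and $F[C]|_{A_0'}=F[A_0'\cup B_0']|_{A_0'}$ hold. These two events depend on the \emph{same} string $F[C]$ and there is no reason they are independent or positively correlated; your appeal to ``positive association'' of the overlap law and to ``as in~\cite{ImpagliazzoKW09}'' does not supply this. (Also, excellence is a statement about stability of the \emph{decoded value} on ${\sf Cons}$, not about the probability of landing in ${\sf Cons}$, so invoking it to ``absorb the mild non-uniformity'' of $B$ in the fiber-consistency step is off target.) The same objection applies to your first paragraph: excellence is not a monotone event, so an FKG-type argument does not give $\Pr_{\mathcal D}[\text{both excellent}]\ggg\eps^2$.

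The paper avoids this by reversing the order of quantifiers. It first samples the witness $\tilde A\cup B'$ (with $|\tilde A|=\sqrt{\alpha}k$, $|B'|=(1-\sqrt{\alpha})k$), shows that with probability $\ggg\eps^2$ over $(\tilde A,B')$ the set
\[
S \;=\; \Bigl\{(A_0,B_0):\ A_0\subseteq\tilde A,\ (A_0,B_0)\text{ excellent},\ g_{A_0,B_0}(A_1\cup B')\neq_{\leq 2R/k} F[\tilde A\cup B']|_{A_1\cup B'}\Bigr\}
\]
has measure $\ggg\eps^2$, and then --- this is the step you are missing --- uses that conditioned on $\tilde A$, the coupling $\mathcal{D}$ is a product of a Johnson-type walk on $A_0$ and an (almost) independent walk on $B_0$, with second singular value $\leq\sqrt{\alpha}$. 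Claim~\ref{claim:repalce_CS} then gives $\Pr_{\mathcal D\mid\tilde A}\big[(A_0,B_0),(A_0',B_0')\in S\big]\geq|S|^2-\sqrt{\alpha}\,|S|\ggg\eps^4$, and integrating over $(\tilde A,B')$ yields the $\eps^6$. In other words, the very definition of $\mathcal D$ (overlap $\sim\mathsf{Bin}(\alpha k,\sqrt{\alpha})$ in $A_0$, independent $B_0$'s) is engineered so that this spectral step works; without it, there is no mechanism turning the one-sided $\eps^2$ into a two-sided $\eps^4$. Your final amplification paragraph is fine, but it is sitting on an unproved premise.
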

\begin{proof}
Choose $\tilde{A}$ of size $\sqrt{\alpha} k$, choose $B_0\subseteq_{(1-\alpha)k} [n]\setminus \tilde{A}$ and $B'\subseteq_{(1-\sqrt{\alpha})k}[n]\setminus \tilde{A}$ independently,
and partition $\tilde{A}$ randomly as $A_0\cup A_1$ where the size of $A_0$ is $\alpha k$. Then
\[
\E_{\substack{\tilde{A}\\ B'\subseteq_{(1-\sqrt{\alpha})k} [n]\setminus \tilde{A}}}\Big[\E_{\tilde{A} = A_0\cup A_1, B_0}\big[1_{(A_0, B_0)\text{ is excellent}}
\cdot 1_{g_{A_0, B_0}(A_1\cup B')\neq_{\leq 2R/k} F[A_0\cup A_1\cup B']|_{A_1\cup B'}}\big]\Big]\ggg \eps^2,
\]
where we used the fact that the distribution of $(A_0,B_0)$ is uniform, hence by Lemma~\ref{lem:good_excellent}
it is excellent with probability at least $\eps/4$. Also, conditioned on $A_0,B_0$, the distribution of $A_1\cup B'$
is $O(k^2/n)=o(1)$ close to uniform; thus, it is in ${\sf Cons}(A_0,B')$ with probability at least
$\eps/2 - o(1)$. Conditioned on that, the distribution of 
$A_1\cup B'$ is $o(1)$ close to uniform in ${\sf Cons}(A_0,\tilde{B})$, so
by Lemma~\ref{lemma:excellent_agree} we have that $g_{A_0, B_0}(A_1\cup B')\neq_{\leq R/k} F[A_0\cup A_1\cup B']|_{A_1\cup B'}$ with probability $1-o(1)$. 

It follows by an averaging argument that with probability at least $\Omega(\eps^2)$ over the choice of 
$\tilde{A}, B'$, we have that
\begin{equation}\label{eq:replace_CS_argument}
\Pr_{A_0\subseteq \tilde{A}, B_0}\Big[g_{A_0, B_0}(B')\neq_{\leq 2R/k} F[\tilde{A}\cup B']|_{B'}\Big]
\ggg\eps^2,
\end{equation}
and we fix such $\tilde{A}, B'$. Consider the graph induced by the sampling procedure $\mathcal{D}$, i.e. its vertices are 
$(A_0, B_0)$ and the weight of an edge $(A_0, B_0)$ and $(A_0', B_0')$ is proportional to the probability this is sampled from $\mathcal{D}$
conditioned on $\tilde{A}$. Note that this is a product graph, where one graph is the $A_0$ vs $A_0'$ graph, and the other graph 
is the $B_0$ vs $B_0'$ graph. We argue that the absolute value of the largest non-trivial eigenvalue in each one of these graphs is at most $\sqrt{\alpha}$,
and hence the same holds for the product graph. As both arguments are essentially the same, we explain the argument for the $A_0$ vs $A_0'$ graph.
The adjacency matrix of the $(A_0,A_0')$ graph is $\mathrm{M} = \sum\limits_{\ell} p(\ell) M_{\ell}$ where $p(\ell) = \Pr\big[{\sf Bin}(\alpha k, \sqrt{\alpha}) = \ell\big]$
and $M_{\ell}$ is the adjacency matrix of the Johnson graph with intersection parameter $\ell$. The matrices $M_{\ell}$ have the same eigenvectors, and the largest non-trivial
eigenvalue of $M_{\ell}$ in absolute value is $\frac{\ell}{\alpha k}$, hence the largest non-trivial eigenvalue of $\mathrm{M}$ is 
$\frac{1}{\alpha k}\sum\limits_{\ell} p(\ell)\ell = \sqrt{\alpha}$. Combining this observation with~\eqref{eq:replace_CS_argument} via Claim~\ref{claim:repalce_CS}
yields that
\[
\Pr_{(A_0,A_0'), (B_0, B_0')}\Big[g_{A_0, B_0}(B')\neq_{\leq 2R/k} F[\tilde{A}\cup B']|_{B'}, g_{A_0', B_0'}(B')\neq_{\leq 2R/k} F[\tilde{A}\cup B']|_{B'}\Big]
\ggg\eps^4,
\]
so by the triangle inequality
\[
\Pr_{(A_0,A_0'), (B_0, B_0')}\Big[g_{A_0, B_0}(B')\neq_{\leq 4R/k} g_{A_0', B_0'}(B')\Big]
\ggg\eps^4
\]
Taking expectation over $\tilde{A}$ and $B'$ gives (noting that the distribution over $B'$ is a
uniform subset of $[n] \setminus \tilde{A}$ of size $(1-\sqrt{\alpha})k$, hence it is $O(k^2/n) = o(1)$ close to uniform)
\[
\E_{((A_0,B_0), (A_0', B_0'))\sim \mathcal{D}}\left[\E_{B'}\big[
1_{
g_{A_0, B_0}(B')
\neq_{\leq 4R/k} 
g_{A_0', B_0'}(B')}\big]\right]\ggg \eps^6.
\]
By an averaging argument, with probability at least $\Omega(\eps^6)$ over the choice of $((A_0,B_0), (A_0', B_0'))\sim \mathcal{D}$,
we get that $\E_{B}\big[1_{g_{A_0, B_0}(B)\neq_{\leq 4R/k} g_{A_0', B_0'}(B)}\big]\ggg \eps^6$. In that case, sampling $C\subseteq [n]$
of size $k/10 R$ we get that $g_{A_0, B_0}|_{C} = g_{A_0', B_0'}|_{C}$ with probability $\Omega(\eps^6)$, and by Claim~\ref{claim:trivial_distance_to_diff}
it follows that $\Delta(g_{A_0, B_0}, g_{A_0', B_0'})\lll R\log(1/\eps)/k$.
\end{proof}

\subsubsection{Applying Expansion}
We identify pairs $(A_0,B_0)$ with points in the multi-slice $\binom{[n]}{\alpha k, (1-\alpha)k, n-k}$ in the obvious way:
a pair $(A_0, B_0)$ is identified with a point $x\in \binom{[n]}{\alpha k, (1-\alpha)k, n-k}$ whose set of coordinates
$i$ such that $x_i = 0$ is $A_0$, whose set of coordinates $i$ such that $x_i = 1$ is $B_0$, and whose set of coordinates
$i$ such that $x_i = 2$ is $[n]\setminus (A_0\cup B_0)$. Thus, the distribution $\mathcal{D}$ defines a Markov chain
$\mathrm{T}$ over $\binom{[n]}{\alpha k, (1-\alpha)k, n-k}$. To simplify notation, we denote the function $g_{A_0,B_0}$ by $g_x$, 
where $x$ is the point in the multi-slice associated with $(A_0,B_0)$. Using these notations,  Claim~\ref{claim:similar_restrictions} implies
that sampling $x$ uniformly and then $y\sim \mathrm{T} x$ we have that $\Delta(g_x, g_y)\lll R\log(1/\eps)/k$ with probability
$\Omega(\eps^6)$. Looking at the definition of $\mathcal{D}$, it is clear that $\mathrm{T}$ is symmetric.
\begin{claim}\label{claim:eigenvalue_bound}
  $\lambda_2(\mathrm{T})\lll \alpha^{\Omega(1)}$.
\end{claim}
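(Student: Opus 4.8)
The plan is to reduce the statement to two tractable ingredients by conditioning on the set $\tilde A$ drawn in the first step of $\cD$, exactly as in the proof of Claim~\ref{claim:similar_restrictions}. Once $\tilde A$ is fixed, the pair $\big((A_0,B_0),(A_0',B_0')\big)$ is sampled from a conditionally product distribution: $A_0,A_0'$ are two $\alpha k$-subsets of $\tilde A$ with $|A_0\cap A_0'|\sim{\sf Bin}(\alpha k,\sqrt\alpha)$, and $B_0,B_0'$ are independent uniform $(1-\alpha)k$-subsets of $[n]\setminus\tilde A$. Writing $\mathrm M_{\tilde A}$ for the induced Markov chain on $\binom{\tilde A}{\alpha k}$ and, for a test function $f$ on the multi-slice, $g_{\tilde A}(A_0)=\E_{B_0\subseteq_{(1-\alpha)k}[n]\setminus\tilde A}[f(A_0,B_0)]$ for its $B$-average, the complete resampling of the $B$-coordinate gives
\[
\langle f,\mathrm T f\rangle \;=\; \E_{\tilde A}\big[\langle g_{\tilde A},\,\mathrm M_{\tilde A}\,g_{\tilde A}\rangle\big].
\]
The first ingredient is the bound $\lambda_2(\mathrm M_{\tilde A})\le\sqrt\alpha$, which is precisely the computation in the proof of Claim~\ref{claim:similar_restrictions} ($\mathrm M_{\tilde A}=\sum_\ell p(\ell)M_\ell$, and the intersection-$\ell$ Johnson graph $M_\ell$ has non-trivial eigenvalues bounded by $\ell/(\alpha k)$ over the bulk of the binomial, the $\exp(-\Omega(k))$ tail being negligible). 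The second ingredient is that the law of $\big(\tilde A\text{ uniform},\ A_0\subseteq_{\alpha k}\tilde A,\ B_0\subseteq_{(1-\alpha)k}[n]\setminus\tilde A\big)$, projected to $(A_0,B_0)$, is exactly the uniform measure on the multi-slice; in particular (by Jensen) $\E_{\tilde A}[\|g_{\tilde A}\|^2]\le\|f\|^2$, and $\mathrm T$ is reversible with respect to the uniform measure.

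Now take any $f$ on the multi-slice with $\E[f]=0$. Splitting $g_{\tilde A}$ into its mean $\bar g_{\tilde A}$ and its fluctuation, using $\lambda_2(\mathrm M_{\tilde A})\le\sqrt\alpha$ and $\E_{\tilde A}[\|g_{\tilde A}\|^2]\le\|f\|^2$ yields
\[
\langle f,\mathrm T f\rangle \;\le\; \E_{\tilde A}\big[\bar g_{\tilde A}^{\,2}\big]\;+\;\sqrt\alpha\,\|f\|^2 .
\]
Controlling $\E_{\tilde A}[\bar g_{\tilde A}^{\,2}]$ is the step I expect to be the crux, since this is where the coupling of the $A$- and $B$-coordinates through $\tilde A$ must be handled. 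Let $P$ be the operator sending $f$ to the function $\tilde A\mapsto\bar g_{\tilde A}$ on $\binom{[n]}{\sqrt\alpha k}$ (both spaces with the uniform measure), so that $\E_{\tilde A}[\bar g_{\tilde A}^{\,2}]=\|Pf\|^2=\langle f,P^*Pf\rangle\le\lambda_2(PP^*)\,\|f\|^2$. The adjoint walk $PP^*$ on $\binom{[n]}{\sqrt\alpha k}$ samples, from $\tilde A$, a set $A_0\subseteq_{\alpha k}\tilde A$, an auxiliary uniform $B_0\subseteq_{(1-\alpha)k}[n]\setminus\tilde A$, and then a uniform $\sqrt\alpha k$-set $\tilde A'\supseteq A_0$ disjoint from $B_0$. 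Since $|B_0|=(1-\alpha)k=o(n)$, conditioning $\tilde A'$ on disjointness from $B_0$ changes the transition kernel by only $O(k^2/n)=o(1)$ in total variation, so $PP^*$ is $o(1)$-close in operator norm to the down-up walk $U^{\sqrt\alpha k}_{\alpha k}D^{\sqrt\alpha k}_{\alpha k}$ on $\binom{[n]}{\sqrt\alpha k}$. The complete complex on $[n]$ is an $O(1/n)$-spectral expander, so Lemma~\ref{lem:spectral_gap_of_graphs_from_HDX} (applied with ground level $\sqrt\alpha k$ and ratio $\sqrt\alpha$) gives that this down-up walk has second singular value at most $\sqrt\alpha+\poly(k)\cdot O(1/n)=\sqrt\alpha+o(1)$. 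Hence $\lambda_2(PP^*)\le\sqrt\alpha+o(1)$ and $\E_{\tilde A}[\bar g_{\tilde A}^{\,2}]\le(\sqrt\alpha+o(1))\|f\|^2$.

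Combining the two displays gives $\langle f,\mathrm T f\rangle\le(2\sqrt\alpha+o(1))\|f\|^2$ for all $f$ with $\E[f]=0$, that is, $\lambda_2(\mathrm T)\le 2\sqrt\alpha+o(1)$, which is $\lll\alpha^{\Omega(1)}$ (indeed $\lll\alpha^{1/2}$) once $\alpha$ is a sufficiently small constant, as claimed. The routine parts I am glossing over are: the short counting check that $\mathrm T$ has uniform stationary measure (the number of valid $\tilde A\supseteq A_0$ disjoint from $B_0$ does not depend on the pair $(A_0,B_0)$); the total-variation comparison between $PP^*$ and the clean down-up walk; and the eigenvalue bound for $\mathrm M_{\tilde A}$, imported verbatim from the proof of Claim~\ref{claim:similar_restrictions}. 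The only genuinely new point beyond Claim~\ref{claim:similar_restrictions} is the observation that the ``average over $\tilde A$ of the conditional means'' term is governed by a down-up walk on $\binom{[n]}{\sqrt\alpha k}$, and is therefore bounded by $\sqrt\alpha+o(1)$.
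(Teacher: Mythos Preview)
Your argument is correct and takes a genuinely different route from the paper. The paper lifts the multi-slice to $S_n$, compares $\mathrm T$ to an auxiliary operator $\mathrm T_\rho=\mathrm H^*\mathrm S\,e^{-\rho L}\mathrm H$ built from the random-transposition heat semigroup on $S_n$, shows $\|\mathrm T-\mathrm T_\rho\|_2\le\alpha^{\Omega(1)}$ by coupling the intersection sizes $|x^{-1}(0)\cap y^{-1}(0)|$, and then bounds $\lambda_2(\mathrm T_\rho)$ via the Diaconis--Shahshahani character formula for the eigenvalues of $L$ (imported from~\cite{filmus2022log}). You instead stay on the multi-slice: conditioning on $\tilde A$ factors the quadratic form into (i) the $A_0$-walk inside $\tilde A$, whose spectral bound $\sqrt\alpha$ you take verbatim from Claim~\ref{claim:similar_restrictions}, and (ii) the residual term $\E_{\tilde A}[\bar g_{\tilde A}^{\,2}]=\|Pf\|^2$, which you identify---up to an $O(k^2/n)$ total-variation perturbation---with the down-up walk $U^{\sqrt\alpha k}_{\alpha k}D^{\sqrt\alpha k}_{\alpha k}$ on $\binom{[n]}{\sqrt\alpha k}$ and bound by Lemma~\ref{lem:spectral_gap_of_graphs_from_HDX}. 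Your approach is more elementary, needs no representation theory, reuses only tools already in the paper, and yields the explicit constant $2\sqrt\alpha+o(1)$; the paper's route, on the other hand, is a single clean eigenvalue computation once the comparison with $\mathrm T_\rho$ is established.

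One small remark: as written you bound only the top non-trivial eigenvalue of the self-adjoint operator $\mathrm T$, whereas the proof of Claim~\ref{claim:sse_conclusion} uses $\|\mathrm T(f-p)\|_2\le\alpha^{\Omega(1)}\|f-p\|_2$, i.e.\ the full operator norm on mean-zero functions. This is harmless. Since the bound on $\mathrm M_{\tilde A}$ you import from Claim~\ref{claim:similar_restrictions} is stated in absolute value, your same decomposition (together with $\bar g_{\tilde A}^{\,2}\ge 0$) also gives $\langle f,\mathrm Tf\rangle\ge -\sqrt\alpha\,\|f\|^2$. Alternatively, for indicators one may bypass the proof of Claim~\ref{claim:sse_conclusion} and use $\langle f,\mathrm Tf\rangle=p^2+\langle f-p,\mathrm T(f-p)\rangle\le p^2+\alpha^{\Omega(1)}p$ directly from your one-sided Rayleigh-quotient bound.
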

\begin{proof}
  The proof is deferred to Section~\ref{sec:proof_of_ev_bound_dp_johnson}.
\end{proof}
Claim~\ref{claim:eigenvalue_bound} gives us the following small set expansion conclusion:
\begin{claim}\label{claim:sse_conclusion}
Suppose $f\colon \binom{[n]}{\alpha k, (1-\alpha)k, n-k}\to \{0,1\}$ has expectation $p$. Then
\[
\left\langle f, \mathrm{T} f\right\rangle
\leq p^{1.5} + \alpha^{\Omega(1)} p.
\]
\end{claim}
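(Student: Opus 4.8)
The plan is a routine spectral‑gap argument; essentially all of the real work has been done already in Claim~\ref{claim:eigenvalue_bound}. First I would decompose $f$ into its mean and its fluctuation: write $f = p\cdot\mathbf{1} + g$, where $\mathbf{1}$ is the all‑ones function on the multi‑slice $\binom{[n]}{\alpha k,(1-\alpha)k,n-k}$ and $g = f - p$ has mean zero. Since $\mathrm{T}$ is a symmetric (reversible with respect to the uniform measure) stochastic operator, we have $\mathrm{T}\mathbf{1} = \mathbf{1}$ and $\langle \mathbf{1}, g\rangle = \E[g] = 0$, so the cross terms vanish and
\[
\langle f, \mathrm{T}f\rangle = p^2\langle\mathbf{1},\mathrm{T}\mathbf{1}\rangle + \langle g,\mathrm{T}g\rangle = p^2 + \langle g,\mathrm{T}g\rangle.
\]

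Next I would bound $\langle g,\mathrm{T}g\rangle$ using the spectral gap. Since $g$ is orthogonal to the constant functions and $\lambda_2(\mathrm{T})$ bounds every non‑trivial eigenvalue of $\mathrm{T}$ in absolute value, we get $\langle g,\mathrm{T}g\rangle \le \lambda_2(\mathrm{T})\,\|g\|_2^2$, which by Claim~\ref{claim:eigenvalue_bound} is at most $\alpha^{\Omega(1)}\|g\|_2^2$. Because $f$ is $\{0,1\}$‑valued with mean $p$, we have $\|g\|_2^2 = \E[f^2] - p^2 = p - p^2 \le p$. Combining these gives
\[
\langle f,\mathrm{T}f\rangle \le p^2 + \alpha^{\Omega(1)}\,p,
\]
and since $p\in[0,1]$ we have $p^2 = p^{1.5}\cdot p^{0.5} \le p^{1.5}$, which yields the claimed bound $\langle f,\mathrm{T}f\rangle \le p^{1.5} + \alpha^{\Omega(1)}p$.

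There is essentially no obstacle in this step itself; the only points requiring a line of care are verifying the prerequisites — that $\mathrm{T}$ is genuinely self‑adjoint with respect to the uniform measure (immediate from the symmetric form of $\mathcal{D}$, noting that the three resampling stages are each reversible) and that its top eigenspace consists exactly of the constants, so that the estimate $\langle g,\mathrm{T}g\rangle\le\lambda_2(\mathrm T)\|g\|_2^2$ applies to $g=f-\E f$. The genuine difficulty, deferred to Section~\ref{sec:proof_of_ev_bound_dp_johnson}, is Claim~\ref{claim:eigenvalue_bound} itself, i.e.\ that $\lambda_2(\mathrm{T})\lll\alpha^{\Omega(1)}$; that is where the product structure of the chain (a Johnson‑type chain on the $A_0$‑coordinates, as analyzed in the proof of Claim~\ref{claim:similar_restrictions}, composed with a near‑uniform resampling on the $B_0$‑coordinates) must be exploited.
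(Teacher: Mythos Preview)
Your proof is correct and follows the same spectral-gap strategy as the paper: both reduce to Claim~\ref{claim:eigenvalue_bound} after splitting off the constant part of $f$. The paper takes a slightly more roundabout path, first applying Cauchy--Schwarz $\langle f,\mathrm{T}f\rangle\le\|f\|_2\|\mathrm{T}f\|_2=\sqrt{p}\,\|\mathrm{T}f\|_2$ and then bounding $\|\mathrm{T}(f-p)\|_2$ by the spectral gap, which after squaring and rooting yields the $p^{1.5}$ term; your direct decomposition $\langle f,\mathrm{T}f\rangle=p^2+\langle g,\mathrm{T}g\rangle$ is cleaner and in fact gives the sharper bound $p^2+\alpha^{\Omega(1)}p$ before you relax $p^2\le p^{1.5}$ to match the stated claim.
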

\begin{proof}
  By Cauchy-Schwarz and Parseval
  \[
  \left\langle f, \mathrm{T} f\right\rangle^2
  \leq p\|\mathrm{T} f\|_2^2
  =p\left(\|\mathrm{T}(f - p)\|_2^2 + p^2\right)
  \leq
  p\left(\alpha^{\Omega(1)}\|f - p\|_2^2 + p^2\right)
  \leq p^2\alpha^{\Omega(1)} + p^3,
  \]
  and the result follows.
\end{proof}

The following claim asserts that the functions $g_x$ are close to each other with noticeable probability.
\begin{claim}\label{claim:globally_close_to_each_other}
It holds that $\Pr_{x,y}\big[\Delta(g_x,g_y)\lll R^2\log(1/\eps)^2/k\big]\ggg \eps^{12}$.
\end{claim}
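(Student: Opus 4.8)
The plan is to deduce Claim~\ref{claim:globally_close_to_each_other} from the one‑step statement already in hand (the displayed consequence of Claim~\ref{claim:similar_restrictions}) by a Cauchy--Schwarz ``common neighbour'' boosting step, using that $\mathrm{T}$ is symmetric. Set $\rho := R\log(1/\eps)/k$, and for a point $x$ of the multi‑slice $\binom{[n]}{\alpha k,(1-\alpha)k,n-k}$ let
\[
p(x)\ =\ \Pr_{y\sim \mathrm{T} x}\big[\Delta(g_x,g_y)\lll \rho\big].
\]
Reading Claim~\ref{claim:similar_restrictions} through the identification of pairs drawn from $\mathcal{D}$ with edges of $\mathrm{T}$ (the identification made just before Claim~\ref{claim:eigenvalue_bound}), that claim says exactly $\E_x[p(x)]\ggg \eps^6$, with $x$ uniform.

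First I would sample $x$ uniformly and then $y,z$ independently from $\mathrm{T} x$. Conditioning on $x$, the two events are independent, so by Cauchy--Schwarz (Jensen)
\[
\Pr_{\substack{x\ \mathrm{unif}\\ y,z\sim \mathrm{T} x}}\big[\Delta(g_x,g_y)\lll \rho \ \wedge\ \Delta(g_x,g_z)\lll \rho\big]
\ =\ \E_x\!\big[p(x)^2\big]\ \geq\ \big(\E_x[p(x)]\big)^2\ \ggg\ \eps^{12}.
\]
On this event the triangle inequality gives $\Delta(g_y,g_z)\le \Delta(g_y,g_x)+\Delta(g_x,g_z)\lll \rho$, which is in particular $\lll R^2\log(1/\eps)^2/k$, with room to spare since $R,\log(1/\eps)\ge 1$. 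It remains only to observe that, because $\mathrm{T}$ is symmetric (hence reversible with respect to the uniform measure), the pair $(y,z)$ produced by ``pick $x$ uniform, then $y,z$ i.i.d.\ from $\mathrm{T} x$'' has the law ``$y$ uniform, $z\sim \mathrm{T}^2 y$'' — a pair with a uniformly random common $\mathrm{T}$‑neighbour. Interpreting $\Pr_{x,y}$ in the claim as a step of this chain $\mathrm{T}^2$ and renaming $(y,z)$ as $(x,y)$ gives the claim.

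In this argument there is essentially no obstacle: the only things to check are (i) that drawing from $\mathcal{D}$ conditioned on $\tilde A$ really is one step of $\mathrm{T}$ on the multi‑slice, which is just unwinding the identification already set up; (ii) that the symmetry of $\mathrm{T}$, noted immediately before Claim~\ref{claim:eigenvalue_bound}, licenses the common‑neighbour rewriting; and (iii) the distance accounting, which is trivial since only two triangle‑inequality steps are composed. The genuine work sits upstream, in Claim~\ref{claim:similar_restrictions}. I would also note that if a version of Claim~\ref{claim:globally_close_to_each_other} with $x,y$ \emph{independent and uniform} is needed later (for instance to pin down a single global function $g$), one would additionally invoke the small‑set‑expansion bound of Claim~\ref{claim:sse_conclusion} — which rests on $\lambda_2(\mathrm{T})\lll\alpha^{\Omega(1)}$ from Claim~\ref{claim:eigenvalue_bound} — to pass from ``a $\ggg\eps^{12}$ fraction of $\mathrm{T}^2$‑pairs are $g$‑close'' to ``some ball in the $g$‑metric carries a $\poly(\eps)$ fraction of the multi‑slice''; the slack between $\rho$ and the stated radius $R^2\log(1/\eps)^2/k$ is there precisely to absorb the extra triangle‑inequality losses incurred in that clustering step.
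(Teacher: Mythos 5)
Your Cauchy--Schwarz/common-neighbour step is correct as far as it goes, but it only produces a statement about pairs $(y,z)$ correlated by one step of $\mathrm{T}^2$, not the statement actually asserted. In the paper, $\Pr_{x,y}$ in Claim~\ref{claim:globally_close_to_each_other} means $x$ and $y$ \emph{independent and uniform}: indeed, the next step of the argument (``we may find $x$ such that $\Delta(g_x,g_y)\lll R^2\log(1/\eps)^2/k$ for at least $\Omega(\eps^{12})$ fraction of the $y$'s\ldots and sample $y$ independently of $x$'') only makes sense under independence, and the downstream union of events over an independent $A_0,B$ relies on it. You notice this yourself at the end, but you do not carry out the transfer; and the fix you sketch -- invoking Claim~\ref{claim:sse_conclusion} to find a dense ``ball in the $g$-metric'' -- has a genuine obstruction. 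Claim~\ref{claim:sse_conclusion} controls $\langle f,\mathrm{T}f\rangle$ for a fixed indicator $f$, so to go from ``many $\mathrm{T}$- or $\mathrm{T}^2$-edges lie inside $g$-close pairs'' to a density lower bound you need to decompose the close-pair indicator into indicators of sets, and the balls $\{x:\Delta(g_x,g_0)\lll\rho\}$ overlap, so there is no canonical partition to feed into the SSE inequality. The extra radius slack absorbs triangle-inequality losses but does not remove the overlap issue.

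The paper circumvents exactly this by introducing, for each small set $S\subseteq_{k/(R^2\log(1/\eps))}[n]$, the \emph{partition} $\{f_{S,v}\}_{v\in\{0,1\}^S}$ of the multi-slice according to $g_x|_S$. Passing from $\Delta(g_x,g_y)\lll R\log(1/\eps)/k$ to agreement on a random small $S$ converts the ``soft'' ball structure into a genuine partition; applying Claim~\ref{claim:sse_conclusion} to each $f_{S,v}$ then yields $\E_S\bigl[\sum_v\|f_{S,v}\|_2^4\bigr]\ggg\eps^{12}$, and this quantity \emph{is} $\Pr_{x,y\text{ indep},\,S}[g_x|_S = g_y|_S]$, which after averaging and Claim~\ref{claim:trivial_distance_to_diff} gives the independent-pair conclusion. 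So the missing idea in your proof is this discretisation-into-a-partition step; without it, the single Jensen boost only delivers the $\mathrm{T}^2$ statement, which is strictly weaker than what the claim (and its use in the proof of Theorem~\ref{thm:johnson-dp}) requires.
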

\begin{proof}
  From Claim~\ref{claim:similar_restrictions} we get that
  \[
  \E_{x,y\sim \mathrm{T} x}
  \left[\E_{S\subseteq_{k/ R^2\log(1/\eps)} [n]}\big[1_{g_x|_S \equiv g_y|_{S}}\big]\right]\ggg \eps^6.
  \]
  For each $S\subseteq [n]$ of size $k/ R^2\log(1/\eps)$ and $v\in \{0,1\}^S$, let $f_v\colon \binom{[n]}{\alpha k, (1-\alpha)k, n-k}\to\{0,1\}$
  be defined by $f_{S,v}(x) = 1$ if and only if $g_x|_{S} = v$. Rewriting the above by flipping the order of expectations we get that
  \[
  \E_{S\subseteq_{k/ R^2\log(1/\eps)} [n]}
  \left[\sum\limits_{v\in \{0,1\}^S} \left\langle f_{S,v}, \mathrm{T} f_{S,v}\right\rangle\right]
  \ggg \eps^6.
  \]
  By Claim~\ref{claim:sse_conclusion} we get that
  \[
  \sum\limits_{v\in \{0,1\}^S} \left\langle f_{S,v}, \mathrm{T} f_{S,v}\right\rangle
  \leq \sum\limits_{v\in \{0,1\}^S} \|f_{S,v}\|_{2}^{3} + \alpha^{\Omega(1)}\|f_{S,v}\|_{2}^{2}
  \leq \alpha^{\Omega(1)} + \sqrt{\sum_{v}\|f_{S,v}\|_{2}^4}
  \leq \eps^{9}+ \sqrt{\sum_{v}\|f_{S,v}\|_{2}^4},
  \]
  where we used Cauchy-Schwarz and $\sum_{v}\|f_{S,v}\|_{2}^2 \leq 1$.  We get that $\E_{S\subseteq_{k/ R^2\log(1/\eps)} [n]}
  \left[\sqrt{\sum_{v}\|f_{S,v}\|_{2}^4}\right]
  \ggg \eps^6$, so by Cauchy-Schwarz
  \[
  \E_{S\subseteq_{k/ R^2\log(1/\eps)} [n]}
  \left[\sum_{v}\|f_{S,v}\|_{2}^4\right]
  \ggg \eps^{12}.
  \]
  The last inequality implies that
  \[
  \E_{S\subseteq_{k/ R^2\log(1/\eps)} [n]}
  \left[\E_{x,y}\big[1_{g_x|_S \equiv g_y|_{S}}\big]\right]\ggg \eps^{12},
  \]
  where now the distribution over $x$ and $y$ is independent. Flipping the order of expectations again, we get that
  \[
  \E_{x,y}\left[
  \E_{S\subseteq_{k/ R^2\log(1/\eps)} [n]}
  \big[1_{g_x|_S \equiv g_y|_{S}}\big]\right]\ggg \eps^{12}.
  \]
  By an averaging argument with probability at least $\Omega(\eps^{12})$ over $x,y$ we have
  $\E_{S\subseteq_{k/ R^2\log(1/\eps)} [n]}
  \big[1_{g_x|_S \equiv g_y|_{S}}\big]\ggg \eps^{12}$. The proof is concluded by Claim~\ref{claim:trivial_distance_to_diff}.
\end{proof}

\subsubsection{Concluding Theorem~\ref{thm:johnson-dp}}
We now finish off the proof of Theorem~\ref{thm:johnson-dp}.

By Claim~\ref{claim:globally_close_to_each_other}
we may find $x$ such that $\Delta(g_x,g_y)\lll R^2\log(1/\eps)^2/k$ for at least $\Omega(\eps^{12})$ fraction of the $y$'s.
We fix such $x$ henceforth, and sample $y$ independently of $x$. Identify $y$ back with $A_0, B_0$ and then
sample $B\subseteq [n]\setminus A_0$ of size $(1-\alpha)k$. Then $\Delta(g_x,g_y)\lll R^2\log(1/\eps)^2/k$ with
probability $\Omega(\eps^{12})$, and we condition on that. By Chernoff's inequality we have that
$\Delta_B(g_x, g_y)\lll  R^2\log(1/\eps)^3/k$ except with probability $1-\eps^{100}$.
By Lemma~\ref{lemma:excellent_agree}, with probability at least $\Omega(\eps^2)$ we get that
$\Delta(g_y(B), F[A_0\cup B])\lll R/k$. By a union bound and triangle inequality it follows that with probability
$\Omega(\eps^2)$ we have $\Delta_B(g_x, F[A_0\cup B])\lll  R^2\log(1/\eps)^3/k$. In conclusion,
\[
\Pr_{x, A_0, B}\big[\Delta_B(g_x, F[A_0\cup B])\lll  R^2\log(1/\eps)^3/k\big]\ggg \eps^{12}.
\]
As $A_0, B$ are independent of $x$, we get that
\[
\Pr_{x, A}\big[\Delta_A(g_x, F[A])\lll  R^2\log(1/\eps)^3/k\big]\ggg \eps^{12},
\]
concluding the proof.
\qed
\subsection{Proof of Claim~\ref{claim:eigenvalue_bound}}\label{sec:proof_of_ev_bound_dp_johnson}
Throughout, we denote $\mathcal{U} =  \binom{[n]}{\alpha k, (1-\alpha)k, n-k}$.
  Consider the symmetric group $S_n$, and let $I,J\subseteq [n]$ be disjoint subsets of sizes $\alpha k$ and $(1-\alpha)k$ respectively.
  Let $\rho >0$ to be determined, and let $\mathrm{T}''_{\rho} = e^{-\rho L}$ where $L\colon L_2(S_n)\to L_2(S_n)$ is the Laplacian
  $Lf(\pi) = f(\pi) - \E_{i\neq j\in [n]}\big[f(\pi_{i,j} \pi)\big]$ where $\pi_{i,j}$ is the transposition permutation between
  $i$ and $j$. Let $\mathrm{S}\colon L_2(S_n)\to L_2(S_n)$ be the operator defined as $\mathrm{S}f(\pi) = \E_{\pi':\pi'(I) = \pi(I)}\big[f(\pi')\big]$.
  Define $\mathrm{T}_{\rho}' = \mathrm{S}\circ \mathrm{T}''_{\rho}$. 

  We will now explain how $\mathrm{T}_{\rho}'$ can be thought of as an operator on $L_2(\mathcal{U})$. To do that, we associate with each permutation
  $\pi\in S_n$ a point $x = x(\pi)\in\mathcal{U}$, where $x_{\pi(I)} = 0$, $x_{\pi(J)} = 1$ and $x_{[n]\setminus \pi(I)\cup\pi(J)} = 2$, where $I = \{1,\ldots,\alpha k\}$, $J$ is the next $(1-\alpha)k$ elements. Thus, we may
  define an operator $\mathrm{H} \colon L_2(\mathcal{U}) \to L_2(S_n)$ by $\mathrm{H} f(\pi) = f(x(\pi))$. Thus, we define the operator
  $\mathrm{T}_{\rho} = \mathrm{H}^{*}\mathrm{T}_{\rho}'\mathrm{H}\colon L_2(\mathcal{U})\to L_2(\mathcal{U})$. Our plan is to show the following two claims:
  \begin{claim}\label{claim:move_between operators}
    For $\rho = n\ln(1/\alpha)/2$ we have that $\|\mathrm{T}_{\rho} - \mathrm{T}\|_2 \leq \alpha^{\Omega(1)}$.
  \end{claim}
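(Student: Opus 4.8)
The plan is to realise both $\mathrm{T}$ and $\mathrm{T}_\rho$ as Markov kernels on $\mathcal{U}=\binom{[n]}{\alpha k,(1-\alpha)k,n-k}$, to bound the total variation distance between their one-step transition laws uniformly over the starting point, and then to upgrade this to the operator-norm statement. The upgrade is standard: both chains are reversible with respect to the uniform measure on $\mathcal{U}$ (for $\mathrm{T}_\rho$ this holds because $\mathrm{T}''_\rho$ commutes with the right $S_n$-action, hence preserves the right-invariant subspaces cut out by $\mathrm{H}$ and by $\mathrm{S}$, so $\mathrm{T}_\rho=\mathrm{H}^*\mathrm{S}\mathrm{T}''_\rho\mathrm{H}$ is a self-adjoint Markov operator fixing the uniform measure; for $\mathrm{T}$ this is the symmetry of $\mathcal{D}$, up to an $o(1)$ perturbation of the stationary measure that we absorb into the final error). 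Since $\mathrm{T}-\mathrm{T}_\rho$ is then (essentially) self-adjoint, a uniform bound $\varepsilon$ on the transition total variation gives $\|\mathrm{T}-\mathrm{T}_\rho\|_{L^\infty\to L^\infty}\le 2\varepsilon$, hence also $\|\mathrm{T}-\mathrm{T}_\rho\|_{L^1\to L^1}\le 2\varepsilon$ by self-adjointness, and Riesz--Thorin interpolation yields $\|\mathrm{T}-\mathrm{T}_\rho\|_{L^2\to L^2}\le 2\varepsilon$. It therefore suffices to prove $\varepsilon=\alpha^{\Omega(1)}$, and in fact $\varepsilon=o(1)$, which is $\le\alpha^{\Omega(1)}$ since $n$ is taken large.

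First I would make $\mathrm{T}_\rho$ explicit. As $\mathrm{H}$ is an isometric embedding of $L_2(\mathcal{U})$ onto the space of functions depending only on the pair of sets $(\pi(I),\pi(J))$, and $\mathrm{S}$ is the orthogonal projection onto the subspace of functions of $\pi(I)$ alone, unwinding $\mathrm{T}_\rho=\mathrm{H}^*\mathrm{S}\mathrm{T}''_\rho\mathrm{H}$ shows that, started from $x\leftrightarrow(A_0,B_0)$, the kernel of $\mathrm{T}_\rho$ acts as follows: pick a uniformly random $\pi$ with $\pi(I)=A_0$ (so its $J$-block is a fresh uniform $(1-\alpha)k$-subset of $[n]\setminus A_0$ and the rest is arbitrary uniform); run the random-transposition heat kernel $e^{-\rho L}$ for time $\rho$ to obtain $\pi'$; output $(\pi'(I),\pi'(J))\in\mathcal{U}$. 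In particular $B_0$ is forgotten, the $B$-block is completely refreshed, and the only substantive ingredient is the effect of the heat kernel on the $\alpha k$-block $A_0$.

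The heart of the argument is the following standard estimate for random transpositions: when restricted to any coordinate block of size $O(k)\ll\sqrt n$, the law of $e^{-\rho L}$ at time $\rho$ is within total variation $O(k^2/n)=o(1)$ of the product law in which each coordinate of the block is independently ``frozen'' (keeps its value) with a probability $p(\rho)$ that is a fixed function of $\rho/n$, and is otherwise resampled to a uniformly random value not used elsewhere in the block; the error comes solely from birthday collisions inside the block and from the Poisson-thinning approximation to the number of transpositions touching a given coordinate, both $o(1)$. I would cite the classical analysis of the random-transposition shuffle (Diaconis--Shahshahani) for this, or reprove the required one-sided coupling directly. The value $\rho=n\ln(1/\alpha)/2$ is exactly the one for which $p(\rho)$ equals the per-element retention probability built into $\mathcal{D}$. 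On the other side, expanding $\mathcal{D}$ shows that its transition law from $(A_0,B_0)$ is itself within $O(k^2/n)=o(1)$ total variation of the same product law: sampling $\ell\sim{\sf Bin}(\alpha k,\sqrt{\alpha})$ and then a uniform $\ell$-subset of $A_0$ is exactly independent retention of each element of $A_0$; replacing ``uniform subset of $\tilde A\setminus A_0$'' and ``uniform subset of $[n]\setminus\tilde A$'' by ``uniform subset of $[n]\setminus A_0$'' each costs $o(1)$ because $|\tilde A|=\sqrt{\alpha}k\ll n$; and what remains is precisely a fresh uniform refresh of the $B$-block together with independent retention of the $A$-block at the calibrated probability. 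Combining the two approximations via the triangle inequality for transition total variation, and then interpolating as above, gives $\|\mathrm{T}-\mathrm{T}_\rho\|_2=o(1)\le\alpha^{\Omega(1)}$.

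The main obstacle is the random-transposition estimate above: pinning down the heat kernel at the specific time $\rho=n\ln(1/\alpha)/2$ tightly enough that the retained fraction it produces matches the one hard-wired into $\mathcal{D}$ with only $o(1)$ slack. Everything else -- the reduction of $\mathrm{T}_\rho$ to a concrete Markov chain, the bookkeeping comparison of $\mathcal{D}$ to the product law, and the $L^1/L^\infty$-to-$L^2$ interpolation -- is mechanical; the only other point requiring a word is that $\mathcal{D}$'s stationary measure differs from the uniform measure on $\mathcal{U}$ by $o(1)$, which is harmless and is absorbed into the stated error.
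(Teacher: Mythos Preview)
Your approach is correct and follows the same overall strategy as the paper: bound the total-variation distance between the two transition kernels and upgrade this to an $L^2$ operator-norm bound. The executions differ in two places. First, for the TV comparison, the paper observes that for both chains the law of the overlap $\ell=|x^{-1}(0)\cap y^{-1}(0)|$ given $x$ is independent of $x$, and that conditioned on $\ell$ the two conditional laws of $y$ are (up to an $o(1)$ correction) identical; this reduces the whole TV computation to comparing two distributions on a single integer, namely $\mathrm{Bin}(\alpha k,\sqrt{\alpha})$ versus the overlap law induced by the Poissonised transposition process. Your route via a common ``independent-retention'' product law accomplishes the same thing with slightly more bookkeeping but is arguably more explicit about the $o(1)$ slack. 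Second, for the upgrade from TV to $\|\cdot\|_2$, the paper uses a direct Cauchy--Schwarz coupling computation that needs only double-stochasticity of both kernels, whereas you invoke self-adjointness plus Riesz--Thorin. Your self-adjointness claim for $\mathrm{T}_\rho$ is in fact correct (since $\mathrm{T}''_\rho$ commutes with right translation it preserves the right-invariant subspace onto which $\mathrm{S}$ projects, hence commutes with $\mathrm{S}$), so interpolation goes through; note also that $\mathcal{D}$ is exactly exchangeable in $((A_0,B_0),(A_0',B_0'))$, so $\mathrm{T}$ is exactly self-adjoint and no ``$o(1)$ perturbation of the stationary measure'' hedge is needed. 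Both arguments land on a bound of order $k^2/n=o(1)$, which is absorbed into $\alpha^{\Omega(1)}$ since $n$ is taken large.
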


  \begin{claim}\label{claim:bound_ev_of_other_op}
    For $\rho = n\ln(1/\alpha)/2$, $\lambda_2(\mathrm{T}_{\rho})\leq \alpha^{\Omega(1)}$.
  \end{claim}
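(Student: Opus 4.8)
The plan is to peel off the three auxiliary operators $\mathrm{H},\mathrm{H}^{*}$ and $\mathrm{S}$ in the factorization $\mathrm{T}_{\rho}=\mathrm{H}^{*}\,\mathrm{S}\,\mathrm{T}''_{\rho}\,\mathrm{H}$, each of which is a norm-$1$ operator that fixes the constant function, thereby reducing the claim to the classical spectral-gap estimate for the heat kernel $\mathrm{T}''_{\rho}=e^{-\rho L}$ of the random transposition walk on $S_n$. Concretely, first I would record: (i) $\mathrm{H}\colon L_2(\mathcal{U})\to L_2(S_n)$ is an isometry, because when $\pi\sim S_n$ is uniform the ordered pair of disjoint sets $(\pi(I),\pi(J))$ is uniform, so $x(\pi)$ is uniform on $\mathcal{U}$ and $\langle\mathrm{H}f,\mathrm{H}g\rangle=\E_{x\sim\mathcal{U}}[f(x)\overline{g(x)}]=\langle f,g\rangle$; in particular $\mathrm{H}\mathbb{1}=\mathbb{1}$ and $\|\mathrm{H}\|_{\mathrm{op}}=\|\mathrm{H}^{*}\|_{\mathrm{op}}=1$. (ii) $\mathrm{S}$ is the conditional expectation onto the $\sigma$-algebra generated by the map $\pi\mapsto\pi(I)$, hence an orthogonal projection, so $\|\mathrm{S}\|_{\mathrm{op}}\le1$ and $\mathrm{S}\mathbb{1}=\mathbb{1}$. (iii) $\mathrm{T}''_{\rho}$ is self-adjoint, positive semidefinite, fixes $\mathbb{1}$, and maps $\mathbb{1}^{\perp}$ to itself. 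Chasing these through, $\mathrm{T}_{\rho}$ fixes $\mathbb{1}$ and maps $\mathbb{1}^{\perp}\subseteq L_2(\mathcal{U})$ to itself, and for every $f\perp\mathbb{1}$ one has (using $\mathrm{H}f\perp\mathbb{1}$ and $\|\mathrm{H}f\|_2=\|f\|_2$)
\[
\|\mathrm{T}_{\rho}f\|_2\;\le\;\|\mathrm{S}\,\mathrm{T}''_{\rho}\,\mathrm{H}f\|_2\;\le\;\|\mathrm{T}''_{\rho}\,\mathrm{H}f\|_2\;\le\;\Big(\text{largest eigenvalue of }\mathrm{T}''_{\rho}\text{ on }\mathbb{1}^{\perp}\Big)\cdot\|f\|_2 .
\]
Hence every eigenvalue of $\mathrm{T}_{\rho}$ other than the top eigenvalue $1$ is bounded in modulus by the largest nontrivial eigenvalue of $\mathrm{T}''_{\rho}$.

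It remains to bound the latter. By Peter--Weyl, $L_2(S_n)$ splits into $S_n$-isotypic components indexed by partitions $\lambda\vdash n$, and the normalized transposition operator $M=\frac{2}{n(n-1)}\sum_{i<j}(i\,j)$ --- being a scalar multiple of a conjugacy-class sum, hence central --- acts on the $\lambda$-component by the scalar $\chi_{\lambda}((1\,2))/\dim S_{\lambda}=\frac{1}{\binom{n}{2}}\sum_{c\in\lambda}\mathrm{content}(c)$. This scalar is $1$ for $\lambda=(n)$ and, by the standard analysis of the random transposition walk (equivalently, an elementary maximization of the content sum over $\lambda\neq(n)$, whose maximum is attained at $\lambda=(n-1,1)$), is at most $1-\frac{2}{n-1}$ for all $\lambda\neq(n)$. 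Thus $L=\mathrm{Id}-M$ has smallest nonzero eigenvalue $\frac{2}{n-1}$, so the largest nontrivial eigenvalue of $\mathrm{T}''_{\rho}=e^{-\rho L}$ is $e^{-2\rho/(n-1)}$; with $\rho=\frac{n}{2}\ln(1/\alpha)$ this equals $\alpha^{n/(n-1)}\le\alpha$ since $\alpha\in(0,1)$ and $n\ge2$. Combining with the previous paragraph gives $\lambda_2(\mathrm{T}_{\rho})\le\alpha\le\alpha^{\Omega(1)}$, which is in fact slightly stronger than claimed.

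The only genuinely nontrivial ingredient is the spectral-gap statement for the random transposition walk used above; I would either cite it or include the short Jucys--Murphy argument that $\sum_{i<j}(i\,j)$ acts on $S_{\lambda}$ as the scalar $\sum_{c\in\lambda}\mathrm{content}(c)$, followed by the one-line check that among $\lambda\neq(n)$ this sum is largest for $\lambda=(n-1,1)$. Everything else --- $\mathrm{H}$ being an isometry, $\mathrm{S}$ being an orthogonal projection, and the resulting contraction --- is routine; the one point requiring a little care is to verify that every operator in the product fixes $\mathbb{1}$ (and preserves $\mathbb{1}^{\perp}$), so that the contraction of $\mathrm{T}''_{\rho}$ is invoked only on the non-constant part.
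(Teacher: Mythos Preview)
Your proof is correct and follows essentially the same route as the paper: peel off the contractions $\mathrm{H}^{*}$ and $\mathrm{S}$ and reduce to the spectral gap of the heat kernel $e^{-\rho L}$ of the random transposition walk. The only differences are cosmetic: the paper quotes the eigenvalue formula $e^{-\rho(1-c_\lambda)}$ from \cite{filmus2022log} and uses the crude bound $c_\lambda\le (n-1)/n$ to get $\sqrt{\alpha}$, whereas you supply the representation-theoretic computation yourself and pin down the extremal partition $\lambda=(n-1,1)$, yielding the slightly sharper bound $\alpha^{n/(n-1)}\le\alpha$; both are $\alpha^{\Omega(1)}$.
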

  Together, the two claims finish the proof of Claim~\ref{claim:eigenvalue_bound} immediately.

  \subsubsection{Proof of Claim~\ref{claim:move_between operators}}
  Consider a sampling of $x\sim \mathcal{U}$, $y\sim \mathrm{T} x$ and $y'\sim \mathrm{T}_{\rho} x$. For each $\ell\in\mathbb{N}$, let
  $E^{\ell}$ be the event that $|x^{-1}(0) \cap y^{-1}(0)| = \ell$, let $E_{\rho}^{\ell}$ be the event that $|x^{-1}(0) \cap {y'}^{-1}(0)| = \ell$,
  and note that the distributions $y~|~E^{\ell}$ and $y'~|~E_{\rho}^{\ell}$ are identical. Thus, the statistical distance between $y$ and $y'$ is
  at most
  \[
  \Delta := \sum\limits_{\ell}\left|\Pr[E^{\ell}] - \Pr[E_{\rho}^{\ell}]\right|.
  \]
  It follows that there is a coupling between $y$ and $y'$ such that $\Pr[y\neq y']$.
  Fix $f$ with $2$-norm equal to $1$ for which $\|\mathrm{T} - \mathrm{T}_{\rho}\|_2 =\|(\mathrm{T} - \mathrm{T}_{\rho})f\|_2$.
  Then
  \[
  \|\mathrm{T} - \mathrm{T}_{\rho}\|_2^2
  =\E_{x\sim \mathcal{U}}\Big[\left|\E_{y,y'}\big[f(y) - f(y')\big]\right|^2\Big]
  =
  \E_{x\sim \mathcal{U}}\Big[\left|\E_{y,y'}\big[(f(y) - f(y'))1_{y\neq y'}\big]\right|^2\Big],
  \]
  which by Cauchy-Schwarz is at most
  \[
  \E_{x\sim \mathcal{U}}\Big[\Pr_{y,y'}\big[y\neq y'\big]\E_{y,y'}\big[\left|f(y) - f(y')\right|^2\big]\Big]
  \leq \Delta 4\|f\|_2^4
  =4\Delta.
  \]
  The following claim finishes the proof of Claim~\ref{claim:move_between operators}.
  \begin{claim}
    $\Delta\leq \alpha^{\Omega(1)}$.
  \end{claim}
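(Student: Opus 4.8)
The plan is to read $\Delta$ as (twice) the total variation distance between the law of the overlap $\ell=|x^{-1}(0)\cap y^{-1}(0)|$ when $y$ is drawn from $\mathrm{T}x$ versus from $\mathrm{T}_\rho x$, and then to compute both laws explicitly. By the symmetry of $\mathcal{U}$ the law of $\ell$ depends only on the set sizes, so we may fix $x$ with $x^{-1}(0)=A_0$ a fixed $\alpha k$-subset of $[n]$. Under $\mathrm{T}$ — which is exactly the chain $\mathcal{D}$ — step $2$ of the construction samples $\ell\sim {\sf Bin}(\alpha k,\sqrt{\alpha})$ directly, so the law of $\ell$ under $\mathrm{T}$ is \emph{exactly} ${\sf Bin}(\alpha k,\sqrt{\alpha})$. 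Under $\mathrm{T}_\rho=\mathrm{H}^{*}\mathrm{T}_\rho'\mathrm{H}$ with $\mathrm{T}_\rho'=\mathrm{S}\mathrm{T}_\rho''$ and $\mathrm{T}_\rho''=e^{-\rho L}$, unwinding the operators shows that the output $y'$ has $y'^{-1}(0)=\tau(A_0)$, where $\tau\in S_n$ is drawn from the random-transposition heat kernel $e^{-\rho L}$ applied to the identity: $\mathrm{H}$ lifts $x$ to a uniform $\pi$ with $\pi(I)=A_0$ (as a set), $\mathrm{S}$ re-randomizes everything except $\pi(I)$, $\mathrm{T}_\rho''$ applies $\tau$ on the left, and $\mathrm{H}^{*}$ pushes back down, so the first block of the output is $\tau(\pi(I))=\tau(A_0)$ while the second block becomes irrelevant to $\ell$. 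Hence $\Delta = 2\,d_{\mathrm{TV}}\!\big({\sf Bin}(\alpha k,\sqrt{\alpha}),\ \mathrm{law}(|A_0\cap\tau(A_0)|)\big)$, and it remains to bound the right-hand side by $\alpha^{\Omega(1)}$.

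For this I would analyze the one-coordinate marginal of $\tau$. Writing $\tau$ as a product of ${\sf Pois}(\rho)$ i.i.d.\ uniform transpositions and tracking the image of a fixed $i\in A_0$, the trajectory of $i$ is a continuous-time walk that at rate $\Theta(1/n)$ jumps to a uniform new coordinate; the number of jumps is Poisson with mean $\Theta(\rho/n)$, and the choice $\rho = n\ln(1/\alpha)/2$ is calibrated so that $\Pr[\tau(i)=i]=\sqrt{\alpha}+O(1/n)$, with $\tau(i)$ otherwise uniform on $[n]\setminus\{i\}$ by the symmetry of the walk. Consequently $\Pr[\tau(i)\in A_0]=\sqrt{\alpha}\pm O(\alpha k/n + 1/n)$ for every $i\in A_0$, the correction being the chance of landing on one of the other $\alpha k-1$ points of $A_0$. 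To pass from these single-coordinate statements to the joint law of $(\mathbb{1}[\tau(i)\in A_0])_{i\in A_0}$, I would couple $\tau$ with $\alpha k$ independent ``single-coordinate'' walks, one per tracked element: the coupling is exact unless two tracked trajectories are linked by a common transposition, an event of probability $O((\alpha k)^2\rho/n^2)$. Under this coupling the indicators are independent Bernoulli variables with parameters within $O(\alpha k/n+1/n)$ of $\sqrt{\alpha}$, so a monotone (or maximal) coupling of binomials shows $|A_0\cap\tau(A_0)|$ is within total variation $O(\alpha k(\alpha k/n+1/n)) + O((\alpha k)^2\rho/n^2)$ of ${\sf Bin}(\alpha k,\sqrt{\alpha})$. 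Since the ambient parameters satisfy $n$ arbitrarily large compared to any function of $k$, and $k$ large compared to any function of $\alpha$, every error term here is far smaller than $\alpha^{\Omega(1)}$, e.g.\ $o(\alpha^{100})$; combined with the first paragraph this gives $\Delta\le\alpha^{\Omega(1)}$ and hence completes Claim~\ref{claim:eigenvalue_bound}.

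The routine parts are the binomial comparisons and the crude union bounds on coupling failures, all of which are killed by a single power of $1/n$ thanks to $n\gg k$. The part deserving the most care is the one-coordinate computation: one must verify that the return probability of the random-transposition walk at time $\rho$ is $\sqrt{\alpha}$ up to lower order, since this is precisely the quantity that the target distribution ${\sf Bin}(\alpha k,\sqrt{\alpha})$ of $\mathrm{T}$ is built to match, and getting the constant in the Poisson mean right is what pins down the choice of $\rho$. A secondary, purely technical point is to make the decoupling of the $\alpha k$ tracked trajectories clean — either via the explicit independent-walk coupling above or via a Chen--Stein/second-moment estimate — so that the deviation of the joint law from a genuine product of Bernoullis is bounded by $O(\mathrm{poly}(\alpha k)\,\rho/n^2)$ rather than by something that accumulates with the history of the walk.
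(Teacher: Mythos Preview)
Your approach is essentially the same as the paper's: both identify $\Delta$ with the total variation distance between ${\sf Bin}(\alpha k,\sqrt{\alpha})$ and the law of $|A_0\cap\tau(A_0)|$, and both argue that the single-coordinate events $\{\tau(i)\in A_0\}_{i\in A_0}$ are approximately independent Bernoullis with the right parameter. The only difference is packaging: the paper decomposes the heat kernel into independent Poisson clocks $Z_{i,j}$ (so exact independence is built in and one only has to bound the rare event that two elements of $I$ interact), whereas you propose an explicit coupling of the $\alpha k$ tracked trajectories with independent single-particle walks; both give error terms of size $\mathrm{poly}(k,\log(1/\alpha))/n$, which is far below $\alpha^{\Omega(1)}$ in this regime.

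One caveat worth flagging, since you single it out as ``the part deserving the most care'': with the stated $\rho=n\ln(1/\alpha)/2$, a fixed coordinate is hit by a uniform transposition at rate $2/n$ (not $1/n$), so the probability it is never touched is $e^{-2\rho/n}=\alpha$, not $\sqrt{\alpha}$. The paper's computation has the same slip (its $Z_i$ counts only the clocks with first index $i$). This does not affect the overall argument---taking $\rho=n\ln(1/\alpha)/4$ repairs it while still giving $\lambda_2(\mathrm{T}_\rho)\le\alpha^{1/4}$ in the companion claim---but your one-coordinate calibration should read $\Pr[\tau(i)=i]=e^{-2\rho/n}+O(1/n)$.
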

  \begin{proof}
    Consider the distribution of $|x^{-1}(0) \cap {y'}^{-1}(0)|$. An equivalent way to think about the Poisson sampling,
    is that for each $i\neq j$ we have an independent poisson random variable $Z_{i,j}\sim {\sf Poisson}(\rho/n(n-1))$, and
    then we apply the transpositions corresponding to $Z_{i,j}$. Define $Z_i = 1_{\sum\limits_{j} Z_{i,j}>0}$, 
    $Z = \sum\limits_{i\in I} Z_i$, and note that with probability $1-\alpha^{\Omega(1)}$ it holds that 
    $|x^{-1}(0)\cap {y'}^{-1}(0)| = \alpha k - Z$; the reason is that $Z$ counts the number of $i\in I$ such that 
    we picked a transposition of the form $\pi_{i,j_i}$ in the process for some $j_i$, and with probability $1-\alpha^{\Omega(1)}$
    all of these $j_i$'s are outside $I$ and are all distinct. Thus, the distribution $q_{\ell} = \Pr[E^{\ell}]$ is $\alpha^{\Omega(1)}$
    close in statistical distance to the distribution $p_{\ell} = \Pr\big[Z = \alpha k-\ell\big]=\Pr\big[\sum\limits_{i\in I}1-Z_i = \ell\big]$. 
    Note $1-Z_i$ are independent Bernouli random variables with 
    \[
    \E[1-Z_i] 
    = \prod\limits_{j\neq i}\Pr\big[Z_{i,j} = 0\big]
    = e^{-\rho/n} 
    = \alpha^{1/2},
    \]
    so $p_{\ell}$ is exactly the law of $|x^{-1}(0)\cap y^{-1}(0)|$.
  \end{proof}

  \subsubsection{Proof of Claim~\ref{claim:bound_ev_of_other_op}}
  Fix $f\colon \mathcal{U}\to\mathbb{R}$ with $2$-norm equal to $1$ and expectation $0$ to be an eigenvector
  of $\mathrm{T}_{\rho}$ with eigenvalue $\lambda_2(\mathrm{T}_{\rho})$, and define $g = \mathrm{H} f$. Note that $\E[g] = 0$ and
  \[
  \lambda_2(\mathrm{T}_{\rho})
  =\|\mathrm{H}^{*}\mathrm{T}_{\rho}'\mathrm{H} f\|_2
  \leq
  \|\mathrm{T}_{\rho}' g\|_2.
  \]
  Thus, it suffices to upper bound the second eigenvalue of $\mathrm{T}_{\rho}'$.
  To study these eigenvalues, we use formulas (31), (32) and (33) from~\cite[Section 4.2]{filmus2022log}. These 
  formulas assert that for each non-trivial partition $\lambda\vdash n$, the corresponding eigenvalue is
  $e^{-\rho(1-c_{\lambda})}$ where
  \[
  c_{\lambda} = \frac{1}{n(n-1)}\sum\limits_{i}^{} \lambda_i^2 - (2i-1)\lambda_i
  \leq \frac{1}{n(n-1)} (n-1)^2
  \leq \frac{n-1}{n}.
  \]
  Thus, for the eigenvalue we have that
  $e^{-\rho(1-c_{\lambda})}\leq e^{-\rho\frac{1}{n}}\leq e^{-\ln(1/\alpha)/2} = \sqrt{\alpha}$.

\end{document}